\newtheorem{theorem}{Theorem}
\newtheorem{lemma}{Lemma}
\newtheorem{proposition}{Proposition}
\newtheorem{corollary}{Corollary}
\newtheorem{definition}{Definition}
\newtheorem{example}{Example}
\tikzset{>=latex}
\newcommand*{\QEDA}{\hfill\ensuremath{\square}}
\newcommand{\AC}{\mathcal{A}}
\newcommand*{\boldnamedo}[1]{%
  \iffieldequalstr{hash}{#1}
    {\bfseries\listbreak}
    {}}
  \forlistloop{\boldnamedo}{\boldnames}}
\newcommand*{\boldnames}{}
\crefname{appendix}{Appendix}{Appendices} 
\crefname{figure}{Figure}{Figures} 
\crefname{equation}{}{} 
\newlist{listabbrev}{description}{1}
 \setlist[listabbrev]{
 labelwidth = 5.75em,
 leftmargin = 5.75em,
 labelsep= 0pt,
 labelindent = 0pt,
 format      = \textbf\normalfont}
\DeclareAcronym{BNE}{short=BNE, long=Bayes-Nash equilibrium, class=abbrev}
\DeclareAcronym{PSP}{short=PSP, long=Progressive second price, class=abbrev}
\DeclareAcronym{VCG}{short=VCG, long=Vickrey-Clarke-Groves, class=abbrev}
\DeclareAcronym{ATC}{short=ATC, long=Available transmission capacity, class=abbrev}
\DeclareAcronym{FB}{short=FB, long=Flow-based, class=abbrev}
\DeclareAcronym{EC}{short=EC, long=European Commission, class=abbrev}
\DeclareAcronym{TSO}{short=TSO, long=Transmission system operator, class=abbrev}
\DeclareAcronym{ENTSO-E}{short=ENTSO-E, long=European  Network  of  Transmission  System  Operators  for  Electricity, class=abbrev}
\DeclareAcronym{minRAM}{short=minRAM, long=Minimum remaining available margin, class=abbrev}
\DeclareAcronym{IEEE}{short=IEEE, long=Institute of Electrical and Electronics Engineers, class=abbrev}
\DeclareAcronym{DC}{short=DC, long=Direct current, class=abbrev}
\DeclareAcronym{AC}{short=AC, long=Alternative current, class=abbrev}
\DeclareAcronym{KKT}{short=KKT, long=Karush-Kuhn-Tucker, class=abbrev}
\DeclareAcronym{ACER}{short=ACER, long=Agency  for  the  Cooperation  of  Energy  Regulators, class=abbrev}
\DeclareAcronym{OPF}{short=OPF, long=Optimal power flow, class=abbrev}
\DeclareAcronym{LMP}{short=LMP, long=Locational marginal pricing or Lagrange-multiplier-based pricing, class=abbrev}
\DeclareAcronym{IR}{short=IR, long=Individually rational or individual rationality, class=abbrev}
\DeclareAcronym{DSIC}{short=DSIC, long=Dominant-strategy incentive-compatible, class=abbrev}
\DeclareAcronym{MPCS}{short=MPCS, long=Maximum-payment core-selecting, class=abbrev}
\DeclareAcronym{IGCC}{short=IGCC, long=Integrated gasification combined cycle, class=abbrev}
\DeclareAcronym{OCGT}{short=OCGT, long=Open cycle gas turbine, class=abbrev}
\DeclareAcronym{CCGT}{short=CCGT, long=Combined cycle gas turbine, class=abbrev}
\DeclareAcronym{LLG}{short=LLG, long=Local-local-global, class=abbrev}
\DeclareAcronym{CCG}{short=CCG, long=Core constraint generation, class=abbrev}
\DeclareAcronym{MILP}{short=MILP, long=Mixed-integer linear program, class=abbrev}
	\DeclareMathAlphabet\mathbfcal{OMS}{cmsy}{b}{n}
\DeclareMathOperator*{\argmin}{arg\,min}
\DeclareMathOperator*{\argmax}{arg\,max}
\newcommand{\CC}{\mathcal{C}}
\newcommand{\BB}{\mathcal{B}}
\newcommand{\R}{\mathbb{R}}
\newcommand{\N}{\mathbb{N}}
\newcommand{\X}{\mathbb{X}}
\newcommand{\norm}[1]{\left\lVert#1\right\rVert}
\begin{document}
\frontmatter
\begin{center}
\thispagestyle{empty}
\large
Dissertation ETH Zürich No. 27287

\vspace*{2cm}

\Large
\textbf{On the theory and applications of mechanism design\\ and coalitional games in electricity markets
}

\vspace*{1.9cm}
\large
A dissertation submitted to attain the degree of\\[0.4cm]
Doctor of Sciences of ETH Zürich\\[0.4cm]
(Dr. sc. ETH Zürich)

\vspace*{1.5cm}

presented by \\[1.5cm]
Orcun Karaca \\[0.2cm]

M. Sc., ETH Zürich, Switzerland\\[1.5cm]

accepted on the recommendation of \\
Prof. Dr. Maryam Kamgarpour, examiner \\
Prof. Dr. Jalal Kazempour, co-examiner \\
Prof. Dr. Sven Seuken, co-examiner \\
Prof. Dr. Neil Walton, co-examiner \\[1cm]
2020
\end{center}


\thispagestyle{empty}
\newpage
\thispagestyle{empty}
\vspace*{\fill}

ETH Zürich

IfA - Automatic Control Laboratory

ETL, Physikstrasse 3

8092 Zurich, Switzerland\\

\copyright\ Orcun Karaca, 2020

All Rights Reserved\\
\chapter*{}
\thispagestyle{empty}
{\large 
\vspace{-2cm}
\emph{~~To my family}

\vspace{5.5cm}

\hspace{8.2cm}{Soundtrack recommendation:}\\

\hfill{\href{https://www.youtube.com/watch?v=6Rh_gvrJ190}{\emph{\color{black}The Birth and Death of the Day~~}}}\\

\hspace{8.2cm}
{ by \emph{Explosions in the Sky}}
}
\newpage
\thispagestyle{empty}
\chapter*{Acknowledgements}
\addcontentsline{toc}{chapter}{Acknowledgements}
\setcounter{page}{1}

First and foremost, I would like to express my greatest gratitude to my PhD advisor, Prof. Maryam Kamgarpour, for giving me the opportunity to undertake this exciting journey and for always believing in me, and for helping me set and pursue my goals. I would like to thank you for providing the right guidance at the right time, and also balancing it perfectly with trust and freedom. Thanks to your counseling---whether it be technical or otherwise---I feel proud to say that in these last four years I have grown greatly as a researcher. I did my best to keep this short!

It was amazing to work with Prof. Neil Walton at an early stage of my PhD. You inspired me by your enthusiasm and critical thinking.  I would also like to thank you for serving as my thesis referee, and for carefully
evaluating this thesis. 

My deepest thanks go also to Prof. Jalal Kazempour for serving as my thesis referee,  for carefully
evaluating this thesis, and for meeting with me to discuss the content of the thesis, regardless of your tight schedule. I especially thank you and your research group for offering invaluable feedback and widening my perspective with research questions on a wide range of topics on many separate occasions. 

I am honored to have met and interacted with Prof. Sven Seuken during my studies. Thank you for serving as my thesis referee, and for carefully
evaluating this thesis as well. Especially thank you and also your research group for invaluable feedback. I benefited greatly from many interactions in your course and in your seminar series.  

It was also a pleasure to collaborate with Dr. Stefanos Delikaraoglou, Prof. Tyler Summers, and Prof. Gabriela Hug. I am grateful for all the time you dedicated to me. I would also like to thank Prof. Rico Zenklusen, Prof. Dan Molzahn, Prof. Patrick Panciatici, Prof. Hung-po Chao, Prof. Pierre Pinson, and Prof. Steven Low for well-timed discussions that shaped my research and my thinking.

Special thanks to Prof. John Lygeros for giving me the opportunity to be a part of IfA during my master's degree. I can't thank you enough for all the time you dedicated to me while I was planning my career steps both during the master's and during PhD. I am also grateful for the trust you had in me while I was the head TA of Linear System Theory. 

I would also like to thank Prof. Roy Smith, Prof. Florian D\"orfler, and Prof. Manfred Morari for creating an amazing atmosphere for research at IfA, full of openness and curiosity, and for inspiring me with their approach to research. It was a great privilege to receive your support and advice throughout this journey.

Prof. Angelos Georghiou was the very first person that introduced me to thinking like a researcher. I don't think I can run out of different aspects that I need to thank you for. All our discussions were extremely valuable for my growth.

Similarly, I don't think I can thank Dr. Paul Beuchat, Dr. Georgios Darivianakis, and Dr. Xiaojing Zhang enough for their support to me as their master's student, and their suggestions and friendships during my PhD studies. 

I would also like to thank Prof. Hamdi Torun, Prof. Ali Emre Pusane, Prof. Kadri Ozcaldiran, Prof. Selim Hacısalihzade, Prof. Heba Yuksel, Prof. Yagmur Denizhan and many others from my bachelor's program for their early career planning advice/guidance and for introducing me to control theory.

I owe a big thanks to Luca, who was a flat-mate, a office-mate, a conference-mate, and the list goes on and on! I am also thankful for all the great time I had discussing research with Pier Giuseppe, Ilnura, and Yimeng especially in our weekly meetings. I am very fortunate for being part of IfA, and I would like to thank all of my colleagues for all the memorable moments together. I would like to express my greatest gratitude to all those people that made this place what it is. I will use my chance to thank all of you in person! I owe a big thanks to Sabrina and Tanja for being extremely helpful with the littlest of problems I had. On a final note, I want to thank my master's students Baiwei, Anna, Petros, Lukas, and Daniel for all the enthusiasm and dedication you brought to your research. 

On a more personal level, I thank all my close friends in Zurich, and also close friends from high school and university. I want to thank you for not giving up on me. I think this sentence would resonate well with you considering especially this last year.

It is also difficult for me to find the right words to thank my family. I wouldn't get to this point without the encouragement and the freedom you gave me.

Last but not least, I am very grateful to my fianc\'ee Jana. At the end of your PhD, you should instead be receiving two doctoral degrees. Thanks for the unconditional love \& support and for bearing with me on my journey to becoming a good researcher.

\begin{flushright}
	Orcun Karaca\\
	Zürich, October 2020
\end{flushright}
\chapter*{Abstract}
\addcontentsline{toc}{chapter}{Abstract}

The liberalization of the energy sector led to the development of electricity markets that would improve economic efficiency and attract new investments to the grid. Although the specific structures and rules of these markets are quite diverse around the world, they were all conceived on the premise of predictable and fully controllable generation with nonnegligible marginal costs. Recent changes, specifically, the increasing renewable energy integration, have progressively challenged such operating assumptions. As this integration became deeper, transmission grids began to experience congestion in unforeseen and uncertain patterns. Moreover, these trends have resulted in substantial out-of-market transactions, questioning now the modeling and the management of electricity markets. In light of this operational paradigm shift, this thesis intends to devise new market frameworks and advance our understanding of the future electricity markets.


In the first part of the thesis, we focus on mechanism design when the market model fully reflects the physics of the grid and the participants. Specifically, we consider an electricity market setting that involves continuous values of different kinds of goods, general nonconvex constraints, and second stage costs. We seek to design the payment rules and conditions under which coalitions of participants cannot influence the market outcome in order to obtain higher collective utility. 
Under the incentive-compatible Vickrey-Clarke-Groves mechanism, our first contribution is to prove that such coalition-proof outcomes are achieved if the submitted bids are convex and the constraint sets are of a polymatroid-type. These conditions, however, do not capture the complexity of the general class of auctions under consideration. By relaxing the property of incentive-compatibility, we investigate further payment rules, called the core-selecting mechanisms, that are coalition-proof without any extra conditions on the submitted bids and the constraint sets. We show that core-selecting mechanisms generalize the economic rationale of the locational marginal pricing (LMP) mechanism. Namely, these mechanisms are the exact class of mechanisms that ensure the existence of a competitive equilibrium in linear/nonlinear prices. This implies that the LMP mechanism is also core-selecting, and hence coalition-proof. In contrast to the LMP mechanism, core-selecting mechanisms exist for a broad class of electricity markets, such as the ones involving nonconvex costs and nonconvex constraint sets. In addition, they can approximate truthfulness without the price-taking assumption of the LMP mechanism. Finally, we show that they are also budget-balanced. 

In the second part of the thesis, we turn our attention to the coordination of regional markets in the spatial domain to exploit the geographic diversification of the renewable resources. In particular, the establishment of a single European day-ahead market has accomplished the integration of the regional day-ahead markets. However, reserves provision and activation remain an exclusive responsibility of regional operators. This limited spatial coordination and the sequential structure hinder the efficient utilization of flexible generation and transmission, since their capacities have to be ex-ante allocated between energy and reserves. {To promote reserve exchange, a recent study proposed a preemptive model that withdraws a portion of the inter-area transmission capacity available from day-ahead energy for reserves by minimizing the expected system cost.} This decision-support tool, formulated as a stochastic bilevel program, respects the current architecture but does not suggest area-specific costs that guarantee sufficient benefits for areas to accept the solution. To this end, our main contribution is to formulate a preemptive model in a framework that allows application of coalitional game theory methods to obtain a stable benefit allocation, that is, an outcome immune to coalitional deviations ensuring willingness of areas to coordinate. We show that benefit allocation mechanisms can be formulated either at the day-ahead or the real-time stages, in order to distribute the expected or the scenario-specific benefits, respectively. For both games, the proposed least-core benefits achieve minimal stability violation, while allowing for a tractable computation with limited queries to the bilevel program. Our case studies, based on an illustrative and a more realistic test case, compare our method with well-studied benefit allocations, namely, the Shapley value and nucleolus. {The upshot of our contribution is to analyze the factors that drive these benefit allocations (e.g., flexibility, network structure, wind correlations).} 

\chapter*{Sommario}
\addcontentsline{toc}{chapter}{Sommario}

La liberalizzazione del settore energetico ha portato allo sviluppo di mercati dell'energia elettrica che migliorerebbero l'efficienza economica e attirerebbero nuovi investimenti nella rete. Sebbene le strutture e le regole specifiche di questi mercati siano piuttosto diverse in tutto il mondo, sono state tutte concepite sulla premessa di una generazione prevedibile e completamente controllabile con costi marginali non trascurabili. I recenti cambiamenti, in particolare la crescente integrazione delle energie rinnovabili, hanno progressivamente messo in discussione tali presupposti operativi. Con l'avanzamento di questa integrazione, le reti di trasmissione hanno iniziato a congestionarsi in maniera imprevedibile ed incerta. Inoltre, queste tendenze hanno portato a sostanziali transazioni fuori mercato, mettendo ora in discussione la modellazione e la gestione dei mercati dell'energia elettrica. Alla luce di questo cambiamento di paradigma operativo, questa tesi intende elaborare nuovi quadri di mercato ed avanzare la nostra conoscenza sui futuri mercati dell'elettricità.

Nella prima parte della tesi, ci concentriamo sulla progettazione di meccanismi nel caso in cui il modello di mercato riflette pienamente la fisica della rete e dei partecipanti. In particolare, consideriamo un quadro di mercato dell'elettricità che comporta valori continui di diversi tipi di beni, vincoli generali non convessi e costi di seconda fase. Cerchiamo di progettare le regole di pagamento e le condizioni in cui le coalizioni di partecipanti non possono influenzare il risultato del mercato per ottenere una maggiore utilità collettiva. Nell'ambito del meccanismo ``incentivo-compatibile" di Vickrey-Clarke-Groves, il nostro primo contributo consiste nel dimostrare che tali risultati a prova di coalizione si ottengono se le offerte presentate sono convesse e i vincoli sono di tipo polimatroide. Queste condizioni, tuttavia, non catturano la complessità della classe generale delle aste in esame. Rilassando la proprietà di meccanismo incentivo-compatibile, investighiamo su ulteriori regole di pagamento, chiamate meccanismi di selezione di core, che sono a prova di coalizione senza condizioni aggiuntive sulle offerte presentate e sui set di vincoli. Dimostriamo che i meccanismi di selezione di core generalizzano la logica economica del meccanismo di determinazione del prezzo marginale per località (LMP). Vale a dire, questi meccanismi sono l'esatta classe di meccanismi che assicurano l'esistenza di un equilibrio competitivo nei prezzi lineari/non lineari. Ciò implica che il meccanismo di LMP è anche un meccanismo di selezione di core, e quindi a prova di coalizione. A differenza del meccanismo LMP, meccanismi di selezione di core esistono per un'ampia classe di mercati dell'elettricità, come quelli che comportano costi non convessi e vincoli non convessi. Inoltre, essi possono approssimare la ``truthfulness" senza l'assunzione del ``price-taker" del meccanismo LMP. Infine, dimostriamo che sono anche equilibrati dal punto di vista del bilancio.

Nella seconda parte della tesi, rivolgiamo la nostra attenzione al coordinamento dei mercati regionali in ambito spaziale per sfruttare la diversificazione geografica delle risorse rinnovabili. In particolare, la creazione di un mercato unico europeo ``day-ahead" ha realizzato l'integrazione dei mercati regionali ``day-ahead". Tuttavia, la fornitura e l'attivazione delle riserve rimangono di esclusiva competenza degli operatori regionali. L'utilizzo efficiente di generazione e trasmissione flessibili è ostacolato da questo coordinamento spaziale limitato e dalla struttura sequenziale, poiché le loro capacità devono essere ripartite ex ante tra energia e riserve.Per promuovere lo scambio di riserve, un recente studio ha proposto un modello preventivo che ritira una parte della capacità di trasmissione inter-area disponibile dall'energia ``day-ahead" per le riserve, minimizzando il costo previsto del sistema.  Questo strumento di supporto alle decisioni, formulato come programma bilivello stocastico, rispetta l'architettura attuale ma non suggerisce costi specifici per area che garantiscano sufficienti benefici alle aree per accettare la soluzione. A tal fine, il nostro contributo principale è quello di formulare un modello preventivo in un quadro che consenta l'applicazione dei metodi della teoria dei giochi coalizionali per ottenere una stabile allocazione dei benefici, cioè un risultato immune da deviazioni coalizionali che garantisca la disponibilità delle aree a coordinarsi. Mostriamo che i meccanismi di allocazione dei benefici possono essere formulati sia nella fase iniziale che in quella in tempo reale, al fine di distribuire i benefici attesi o i benefici specifici dello scenario, rispettivamente. In entrambi i casi, i benefici least-core proposti raggiungono una violazione minima della stabilità, consentendo al tempo stesso un calcolo efficiente con query limitate al programma bilivello.  I nostri casi di studio, basati su un caso di prova illustrativo e più realistico, confrontano il nostro metodo con l'assegnazione di benefici ben studiati, ovvero il valore di Shapley e il nucleolo. Il risultato del nostro contributo è l'analisi dei fattori che guidano queste assegnazioni di benefici (ad esempio, la flessibilità, la struttura della rete, la correlazione col vento).


\tableofcontents

\chapter*{Acronyms and notation}
\addcontentsline{toc}{chapter}{Acronyms and notation}
\printacronyms
\acuseall
For the sake of clarity, some of the acronyms above will be defined again when they appear in the dissertation for the first time.

\subsection*{Mathematical symbols}
\begin{longtable}[l]{l l}
$\R$, $\R_+$ & set of real, and nonnegative real numbers\\
$\mathbb{Z}$, $\mathbb{Z}_+$ & set of integers, and nonnegative integers\\
$[a,b]$ & interval of real numbers $x\in\R$ with $a\le x\le b$\\
$[m]$ & set of all positive integers up to integer $m$\\
$2^A$ & power set of set $A$\\
 $\|x\|_p$ & $p$-norm of $x\in{\R}^n$ \\
 $\mathbb{E}_s[\cdot] $ & expectation calculated over the distribution of the random variable $s$\\
 $f(x)=\mathcal{O}(g(x))$ & big O notation\\
 $\perp$ & complementarity condition\\
\end{longtable}

As a remark, we will reserve additional symbols throughout \cref{part:1,part:2}. These symbols are valid only within their corresponding parts. For instance, calligraphic capital letters denote bid profiles in~\cref{part:1}, whereas they denote sets in~\cref{part:2}. On the other hand, capital letters denote sets in \cref{part:1}, whereas they denote market parameters in~\cref{part:2}. All the necessary clarifications are provided in \cref{sec:p1_2,sec:elecmarkframe}.

\mainmatter 
\chapter{Overview}
\label{ch:overview}
The deregulation of the electricity industry in the '90s led to the development of electricity markets, which has been essential to improve economic efficiency and to attract new investments.
Although the specific structures and rules of these markets are quite diverse around the world, 
they were all conceived on the premise of predictable and fully controllable generation with nonnegligible marginal costs. 
In response to the increasing efforts to reduce CO$_2$ emission, and in order to ensure more sustainable energy use,  there is a rapid transformation in the generation type, namely with the growth of renewable resources. For instance, California mandates 50\% renewable resources by 2025, and 100\% by 2045, and Germany embraced a policy of no active nuclear plants by 2022~\cite{schulte2019100}, see also~\cref{fig:2040ENTSOE}.
These resources pose new challenges both in terms of control and market design, because they are uncertain, intermittent, mostly uncontrollable, and accompanied with zero marginal costs.
In light of this paradigm shift, there is an imperative need to devise new market frameworks to simultaneously incorporate traditional thermal plants (large and inflexible with slow ramp rates), renewable resources, and hydropower plants and other flexible plants such as nuclear power (with high costs). To this end, there has been a surge of interest from academics, industry as well as policymakers~\cite{wilson2002architecture,neuhoff2005large,ahlstrom2015evolution,cramton2017electricity,bose2019some}.

\begin{figure}[h]
    \centering
    \includegraphics[width=0.85\columnwidth]{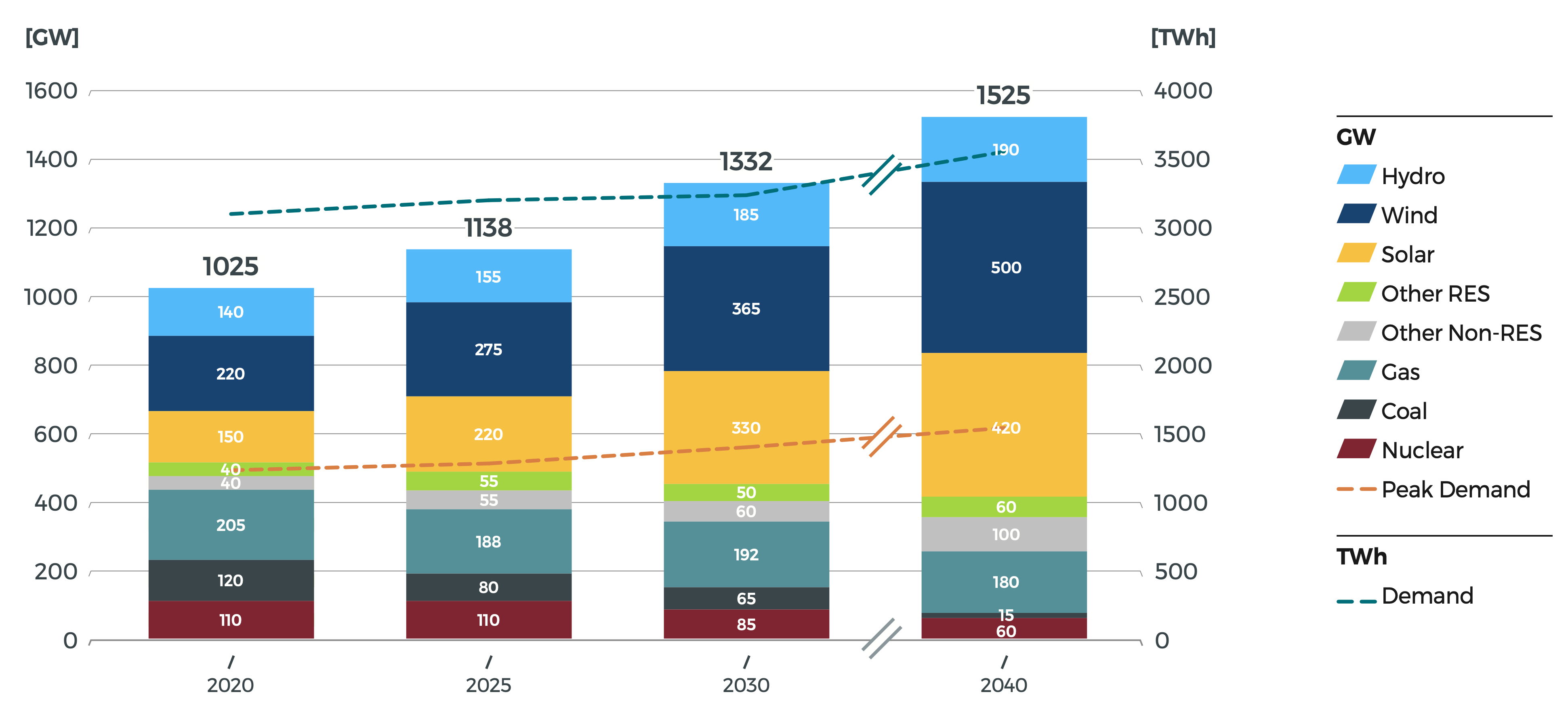}%
    \caption{Installed capacity [GW] and demand [TWh] scenario according to the development plan of ENTSO-E (European Network of Transmission System Operators for Electricity)~\cite{elia20}}
    \label{fig:2040ENTSOE}
\end{figure}

One prevailing issue originates from the fact that the existing electricity markets do not fully reflect the physics of the underlying grid and the technical constraints of the market participants. For instance, power flow equations in their full generality (modeling both transmission losses and reactive power) render the optimization problem nonconvex, which is a hard problem not only to solve but also to compute meaningful prices for. Ignoring such complexities results in inefficient market outcomes and unpriced goods (e.g., reactive power), which in turn necessitates out-of-market transactions. Thanks to the development of better numerical tools, it is now possible to incorporate more accurate models of both the grid and the participants. 
One main challenge is that many existing electricity markets rely on defining payments using the Lagrange multipliers that are well-defined only under convexity assumptions.
To address this challenge, \textit{the first goal of the thesis} is to design payment rules with desirable properties for an electricity market setting, which can be modeled by an auction involving continuous goods, second stage costs, and general nonconvex constraints.\footnote{Auctions, in particular, are effective tools for allocating resources and determining their values among a set of participants. Prominent auction examples, other than those found in electricity markets, are those in spectrum auctions~\citep{cramton2013spectrum,bichler2017handbook}, and auctions for fish harvesting rights and other natural capitals~\citep{bichler2019designing,teytelboym2019natural}.}  This is especially an exciting time to consider and study tools from mechanism design and auction theory. At the time of the submission of this dissertation (October 2020), The 2020 Sveriges Riksbank Prize in Economic Sciences in Memory of Alfred Nobel\footnote{\url{https://www.nobelprize.org/prizes/economic-sciences/2020/summary/}} has just been awarded to Professor Paul Milgrom and Professor Robert Wilson for their groundbreaking contributions to auction theory and inventions of new auction formats. \looseness=-1

Another issue, specifically pertaining to the European electricity markets, is that these markets suffer from partial coordination in space. Although the day-ahead energy markets are jointly cleared, the reserve capacity and real-time balancing markets are still operated on a regional (country) level. If we can remove the existing barriers for cross-border trading between regional operators, geographic diversification of the uncertain renewable resources would smooth out the forecast errors and reduce the need for real-time balancing actions. One main challenge is that, because of the sequential market architecture, the joint-clearing of reserve markets will require allocating a portion of the cross-border transmission capacity from the day-ahead market to the reserve market. To this end, \textit{the second goal of the thesis} is to design benefit allocation mechanisms (that is, novel discount terms for regional costs) such that the resulting region-specific costs make the overall mechanism immune to coalitional deviations ensuring that all regional operators are willing to coordinate via the cross-border transmission capacities we plan to propose.\looseness=-1

\section{Outline and contributions}
\subsection{\cref{part:1}}
In the first part of the thesis, our goal is to study mechanism design to achieve an efficient outcome in an electricity market setting, which involves continuous goods (e.g., electrical power), second stage costs, and general nonlinear constraints (instead of simple constraints, e.g., availability of a fixed number of items in multi-item auctions).

\subsubsection*{Outline}
\cref{ch:p1introduction} provides an in-depth literature review on different aspects of this problem, and motivates our goal.
\cref{sec:p1_2} introduces a general class of electricity markets and discusses desirable properties for mechanisms.  \cref{sec:p1_3} brings in tools from coalitional game theory, namely the {core}. Throughout this chapter, we investigate conditions under which the incentive-compatible VCG mechanism is coalition-proof, that is, immune to coalitional manipulations. Since these conditions do not capture the complexity of the general class of electricity markets, coalition-proof core-selecting payment rules are proposed in \cref{sec:p1_4}. Using tools from coalitional game theory and competitive equilibrium theory, we then prove the equivalence of core and competitive equilibrium. We investigate incentive-compatibility and budget-balance. Finally, \cref{sec:p1_5} presents case studies based on real-world electricity market data. 

\subsubsection*{Overview of contributions}
The contributions of \cref{part:1} are as follows. These contributions are presented in more detail with their connections to the existing results in the literature in \cref{ch:p1introduction}. 
\begin{enumerate}
\item In~\cref{sec:p1_3}, we prove that in the electricity market setting we consider the VCG mechanism is coalition-proof and the VCG utilities lie in the core, if and only if the market objective function~is supermodular. We then derive novel conditions on the bids and the constraint sets based on polymatroid theory such that the VCG mechanism is coalition-proof.

\item In \cref{sec:p1_4}, we show that selecting payments from the core results in a coalition-proof mechanism without any restrictions on~the bids and the constraints. We prove that for electricity markets any competitive equilibrium is efficient. We then establish that a mechanism is core-selecting if and only if it ensures the existence of a competitive equilibrium. This equivalence implies that the LMP mechanism is also a core-selecting mechanism. In the remainder of this chapter, we derive an upper bound on the additional profit a bidder can obtain by a unilateral deviation from its truthful bid, under any core-selecting mechanism. Using this bound, we propose a mechanism that maximizes incentive-compatibility among all core-selecting mechanisms. In addition, we show that any core-selecting mechanism is budget-balanced when we extend our results to the exchange setting. 

\item In \cref{sec:p1_5}, we verify our results with case studies based on real-world electricity market data, including the Swiss reserve market and different optimal power flow models.
\end{enumerate}
\subsection{\cref{part:2}}

In the second part of the thesis, our goal is to propose a coalitional game-theoretic approach to enable coordinated balancing and reserve exchange in a European level and to gain technical and economical insights about such a process.
\subsubsection*{Outline}
\cref{secintro} provides an in-depth literature review on different aspects of this problem, and motivates our goal.
\cref{sec:elecmarkframe} describes the organizational structure of the European electricity markets and introduces a set of necessary assumptions to obtain tractable models. \cref{sec:transcapalloc} discusses the issues related to reserve exchanges {and motivates the formulation of} the preemptive transmission allocation model. \cref{sec:3} {introduces necessary background from} coalitional game theory, whereas \cref{sec:4} focuses on the games arising from this preemptive transmission allocation model, {which provide the basis for the} benefit allocation mechanisms that accomplish the implicit coordination requirements outlined in the previous sections.
The numerical case studies are presented in \cref{sec:CaseStudies}.

\subsubsection*{Overview of contributions}
The contributions of \cref{part:2} are as follows. These contributions are presented in more detail with their connections to the existing literature in \cref{secintro}. 
\begin{enumerate}
\item In~\cref{sec:transcapalloc}, we formulate the coalition-dependent version of the preemptive transmission allocation model such that we can consider coalitional arrangements between only a subset of operators.

\item In~\cref{sec:4}, we first study the coalitional game that treats the benefits as an ex-ante process with respect to the uncertainty realization {and} we provide a condition under which the core is nonempty. Under this condition, it is possible to obtain a stable benefit outcome, that is, an outcome from the core ensuring the willingness of all areas to coordinate. In case this condition is not satisfied, we prove that the least-core, which is an outcome that attains minimal stability violation, also ensures the individual rationality property. We then propose the least-core selecting mechanism as a benefit allocation that achieves minimal stability violation, while enabling the approximation of an additional fairness criterion. In order to {implement} this mechanism with only a few queries to the preemptive model, we formulate a constraint generation algorithm. 
In addition, we formulate a variation of the coalitional game that {allocates} the benefits {in} an ex-post process. For this game, we provide conditions under which the core is empty. Finally, we propose an ex-post version of our least-core benefit allocation mechanism. 

\item In \cref{sec:CaseStudies}, we provide techno-economic insights on the factors that drive benefit allocations first with an illustrative three-area nine-node system and then with a more realistic case study based on a larger IEEE test system.
\end{enumerate}
\section{Publications}
This thesis contains a selected collection of results derived during the author's studies as a Ph.D. candidate. The corresponding articles on which this thesis is based are listed below.

\subsection{\cref{part:1}}
The results on the design of coalition-proof payment rules were developed in collaboration with N. Walton, P. G. Sessa, and M. Kamgarpour. The results on the equivalence of the core and competitive equilibrium were developed in collaboration with M. Kamgarpour. \\
\begin{itemize}
\item[\cite{karaca2019designing}] ``\emph{Designing coalition-proof  reverse  auctions over continuous goods}'', \textbf{O. Karaca}, P. G. Sessa, N. Walton, and M. Kamgarpour, In IEEE  Transactions  on Automatic Control, 2019.
\item[\cite{karaca2018core}] ``\emph{Core-selecting mechanisms in electricity markets}'', \textbf{O. Karaca}, and M. Kamgarpour, In IEEE  Transactions  on  Smart  Grid, 2020.
\end{itemize}

\subsection{\cref{part:2}}
The results on benefit allocation mechanisms for enabling reserve exchanges were developed in collaboration with S. Delikaraoglou, G. Hug, and M. Kamgarpour.\\
\begin{itemize}
    \item[\cite{karaca2019benefits}] ``\emph{Enabling inter-area reserves exchanges through stable benefit allocation mechanisms}'', \textbf{O. Karaca}, S. Delikaraoglou, G. Hug, and M. Kamgarpour, Submitted, 2020.
\end{itemize}

\subsection{Other publications}
The following papers were either submitted or published by the author during his doctoral studies, but are not included in the thesis. \\

The main results of \cite{orcun2018game}, \cite{karaca2018weak}, and \cite{karaca2020regret}, that is, the first three references listed below, are mentioned in \cref{part:1}, but are not treated in detail.\\

\begin{itemize}
    \item[\cite{orcun2018game}] ``\emph{Game theoretic analysis of electricity market auction mechanisms}'', \textbf{O. Karaca}, and M. Kamgarpour, In Proceedings of the IEEE Conference on Decision and Control, 2017,
    \item[\cite{karaca2018weak}] ``\emph{Exploiting weak supermodularity for coalition-proof mechanisms}'', \textbf{O. Karaca}, and M. Kamgarpour, In Proceedings of the IEEE Conference on Decision and Control, 2018,
    \item[\cite{karaca2020regret}] ``\emph{No-regret learning from partially observed data in repeated auctions}'', \textbf{O. Karaca*}, P. G. Sessa*, A. Leidi, and M. Kamgarpour, In Proceedings of the IFAC World Congress, 2020,
\end{itemize}

The results of the following papers are not mentioned. The first three references concern structural control theory and the approximation guarantees for (co)matroid optimization problems. The last two references are on the relaxation hierarchies for polynomial optimization problems.\\

\begin{itemize}    
    \item[\cite{guo2019actuator}] ``\emph{Actuator placement for optimizing network performance under controllability constraints}'', B. Guo, \textbf{O. Karaca}, T. Summers, and M. Kamgarpour, In Proceedings of the IEEE Conference on Decision and Control, 2019,
    \item[\cite{karaca2019}] ``\emph{A comment on performance guarantees of a greedy algorithm for minimizing a supermodular set function on comatroid}'', \textbf{O. Karaca}, B. Guo, and M. Kamgarpour, In European Journal of Operational Research, 2020,
    \item[\cite{guo2019placement}] ``\emph{Actuator placement under structural controllability using forward and reverse greedy algorithms}'', \textbf{O. Karaca*}, B. Guo*, T. Summers, and M. Kamgarpour, In IEEE Transactions on Automatic Control, 2020,
    \item[\cite{wachter2020}] ``\emph{A convex relaxation approach for the optimized pulse pattern problem}'', L. Wachter, \textbf{O. Karaca}, G. Darivianakis, T. Charalambous, Submitted, 2020,
    \item[\cite{karaca2017repop}] ``\emph{The REPOP toolbox: Tackling polynomial optimization using relative entropy relaxations}'', \textbf{O. Karaca}, G. Darivianakis, P. N. Beuchat, A. Georghiou, and J. Lygeros, In Proceedings of the IFAC World Congress, 2017.
\end{itemize}
{\footnotesize{* indicates equal contribution.}}


\part{Designing coalition-proof auctions over continuous goods: The case of electricity markets}

\label{part:1}
\chapter{Introduction}
\label{ch:p1introduction}

A rapid transformation has been underway since the early '90s to replace the tight regulation of the electricity industry with competitive market structures~\cite{wilson2002architecture}. This liberalization has been essential to improve economic efficiency and to attract new investments to the grid~\cite{cramton2017electricity}. Designing electricity markets, however, is a complex task. One inherent complexity is the need to achieve real-time balance of supply and demand because of an inability to store electricity efficiently~\cite{cramton2003electricity}. This task is made more difficult by both intertemporal and network dependencies, and more recently, by high penetration of renewable resources. In contrast to the conventional generators, these renewable resources are uncertain, intermittent, and accompanied with zero marginal costs. 
As a result, there is an essential need to devise new electricity markets to simultaneously incorporate traditional thermal plants (large and inflexible with slow ramp rates), renewable resources, and hydropower plants and other flexible plants such as nuclear power (with high costs).\footnote{These markets are also required to encourage the integration of smart meters and demand response programs as a substitute for flexible generation, e.g., by utilizing the charging profiles of electric vehicles~\cite{rious2015electricity,li2019transactive}. For the ease of discussion and presentation, this part will focus mainly on the complexities originating from the supply/seller side.} To this end, there has been a great amount of interest from academics, industry, as well as policymakers, amassing a significant amount of studies in the field~\cite{ahlstrom2015evolution,bose2019some}. 

The goal of \cref{part:1} of this thesis is to study mechanism design for electricity markets in which generators first submit bids representing their underlying economic costs, and a central operator then optimizes all the resources to secure a reliable grid operation.\footnote{This setting involving multiple sellers but a single buyer (in our case, the central operator) is called a reverse auction. On the other hand, a forward auction involves a single seller and multiple buyers, whereas an exchange involves multiple sellers, buyers, and potentially trading participants selling some goods and buying others simultaneously.} The principal element of such electricity markets is the payment made to each generator since these participants have incentives to strategize around~these payments. In particular, the central operator needs to carefully design the payment rule to ensure an efficient outcome, that is, an outcome maximizing social welfare. This goal {is best} achieved if the central operator solves for the optimal allocation under the condition that all the market participants agreed to reveal their true costs to the central operator.

\section{Related works}
\subsubsection*{Existing works on the design of payment rules}
The locational marginal pricing (LMP) mechanism is a well-studied payment rule used in many existing electricity markets~\cite{schweppe2013spot,alsac1990further,wu1996folk,hogan1992contract}. It is based on using the Lagrange multipliers of nodal balance equations in optimal power flow~(OPF) problems to form linear prices. This proposal, put forth in~\cite{bohn1984optimal}, is ubiquitous in the US market. On the other hand, a zonal abstraction is used in Europe by constraining the inter-zonal power flows with the available transfer capacity or with a flow-based domain.\footnote{The European electricity market idiosyncrasies are discussed in detail later in \cref{sec:elecmarkframe} of \cref{part:2}.} 

If generators assume that the Lagrange multipliers are independent of their bids, then the LMP mechanism is incentive-compatible, that is, generators are incentivized to submit their true costs. This assumption, also called price-taking, arises from competitive equilibrium theory; however, it often does not hold in practice~\cite{mas1995microeconomic}. In particular, empirical evidence has shown that strategic manipulations have increased the LMP payments substantially in electricity markets~\cite{joskow2001quantitative}. For instance, many studies attribute the California electricity crisis of 2000-2001 to Enron's energy traders' manipulations~\cite{mccullough2002congestion}. Moreover, the LMP payments are well-defined only under convexity assumptions on the bids and the constraints~\cite{o2005efficient}. Convexity is a simplifying abstraction of many realistic grid and market models, see for example the models in~\cite{lavaei2012competitive,warrington2012market}.\footnote{Convex electricity market models do not fully reflect the physics of the underlying grid and the discrete nature of typical technical constraints of the market participants (e.g., minimum energy output levels and startup costs). For instance, power flow equations in their full generality (modeling both transmission losses and reactive power) render the optimization problem nonconvex~\cite{lavaei2012zero}. Because convex formulations ignore reactive power (as a supply/good in the auction), the operator has to acquire this service through out-of-market transactions. PJM Interconnection spent $\$342$ million in 2018 for this purpose alone~\cite{winnicki2019convex}. Given that high penetration of solar
generation requires more and more reactive power, we need to
ensure reactive power capabilities exist. } Without such restrictions, it is not possible to guarantee the existence of meaningful Lagrange multipliers~\cite{bikhchandani1997competitive}. To circumvent this difficulty, some real-world electricity markets compute linear prices by some convexification/approximation method and then complement these linear prices with side payments. These side payments go under the name of {uplift}, following a nomenclature established during the UK electricity market restructuring~\cite{hogan2003minimum}. Essentially, the uplift is equivalent to either the deficit or the opportunity cost of the participant when using the linear prices derived from the approximate models. However, when such approximations are introduced the incentive-compatibility issue becomes even more concerning, since these uplift payments are known to create additional incentives for manipulative behavior~\cite{chao2019incentives,liberopoulos2016critical}.

In contrast to the LMP mechanism, the Vickrey-Clarke-Groves (VCG) mechanism ensures that truthful bidding is the dominant-strategy Nash equilibrium~\cite{vickrey1961counterspeculation,clarke1971multipart,groves1973incentives}. Consequently, several recent works have proposed the use of this payment rule in a broad class of electricity market problems~\cite{samadi2012advanced, pgs,xu2017efficient}.  However, the VCG mechanism is often deemed undesirable for practical applications since coalitions of generators can strategically bid to increase their collective utility. As a result, it is susceptible to different kinds of manipulations such as collusion and shill bidding~\cite{hobbs2000evaluation,ausubel2006lovely,yokoo2004effect}.\footnote{Since the same participants are involved in similar market transactions day after day, electricity markets can particularly be exposed to collusion and shill bidding~\cite{anderson2011implicit}.} These shortcomings are decisive in the practicality of the VCG~mechanism since in a larger context the VCG~mechanism is not truthful. As a result, practical applications of the VCG~mechanism in real commerce are rare at best~\cite{rothkopf2007thirteen}.

The shortcomings described above occur when the VCG utilities are not in the~\textit{core}, as it is outlined in combinatorial auction literature \cite{ausubel2006lovely,milgrom2004putting,milgrom2017discovering}.\footnote{In the remainder, we use the terms combinatorial auction and multi-item auction interchangeably to refer to the case with discrete goods/items.} The core is a concept from coalitional game theory where the participants have no incentives to leave the grand coalition, that is, the coalition of all participants \cite{osborne1994course,peleg2007introduction}.
 Recently, coalitional game theory has received attention from different communities, e.g., for aggregating power generators~\cite{baeyens2011wind}, deriving control policies for multi-agent systems~\cite{maestre2014coalitional}, and sharing storage devices~\cite{chakraborty2018sharing}. In this part, we use coalitional game theory to ensure that the VCG mechanism is coalition-proof, in other words, collusion and shill bidding are not profitable. To this end, we derive conditions on the submitted bids and the constraint sets of the market that ensure core VCG utilities by utilizing some recent advances from combinatorial optimization literature. We show that under separable convex bids (or marginally increasing in the discrete/quantized case) and~polymatroid-type constraints the VCG mechanism is coalition-proof.

The restricted market setting for core VCG utilities, however, does not capture the complexity of general auctions arising in electricity markets. Specifically, these markets may involve nonconvex bids (e.g., startup costs), and complex constraint sets that are not polymatroids (e.g., DC or AC optimal power flow constraints). Hence, it may not be possible to ensure core VCG utilities. To this end, we focus on payment rules that are coalition-proof without any extra conditions on the bids and the constraints. These payment rules are referred to as core-selecting mechanisms, and they were first proposed for multi-item auctions~\cite{day2008core}. Computational difficulties in finding a core outcome was addressed in \cite{day2007fair}. Further studies have shown that different core outcomes can be chosen for robustness and fairness criteria \cite{erdil2010new, day2012quadratic}. A closed-form analytical Bayes-Nash equilibrium (BNE) analysis was carried out in a three-bidder two-item auction in \cite{ausubel2010core}. This study was extended with novel algorithmic frameworks for computational BNE analyses and the design of large multi-item auctions in the works of~\cite{bosshard2017computing,bunz2018designing,bosshard2018computing,bosshard2018non}. In this part of the thesis, we generalize the coalition-proofness of core-selecting mechanisms, developed and analyzed for multi-item auctions, to an electricity market setting, which involves continuous goods (e.g., electrical power), second stage costs, and general nonlinear constraints (instead of simple constraints, e.g., availability of a fixed number of items in multi-item auctions).

We then show that core-selecting mechanisms are the exact class of mechanisms that ensure the existence of a competitive equilibrium under linear/nonlinear prices in the electricity market setting described above. Our result implies that the LMP mechanism is also core-selecting, and hence coalition-proof whenever strong duality holds (for instance, in the convex setting). The equivalence of competitive equilibrium and the core was first shown in~\cite{shapley1971assignment} for item exchanges under unit demand and unit supply, for example, house allocation problems. The work in~\cite{bikhchandani2002package} characterizes competitive equilibria in multi-item auction problems with a simple supply-demand balancing equality constraint. However, similar to the past work in~\cite{shapley1971assignment}, the proof does not readily apply to the electricity market setting we consider, since this result utilizes linear-programming duality in application to appropriate linear-programming reformulations of the multi-item auction problems with simple constraints (see also the proof in~\cite{parkes2002indirect}). Moreover, previous works for core-selecting mechanisms in multi-item auctions such as \cite{day2007fair,day2008core,day2012quadratic,erdil2010new} do not explicitly address the connection between the core and the competitive equilibrium. 

Complementing the existing works on the LMP mechanism for electricity markets, we further highlight that core-selecting mechanisms are applicable to a broad class of electricity markets, such as the ones featuring nonconvex costs and/or nonconvex constraint sets, whereas the LMP mechanism has applicability only for the case of convex bids and constraints.
Naturally, core-selecting mechanisms relax the incentive-compatibility property of the VCG mechanism. In order to alleviate this issue, we prove that core-selecting mechanisms can approximate incentive-compatibility without relying on the price-taking assumption of the LMP mechanism, while achieving the budget-balance property in an exchange setting.\footnote{Note that the benefits of core-selecting mechanisms are accompanied by nonlinear pricing which might be regarded as a big shift for some existing electricity markets which have been cleared by linear nodal prices throughout the last decade~\cite{bose2019some}.}
 
\subsubsection*{Other related works from mechanism design theory}
Let us contrast our work with other existing mechanism design research.
The authors in~\cite{lazar2001design} design a forward VCG auction for continuous goods by restricting each participant to submitting a single price-quantity pair to the operator. They show that this mechanism, called progressive second price (PSP) mechanism, has a truthful $\epsilon$-Nash equilibrium, which can be attained by best-response dynamics. The work in~\cite{jia2010analysis} studies the PSP mechanism subject to quantized pricing assumptions, and proposes a fast algorithm converging to a quantized Nash equilibrium with a high probability. More recently, the work in \cite{zou2017efficient} addresses cross-elasticity in PSP design arising from having a multi-period problem for charging electric vehicles. This work is generalized to decentralized procedures and double-sided auctions in \cite{zou2017resource}. In regard to these studies, the PSP mechanism cannot be implemented in dominant-strategies, and truthfulness is only in the price dimension for a given quantity. Specifically, in the PSP mechanism, there is no single true quantity to declare, and the optimal
		quantity depends on the bids of other participants~\cite[\S 3.2]{lazar2001design}. Moreover, the convergence analysis of the best response dynamics is limited to strongly concave true valuations and simple market constraints, for example, availability of a fixed amount of a single continuous good~\cite[\S 3.1]{lazar2001design}.
On the other hand, the work in~\cite{xu2017efficient} applies the VCG mechanism to the wholesale electricity markets and shows that it results in larger payments than the locational marginal pricing mechanism. There are also recent auction theory applications from control community in provisioning of a distributed database~\cite{sanghavi2008new}, and selecting a host for a noxious resource (e.g., trash disposal facility)~\cite{wang2017ex}.
Nevertheless, none of the aforementioned works consider coalitional manipulations. Finally, the works in \cite{marden2013overcoming,marden2014generalized,li2014decoupling} study the design of the participants' utilities such that the selfish behavior of the participants results in a social welfare maximizing outcome. By contrast, in our case, the true valuations of the participants are \textit{a priori} unknown and they are not part of the design. Instead, we are guiding the participants to a social welfare maximizing outcome by designing meaningful incentives through the payment rule. 


\section{Summary of goals and contributions}

The contributions of \cref{part:1} are as follows. 
\begin{enumerate}
\item We prove that in the electricity market setting we consider the VCG mechanism is coalition-proof and the VCG utilities lie in the core, if and only if the market objective function~is supermodular. These results are direct extensions of the results in \cite{ausubel2002ascending,ausubel2006lovely} from the multi-item forward auction setting.

\item Considering again the special setting of continuous goods, second stage costs, and complex constraints, we derive novel conditions on the bids and the constraint sets under which the VCG mechanism is coalition-proof.

\item We then show that selecting payments from the core results in a coalition-proof mechanism without any restrictions on~the bids and the constraints, extending results of~\cite{day2008core} from the multi-item~forward auction setting to the electricity market setting. 

\item We prove that for electricity markets any competitive equilibrium is efficient. This result extends the well-known first fundamental theorem of welfare economics stating that competitive markets tend towards an efficient outcome. 

\item We establish that a mechanism is core-selecting if and only if it ensures the existence of a competitive equilibrium. This result is novel for the electricity market setting we consider, and it also applies to the exchange setting. This equivalence implies that the LMP mechanism is also a core-selecting mechanism. 

\item We derive an upper bound on the additional profit a bidder can obtain by a unilateral deviation from its truthful bid, under any core-selecting mechanism. Using this bound, we propose a mechanism that maximizes incentive-compatibility among all core-selecting mechanisms. This result directly extends the previous proposals from the multi-item forward auction setting, such as \cite{day2007fair,day2012quadratic}. 

\item In addition, we show that any core-selecting mechanism is budget-balanced when we extend our results to the exchange setting. This result is novel since multi-item exchange literature defines the core without the central operator resulting in general in an empty core~\cite{hoffman2010practical,day2013division,milgrom2007package,bichler2017core}. 

\item Finally, we verify our results with case studies based on real-world electricity market data.
\end{enumerate}

\subsubsection*{Organization}

\cref{sec:p1_2} introduces a general class of electricity markets and discusses desirable properties for mechanisms.  \cref{sec:p1_3} brings in tools from coalitional game theory, namely the {core}. Throughout this chapter, we investigate conditions under which the VCG mechanism is coalition-proof. Since these conditions do not capture the complexity of the general class of electricity markets, alternative payment rules are proposed in \cref{sec:p1_4}. Using tools from coalitional game theory and competitive equilibrium theory, we then prove the equivalence of core and competitive equilibrium. We investigate incentive-compatibility and budget-balance. Finally, \cref{sec:p1_5} presents case studies based on real-world electricity market data. 
\chapter{Mechanism framework for electricity markets}\label{sec:p1_2}

We start with a generic (one-sided) electricity market reverse auction.\footnote{Our results can be generalized to exchanges, see the discussion provided in~\cref{sec:p1_4}.} The set of participants consists of the central operator $l=0$ and the bidders $L=\{1,\ldots,\lvert L\rvert\}$. Let there be $t$ types (or kinds) of power supplies in the auction. These types can include control reserves, also known as ancillary services~\cite{abbaspourtorbati2016swiss}, or active and reactive power injections differentiated by their nodes, durations, and scheduled times. Supplies of the same type from different bidders are fungible (that is, interchangeable) to the central operator. 

We assume that each bidder~$l$ has a private true cost~function $c_l: \X_l \rightarrow \mathbb R_+$, $\X_l\subseteq\R_+^t$. We further assume that $0\in \X_l$ and~$c_l(0)=0$. This assumption holds for many electricity markets, for instance, control reserve markets and day-ahead markets that include generators' start-up costs.  Each bidder~$l$ then submits a bid function to the central operator, denoted by $b_l:\hat \X_l \rightarrow \mathbb R_+$, where $0\in \hat \X_l\subseteq\R_+^t$ and~$b_l(0)=0$.\footnote{There are markets that include shut-down costs (that is, $b_l(0)>0$) or minimum output levels (that is, $b_l(0)$ is infinitely large or $0\notin \hat{\X}_l$). To address these markets, throughout this part we draw attention to the properties and the results for which the assumptions, $c_l(0)=0$, $b_l(0)=0$, are pivotal.}$^,$\footnote{Assume that both functions lie in a function space defined by the market rules.} 

Given the bid profile $\BB=\{b_l\}_{l\in L}$, \textit{a mechanism} defines an allocation rule $x_l^*(\BB)\in \hat \X_l$ and a payment rule $p_l(\BB)\in\R$ for each bidder $l$. 
In electricity markets, the allocation rule is generally determined by the economic dispatch, that is, minimizing the procurement cost subject to some security constraints
\begin{equation}\label{eq:main_model}
\begin{split}
J(\BB)=&\min_{x\in \hat \X,\,y}\,\, \sum\limits_{l\in L} b_l(x_l) + d(x,y)\\
&\ \ \mathrm{s.t.}\ \ h(x,y)= 0,\, g(x,y)\leq 0,\\
\end{split}
\end{equation}
where $\hat \X=\prod_{l\in L}\hat \X_{l}$. In the case of a two-stage electricity market model, the operator can buy the goods from another market at a later stage. The variables~$y\in\R^p$ may correspond to these second stage variables and the function~$d:\R^{t\rvert L\rvert}\times\R^p\rightarrow \R$ could represent the second stage cost. In~\cref{sec:p1_5}, we provide a real-world electricity market example where the function $d$ incorporates expected daily market prices in a weekly market. The function~$h:\R^{t\rvert L\rvert}\times\R^{p}\rightarrow \R^{q_1}$ defines the equality constraints and the function~$g:\R^{t\rvert L\rvert}\times\R^{p}\rightarrow \R^{q_2}$ defines the inequality constraints.\footnote{Since supplies of the same type from different bidders are fungible to the central operator, the functions $d$, $g$, and $h$ are in fact functions of $\sum_{l\in L} x_l$ and not $x$. For the sake of simplicity, we keep this general form and draw attention to the derivations whenever this assumption is utilized.} These constraints may correspond to the network balance constraints, and voltage and line limits in OPF problems. Alternatively, they may also correspond to procurement of the required amounts of power supplies, for instance, in the Swiss control reserve markets accepted reserves must have a deficit probability of less than 0.2\%. Thus, problem (\ref{eq:main_model}) defines a general class of electricity market problems, including energy-reserve co-optimized markets~\cite{xu2017efficient,carlson2012miso, cheung1999energy, chow2005electricity,amjady2009stochastic,kargarian2014spider,reddy2015joint}, stochastic markets~\cite{abbaspourtorbati2016swiss,conejo2010decision,bouffard2005market}, and AC-OPF problems~\cite{lavaei2012competitive,lavaei2012zero,molzahn2014moment,molzahn2015sparsity,winnicki2019convex}.\footnote{Several stochastic electricity market works, such as \cite{pritchard2010single,zakeri2018pricing}, study single settlement mechanisms for two-stage markets. Such mechanisms tie the first stage and the second stage markets together by asking the bidders to provide bid functions for both stages of the market simultaneously. These works then set prices for both stages of the market, where the second stage prices are generally uncertainty-dependent. On the other hand, the mechanisms we study distribute payments (and accept bids) only for the first stage of the market. We assume that the settlement and the actual solution of the second stage is separate from that of the first stage primarily motivated by the markets in \cite{abbaspourtorbati2016swiss, conejo2010decision}.\label{footnote:stoch}} As a remark, if the problem~\eqref{eq:main_model} is infeasible, the objective value is unbounded, $J(\BB)=\infty$.

Let the optimal solution of \eqref{eq:main_model} be denoted by $x^*(\BB)\in \hat \X$ and $y^*(\BB)\in\R^{p}$. {We assume that in case of multiple optima there is a tie-breaking rule.} We assume that the utility of bidder $l$ is linear in the payment received: 
$$u_l(\BB)=p_l(\BB)-c_l(x^*_l(\BB)),$$ (that is, quasilinear utilities~\cite{mas1995microeconomic}).  A bidder whose bid is not accepted, $x_l^*(\BB)=0$, is not paid and $u_l(\BB)=0$. 
The utility of the operator $u_0(\BB)$ is defined by the total payment, namely, $$u_0(\BB)=-\sum_{l\in L} p_l(\BB) - d(x^*(\BB),y^*(\BB)).$$
This total payment can be an expected value when the function~$d$ is an expected second stage cost. If the problem \eqref{eq:main_model} is infeasible, the utility of the operator is given by $u_0(\BB)=-\infty$.

There are several fundamental properties we desire for the mechanism~\cite{milgrom2004putting,krishna2009auction}. A mechanism is \textit{in\-dividually rational} (IR) if bidders do not face negative utilities, $u_l(\BB)\geq 0$ for all $l\in L$. This property is also often referred to as voluntary participation or cost recovery. A mechanism is \textit{efficient} if the sum of all the utilities $\sum_{l=0}^{\lvert L\rvert} u_l(\BB)$ is maximized. From the definition of the utilities, we have $$\sum_{l=0}^{\lvert L\rvert} u_l(\BB)=-\sum_{l\in L} c_l(x^*_l(\BB))- d(x^*(\BB),y^*(\BB)).$$ Notice that this value is maximized if we are solving for the optimal allocation of the market in~\eqref{eq:main_model} under the condition that the bidders submitted their true costs~$\{c_l\}_{l\in L}$. As a result, we can attain efficiency by eliminating potential strategic manipulations.

Several definitions are in order. Let $\mathcal{B}_{-l}$ be the bid profile of all the bidders, except bidder $l$.
The bid profile $\mathcal{B}$ is a \textit{Nash equilibrium} if for every bidder~$l$, $u_l(\mathcal{B}_l\cup\mathcal{B}_{-l})\geq u_l(\tilde{\mathcal{B}}_l\cup\mathcal{B}_{-l})$, $\forall\tilde{\mathcal{B}}_l$. 
The bid profile $\mathcal{B}$ is a \textit{dominant-strategy Nash equilibrium} if for every bidder $l$,
$u_l(\mathcal{B}_l\cup\hat{\mathcal{B}}_{-l})\geq u_l(\tilde{\mathcal{B}}_l\cup\hat{\mathcal{B}}_{-l})$, $\forall\tilde{\mathcal{B}}_l$, $\forall \hat{\mathcal{B}}_{-l}$.

Connected with the observation above related to the efficiency property, we say that a mechanism is \textit{dominant-strategy incentive-compatible} (DSIC) if the truthful bid profile $\mathcal C=\{c_l\}_{l\in L}$ is the dominant-strategy Nash equilibrium. In other words, every bidder finds it more profitable to bid truthfully, regardless of what others bid. However, as it will become clear later, unilateral deviations are not the only strategic manipulations we need to consider in order to ensure that the bidders reveal their true costs.

As the last desirable property, we consider immunity to collusion and shill bidding and this is the main topic of this part of the thesis. 
Bidders $K\subseteq L$ are \textit{colluders} if they obtain higher collective utility by changing their bids from $\mathcal C_K=\{c_l\}_{l\in K}$ to $\BB_K=\{b_l\}_{l\in K}$. In other words, this would imply $\sum_{l\in K}u_l(\mathcal B_K\cup\BB_{-K})>\sum_{l\in K}u_l(\mathcal C_K\cup\BB_{-K})$. A bidder $l$ is a \textit{shill bidder} if there exists a set $S$ and bids $\BB_S=\{b_k\}_{k\in S}$ such that the bidder~$l$ finds participating with bids $\BB_S$ more profitable than participating with a single truthful bid~$\CC_l$. In other words, this is given by $\sum_{k\in S}u_k(\mathcal B_S\cup\BB_{-l})> u_l(\CC_l\cup\BB_{-l})$. 

Finally, by \textit{coalition-proof}, we mean that a group of bidders whose~bids are not accepted when bidding their true costs, $x^*_l(\mathcal C_K\cup\BB_{-K})=0,$ $\forall l\in K$, cannot profit from collusion, and no bidder can profit from using shill bids. We remark that it is not possible to achieve immunity to collusion from all sets of bidders. For instance, no mechanism can eliminate the situation where all bidders inflate their bid prices simultaneously, see also the collusion examples in~\cite{beck2009revenue}.

Since the bidders strategize around the payment rule, payment design plays a crucial role in attaining the aforementioned properties. In light of the discussions above, we discuss well-studied payment rules that fail to attain some of these properties for the general class of electricity markets in~\eqref{eq:main_model}.
\newpage
\section{Pay-as-bid mechanism}

In the \textit{pay-as-bid mechanism}, the payment rule is
$$p_l(\mathcal{B})=b_l(x^*_l(\mathcal{B})).$$ For instance, several European balancing markets are settled under a pay-as-bid
	mechanism, see~\cite{mazzi2018price} and further references therein.
It follows that each bidder's utility is $u_l(\mathcal{B})=b_l(x^*_l(\mathcal{B}))-c_l(x^*_l(\mathcal{B}))$. A rational bidder would overbid to ensure positive utility. Consequently, under the pay-as-bid mechanism, the central operator calculates the optimal allocation for the inflated bids rather than the true costs. Furthermore, the bidders need to spend resources to learn how to bid to maximize their utility. There are many Nash equilibria arising from the pay-as-bid mechanism, none of which are incentive-compatible~\cite{bernheim1986menu}. This issue was analyzed in our previous work in~\cite{orcun2018game}, which is not included in this thesis because the pay-as-bid mechanism will not be treated in detail.

\section{Lagrange-multiplier-based payment mechanisms}
The \textit{LMP mechanism} is adopted in markets where polytopic DC-OPF constraints and nondecreasing convex bids are considered. For simplicity in notation, assume there is a single bidder at each node of the network. Under this assumption, each bidder is supplying a one-dimensional power supply of a unique type. Then, the payment rule is $$p_l(\BB)=\lambda_l^*(\BB)\, x^*_l(\BB),$$ where $\lambda_l^*(\BB)\in\R$ is the Lagrange multiplier of the $l^{\text{th}}$ nodal balance equality constraint.\footnote{With a slight notational abuse we ignore the previously defined $t$-dimensional form of $ x^*_l(\BB)$.} See \cite{wu1996folk} for an exposition on the~calculation of the LMP payments from the Karush-Kuhn-Tucker (KKT) conditions of DC-OPF problems, we also kindly refer to the DC power flow constraints in \eqref{mod:A-DA} in \cref{part:2}.

Assume that each bidder is a price-taker, in other words, each bidder considers the Lagrange multiplier of its node to be independent of its bid. Then, in addition to being IR, the LMP mechanism is DSIC.\footnote{IR requires $c_l(0)=0$ and $b_l(0)=0$ for all $l\in L$.} 
However, this economic rationale of the LMP mechanism involves a strong assumption not found in practice~\cite{joskow2001quantitative}. Under the LMP mechanism, a~bidder can in fact maximize its utility by both inflating its bids and withholding its maximum supply~\cite{ausubel2014demand,TangJ13,tang2013game}. On the positive side,~in~\cref{sec:p1_4}, we show that this mechanism is coalition-proof.

Another aspect to consider is that the economic rationale of the Lagrange multipliers follows from strong duality~\cite{bikhchandani1997competitive,lavaei2012competitive,warrington2012market}.\footnote{There are extensions of LMP through uplift payments to address the duality gap specifically arising from the unit commitment costs. Under convexified power flow equations, the works in \cite{o2005efficient,chao2019incentives,hogan2003minimum} compute linear prices from an integer restriction, an integer relaxation, and a convex hull approximation, respectively (for a comparison kindly refer to~\cite{gribik1993market}). These prices are then complemented with bidder-dependent uplift side payments, which are equivalent to either the deficits or the opportunity costs of the participants when using approximated linear prices. These bulk payments are known to create additional incentives for manipulative behavior. Finally, note that in Europe the uplift idea is completely rejected. Some European day-ahead markets even execute a suboptimal solution (as opposed to~\eqref{eq:main_model}). For instance, they disregard an optimal solution when a participant requires an uplift for its individual rationality when using approximated linear prices~\cite{van2011linear}.} For DC-OPF problems, strong duality is implied by the convexity of the bid profile and the linearity of the constraints~\cite{bertsekas1999nonlinear}. Strong duality, however, may not hold for the optimization problem~\eqref{eq:main_model}, and hence the Lagrange multipliers may not be meaningful in an economic sense. For instance, nonlinear AC-OPF constraints (modeling also the reactive power) are known to yield a non-zero duality~gap for many practical problems~\cite{lesieutre2011examining}, and sufficient conditions for zero duality gap are in general restrictive~\cite{low2014convex}. 

\section{The Vickrey-Clarke-Groves mechanism}
As an alternative, the \textit{VCG mechanism} is characterized by
$$p_l(\BB)=b_l(x^*_l(\BB))+(H(\BB_{-l})-J(\BB)),$$ where $\BB_{-l}=\{b_k\}_{k\in L\setminus l}$.
The function $H(\BB_{-l})\in\R$ must be chosen carefully to ensure the IR property. A well-studied choice is the \textit{Clarke pivot rule} $
H(\BB_{-l})=J(\BB_{-l}),  $
where $J(\BB_{-l})$ is the optimal value of (\ref{eq:main_model}) with the constraint $x_l=0$, removing the bidder $l$ from both the objective and the constraints.\footnote{The general form is referred to as the Groves mechanism. The Clarke pivot rule is known to generate the minimum total payment ensuring the IR property~\cite{krishna1998}.\label{footnote:minima}} This mechanism is well-defined under the assumption that a feasible solution exists when a bidder is removed. This is a practical assumption in electricity markets~\cite{xu2017efficient}, for instance, it holds for almost all IEEE test systems~\cite{christie2000power}. 
The~VCG mechanism can be shown to satisfy IR, DSIC, and efficiency (if bidders pick their dominant strategies) for the market in~\eqref{eq:main_model}.\footnote{IR requires $c_l(0)=0$ and $b_l(0)=0$ for all $l\in L$.} This result is relegated to the appendix in \cref{sec:appforVCG}. It is a generalization of the works in~\cite{vickrey1961counterspeculation,clarke1971multipart,groves1973incentives} which do not consider continuous goods, second stage cost and general nonlinear constraints.  

Despite these theoretical properties, the VCG mechanism can suffer from collusion and shill bidding which can result in a loss of efficiency. To illustrate these issues, we study two simple energy market examples. For these examples, we consider the VCG mechanism with the Clark-pivot rule and with $d(x,y)\equiv 0$ in the central operator's objective.  Later, we come back to these examples in \cref{sec:p1_3}, in order to discuss conditions to eliminate collusion and shill-bidding. 

The first example is a reverse auction of a single type of power supply. In these markets, each bidder is allowed to submit mutually exclusive bids that can equivalently be represented as bid curves, see~\cref{sec:3a}.

\begin{example}[Simple Market]\label{ex:first_simple_example}
	Suppose the central operator has to procure $800$ MW of power supply from bidders $1$, $2$ and $3$ who have the true costs $\$100$ for $400$ MW, $\$400$ for $400$ MW and $\$600$ for $800$ MW, respectively. Under the VCG mechanism, bidders $1$ and $2$ win and receive $p_1^{\text{VCG}} = 100 + (600-500)= \$200$ and $p_2^{\text{VCG}} = 400 + (600-500)= \$500$. Suppose bidders $1$ and $2$ collude and change their bids to $\$0$ for $400$ MW. Then, bidders $1$ and $2$ receive a payment of $\$600$ each for the same allocation. In fact, bidders~$1$~and~$2$ could represent multiple identities of a single losing bidder (that is, a bidder with the true cost greater than $\$600$ for $800$ MW). Entering the market with two shill bids, this bidder receives a payment of $2\times\$600$ for $800$ MW.
\end{example}
\vspace{.1cm}

The second example is a power market where the operator is procuring a set of different types of power supplies.

\begin{example}[Power Market]\label{ex:second_simple_example}
	We consider a reverse auction with three different types of supplies, types A, B and C. Here, type A can replace types~B~and~C simultaneously.\footnote{This is an abstraction for power reserve markets where secondary reserves can replace both negative and positive tertiary reserves, simultaneously \cite{abbaspourtorbati2016swiss}.} Suppose the central operator has to procure $100$ MW of type B and $100$ MW of type C (or equivalently only $100$ MW of type~A) from bidders $1$ to $5$. Truthful bid profiles of $5$ bidders are provided in~\cref{tab:bidprof}. 
	\begin{table}[h]
		\caption{Bid profile in the Power Market}
		\label{tab:bidprof}
		\begin{center}
			\begin{tabular}{|l||l||l||l||l||l|}
				\hline
				Bidders (Types) & $1$ (A) & $2$ (B) &  $3$ (B) & $4$ (C) & $5$ (C) \\
				\hline 
				MW & $100$ & $100$  & $100$ & $100$& $100$ \\
				\hline
				\$ & $500$  & $350$  & $400$ & $250$ & $400$ \\
				\hline
			\end{tabular}
		\end{center}
	\end{table}
	
	The constraint set in \eqref{eq:main_model} is given by
	\begin{equation}	\label{newc2}
	\begin{split}
	\big\{ x\in\{0,100\}^5\, \rvert\, & x_1 + x_2 + x_3 \geq M(\{\text{A},\text{B}\})=100,\\
	& x_1 + x_4 + x_5 \geq M(\{\text{A},\text{C}\})=100\big\}.  
	\end{split}
	\end{equation}	
	Under the VCG mechanism, bidder $1$ wins and receives $p_1^{\text{VCG}} = 500 + (600-500)= \$600$. Suppose losing bidders $2$ and $4$ collude and change their bid prices to $\$0$. Then, bidders $2$ and $4$ receive $\$400$ each and they obtain a collective VCG profit of $\$200$. The total payment of the operator increases from $\$600$ to $\$800$. This is unfair towards bidder $1$ who is willing to offer the same supply for $\$500$.
\end{example}
\vspace{.1cm}

It is troubling that the VCG mechanism can result in large payments through coalitional manipulations. In all these examples, there exists a group of bidders who is willing to offer the same amount of good by receiving less payment. From the central operator's perspective, the operator would instead want to renegotiate the payments with only a subset of participants.



\section{Appendix}
\subsection{Properties of the Vickrey-Clarke-Groves mechanism}\label{sec:appforVCG}
For the model introduced in \eqref{eq:main_model}, our first result shows that the VCG mechanism first derived in  \cite{vickrey1961counterspeculation,clarke1971multipart,groves1973incentives} satisfies all three fundamental properties. This result is a straightforward generalization of the works in \cite{vickrey1961counterspeculation,clarke1971multipart,groves1973incentives}, which do not consider continuous values of goods, second stage costs, and general constraints.

\begin{theorem}
	\label{thm:incentive_comp}
	Given the market model \eqref{eq:main_model}, 
	\begin{enumerate}
		\item[(i)] The Groves mechanism is DSIC. 
		\item[(ii)] The Groves mechanism is \text{efficient}.
		\item[(iii)] The Groves mechanism ensures nonnegative payments and IR when the Clarke pivot rule is utilized, $
		h(\mathcal{B}_{-l})=J(\mathcal{B}_{-l})$.
	\end{enumerate}
\end{theorem}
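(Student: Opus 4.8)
The plan is to reduce all three claims to two elementary monotonicity facts about the dispatch value $J(\cdot)$ of \eqref{eq:main_model}, after first rewriting the Groves utility transparently. Substituting the Groves payment $p_l(\BB)=b_l(x^*_l(\BB))+H(\BB_{-l})-J(\BB)$ into $u_l(\BB)=p_l(\BB)-c_l(x^*_l(\BB))$ and using that $(x^*(\BB),y^*(\BB))$ attains the minimum in \eqref{eq:main_model}, so that $J(\BB)=\sum_{k\in L}b_k(x^*_k(\BB))+d(x^*(\BB),y^*(\BB))$, one obtains
\[
u_l(\BB)=H(\BB_{-l})-\Big(c_l(x^*_l(\BB))+\textstyle\sum_{k\in L\setminus l}b_k(x^*_k(\BB))+d(x^*(\BB),y^*(\BB))\Big).
\]
The bracketed expression is exactly the objective of \eqref{eq:main_model} evaluated at $(x^*(\BB),y^*(\BB))$ when bidder $l$'s bid $b_l$ is replaced by its true cost $c_l$; call this the $l$-corrected objective. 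Note that $(x^*(\BB),y^*(\BB))$ is feasible for \eqref{eq:main_model} and hence for its $l$-corrected version, and that $H(\BB_{-l})$ does not depend on $b_l$. This is just the classical Groves argument, with the second-stage variable $y$ and the cost $d$ carried along harmlessly inside the operator's optimization.

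For (i), fix $\hat\BB_{-l}$ and let $\tilde\BB_l$ be any report of bidder $l$. By the displayed formula, $u_l(\tilde\BB_l\cup\hat\BB_{-l})$ equals $H(\hat\BB_{-l})$ minus the $l$-corrected objective evaluated at a feasible point, hence is at most $H(\hat\BB_{-l})$ minus the minimum of the $l$-corrected problem. When bidder $l$ reports truthfully the mechanism solves precisely that $l$-corrected problem, so $u_l(\CC_l\cup\hat\BB_{-l})$ attains this upper bound, irrespective of the tie-breaking rule. Hence truthful reporting is a dominant strategy. The standing assumption that the bidder-removed problem is feasible, together with $0\in\hat\X_l$, guarantees that no report of bidder $l$ can render \eqref{eq:main_model} infeasible (the slice $x_l=0$ always survives), so no degenerate case arises.

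For (ii), the payment terms cancel when summing utilities: $\sum_{l=0}^{\lvert L\rvert}u_l(\BB)=-\sum_{l\in L}c_l(x^*_l(\BB))-d(x^*(\BB),y^*(\BB))$ for \emph{any} payment rule. This quantity depends on $\BB$ only through the chosen allocation, and over all feasible allocations it is largest when the allocation minimizes $\sum_{l\in L}c_l(x_l)+d(x,y)$; at $\BB=\CC$ the mechanism computes exactly such a minimizer, since then \eqref{eq:main_model} \emph{is} that minimization. Combined with (i), bidders choosing dominant strategies report truthfully, so the realized allocation is efficient.

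For (iii), take the Clarke pivot $H(\BB_{-l})=J(\BB_{-l})$, the value of \eqref{eq:main_model} with $x_l$ fixed to $0$. Because $b_l(0)=0$, fixing $x_l=0$ in \eqref{eq:main_model} reproduces exactly the bidder-$l$-removed problem, so $J(\BB)\le J(\BB_{-l})$; since also $b_l(x^*_l(\BB))\ge0$, the payment $p_l(\BB)=b_l(x^*_l(\BB))+J(\BB_{-l})-J(\BB)\ge0$. For IR, evaluating the displayed utility at a truthful report gives $u_l(\CC_l\cup\BB_{-l})=J(\BB_{-l})-J(\CC_l\cup\BB_{-l})$; because $c_l(0)=0$, the point $x_l=0$ is feasible in the truthful problem with the same cost as in the bidder-removed problem, whence $J(\CC_l\cup\BB_{-l})\le J(\BB_{-l})$ and $u_l\ge0$ (losing bidders get $0$ by definition). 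The only step requiring genuine care is this last one: the roles of the normalizations $c_l(0)=0$ (for IR) and $b_l(0)=0$ (for nonnegative payments) must be kept straight — these are precisely the hypotheses flagged in the footnote — after which the rest is bookkeeping. I therefore expect the main ``obstacle'' to be handling the feasibility and tie-breaking corner cases cleanly, not any conceptual difficulty.
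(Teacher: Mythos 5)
Your proof is correct and follows essentially the same route as the paper's: the same rewriting of the Groves utility as $H(\BB_{-l})$ minus the ``$l$-corrected'' objective evaluated at a feasible point for (i), the same cancellation of payments for (ii), and the same monotonicity of $J$ under the constraint $x_l=0$ for (iii). The only cosmetic difference is in the IR step, where you evaluate at the truthful report, while the paper proves the slightly broader statement for any bids with $b_l\geq c_l$; both rest on the same normalizations $c_l(0)=0$, $b_l(0)=0$.
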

\begin{proof}
	(i) We distinguish between bidder $l$ placing a generic bid $\mathcal{B}_l = b_l$ and bidding truthfully $\mathcal C_l= c_l$.  For the set of bids $\mathcal{B}$, the utility of bidder $l$ is given by:
	\begin{align*}
	u_l({\mathcal{B}}) = H(\mathcal{B}_{-l})\, -\Big(\sum\limits_{k\neq l} b_k(x_k^*(\mathcal{B})) + c_l(x_l^*(\mathcal{B})) + d(x^*(\mathcal{B}),y^*(\mathcal{B}))\Big),
	\end{align*}
	where the term in brackets is the market objective of~\eqref{eq:main_model} under the bids $\hat{\mathcal C} =\mathcal C_l\cup \mathcal{B}_{-l}$ but evaluated at $(x^*(\mathcal{B}),y^*(\mathcal{B}))$. For $\hat{\mathcal C}$ note that $ u_l(\hat{\mathcal C})= H(\mathcal{B}_{-l}) - J(\hat{\mathcal C})$. Then, we have the following:
	\begin{equation*}
	J(\hat{\mathcal C}) \leq \sum\limits_{k\neq l} b_k(x_k^*(\mathcal{B})) + c_l(x_l^*(\mathcal{B})) + d(x^*(\mathcal{B}),y^*(\mathcal{B})).
	\end{equation*}
	We can now show that $u_l(\hat{\mathcal C}) \geq u_l({\mathcal{B}})$ because $(x^*(\mathcal{B}),y^*(\mathcal{B}))$ is a feasible suboptimal allocation for the auction under the bids $\hat{\mathcal C}$. Therefore, bidding truthfully is a best response strategy, regardless of other bidders' strategies $\mathcal{B}_{-l}$.
	
	(ii) By the definition of the payment rule and incentive-compatibility, we have $p_l(\mathcal C)= u_l(\mathcal C) + c_l(x_l^*(\mathcal C))$ where $\mathcal C=\{c_l\}_{l\in L}$. We then have: $u_0(\mathcal C) = - \sum_{{l\in L}} c_l(x_l^*(\mathcal C)) - d(x^*(\mathcal{C}),y^*(\mathcal{C})) - \sum_{{l\in L}}{u_l(\mathcal C)}$. The sum of utilities, $\sum_{l=0}^{\lvert L\rvert}  u_l(\mathcal C)=- \sum_{{l\in L}} c_l(x_l^*(\mathcal C)) - d(x^*(\mathcal{C}),y^*(\mathcal{C})) $ is maximized since $(x^*(\mathcal{C}),y^*(\mathcal{C}))$ is the minimizer to the optimization problem \eqref{eq:main_model} under true costs.
	
	(iii) Nonnegative payments can be verified substituting Clarke pivot rule for $H(\mathcal{B}_{-l})$:
	\begin{equation*}
	p_l(\mathcal{B})=b_l(x^*_l(\mathcal{B}))+(J(\mathcal{B}_{-l})-J(\mathcal{B}))\geq0,
	\end{equation*}
	for all set of bids $\mathcal{B}$. For individual rationality, we have to assume bidders are not bidding less than their true costs\footnote{Otherwise, the IR property may not hold for the utilities, $u_l(\mathcal{B})$, however, it would still hold for the revealed utilities, which are the utilities with respect to the submitted bids, see the definitions and the discussions in \cref{sec:p1_4}. Moreover, note that both results rely on the assumption that $c_l(0)=0$ and $b_l(0)=0$ for all $l\in L$.}, that is, $b_l(x)\geq c_l(x),\, \forall x\in\X_l$. We have
	\begin{equation*}
	u_l(\mathcal{B}) = b_l(x_l^*(\mathcal{B}))-c_l(x_l^*(\mathcal{B}))+J(\mathcal{B}_{-l}) - J(\mathcal{B}) \geq 0,
	\end{equation*}
	for all set of bids $\mathcal{B}$.
\end{proof}

In summary, all bidders have  incentives to reveal their true costs in a VCG mechanism. Dominant-strategy incentive-compatibility makes it easier for entities to enter the auction, without spending resources in computing optimal bidding strategies. This can promote participation in the market. As a remark,~\cref{thm:incentive_comp}-(ii) repeats the fact that solving for the optimal allocation with the true costs yields an efficient mechanism. In the remainder, we consider the Clarke pivot rule for the VCG mechanism since it ensures individual rationality.

\chapter{Ensuring coalition-proof Vickrey-Clarke-Groves outcomes}\label{sec:p1_3}

In coalitional game theory, the \textit{core} defines the set of utility allocations that cannot be improved upon by forming coalitions~\cite{osborne1994course,peleg2007introduction}.\footnote{Throughout this chapter, we use the term utility allocation and the auction outcome interchangeably.} In this chapter, we start by showing that if the truthful VCG outcome always lies in the core, then the VCG mechanism eliminates {{any}} incentives for {collusion} and {shill bidding}. Keeping this in mind, our main goal is to derive sufficient conditions on \eqref{eq:main_model} to ensure that the VCG outcome lies in the core, and hence the VCG mechanism is coalition-proof.

For every $S\subseteq L$,  let $J(\mathcal{B}_S)$ be the objective function under any set of bids $\mathcal{B}_S=\{b_l\}_{l\in S}$ from the coalition $S$. It is defined by the following expression:
\begin{equation}\label{eq:33}
\begin{split}
J(\BB_S)=&\min_{x\in \hat \X,\,y}\,\, \sum\limits_{l\in S} b_l(x_l) + d(x,y)\\
&\ \ \mathrm{s.t.}\ \ h(x,y)= 0,\, g(x,y)\leq 0,\, x_{-S}=0,\\
\end{split}
\end{equation}
where the stacked vector $x_{-S}\in\R_+^{t(\lvert L\rvert -\lvert S\rvert)}$ is defined by omitting the subvectors from the set $S$. It is straightforward to see that this function is nonincreasing, that is, $J(\mathcal{B}_R) \geq J(\mathcal{B}_S)$ for $R\subseteq S$.\footnote{This holds since $b_l(0)=0$ for all $l\in L$.} 

Next, we define the core with respect to the truthful bids, $\mathcal{C}_R=\{c_l\}_{l\in R}$, and refer to this definition solely as the \textit{core}. 
\begin{definition}\label{def:core_def}
	For every set of bidders $R\subseteq L$, the core $Core(\mathcal{C}_R)\in\R\times\R^{\rvert R\rvert}_+$ is defined as follows
	\begin{equation}\label{eq:mcoredef}
	\begin{split}
	Core(\mathcal{C}_R)=\Big\{u\in\R\times\R^{\rvert R\rvert}_+ \,|\, &u_0+\sum\limits_{l\in R} u_l=-J(\mathcal{C}_R),\\
	&u_0+\sum\limits_{l\in S}u_l\geq-J(\mathcal{C}_S),\, \forall S \subset R \Big\}.
		\end{split}
	\end{equation}
\end{definition}
Note that there are $2^{\rvert R\rvert}$ linear constraints that define a utility allocation in the core for the set of bidders $R$. The core is always nonempty in an auction because the utility allocation $u_0=-J(\mathcal{C}_R)$  and $u_l=0$ for all $l\in R$ always lies in the core. This allocation corresponds to the utility allocation of the pay-as-bid mechanism under the truthful bidding $\mathcal{C}_R$.\footnote{If $c_l(0)\neq0$, the function $J$ may not be nonincreasing, the pay-as-bid utilities under the truthful bidding may not lie in the core, and the core may be empty. In this case, to guarantee that the core exists, IR constraints for such bidders, $u_{l}\geq0$, can be completely removed, or can be replaced by $u_l\geq -c_l(0)$ whenever $c_l(0)$ is finite. It can easily be verified that this new core is always nonempty.\label{footnote:corempty}}

For the properties considered in mechanism design, we highlight the implications of the constraints in~\eqref{eq:mcoredef}. Restricting the utility allocation to the nonnegative orthant yields the IR property for the bidders. The equality constraint implies that the mechanism is efficient, since the term on the right is maximized by the optimal allocation. We say that a utility allocation is unblocked if there is no set of bidders that could make a deal with the operator from which every member can benefit, including the operator. This condition is satisfied by the inequality constraints. 

The truthful VCG outcome attains the maximal utility in the core for every bidder. Note that under the VCG mechanism each bidder's utility is given by $u_l^{\text{VCG}}=J(\mathcal{C}_{-{l}})-J(\mathcal{C})$. Then, for every bidder~$l$, $u_l^{\text{VCG}}=\max\left\{u_l\,\rvert\, u\in Core(\mathcal{C})\right\}$, see \cite[Theorem~5]{ausubel2002ascending},~\cite[Theorem~2]{orcun2018game}. {In general, this maximal point may not lie in the core. The~following example gives visual insight about the core in terms of payments and illustrates the dominant-strategy Nash equilibrium of the VCG mechanism corresponding to~\cref{ex:first_simple_example}. In this example, shill bidding and collusion are shown to be profitable under the VCG mechanism. 
	\begin{example}\label{ex:core_illust}
		We revisit~\cref{ex:first_simple_example}. Without loss of generality, assume that in case of a tie the central operator prefers bidders $1$ and $2$ over bidder~$3$. We can visualize the core outcomes in terms of the payments for the bidders $1$ and $2$ by removing the losing bidder~$3$, $p_3^{\text{VCG}} = 0$, and the operator. Core outcomes and the VCG payments ($p_i^{\text{VCG}}$) are given in~\cref{fig:core_picture_1}. 
		\begin{figure}[th]
			\begin{center}
				\begin{tikzpicture}[scale=0.89, every node/.style={scale=0.55}]
				\coordinate (r0) at (1,5);
				\coordinate (s0) at (2,4);
				\coordinate (si) at (1,4);
				\coordinate (s1) at (1,5);
				\draw[->] (0,0) -- (7,0) node[right] {\huge $p_1$};
				\draw[->] (0,0) -- (0,6.75) node[above] {\huge $p_2$};
				\foreach \x in {1,2,6}
				\draw (\x cm,1pt) -- (\x cm,-1pt) node[anchor=north] {\LARGE\pgfmathparse{100*\x} \pgfmathprintnumber[    
					fixed,
					fixed zerofill,
					precision=0
					]{\pgfmathresult}};
				\foreach \y in {4,5,6}
				\draw (1pt,\y cm) -- (-1pt,\y cm) node[anchor=east]  {\LARGE\pgfmathparse{100*\y} \pgfmathprintnumber[    
					fixed,
					fixed zerofill,
					precision=0
					]{\pgfmathresult}};;
				\draw[scale=1, line width=.22mm, domain=0:6,smooth,variable=\x,black!80!blue] plot ({\x},{6-\x}) node[below] at(7.1,1.2){\huge $p_1+p_2\leq \$600$};
				\draw[scale=1,line width=.22mm, domain=0:7,smooth,variable=\x,black!80!blue]  plot ({\x},{4}) node[right] at(6,3.4) {\huge $p_2\geq \$400$};
				\draw[scale=1, line width=.22mm,domain=0:6.7,smooth,variable=\y,black!80!blue]  plot ({1},{\y}) node[above] at(2.2,6.5) {\huge $p_1\geq  \$100$};
				\draw[scale=1, line width=.22mm, domain=0:2,dashed,variable=\x,black!40!red]  plot ({\x},{5});
				\draw[scale=1, line width=.22mm, domain=0:5,dashed,variable=\y,black!40!red]  plot ({2},{\y}) node[right]  at(1.9,5.6){\huge $(p_1^{\text{VCG}},p_2^{\text{VCG}})$};
				\filldraw[draw=black!80!blue,line width=.22mm, fill=gray!40] (r0) -- (s0) -- (si) -- (s1) -- cycle;
				\node[black!40!red] at (2,4.975) {\Huge\textbullet};
				\node[black!40!blue] at (1,3.975) {\Huge\textbullet};
				\draw[->,line width=.32mm] (2.5,2) -- (1.3,4.3) ;
				\node at (3,1.6) {\Huge $\substack{\text{Core}}$};
				\draw[->,line width=.32mm,black!40!blue] (-1.5,2) -- (0.9,3.9) ;
				\node[black!40!blue] at (-2.25,1.5) {\Huge $\substack{\text{Pay-as-bid}\\ \text{under truthful bidding}}$};
				\draw[scale=1, line width=.6mm, domain=1:2,smooth,variable=\x,black!60!green] plot ({\x},{6-\x});
				\end{tikzpicture}
				\caption{Core outcomes and the VCG payments under truthful bidding}\label{fig:core_picture_1}
			\end{center}
		\end{figure}
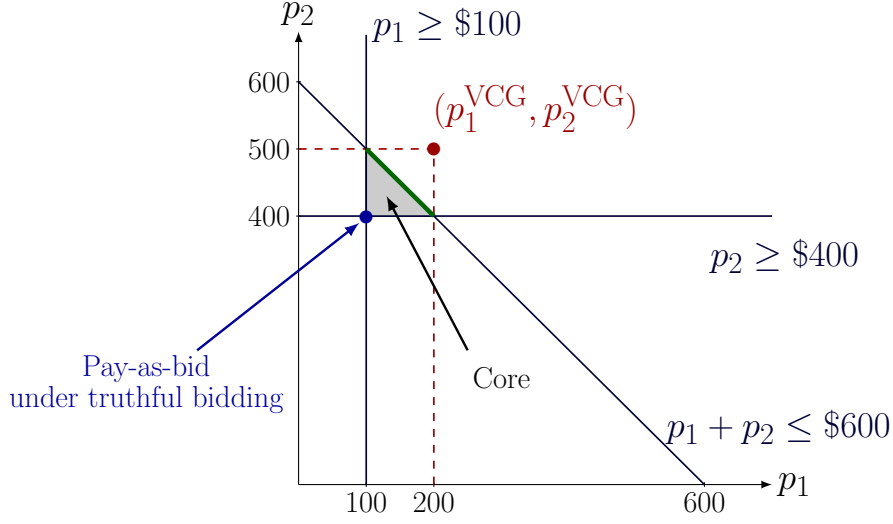
		\end{example}

	Considering that the core will characterize coalition-proof outcomes, we are ready to investigate the conditions under which the VCG outcome lies in the core. To this end, we {provide three} sufficient conditions that ensure core VCG outcomes for the auction model~(\ref{eq:main_model}). 
	
	Notice that there are $2^{|L|}$ linear core constraints in $Core(\mathcal{C})$, see~\eqref{eq:mcoredef}. First, we derive the following equivalent characterization with significantly lower number of constraints.
\begin{lemma}
	\label{lem:lemma_core}
	Let $W\subseteq L$ be the winners of the reverse auction \eqref{eq:main_model} for the set of bidders $L$, that is, each bidder $l\in W$ is allocated a positive quantity. Let $ u\in\R\times\R^{\rvert L\rvert}_+$ be the corresponding utility allocation. Then, $ u \in Core(\mathcal{C})$ if and only if $ u_0=-J(\mathcal{C})-\sum_{l\in L} u_l$ and
	\begin{equation}
	\label{eq:core_constraints}
	\sum_{l \in K }  u_l\leq J(\mathcal{C}_{-K}) - J(\mathcal{C}),\ \forall K\subseteq W.
	\end{equation}
\end{lemma}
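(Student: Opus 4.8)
The plan is to prove the equivalence in two directions, with the core reduction happening by eliminating the IR (nonnegativity) constraints first and then the constraints indexed by coalitions that contain losers.

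\textbf{Setup and the forward direction.} Suppose $u\in Core(\mathcal{C})$. By definition of the core, $u_0 + \sum_{l\in L} u_l = -J(\mathcal{C})$, which is exactly the equality claimed. For any $K\subseteq W$, take $S = L\setminus K$ in the core inequality $u_0 + \sum_{l\in S} u_l \geq -J(\mathcal{C}_S)$; substituting $u_0 = -J(\mathcal{C}) - \sum_{l\in L} u_l$ gives $-J(\mathcal{C}) - \sum_{l\in K} u_l \geq -J(\mathcal{C}_{-K})$, i.e. $\sum_{l\in K} u_l \leq J(\mathcal{C}_{-K}) - J(\mathcal{C})$. So~\eqref{eq:core_constraints} holds.

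\textbf{The reverse direction — the main work.} Suppose $u\in\R\times\R^{\lvert L\rvert}_+$ satisfies $u_0 = -J(\mathcal{C}) - \sum_{l\in L} u_l$ and~\eqref{eq:core_constraints}. We must recover \emph{all} core inequalities $u_0 + \sum_{l\in S} u_l \geq -J(\mathcal{C}_S)$ for every $S\subset L$. Fix such an $S$, and split off the losers it contains: write $K = L\setminus S$, so the inequality to prove (after substituting the expression for $u_0$) is $\sum_{l\in K} u_l \leq J(\mathcal{C}_{-K}) - J(\mathcal{C})$, which is formally~\eqref{eq:core_constraints} but now with a general $K\subseteq L$, not necessarily $K\subseteq W$. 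Decompose $K = K_W \cup K_{\bar W}$ where $K_W = K\cap W$ and $K_{\bar W} = K\setminus W$ consists of losers. The key point is that losers have zero utility: since $u_l\geq 0$ for all $l$ and for the loser set $K_{\bar W}$ we can use $K_{\bar W}\subseteq W$? — no. Instead, I would argue via the two facts that (a) $u_l \geq 0$ for losers and (b) applying~\eqref{eq:core_constraints} to $K_W\subseteq W$ gives $\sum_{l\in K_W} u_l \leq J(\mathcal{C}_{-K_W}) - J(\mathcal{C})$, so it suffices to show $J(\mathcal{C}_{-K_W}) \geq J(\mathcal{C}_{-K}) $ would give the wrong direction — rather I need an upper bound on $\sum_{l\in K_{\bar W}} u_l$. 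The cleanest route: take $K' = K_W \cup \{j\}$ for a single loser $j\in K_{\bar W}$ — but $j\notin W$, so this isn't directly covered either. The genuinely needed ingredient is that the \emph{truthful} VCG utility of a loser is zero and the core point's coordinate for a loser is sandwiched between $0$ and $u_j^{\text{VCG}} = J(\mathcal{C}_{-j}) - J(\mathcal{C}) = 0$ (the last equality because removing a loser does not change the optimum of~\eqref{eq:main_model}). Hence $u_j = 0$ for every loser $j$, so $\sum_{l\in K} u_l = \sum_{l\in K_W} u_l \leq J(\mathcal{C}_{-K_W}) - J(\mathcal{C})$, and finally $J(\mathcal{C}_{-K_W}) \geq J(\mathcal{C}_{-K})$ is \emph{false} in general (removing more bidders lowers $J$), but we actually want $J(\mathcal{C}_{-K_W}) = J(\mathcal{C}_{-K})$, which holds because the extra removed bidders in $K_{\bar W}$ are losers in the full auction — but one must check they remain losers, or rather that their removal does not change the optimal value. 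This is immediate: the optimal allocation $x^*(\mathcal{C})$ has $x^*_l = 0$ for $l\in K_{\bar W}$ already, and it is still feasible for the problem with $x_{-(L\setminus K)} = 0$ imposed, so $J(\mathcal{C}_{-K}) \le J(\mathcal{C})$... I need the monotone direction $J(\mathcal{C}_{-K}) = J(\mathcal{C}_{-K_W})$: feasibility for~$\mathcal{C}_{-K}$ is more restrictive than for~$\mathcal{C}_{-K_W}$, giving $J(\mathcal{C}_{-K}) \ge J(\mathcal{C}_{-K_W})$; conversely any optimizer for $\mathcal{C}_{-K_W}$ can be taken with zero components on the losers $K_{\bar W}$ (as they carry positive cost and can be dropped), making it feasible for $\mathcal{C}_{-K}$, hence $J(\mathcal{C}_{-K}) \le J(\mathcal{C}_{-K_W})$. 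Thus equality, and the inequality $\sum_{l\in K} u_l \le J(\mathcal{C}_{-K}) - J(\mathcal{C})$ follows.

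\textbf{Conclusion.} Combining both directions establishes the claimed "if and only if." The main obstacle is the bookkeeping in the reverse direction: carefully separating the loser indices, using $u_l\geq 0$ together with the fact that the core point is dominated coordinatewise by the VCG point (which assigns zero to losers), and verifying that removing losers from the auction leaves the optimal value $J$ unchanged. Once those three observations are in place, the reduction from $2^{\lvert L\rvert}$ core constraints to the $2^{\lvert W\rvert}$ constraints in~\eqref{eq:core_constraints} is routine.
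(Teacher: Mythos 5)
Your forward direction is fine and matches the paper. The reverse direction contains a genuine gap in how you establish $u_j=0$ for losers. You justify it by saying the core point's coordinate is sandwiched between $0$ and $u_j^{\text{VCG}}=J(\mathcal{C}_{-j})-J(\mathcal{C})=0$; but the upper bound by the VCG utility is a consequence of the \emph{full} set of core constraints (in particular the constraint for $S=L\setminus\{j\}$, i.e.\ $K=\{j\}$ with $j\notin W$), which is exactly what you are trying to derive in this direction — so the argument is circular. Indeed, the reduced system \eqref{eq:core_constraints} together with $u\geq 0$ and the efficiency equality does \emph{not} imply $u_j=0$ for a loser $j$: none of the constraints indexed by $K\subseteq W$ even mention $u_j$, so one could take $u_j=100$ and still satisfy all hypotheses while violating the core constraint for $S=L\setminus\{j\}$. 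The fact you need is definitional, not derived: in the paper's mechanism framework a bidder whose bid is not accepted is not paid and has $u_l(\mathcal{B})=0$, and this is what ``the corresponding utility allocation'' in the lemma statement means. The paper's proof invokes exactly this (``the losing bidders are not allocated and they obtain zero payment''), substitutes $\sum_{l\in L}u_l=\sum_{l\in W}u_l$, rewrites the blocking conditions as $\sum_{l\in W\setminus S}u_l\leq J(\mathcal{C}_S)-J(\mathcal{C})$ for all $S\subseteq L$, and then for fixed $K=W\setminus S$ keeps only the dominant constraint, namely the one with minimal $J(\mathcal{C}_S)$, attained at the maximal set $S=L\setminus K$ by monotonicity of $J$.

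Two smaller points in your final step. First, you momentarily invert the monotonicity of $J$: removing bidders shrinks the feasible set, so $J(\mathcal{C}_R)\geq J(\mathcal{C}_S)$ for $R\subseteq S$; hence $J(\mathcal{C}_{-K})\geq J(\mathcal{C}_{-K_W})$, which is precisely the inequality you need, and monotonicity alone closes the argument. Second, the equality $J(\mathcal{C}_{-K})=J(\mathcal{C}_{-K_W})$ you assert is not true in general: bidders who lose in the full auction need not remain droppable once the winners in $K_W$ are removed, so the direction $J(\mathcal{C}_{-K})\leq J(\mathcal{C}_{-K_W})$ can fail. Fortunately the half you actually use is the valid one, so once the loser-utility issue is repaired by citing the definition rather than the VCG sandwich, your reduction goes through.
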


 The proof is relegated to the appendix in \cref{appforwin}, and it is utilized in the proofs of the results of this chapter. The following proposition is our first sufficient condition for core VCG outcomes.

\begin{proposition}\label{lem:removal_of_two}
	The truthful VCG outcome is in the core, $ u^{\text{VCG}} \in Core(\mathcal{C})$, if the market in \eqref{eq:main_model} is infeasible whenever any {two} winners $l_1,l_2\in W$ are removed from the set of bidders~$L$.
\end{proposition}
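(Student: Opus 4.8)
The plan is to use \cref{lem:lemma_core}, which reduces membership in $Core(\mathcal{C})$ to verifying the $2^{|W|}$ inequalities $\sum_{l\in K} u_l \leq J(\mathcal{C}_{-K}) - J(\mathcal{C})$ for all $K\subseteq W$, together with the defining equality for $u_0$. Since the VCG utilities automatically satisfy $u_0^{\text{VCG}} = -J(\mathcal{C}) - \sum_{l\in L} u_l^{\text{VCG}}$ (this is just the definition of the operator's utility combined with the VCG payment rule and $u_l^{\text{VCG}}=0$ for losers), it remains only to check \eqref{eq:core_constraints} for the VCG utilities.

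First I would recall the explicit form of the VCG utilities: $u_l^{\text{VCG}} = J(\mathcal{C}_{-l}) - J(\mathcal{C})$ for each winner $l\in W$ (and $0$ for losers, so only subsets $K\subseteq W$ matter). Fix a nonempty $K\subseteq W$. The plan is to split into two cases according to $|K|$. If $|K|=1$, say $K=\{l\}$, then \eqref{eq:core_constraints} reads $u_l^{\text{VCG}} = J(\mathcal{C}_{-l}) - J(\mathcal{C}) \leq J(\mathcal{C}_{-l}) - J(\mathcal{C})$, which holds with equality; so the single-winner constraints are never binding obstacles. If $|K|\geq 2$, then $K$ contains at least two winners $l_1,l_2\in W$, and removing all bidders in $K$ from $L$ in particular removes $l_1$ and $l_2$; by hypothesis the market \eqref{eq:main_model} is then infeasible, so $J(\mathcal{C}_{-K}) = \infty$. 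Hence the right-hand side $J(\mathcal{C}_{-K}) - J(\mathcal{C}) = \infty$ (note $J(\mathcal{C})$ is finite since the original auction has winners, hence is feasible), while the left-hand side $\sum_{l\in K} u_l^{\text{VCG}}$ is finite (each $u_l^{\text{VCG}}$ is finite because $J(\mathcal{C}_{-l})$ is finite by the standing assumption that removing a single bidder keeps the problem feasible). Therefore \eqref{eq:core_constraints} holds trivially for all $K$ with $|K|\geq 2$.

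Combining the two cases, every constraint in \eqref{eq:core_constraints} is satisfied, so by \cref{lem:lemma_core} we conclude $u^{\text{VCG}}\in Core(\mathcal{C})$. I do not expect a serious obstacle here; the one point requiring a little care is justifying that $J(\mathcal{C}_{-l})$ is finite for each individual winner $l$ — this relies on the standing feasibility assumption invoked for the Clarke pivot rule (a feasible solution exists when a single bidder is removed), which should be stated as a hypothesis or recalled. A secondary subtlety is the edge case where $W$ itself has fewer than two winners: if $|W|\leq 1$ then the only nonempty $K\subseteq W$ has $|K|=1$, handled by the equality case above, so the statement still holds (and the hypothesis about removing "any two winners" is vacuous). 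Thus the argument is essentially a two-line case split once \cref{lem:lemma_core} and the explicit VCG utility formula are in hand.
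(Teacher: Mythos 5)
Your proof is correct and follows essentially the same route as the paper: invoke \cref{lem:lemma_core}, observe that the constraints for $|K|\geq 2$ are vacuous because infeasibility forces $J(\mathcal{C}_{-K})=\infty$, and note that the singleton constraints hold with equality by the definition of the VCG utility. Your additional remarks on the finiteness of $J(\mathcal{C}_{-l})$ and the edge case $|W|\leq 1$ are sound but not needed beyond what the paper records.
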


The proof is relegated to the appendix in \cref{appforlose}. The above condition can only be present in some specialized instances of reverse auctions. It is not possible in general to guarantee that this condition will hold in a market by enforcing restrictions on the bidders and the market model.

{We will soon show that supermodularity provides an equivalent condition for core VCG outcomes. To this end, we bring in its definition. }

\begin{definition}\label{def:supms}
	A function $f:2^{L}\rightarrow\R$ is \textit{supermodular} if $$f(S)-f(S_{-l})\leq f(R)-f(R_{-l}),$$
for all $S\subseteq R\subseteq L$ and for all $l\in S$. Or, equivalently, for all~$S, R\subseteq L$, $f(S\cup R)+f(S\cap R)\geq f(S)+f(R)$ must hold. A function $f:2^{ L}\rightarrow\R$ is \textit{submodular} if $-f$ is supermodular. Furthermore, a function is nondecreasing if $f(S')\leq f(S)$, for all $S'\subseteq S.$
\end{definition}

  For the remainder of this chapter, the objective function $J$ in~\eqref{eq:33} is said to be supermodular if supermodularity condition holds under any bid profile. Our main result of this section proves that supermodularity of the objective function is necessary and sufficient for ensuring core VCG outcomes.

{
	\begin{theorem}\label{thm:iff_supermodularity}	For any bid profile $\mathcal{C}$ and for any set of participating auction bidders $R\subseteq L$, the truthful VCG outcome is in the core, \text{if and only if} the objective function $J$ in~\eqref{eq:33} is supermodular.  
	\end{theorem}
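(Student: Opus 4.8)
I would run both implications through \cref{lem:lemma_core}: for a bidder set $R$ with winner set $W\subseteq R$, it tells us that the truthful VCG outcome lies in $Core(\mathcal{C}_R)$ exactly when $\sum_{l\in K}u_l^{\text{VCG}}\le J(\mathcal{C}_{R\setminus K})-J(\mathcal{C}_R)$ for every $K\subseteq W$, where $u_l^{\text{VCG}}=J(\mathcal{C}_{R\setminus l})-J(\mathcal{C}_R)$. I would abbreviate $\Delta_l(T):=J(\mathcal{C}_{T\setminus l})-J(\mathcal{C}_T)\ge 0$ for $l\in T$ (finite by the standing assumption that the market stays feasible after removing a bidder), so that $u_l^{\text{VCG}}=\Delta_l(R)$ and, unwinding \cref{def:supms}, supermodularity of $J$ becomes the statement that $T\mapsto\Delta_l(T)$ is non-increasing: $\Delta_l(S)\ge\Delta_l(R)$ whenever $l\in S\subseteq R$.

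For the direction \emph{supermodularity $\Rightarrow$ VCG in the core}, I would fix $\mathcal{C}$, $R$, its winner set $W$, and a subset $K=\{l_1,\dots,l_m\}\subseteq W$, then telescope along $T_j:=R\setminus\{l_1,\dots,l_{j-1}\}$ to write $J(\mathcal{C}_{R\setminus K})-J(\mathcal{C}_R)=\sum_{j=1}^m\Delta_{l_j}(T_j)$. Since $l_j\in T_j\subseteq R$, supermodularity gives $\Delta_{l_j}(T_j)\ge\Delta_{l_j}(R)$, and summing yields $J(\mathcal{C}_{R\setminus K})-J(\mathcal{C}_R)\ge\sum_{l\in K}\Delta_l(R)=\sum_{l\in K}u_l^{\text{VCG}}$, which is precisely the core inequality of \cref{lem:lemma_core}.

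For the converse, \emph{VCG in the core for every profile and subset $\Rightarrow$ supermodularity}, I would first reduce via the chain $\Delta_l(S)\ge\Delta_l(S\cup k_1)\ge\cdots\ge\Delta_l(R)$ to proving $\Delta_l(S)\ge\Delta_l(S\cup\{k\})$ for an arbitrary profile, $S\subseteq L$, $l\in S$, $k\notin S$. Setting $R':=S\cup\{k\}$ with winner set $W(R')$, I would dispose of two easy sub-cases using monotonicity of $J$ alone: if $l\notin W(R')$ then $\Delta_l(R')=0\le\Delta_l(S)$, and if $k\notin W(R')$ then $J(\mathcal{C}_{R'})=J(\mathcal{C}_S)$ while $J(\mathcal{C}_{R'\setminus l})=J(\mathcal{C}_{(S\setminus l)\cup k})\le J(\mathcal{C}_{S\setminus l})$. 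In the remaining case $l,k\in W(R')$, I would apply the core inequality of \cref{lem:lemma_core} to the auction on $R'$ with $K=\{l,k\}\subseteq W(R')$, i.e.\ $u_l^{\text{VCG}}+u_k^{\text{VCG}}\le J(\mathcal{C}_{R'\setminus\{l,k\}})-J(\mathcal{C}_{R'})$; substituting $R'\setminus l=(S\setminus l)\cup\{k\}$, $R'\setminus k=S$ and $R'\setminus\{l,k\}=S\setminus l$ and simplifying rearranges exactly to $J(\mathcal{C}_S)-J(\mathcal{C}_{S\setminus l})\le J(\mathcal{C}_{S\cup k})-J(\mathcal{C}_{(S\setminus l)\cup k})$, that is, $\Delta_l(S)\ge\Delta_l(S\cup\{k\})$.

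The forward direction is routine telescoping; the delicate one is the converse. Because the core inequalities of \cref{lem:lemma_core} only constrain subsets of \emph{winners}, the main obstacle is handling the sub-cases where $l$ or $k$ fails to win in the reduced auction on $R'=S\cup\{k\}$ — there the desired inequality must be recovered from monotonicity of $J$ rather than from the core — together with the bookkeeping needed to match $R'\setminus l$, $R'\setminus k$ and $R'\setminus\{l,k\}$ to the marginal form of supermodularity.
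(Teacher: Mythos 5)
Your proposal is correct and follows essentially the same route as the paper: both directions run through \cref{lem:lemma_core}, with the forward implication obtained by telescoping the marginal differences and bounding each term via supermodularity, and the converse obtained by reducing to sets differing in one element and invoking the core inequality for the two-element coalition $\{l,k\}$. The only difference is presentational — you argue the converse directly with an explicit case split on whether $l$ and $k$ win in the auction on $S\cup\{k\}$ (falling back on monotonicity of $J$ when one of them loses), whereas the paper argues by contradiction and deduces from the failed supermodularity inequality that both bidders must be winners before applying the same core constraint.
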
 
}
Note that a similar result but for submodularity was proven in~\cite[Theorem~6]{ausubel2006lovely}, for a forward multi-item auction without any other constraints. Our result is an extension of this result to our reverse auction setting in \eqref{eq:main_model}. The proof is relegated to the appendix in \cref{appforsup}.
There are extensions of supermodularity for nonsupermodular functions via the notion of weak supermodularity~\cite{bian2017guarantees,das2011submodular}. Using this condition, our previous work~in~\cite{karaca2018weak} ensures outcomes that are not far away from the core. The results of~\cite{karaca2018weak} on weak coalition-proofness will not be treated in detail.

As previously anticipated, we now prove that the outcomes from the core, hence the supermodularity condition, make collusion and shill bidding unprofitable in a VCG mechanism.

\begin{theorem}
	\label{thm:no_collusions}
			For the set of bidders $L$, consider a VCG auction mechanism modeled by~\eqref{eq:main_model}. If the objective function $J$ is supermodular, then, 
	\begin{itemize}
	\item[(i)] {A group of bidders who lose when bidding their true values cannot profit by a joint deviation.}
	\item[(ii)] Bidding with multiple identities is unprofitable for any bidder.\end{itemize}
\end{theorem}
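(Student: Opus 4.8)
The plan is to funnel both statements through the core characterization established in \cref{thm:iff_supermodularity}. That theorem is stated for an \emph{arbitrary} bid profile, so under any reported profile $\mathcal{B}$ (truthful or not) the bid-based ``revealed'' VCG outcome lies in $Core(\mathcal{B})$, the set obtained by substituting $\mathcal{B}$ for $\mathcal{C}_R$ throughout \eqref{eq:mcoredef}. Writing $(x^*,y^*)=(x^*(\mathcal{B}),y^*(\mathcal{B}))$, using the VCG payment $p_l(\mathcal{B})=b_l(x_l^*)+J(\mathcal{B}_{-l})-J(\mathcal{B})$, and letting $\bar u_l(\mathcal{B})=p_l(\mathcal{B})-b_l(x_l^*)=J(\mathcal{B}_{-l})-J(\mathcal{B})$ be the revealed utility, one checks $\bar u_0(\mathcal{B})+\sum_{l\in L}\bar u_l(\mathcal{B})=-J(\mathcal{B})$ together with the core inequalities of \cref{def:core_def}; subtracting the former from the latter applied to the coalition $L\setminus K$ gives the single workhorse estimate
\begin{equation*}
\sum_{l\in K}\bar u_l(\mathcal{B})\ \le\ J(\mathcal{B}_{-K})-J(\mathcal{B}),\qquad K\subseteq L .
\end{equation*}

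For part (i), let $K$ be a group of bidders all rejected under the truthful profile $\mathcal{C}_K\cup\mathcal{B}_{-K}$ (so their collective utility there is $0$), and suppose they switch to $\mathcal{B}=\mathcal{B}_K\cup\mathcal{B}_{-K}$, obtaining allocations $a_l:=x_l^*(\mathcal{B})$. Combining the workhorse estimate with $J(\mathcal{B})=\sum_{l\in K}b_l(a_l)+\sum_{m\notin K}b_m(x_m^*)+d(x^*,y^*)$ yields $\sum_{l\in K}p_l(\mathcal{B})\le J(\mathcal{B}_{-K})-\sum_{m\notin K}b_m(x_m^*)-d(x^*,y^*)$, so the colluders' joint utility is at most $J(\mathcal{B}_{-K})$ minus the value that the feasible point $(x^*,y^*)$ attains in the problem $J(\mathcal{C}_K\cup\mathcal{B}_{-K})$ (true costs on $K$, bids $\mathcal{B}_{-K}$ elsewhere), which by optimality is $\ge J(\mathcal{C}_K\cup\mathcal{B}_{-K})$. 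Finally, since $K$ is rejected under $\mathcal{C}_K\cup\mathcal{B}_{-K}$, the optimal allocation there already puts zero weight on $K$, hence is feasible for $J(\mathcal{B}_{-K})$; combined with the fact that $J$ is nonincreasing in the set of bidders, this forces $J(\mathcal{B}_{-K})=J(\mathcal{C}_K\cup\mathcal{B}_{-K})$, and the joint utility collapses to $\le 0$. So the deviation cannot be profitable.

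For part (ii), a bidder $l$ of true cost $c_l$ entering with shills $S$ and bids $\mathcal{B}_S$ against $\mathcal{B}_{-l}$ is analysed in the enlarged market with bidder set $(L\setminus\{l\})\cup S$, where supermodularity still holds because splitting one bidder leaves $h,g,d$ — which, as noted below \eqref{eq:main_model}, depend on the allocation only through the aggregate supply $\sum_l x_l$ — untouched. The same reasoning as in (i), applied to the coalition $L\setminus\{l\}$ of genuine bidders, bounds the payoff $l$ extracts through the shills, net of the true cost $c_l(\sum_{k\in S}x_k^*)$ of the aggregate supply it actually provides, by $J(\mathcal{B}_{-l})-J(\mathcal{C}_l\cup\mathcal{B}_{-l})$; here one uses that the enlarged-market optimum, viewed through its aggregate supply, is feasible for $J(\mathcal{C}_l\cup\mathcal{B}_{-l})$. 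Since $b_l=c_l$ makes $J(\mathcal{B}_{-l})-J(\mathcal{C}_l\cup\mathcal{B}_{-l})$ exactly the utility $u_l(\mathcal{C}_l\cup\mathcal{B}_{-l})$ of submitting a single truthful bid, shill bidding is not profitable.

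I expect the main obstacle to be bookkeeping rather than conceptual. Colluders and shills may bid arbitrarily far below cost, so one cannot compare $b_l$ with $c_l$ directly; the true-cost terms must instead be carried through the feasibility of the realized allocation in the truthful problem, and it is essential to use the hypothesis that the coalition loses under truthful bidding to replace $J(\mathcal{B}_{-K})$ by $J(\mathcal{C}_K\cup\mathcal{B}_{-K})$ (and, in the shill case, the ``total-supply-only'' dependence of $h,g,d$) so that the core bound actually collapses to zero. Keeping straight which sub-profile — reported versus truthful, full versus restricted bidder set — sits inside each occurrence of $J(\cdot)$ is the delicate part; otherwise this is the continuous-good, second-stage, nonlinear-constraint analogue of the combinatorial-auction arguments of \cite{ausubel2006lovely}, and the details can be deferred to the appendix.
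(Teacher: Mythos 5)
Your proof is correct, but it takes a genuinely different route from the one the paper uses for this theorem. The paper proves (i) bidder-by-bidder: for each colluder $l\in K$ it first invokes DSIC to replace $\mathcal{B}_l$ by $\mathcal{C}_l$, then applies the supermodularity inequality $J(\mathcal{B}_{-l})-J(\mathcal{B}_{-l}\cup\mathcal{C}_l)\le J(\mathcal{C}_{-K})-J(\mathcal{C}_{-K}\cup\mathcal{C}_l)$ to shrink the ambient bidder set down to $\mathcal{C}_{-K}$, concluding that \emph{each individual} loser's utility remains $0$ — a slightly stronger statement than your bound on the collective utility (though the collective bound is all the theorem's definition of collusion requires). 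For (ii) the paper telescopes supermodularity to get $\sum_{k\in S}[J(\mathcal{B}_{-k})-J(\mathcal{B})]\le J(\mathcal{B}_{-S})-J(\mathcal{B})$, introduces a merged bid $\tilde b_l$, and finishes with DSIC, whereas you inline that last step by exhibiting the aggregated shill allocation as feasible for $J(\mathcal{C}_l\cup\mathcal{B}_{-l})$. Your strategy — funnel everything through core membership via \cref{thm:iff_supermodularity} and the inequality $\sum_{l\in K}\bar u_l(\mathcal{B})\le J(\mathcal{B}_{-K})-J(\mathcal{B})$ of \cref{lem:lemma_core} — is precisely the argument the paper deploys later for \cref{thm:no_collusions_bocs}, and the paper explicitly remarks there that it can serve as an alternative proof of this theorem. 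What your route buys is modularity: only core membership of the outcome is used, so the argument transfers verbatim to any core-selecting payment rule; what it costs is the per-colluder refinement. One small point to be explicit about: in (ii) both you and the paper need supermodularity of $J$ on the \emph{enlarged} ground set $(L\setminus\{l\})\cup S$ containing the shill identities; your appeal to the fact that $h$, $g$, $d$ depend only on the aggregate supply is a reasonable gloss and is no less rigorous than the paper's own (entirely implicit) treatment of this point.
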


The proof is relegated to the appendix in \cref{appforsupcoal}.~\cref{thm:no_collusions} shows that if the operator has a supermodular objective (or equivalently, if the VCG outcomes are always in the core), then the VCG mechanism is coalition-proof. Given this result, we next investigate sufficient conditions on the bids and the constraint sets {in order to ensure supermodularity and thus coalition-proof~outcomes.} In the following we derive conditions for two classes of markets.

\section{Markets for a single type of good}\label{sec:3a}
We start by considering simpler reverse auctions where the operator has to procure a fixed amount $M\in\R_+$  of a single type of good. 
Each bidder~$l$ has a private true cost function $c_l:\mathbb R_+ \rightarrow \mathbb R_+$ that is nondecreasing with $c_l(0)=0$. These {types} of auctions are mainly characterized by single-stage decisions with mutually exclusive bids. This means that a bidder can offer a set of bids, of which only one can be accepted. We first show that such discrete bids fit into our model~\eqref{eq:main_model}.  Here, bidder~$l$ submits truthful bids for $n_l$ discrete amounts as $\{(c_{l,i},x_{l,i})\}_{i=1}^{n_l}$
{where $c_{l,i}\in\R_+$ and the amounts offered by each bidder $x_{l,i} \in \R_+$
	must be equally spaced by some increment $m$  which is a divisor of $M$, that is,
	\begin{equation}\label{eq:simpler_clearing_model_cond}
	x_{l,i}= im,\; \text{for some }{i\in\mathbb Z_+}.
	\end{equation}
}Note that, there is an equivalent representation of the form $c_l(x)\in\R_+$ for $x>0$ as follows: 
{	\begin{equation}\label{cdef}
	c_l(x)= \min_{i=1,...,n_l} \left\{ c_{l,i} \,\rvert\, x_{l,i} \geq x\right\},
	\end{equation}
}where $c_l(0)=0$. This form equivalently represents that all the amounts up to the size of the winning bid are available to the operator. Furthermore, bid prices of this form are piecewise constant and continuous from the left.

We consider auctions cleared by
\begin{equation}	\label{eq:simpler_clearing_model}
\begin{split}
J(\mathcal{C}_S) =  &\min_{x\in\R_+^{\rvert S \rvert}}  \;\sum_{l \in S}c_l(x_l)\\ &\ \ \mathrm{s.t. }\ \,   \sum_{l \in S}x_l \geq M,
\end{split}
\end{equation}
{for} $S\subseteq L$.} 
Note, we can equivalently assume that $x_l$, above, takes values in $\{x_{l,i} \,\rvert\, i \in \mathbb{Z_+} \}\subseteq \mathbb{R}_+$ (cf. \eqref{cdef} and~\cref{fig:marginally_inc}) and doing so, we let $x^*=\{x^*_l\}_{l \in S}$ be the optimal values in these restricted sets.

The model \eqref{eq:simpler_clearing_model} is within the auction model \eqref{eq:main_model}. We can now derive conditions on bidders' true costs to ensure supermodularity of $J$. Thus, we derive conditions under which the truthful VCG outcome from~\eqref{eq:simpler_clearing_model} would lie in the core. 
\begin{theorem}
	\label{thm:conditions_on_bids}
	Given \eqref{eq:simpler_clearing_model_cond}, if the true costs are marginally increasing, namely,
	$x_{l,b} - x_{l,a} = x_{l,d} - x_{l,c}$
	implies that 
	$ c_{l,b} - c_{l,a} <  c_{l,d} - c_{l,c} $
	for each bidder $l \in L$ and for each $0 \leq x_{l,a}<x_{l,c}< x_{l,d}$, 
	then the objective function $J$ in \eqref{eq:simpler_clearing_model} is supermodular.
\end{theorem}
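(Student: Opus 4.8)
The goal is to show that marginally increasing true costs (in the sense of \eqref{eq:simpler_clearing_model_cond}) imply supermodularity of the objective $J$ in \eqref{eq:simpler_clearing_model}. By \cref{thm:iff_supermodularity} this is exactly the property that would in turn guarantee core VCG outcomes. Since \eqref{eq:simpler_clearing_model} is a special case of \eqref{eq:main_model}, I will verify the supermodularity inequality in \cref{def:supms} directly, namely that $S\mapsto -J(\mathcal{C}_S)$ (or $J$ with the sign convention of the definition) has increasing marginal values: for $S\subseteq R\subseteq L$ and $l\in S$,
\[
J(\mathcal{C}_S)-J(\mathcal{C}_{S_{-l}})\le J(\mathcal{C}_R)-J(\mathcal{C}_{R_{-l}}).
\]
First I would record the structural fact that, because the marginal cost increments $c_{l,i+1}-c_{l,i}$ of each bidder are strictly increasing in the quantity step (the ``marginally increasing'' hypothesis applied with equal increments $m$), the greedy/merit-order dispatch is optimal: the optimal allocation for any coalition $S$ is obtained by filling demand $M$ from the cheapest available marginal blocks across all bidders in $S$, and the optimal value $J(\mathcal{C}_S)$ equals the sum of the $M/m$ smallest marginal increments drawn from the multiset $\{\,c_{l,i+1}-c_{l,i}\,:\,l\in S\,\}$ (each bidder contributing a sorted, nondecreasing stream of increments because of convexity of the increments). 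This convexity-of-increments observation is the workhorse.

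The main step is then a pure ``sorted-lists'' argument. Fix $l$ and let $\mathcal{I}_l$ denote bidder $l$'s nondecreasing stream of marginal increments. Adding bidder $l$ to a set $T$ changes the optimal cost from ``sum of the $M/m$ smallest increments in the pool of $T$'' to ``sum of the $M/m$ smallest increments in that pool together with $\mathcal{I}_l$''; the decrease $J(\mathcal{C}_T)-J(\mathcal{C}_{T\cup\{l\}})$ equals the gain from substituting some of bidder $l$'s cheap increments for the most expensive increments currently in use. The claim to prove is that this gain is larger when $T$ is the bigger set $R_{-l}$ than when $T$ is the smaller set $S_{-l}$. This follows because enlarging the competing pool can only raise (weakly) the ``cutoff'' marginal increment it was previously using, so bidder $l$'s cheap blocks displace weakly more expensive blocks, giving weakly more savings — a standard exchange/interchange argument on sorted sequences. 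I would make this precise by an explicit pairing: index the $M/m$ winning increments before and after inserting $l$, and compare term-by-term, using that the $k$-th smallest element of a multiset is monotone under adding elements.

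The hard part will be handling ties and the bookkeeping of ``which of bidder $l$'s blocks are used'' cleanly — in particular making sure that the comparison of the displaced increments in the small-pool case versus the large-pool case is valid even when several increments are equal and when bidder $l$ wins a different number of blocks in the two scenarios. I would manage this by working entirely with sorted multisets and the order statistics $\sigma_k(\cdot)$ (the $k$-th smallest element), proving the elementary monotonicity lemma $\sigma_k(A)\le\sigma_k(A\uplus B)$ and an additivity lemma expressing $J(\mathcal{C}_T)$ as $m\sum_{k=1}^{M/m}\sigma_k$ of the pooled increment multiset, and then reducing the whole statement to an inequality between two such order-statistic sums. Once that reduction is in place the remaining inequality is routine. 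A final remark would note where $c_l(0)=0$ and the equal-spacing condition \eqref{eq:simpler_clearing_model_cond} are used (to guarantee the increment streams are well-defined, nonnegative, and comparable across bidders).
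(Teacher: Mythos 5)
Your reduction of \eqref{eq:simpler_clearing_model} to a merit-order problem --- writing $J(\mathcal{C}_S)$ as the sum of the $M/m$ smallest marginal increments in the pooled multiset $\biguplus_{l\in S}\mathcal{I}_l$, which is legitimate because each bidder's increment stream is sorted so the globally greedy selection automatically respects the prefix structure --- is sound, and it is a genuinely different route from the paper's proof, which instead establishes that every bidder's allocation is monotone under removal/addition of a competitor (\cref{lem:increasing_quantities}) and then sandwiches the two VCG-utility expressions between bounds obtained from explicit feasible suboptimal allocations.

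However, the central step of your argument is stated in the wrong direction, and as written it is false. Supermodularity of $J$ (\cref{def:supms}) rearranges to $J(\mathcal{C}_{S_{-l}})-J(\mathcal{C}_S)\ge J(\mathcal{C}_{R_{-l}})-J(\mathcal{C}_R)$ for $S\subseteq R$: the cost saving from adding bidder $l$ must be weakly \emph{larger} for the \emph{smaller} pool. You instead set out to prove that ``this gain is larger when $T$ is the bigger set $R_{-l}$,'' which is the reverse inequality and fails in general: with $M/m=2$, $\mathcal{I}_l=\{0,100\}$, $\mathcal{I}_j=\{10,20\}$, $\mathcal{I}_{j'}=\{1,2\}$, adding $l$ to $\{j\}$ saves $30-10=20$ while adding $l$ to $\{j,j'\}$ saves only $3-1=2$. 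The supporting claims are flipped the same way: enlarging the competing pool \emph{lowers} the cutoff increment rather than raising it, and the order-statistic monotonicity is $\sigma_k(A\uplus B)\le\sigma_k(A)$, not $\sigma_k(A)\le\sigma_k(A\uplus B)$ as you wrote. Proving your stated claim would establish submodularity of $J$, the opposite of what \cref{thm:conditions_on_bids} asserts. With all three directions corrected, the exchange argument does point at the right statement --- in a larger pool bidder $l$'s cheap blocks displace \emph{cheaper} incumbents, so the savings shrink, giving decreasing marginal contributions and hence supermodular $J$ --- but you would then still have to carry out the pairing of displaced blocks carefully (bidder $l$ may win a different number of blocks in the two pools), which is exactly the bookkeeping you defer.
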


This setting includes reverse auctions of multiple
identical items as a subset. However, we highlight that our proof does not share similarities with that of \cite[Theorem~8]{ausubel2006lovely}, which achieves submodular objective functions in forward auctions of multiple single-unit items utilizing a substitutes condition on different items. Our proof relies on an important lemma we prove showing that the allocations, $x_l^*$, of every bidder is nondecreasing when a bidder is removed from the auction \eqref{eq:simpler_clearing_model}. The proof is relegated to the appendix in~\cref{appforsimpl}.
As a corollary of this result, marginally increasing costs imply coalition-proof VCG outcomes for~\eqref{eq:simpler_clearing_model}\, and thus eliminate incentives for collusion and shill bidding. This condition is visualized in~\cref{fig:marginally_inc}. 

\begin{figure}[t]
	\begin{center}
		\begin{tikzpicture}[scale=0.85, every node/.style={scale=0.65}]
		\draw[->] (0,0) -- (7,0) node[right] {\Large Quantity};
		\draw[->] (0,0) -- (0,7.2) node[above] {\Large Cost};
		\draw[black!60!blue,fill=black!60!blue] (0,0) circle (.075cm);
		\draw[black!60!blue,fill=black!60!blue]  (1,0.1837) circle (.075cm);
		\draw[black!60!blue,fill=black!60!blue]  (2,0.769) circle (.075cm);
		\draw[black!60!blue,fill=black!60!blue]  (3,1.65) circle (.075cm);
		\draw[black!60!blue,fill=black!60!blue] (4,2.937) circle (.075cm);
		\draw[black!60!blue,fill=black!60!blue]  (5,4.5)circle (.075cm);
		\draw[black!60!blue,fill=black!60!blue] (6,6.9) circle (.075cm);
		\draw[black!60!blue]  (0,0.1837) circle (.075cm);
		\draw[black!60!blue]  (1,0.769) circle (.075cm);
		\draw[black!60!blue]  (2,1.65) circle (.075cm);
		\draw[black!60!blue] (3,2.937) circle (.075cm);
		\draw[black!60!blue]  (4,4.5) circle (.075cm);
		\draw[black!60!blue]  (5,6.9) circle (.075cm);
		\draw[scale=1, line width=.22mm, domain=-1.5:6,dashed,variable=\x,black!40!blue]  plot ({\x},{6.9});
		\draw[scale=1, line width=.26mm, domain=5.035:6,variable=\x,black!60!blue]  plot ({\x},{6.9});
		\draw[scale=1, line width=.22mm, domain=0:6.9,dashed,variable=\y,black!40!blue]  plot ({6},{\y});
		\draw[scale=1, line width=.22mm, domain=-1.5:3,dashed,variable=\x,black!40!blue]  plot ({\x},{1.65});
		\draw[scale=1, line width=.26mm, domain=2.035:3,variable=\x,black!60!blue]  plot ({\x},{1.65});
		\draw[scale=1, line width=.22mm, domain=0:1.65,dashed,variable=\y,black!40!blue]  plot ({3},{\y});	
		\draw[scale=1, line width=.22mm, domain=-0.5:5,dashed,variable=\x,black!40!blue]  plot ({\x},{4.5});
		\draw[scale=1, line width=.26mm, domain=4.035:5,variable=\x,black!60!blue]  plot ({\x},{4.5});
		\draw[scale=1, line width=.22mm, domain=0:4.587,dashed,variable=\y,black!40!blue]  plot ({5},{\y});	
		\draw[scale=1, line width=.22mm, domain=-0.5:2,dashed,variable=\x,black!40!blue]  plot ({\x},{.769});
		\draw[scale=1, line width=.26mm, domain=1.035:2,variable=\x,black!60!blue]  plot ({\x},{.769});
		\draw[scale=1, line width=.22mm, domain=0:.737,dashed,variable=\y,black!40!blue]  plot ({2},{\y});
		\draw[scale=1, line width=.26mm, domain=3.035:4,variable=\x,black!60!blue]  plot ({\x},{2.937});
		\draw[scale=1, line width=.26mm, domain=0.035:1,variable=\x,black!60!blue]  plot ({\x},{.1837});
		\draw[<->, line width=.25mm] (2,-0.3) -- (5,-0.3) node at (3.5,-0.65) {\LARGE $x_{l,b} - x_{l,a}$};		
		\draw[<->, line width=.25mm] (3,-1.3) -- (6,-1.3) node at (4.5,-1.65) {\LARGE $x_{l,d} - x_{l,c}$};		
		\draw[<->, line width=.25mm] (-0.5,0.9) -- (-0.5,4.4) node[rotate=90] at (-1,3) {\LARGE$c_{l,b} - c_{l,a}$};
		\draw[<->, line width=.25mm] (-1.5,1.75) -- (-1.5,6.8) node[rotate=90,] at (-2,4.4) {\LARGE $c_{l,d} - c_{l,c}$};									
		\end{tikzpicture}
		\caption{Marginally increasing piecewise constant bid prices}\label{fig:marginally_inc}
	\end{center}
\end{figure}
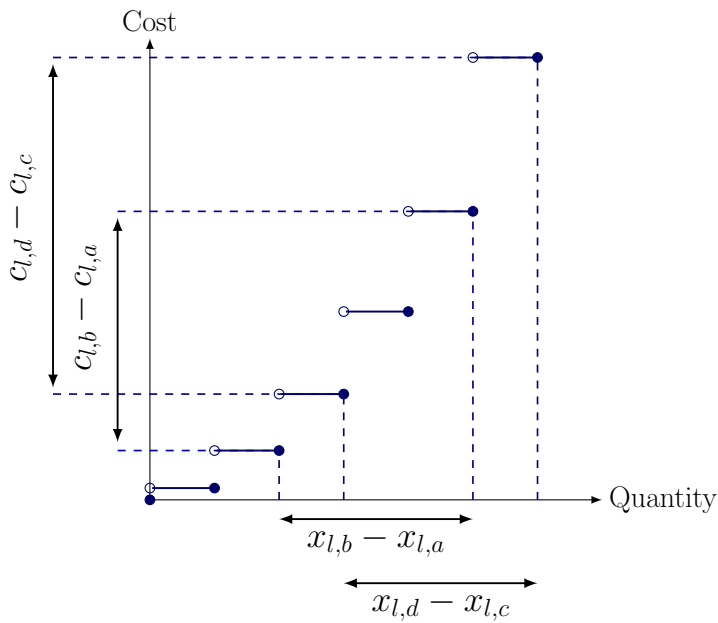

We note that the analogue of~\cref{thm:conditions_on_bids} holds for continuous bids and for strictly convex bid curves. This could also be seen as the limiting case, by taking the limit where the increment $m$ goes~to~$0$.

We illustrate the conditions in~\cref{thm:conditions_on_bids}  by revisiting \cref{ex:first_simple_example}~from~\cref{sec:p1_2}.

\begin{example}[Simple Market]
	Revisiting \cref{ex:first_simple_example} and \cref{ex:core_illust}, we observe that the bid from bidder $3$ does not satisfy the conditions in~\cref{thm:conditions_on_bids}, because the bid price for $400$ MW is not submitted. Assume instead bidder $3$ provides the mutually exclusive bids of $\$300$ for $400$ MW and $\$600$ for $800$ MW. Suppose bidders $1$ and $2$ change their bids to 
	$\$0$ for $400$ MW. Then, bidders $1$ and $2$ are the winners and each receives a payment of $\$300$. If they were multiple identities of a single bidder, after shill bidding, sum of their payments would decrease from $\$700$ to $\$600$.
\end{example}

Next, we consider a more general setting with different types of goods where it is still possible to derive conditions to ensure supermodularity and coalition-proof VCG outcomes.

\section{Markets for different types of goods}

We now consider reverse auctions where the central operator is procuring a set of different types of goods. Each bidder has a private true cost function $c:\mathbb R_+^t \rightarrow \mathbb R_+$ that is nondecreasing with $c(0)=0$. We assume that this cost has an additive from, $c(x)=\sum_{\tau=1}^{t}c_\tau(x_\tau)$. Typically, in these markets, bids are submitted separately for each type with an upper-bound on the amount to be procured, $\bar{X}_\tau\in\R_+$ \cite{abbaspourtorbati2016swiss}. The operator treats these bids as bids from different identities, and then distributes the payments accordingly. In this case, the set~$L$ is the extended set of bidders such that the bid profile $\mathcal C$ is given by the bids of the form $c_l:\mathbb R_+ \rightarrow \mathbb R_+$, $\bar{X_l}\in\R_+$,  for all $l\in L$.\footnote{This assumption is in fact without loss of generality. Note that any supermodular function $f:2^{L}\rightarrow\R$ preserves its supermodularity when we reduce its ground set $L$ by combining some of the elements to obtain a new ground set $L'$ with $|L'|\le |L|$, see the discussions in~\cite[Appendix B]{bach2011learning}.}

Let $[t]= \{1,\ldots,t\}$. Define the set $\{A^\tau\}_{\tau=1}^t$ to be a partition of the set $L$ where each set $A^\tau\subseteq L$ is the set of bidders submitting a bid for goods of type $\tau$. Specifically, we consider auctions cleared by the optimization problem:
\begin{equation}	\label{eq:multiple}
\begin{split} 
J(\mathcal{C}_S) =&  \min_{x\in\R^{|L|}_+  } {\ \sum_{l \in S}c_l( x_l)} \\
&\ \ \mathrm{s.t. } \ \sum_{l\in A^T} x_l \geq M(T) ,\, \forall T \subseteq [t], \\
&\quad \quad\ \  x_l \leq \bar{X}_l,\, \forall l\in L, \\
&\quad \quad\ \ x_l = 0,\, \forall l\in L\setminus S, 
\end{split}
\end{equation}
where $A^T=\bigcup_{\tau\in T}A^\tau$. We minimize the sum of declared costs subject to constraints on the subsets of types $[t]$. Here, the function $M:2^{[t]}\to\R_+$ defines the amount the operator wants to procure from possible combinations of different types of goods. We assume that $M(\emptyset)=0$ (normalized). We remark that the optimization problem in~\eqref{eq:multiple} contains the case in \cref{ex:second_simple_example}. 

In the following result we see that if $c_l$ and $M$  satisfy convexity and supermodularity conditions respectively, then $J$ is supermodular. 

\begin{theorem}\label{thm:re2_proof}
	{The objective function $J$ given by \eqref{eq:multiple} is supermodular when} $c_l$ is increasing and convex for all $l \in L$, $M$ is supermodular 
and nondecreasing.
\end{theorem}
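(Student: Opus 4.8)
The goal is the supermodularity of the set function $S\mapsto J(\mathcal C_S)$ in the sense of \cref{def:supms}, i.e.\ that $-J$ is submodular. Since submodularity is equivalent to its two-element (local) inequalities, it suffices to show that for every $A\subseteq L$ and every pair of distinct $k,l\in L\setminus A$,
\[
J(\mathcal C_A)+J(\mathcal C_{A\cup\{k,l\}})\;\ge\;J(\mathcal C_{A\cup\{k\}})+J(\mathcal C_{A\cup\{l\}}),
\]
the cases in which some subgame is infeasible being immediate, since feasibility is monotone in the bidder set, so an infeasible $A$ forces the left-hand side to be $+\infty$.

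The first step is to pass to the Lagrangian dual of \eqref{eq:multiple}. All constraints there are affine and each $c_l$ is convex, so strong duality holds whenever the program is feasible with finite value; dualizing the covering constraints $\sum_{l\in A^T}x_l\ge M(T)$ with multipliers $\nu_T\ge0$ and using that $\{A^\tau\}_\tau$ partitions $L$ (hence bidder $l$ enters the Lagrangian only through the aggregate price $\bar\nu_{\tau(l)}:=\sum_{T\ni\tau(l)}\nu_T$ of its type) gives
\[
J(\mathcal C_S)=\sup_{\bar\nu\in\R^t_{\ge0}}\ \Bigl[\widehat M(\bar\nu)-\sum_{\tau\in[t]}R^\tau_S(\bar\nu_\tau)\Bigr],\qquad R^\tau_S:=\sum_{l\in A^\tau\cap S}\rho_l,
\]
where $\rho_l(w):=\sup_{0\le x\le\bar X_l}(wx-c_l(x))$ is the conjugate of $c_l$ restricted to $[0,\bar X_l]$ (finite, convex, nondecreasing, $\rho_l\ge0$), and $\widehat M$ is the envelope $\bar\nu\mapsto\max\{\sum_T\nu_T M(T):\nu\ge0,\ \sum_{T\ni\tau}\nu_T=\bar\nu_\tau\ \forall\tau\}$. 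Here is where the hypotheses enter: $M$ nondecreasing with $M(\emptyset)=0$ keeps this well posed, and $M$ supermodular makes $\widehat M$ the Lov\'asz extension of $M$, which is simultaneously concave and supermodular on the lattice $\R^t_{\ge0}$; each $R^\tau_S$ is convex, nondecreasing, and — crucially — modular in $S$, so $R^\tau_{A\cup k}+R^\tau_{A\cup l}=R^\tau_A+R^\tau_{A\cup\{k,l\}}$ coordinatewise.

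It remains to deduce the displayed pairwise inequality from this representation. Writing $\Phi_S(\bar\nu)=\widehat M(\bar\nu)-\sum_\tau R^\tau_S(\bar\nu_\tau)$, each $\Phi_S$ is concave and supermodular on $\R^t_{\ge0}$, and passing from $A$ to $A\cup\{l\}$ amounts to subtracting the convex, coordinatewise-nondecreasing function $\rho_l(\bar\nu_{\tau(l)})$. I would finish with a monotone-comparative-statics argument: by Topkis's theorem the maximizer of $\Phi_S$ can be chosen monotone in the relevant coordinate as bidders are added, and the modularity of the $R^\tau_S$ in $S$ then lets one compare the four optimal values by evaluating the relevant $\Phi$'s at a common optimizer and invoking concavity. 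An equivalent primal route works directly with $J(\mathcal C_S)=\min_{z\in Q(M)}\sum_\tau\Psi^\tau_S(z_\tau)$, where $Q(M)=\{z\ge0:z(T)\ge M(T)\ \forall T\}$ is a contra-polymatroid (again because $M$ is supermodular and nondecreasing) and $\Psi^\tau_S$ is the infimal convolution of $\{c_l|_{[0,\bar X_l]}\}_{l\in A^\tau\cap S}$; one then uses the exchange/greedy-vertex structure of the base polytope of $Q(M)$ together with the strictly-convex analogue of \cref{thm:conditions_on_bids}, which already yields that $S\mapsto\Psi^\tau_S(y)$ is supermodular for each fixed demand $y$.

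The main obstacle is precisely this last step — transferring the ``for each fixed production level'' supermodularity to the set-dependent optimal production levels. Naive attempts (the componentwise lattice operations $x^*\vee y^*,\ x^*\wedge y^*$, or a bare concavity/midpoint estimate on the dual) do not close, because a contra-polymatroid is not closed under componentwise minimum and because concavity plus pointwise modularity alone is insufficient; one genuinely needs the exchange property of $Q(M)$ (equivalently, the supermodularity of $\widehat M$ on $\R^t_{\ge0}$), which is exactly what the hypothesis ``$M$ supermodular'' supplies and without which the reduction breaks.
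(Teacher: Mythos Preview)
Your dual route is genuinely different from the paper's argument, and it is promising, but it is not closed. The paper never passes to conjugates or to the Lov\'asz extension at all: it stays primal, lifts $M$ to a supermodular function $f$ on $2^L$ (so the covering constraints become a contra-polymatroid in $\R^L$), shows that at the optimum the base constraint $\sum_{l\in L} x_l=f(L)$ is tight, complements and shifts to rewrite the feasible set as a base polymatroid with rank function $k(A)=f(L)-f(L\setminus A)$, absorbs the boxes $x_l\le\bar X_l$ via the truncation $\bar f(A)=\min_{B\subseteq A}\{k(A\setminus B)+\sum_{l\in B}\bar X_l\}$, and finally invokes a known result (He, Zhang, Zhang, 2012) that maximizing a separable concave objective over a base polymatroid yields a submodular value function, hence $J$ is supermodular. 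So the paper's proof hinges on an external polymatroid theorem; your dual/Lov\'asz perspective, if completed, would give an independent and arguably more self-contained argument.

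The gap in your proposal is exactly the step you yourself flag: you have the representation $J(\mathcal C_S)=\sup_{\bar\nu\in\R^t_{\ge 0}}\Phi_S(\bar\nu)$ with $\Phi_S(\bar\nu)=\widehat M(\bar\nu)-\sum_{l\in S}\rho_l(\bar\nu_{\tau(l)})$, but you have not shown how to conclude supermodularity in~$S$. Your appeal to Topkis plus concavity plus ``evaluating at a common optimizer'' is a plan, not a proof, and your stated reason for dismissing the lattice operations --- that a contra-polymatroid is not closed under componentwise minimum --- applies only on the \emph{primal} side. On the dual side you are on $\R^t_{\ge 0}$, which \emph{is} a lattice, and the argument you discarded works there: given optimizers $\bar\nu^k,\bar\nu^l$ for $A\cup\{k\}$ and $A\cup\{l\}$, set $p=\bar\nu^k\vee\bar\nu^l$ and $q=\bar\nu^k\wedge\bar\nu^l$; then $J(\mathcal C_A)+J(\mathcal C_{A\cup\{k,l\}})\ge \Phi_A(p)+\Phi_{A\cup\{k,l\}}(q)$, and this dominates $\Phi_{A\cup\{k\}}(\bar\nu^k)+\Phi_{A\cup\{l\}}(\bar\nu^l)$ using exactly three facts: (i) $\widehat M(p)+\widehat M(q)\ge \widehat M(\bar\nu^k)+\widehat M(\bar\nu^l)$, i.e.\ lattice supermodularity of the Lov\'asz extension of a supermodular $M$, which follows from the level-set formula $\widehat M(\bar\nu)=\int_0^\infty M(\{\tau:\bar\nu_\tau>t\})\,dt$; (ii) $\sum_{l\in A}\rho_l(\bar\nu_{\tau(l)})$ is separable hence lattice-modular; and (iii) $q_{\tau(k)}\le \bar\nu^k_{\tau(k)}$, $q_{\tau(l)}\le \bar\nu^l_{\tau(l)}$ together with $\rho_k,\rho_l$ nondecreasing. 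No Topkis and no concavity is needed. As written, though, your proposal stops at the obstacle rather than resolving it.
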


The proof is relegated to the appendix in~\cref{app:finproof}, and it builds upon recent advances in polymatroid optimization.\footnote{Polymatroid is a polytope associated with a submodular function. We highlight that under the conditions in~\cref{thm:re2_proof}, the first set of constraints in~\eqref{eq:multiple} is a contra-polymatroid~\cite{schrijver2003combinatorial}.}$^,$\footnote{Later on during the doctoral studies, we proved that the central operator exhibits strong-substitute property as it is defined in~\cite{milgrom2007substitute}, if it has a contra-polymatroid procurement constraint. \cite[Theorem 22]{milgrom2007substitute} states that if all bidders have strong-substitute valuations, a forward auction has a submodular objective function. Hence, it could in fact be possible to combine our observations with \cite[Theorem 22]{milgrom2007substitute} to extend our result. We believe that this extension can potentially relax the separable convexity requirement of \cref{thm:re2_proof} on the bid functions to supermodularity and component-wise convexity, see \cite[Theorem 11]{milgrom2007substitute} and the comparisons of these functions classes in~\cite[\S 3]{bian2019provable}). However, this requires further investigation. Finally, we highlight that our proof does not share similarities with that of \cite[Theorem 22]{milgrom2007substitute}, which relies on an analysis based on the conjugates of the bid functions. \label{footnote:extension}} As a corollary of this result, the conditions on $c_l$, $l\in L$ and $M$ in~\cref{thm:re2_proof} imply core VCG outcomes for \eqref{eq:multiple}. The conclusions of~\cref{thm:no_collusions} on shill bidding and collusion are further corollaries of this~result. 

Next, we illustrate that the VCG outcome may not lie in the core for a general polyhedral constraint set by revisiting~\cref{ex:second_simple_example}, and then we illustrate how the conditions in~\cref{thm:re2_proof} imply core outcomes {in this example. In the numerics, we also show that the polymatroid condition holds for optimal power flow problems with no line limits.}

\begin{example}[Power Market]
	Revisiting~\cref{ex:second_simple_example}, for the constraint set \eqref{newc2}, we highlight that $M$ is not supermodular: $M(\{\text{A},\text{B},\text{C}\}) + M(\{\text{A}\}) < M(\{\text{A},\text{B}\}) + M(\{\text{A},\text{C}\}) $ where $M(\{\text{A}\})=0$ and $M(\{\text{A},\text{B},\text{C}\})=0$. Then, the VCG outcome under collusion is blocked by a deal between the operator and bidder~$1$. {Now instead consider the following constraint set:}
	\begin{equation*}	\label{newcz}
	\begin{split}
	\{ x\in\{0,100\}^5  \,\rvert\, & x_1 + x_2 + x_3 \geq \tilde{M}(\{\text{A},\text{B}\})=100,\\
	& x_1 + x_4 +  x_5 \geq \tilde{M}(\{\text{A},\text{C}\})=100,\\
	& x_1 + x_2 + x_3 + x_4 + x_5 \geq \tilde{M}(\{\text{A},\text{B},\text{C}\})=200\}, 
	\end{split}
	\end{equation*}	
	which is an inner approximation of the constraints found in~\cref{ex:second_simple_example}. 
	Under this new constraint set, type A can still replace types B and C, but it cannot replace both types simultaneously. In other words, types B and C cannot complement each other to replace type~A as well.
	Note that the function $\tilde{M}$ is supermodular, normalized and nondecreasing, which satisfies all the requirements of~\cref{thm:re2_proof}.  {So, here the VCG outcome lies in the core.} Specifically,
	under the VCG mechanism, bidders $2$ and $4$ become winners and receive $p_2 = 350 + (650-600)= \$400$ and $p_4 = 250 + (750-600)= \$400$. This outcome is not blocked by any other coalition and collusion is not profitable for bidders. This example illustrates that marginally increasing cost curves alone are not enough to conclude the supermodularity of the reverse auction objective function in~(\ref{eq:main_model}). 
\end{example}

\section{Appendix}
\subsection{Proof of~\cref{lem:lemma_core}}\label{appforwin}
The utility allocation
	$ u^{}$ is unblocked by every $S\subseteq L$ if and only if
	\begin{equation*}
	-J(\mathcal{C}_S) \leq \sum_{l \in S} u^{}_l + u_0 = \sum_{l \in S} u_l - \sum_{l \in W}  u^{}_l - J(\mathcal{C}),\,\forall S\subseteq L,
	\end{equation*}
	since the losing bidders are not allocated and they obtain zero payment. Thus, the core can equivalently be parametrized as $$\sum_{l \in W\setminus S} u_l \leq J(\mathcal{C}_S)- J(\mathcal{C}),\ \forall S\subseteq L,$$ for the bidders. Moreover, fixing the set $K = W\setminus S$, the dominant constraints are those corresponding to minimal $J(\mathcal{C}_S)$, in particular, when we have $S = L\setminus K$ (since this is the maximal set with $K$ not taking part in the coalition~$S$). \QEDA

\subsection{Proof of~\cref{lem:removal_of_two}}\label{appforlose}
Notice that if the optimization problem \eqref{eq:main_model} is infeasible then the objective value is $J(\mathcal{B})=\infty$. Given this property of the market, inequality constraints \eqref{eq:core_constraints} in~\cref{lem:lemma_core} simplify to constraints on the utilities of single bidders, that is,
	$u^{\text{VCG}}_l\leq J(\mathcal{C}_{-l}) - J(\mathcal{C}),$ for all $l\in W$.
	This inequality follows directly from the definition of the VCG utility, $u^{\text{VCG}}_l= J(\mathcal{C}_{-l}) - J(\mathcal{C}),$ for all $l\in W$. The equality constraint in~\cref{lem:lemma_core} is satisfied by definition. \QEDA
	
	\subsection{Proof of~\cref{thm:iff_supermodularity}}\label{appforsup}
We prove that supermodularity is sufficient for the truthful VCG utilities to lie in the core, $Core(\mathcal{C}_R)$, for all $R\subseteq L$. Notice that we have $u_{l,R}^{\text{VCG}}=J(\mathcal{C}_{R\setminus{l}}) - J(\mathcal{C}_R)$ (that is, the VCG utility of bidder $l$ when $R$ is participating), we drop the dependence on $R$ for the sake of simplicity in notation. We recall~\cref{lem:lemma_core}. Hence, we need to show that
	\begin{equation}
	\label{eq:core_constraints2}
	\sum_{l \in K }  u_l\leq J(\mathcal{C}_{R \setminus K}) - J(\mathcal{C}_R),\ \forall K\subseteq W,
	\end{equation}
	where $W$ are the winner subset of $R$.
	Let $K=\{ l_1, \dots, l_k\}$. 
	{Notice that, by supermodularity,
		$
		J(\mathcal{C}_{R\setminus{l_\kappa}}) 
		- J(\mathcal{C}_R) 
		\leq 
		J(\mathcal{C}_{R \setminus \{ l_\kappa ,..., l_k \} })
		-
		J(\mathcal{C}_{R\setminus \{ l_{\kappa+1} ,..., l_k \}}) 
		$. 
		Thus 
		\begin{align*}
		\sum_{l\in K} u_l^{\text{VCG}} 
		&= 
		\sum_{\kappa=1}^k
		J(\mathcal{C}_{R\setminus{l_\kappa}}) 
		- J(\mathcal{C}_R) 
		\\
		&\leq 
		\sum_{\kappa =1}^k
		J(\mathcal{C}_{R \setminus \{ l_\kappa ,..., l_k \}} )
		- J(\mathcal{C}_{R\setminus  \{ l_{\kappa+1} ,..., l_k \}}) 
		\\
		&=
		J(\mathcal{C}_{R \setminus K} ) - J(\mathcal{C}_R).
		\end{align*}
		The last equality holds by a telescoping sum.}
	Thus, we see that \eqref{eq:core_constraints2} holds and so, by~\cref{lem:lemma_core}, the VCG outcome belongs to the core. The same argument can be repeated to obtain core VCG outcomes for any set of participating auction bidders $R\subseteq L$ and for any profile $\mathcal{C}$.

	To prove that supermodularity is also necessary for outcomes to lie in the core, {we proceed by contradiction.} Suppose that the supermodularity condition does not hold for a bidder~${l}$. Then, there exist sets ${S}\subseteq{R}$ where $J(\mathcal{C}_{R\setminus{l}})\! -\! J(\mathcal{C}_R)\!>\!  J(\mathcal{C}_{S\setminus{l}})\! -\! J(\mathcal{C}_S)$. We may,{ without loss of generality,} choose $R= S\cup \{i\}$ for some $i$. To see this, take $S^0=S$ and
	$S^\kappa=S^{\kappa-1}\cup \{ l_\kappa\}$ with $S^k=R$, then,
		\begin{align}
		\sum_{\kappa =1}^k J(\mathcal{C}_{S^\kappa\setminus{l}})-J(\mathcal{C}_{S^{\kappa-1}\setminus{l}}) &= J(\mathcal{C}_{R\setminus{l}})-J(\mathcal{C}_{S\setminus{l}}) \notag\\
		& >  J(\mathcal{C}_R)-J(\mathcal{C}_S) = \sum_{\kappa=1}^k J(\mathcal{C}_{S^\kappa})-J(\mathcal{C}_{S^{\kappa-1}}).\notag
		\end{align}
	The strict inequality above must hold for one of the summands  $J(\mathcal{C}_{S^\kappa\setminus {l}})-J(\mathcal{C}_{S^{\kappa-1}\setminus {l}}) > J(\mathcal{C}_{S^\kappa})-J(\mathcal{C}_{S^{\kappa-1}})$.  Therefore, we may consider sets ${S} \subseteq {R}$ that differ by one bidder, say $i$. Thus, by this observation, we have
	$ J(\mathcal{C}_{{R}\setminus {l}}) - J(\mathcal{C}_{R}) > J(\mathcal{C}_{{S}\setminus {l}}) - J(\mathcal{C}_{S}) = J(\mathcal{C}_{{R}\setminus{\{i,l\}}}) - J(\mathcal{C}_{{R}\setminus {i}})$ for $i \in {R}\setminus {S}$.
	Further, after rearranging the above inequality we obtain
	{
		\begin{equation}\label{eq:JR}
		J(\mathcal{C}_{R\setminus{i}})-J(\mathcal{C}_{R}) > J(\mathcal{C}_{R\setminus{\{i,l\}}})-J(\mathcal{C}_{R\setminus{l}})\geq 0.
		\end{equation}	
	}That is, both bidder $i$ and $l$ are winners of the VCG auction with bidders $R$.  
	Considering the auction with the set of bidders ${R}$ and with  $i,{l}\in W$, and ${K}=\{i, {l}\}$, we have: 
	{
		\begin{align*}
		\sum_{l' \in {K}} u_{l'}^{\text{VCG}}=&\sum_{l' \in {K}}\!\! {J(\mathcal{C}_{{R}\setminus{l'}}) - J(\mathcal{C}_{R})} \\
		=& 
		J(\mathcal{C}_{{R}\setminus{\{i\}}}) - J(\mathcal{C}_{{R}}) + J(\mathcal{C}_{{R}\setminus{l}}) - J(\mathcal{C}_{R}) 
		\\
		>& 
		J(\mathcal{C}_{{R}\setminus{\{i,l\}}}) - J(\mathcal{C}_{{R}\setminus{l}}) + J(\mathcal{C}_{{R}\setminus{l}}) - J(\mathcal{C}_{R}) 
		\\
		=& J(\mathcal{C}_{{R} \setminus {K}}) - J(\mathcal{C}_{R}),
		\end{align*}
		where in the inequality above we apply \eqref{eq:JR}.
		Thus, given \cref{lem:lemma_core}, \eqref{eq:core_constraints2} does not hold, consequently, $u^{\text{VCG}} \notin Core(\mathcal{C}_R)$. Thus the outcome of the VCG mechanism is not in the core for the subset of bidders $R\subseteq L$.
	}
	\QEDA

\subsection{Proof of~\cref{thm:no_collusions}}\label{appforsupcoal}
		(i) Let $K$ be a set of colluders who would lose the auction when bidding their true values $\mathcal{C}_K=\{c_l\}_{l\in K}$, when bidding $\mathcal{B}_K=\{b_l\}_{l\in K}$ they become winners, that is, they are all allocated a positive quantity. We define $ {\mathcal{C}} = \mathcal{C}_{K}\cup \mathcal{C}_{-K}$ and ${\mathcal{B}}={\mathcal{B}}_{ K}\cup \mathcal{C}_{-K}$ where $\mathcal{C}_{-K}=\{c_l\}_{l\in L\setminus K}$ denotes the bidding profile of the remaining bidders. As a remark, the profile $\mathcal{C}_{-K}$ is not necessarily a truthful profile. The VCG utility that each player $l$ in $K$ receives under ${\mathcal{B}}$ is
		\begin{align*} 
		u_l^{\text{VCG}}({\mathcal{B}}) &\leq u_l^{\text{VCG}}({\mathcal{B}}_{-l}\cup \mathcal{C}_l)\\  &= J({\mathcal{B}}_{-l} ) - J({\mathcal{B}}_{-l}\cup \mathcal{C}_l) \\  
		& \leq J( \mathcal{C}_{-K}) - J( \mathcal{C}_{-K} \cup {\mathcal{C}}_l) \\
		& =  J( \mathcal{C}_{-l}) - J( \mathcal{C}) \\ &=u_l^{\text{VCG}}({\mathcal{C}})\\&=0,
		\end{align*}
		where the first inequality follows from the dominant-strategy incentive-compatibility of the VCG mechanism. The first equality comes from the definition of the VCG mechanism and the second inequality applies the supermodularity of the function~$J$. The next equality comes from the fact that the set $K$ originally was a group of losing bidders.
		
		So we see that, for all $l \in K$, the utility $u_l^{\text{VCG}}({\mathcal{B}}) $ is upper bounded by the utility that bidder $l$ would receive if every colluder was bidding truthfully. However, by the initial assumption, these bidders were losers while bidding truthfully, and hence $u_l^{\text{VCG}}({\mathcal{C}})=0,$ for all $l \in K$. Thus, there is no benefit for losers from colluding by jointly deviating from their truthful bids.
		
		(ii) Similar to part (i), define $\mathcal{C}=\mathcal{C}_{-l}\cup\mathcal{C}_{l}$. The profile $\mathcal{C}_{-l}$ is not necessarily a truthful profile. Shill bids of bidder~$l$ are given by $\mathcal{B}_{S}=\{b_k\}_{k\in S}$.~We define a merged bid $\tilde{\mathcal{B}}_l$ as $$\tilde{b}_l(x_l)=\min_{x_k\in\hat{\X}_k,\,\forall k}\, \sum_{k\in S}b_k(x_k)\ \mathrm{s.t. }\sum_{k\in S}x_k=x_l.$$ We then define ${\tilde{\mathcal{B}}}=\mathcal{C}_{-l}\cup{\tilde{\mathcal{B}}}_{l}$.
		The VCG utility from shill bidding under ${\mathcal{B}}=\mathcal{C}_{-l}\cup{\mathcal{B}}_{S}$, $\sum_{k\in S}{u}_k^{\text{VCG}}({\mathcal{B}})$, is given by
		\begin{align*} &= \sum_{k\in S}[J({\mathcal{B}_{-k}})-J({\mathcal{B}})+b_k(x_k^*({\mathcal{B}}))]-{c}_l(\sum_{k\in S}x_k^*({\mathcal{B}}))\\
		&\leq [J({\mathcal{B}}_{-S} ) - J({\mathcal{B}})]+\sum_{k\in S} b_k(x_k^*({\mathcal{B}}))-{c}_l(\sum_{k\in S}x_k^*({\mathcal{B}}))\\ 
		&= [J({\mathcal{C}}_{-l} ) - J({\tilde{\mathcal{B}}})]+\tilde{b}_l(\sum_{k\in S}x_k^*({\mathcal{B}}))-{c}_l(\sum_{k\in S}x_k^*({\mathcal{B}}))\\ 
		&=  {u}_l^{\text{VCG}}({\tilde{\mathcal{B}}}) \\ 
		&\leq  {u}_l^{\text{VCG}}({\mathcal{C}}).
		\end{align*}
		The first inequality follows from the supermodularity of~$J$. The second equality holds since we have $J({\tilde{\mathcal{B}}})=J({{\mathcal{B}}})$. This follows from the definition of the merged bid and the following implication. Since the goods of the same type are fungible for the central operator, the functions $h$, $g$ and~$d$ in fact depend on $\sum_{l\in L} x_l$. 
		The third equality follows from the definition of VCG utility. The second inequality is the DSIC property of the VCG mechanism. Therefore, the total VCG utility that $l$ receives from shill bidding is upper bounded by the utility that $l$ would receive by bidding truthfully as a single bidder. Making use of shills, hence, is not profitable.\QEDA 

\subsection{Proof of~\cref{thm:conditions_on_bids}}\label{appforsimpl}
A preliminary version of this proof was first presented in a conference paper in~\cite{pgs}. The following proof simplifies this result specifically in the steps of the main proof and corrects notational inconsistencies. 

To prove~\cref{thm:conditions_on_bids}, the following lemma is needed.

\begin{lemma}\label{lem:increasing_quantities}
	Under the market model \eqref{eq:simpler_clearing_model}, for an auction with bidders $S$ and $R=S\cup\{j\}$  with corresponding allocations  $x$ and $x'$,  marginally increasing costs imply that 
	$\forall l\in S, \
	x_l'\leq x_l.
	$
\end{lemma}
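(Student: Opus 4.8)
\cref{lem:increasing_quantities} is a monotone comparative statics statement: adding the bidder $j$ to the pool cannot raise any incumbent's cleared quantity. I would prove it by a feasibility‑preserving exchange argument between the two dispatch problems \eqref{eq:simpler_clearing_model}, the one for $S$ with optimizer $x$ and the one for $R=S\cup\{j\}$ with optimizer $x'$. First, two reductions. Because each $c_l$ is nondecreasing and, by \eqref{eq:simpler_clearing_model_cond}, every admissible quantity and $M$ are integer multiples of $m$, one may assume the procurement constraint is tight at both optima, $\sum_{l\in S}x_l=M=\sum_{l\in R}x'_l$ (excess of $m$ or more can be shed without raising cost; equivalently, restrict to tight optimal solutions, which exist). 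Subtracting these identities yields the bookkeeping relation $\sum_{l\in S}(x_l-x'_l)=x'_j\ge 0$, so if some incumbent's quantity strictly increases then some incumbent's quantity strictly decreases.

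\textbf{Core argument.} Suppose, towards a contradiction, that $x'_{l_0}>x_{l_0}$ for some $l_0\in S$, and pick $l_1\in S$ with $x_{l_1}>x'_{l_1}$; then $l_0\ne l_1$ and $x_{l_1}\ge m$, $x'_{l_0}\ge m$. Transferring $m$ units from $l_1$ to $l_0$ keeps $x$ feasible for $S$, so optimality of $x$ gives $c_{l_0}(x_{l_0}+m)-c_{l_0}(x_{l_0})\ge c_{l_1}(x_{l_1})-c_{l_1}(x_{l_1}-m)$; transferring $m$ units from $l_0$ to $l_1$ keeps $x'$ feasible for $R$, so optimality of $x'$ gives $c_{l_1}(x'_{l_1}+m)-c_{l_1}(x'_{l_1})\ge c_{l_0}(x'_{l_0})-c_{l_0}(x'_{l_0}-m)$. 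Since $x_{l_0}+m\le x'_{l_0}$ and $x'_{l_1}+m\le x_{l_1}$, the marginally‑increasing property of \cref{thm:conditions_on_bids} bounds each bidder's lower increment by its higher one: $c_{l_0}(x_{l_0}+m)-c_{l_0}(x_{l_0})\le c_{l_0}(x'_{l_0})-c_{l_0}(x'_{l_0}-m)$ and $c_{l_1}(x'_{l_1}+m)-c_{l_1}(x'_{l_1})\le c_{l_1}(x_{l_1})-c_{l_1}(x_{l_1}-m)$. Chaining these last three inequalities gives $c_{l_0}(x_{l_0}+m)-c_{l_0}(x_{l_0})\le c_{l_1}(x_{l_1})-c_{l_1}(x_{l_1}-m)$, which with the first inequality forces equality everywhere. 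In particular the $l_1\to l_0$ transfer is cost‑neutral, so it turns $x$ into another optimal allocation $\tilde x$ for $S$ with $\tilde x_{l_0}=x_{l_0}+m$, $\tilde x_{l_1}=x_{l_1}-m$, and $\tilde x_l=x_l$ otherwise; a direct check then shows $\sum_{l\in S}(x'_l-\tilde x_l)_+=\sum_{l\in S}(x'_l-x_l)_+-m$.

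\textbf{Main obstacle and how to close.} The catch is that the exchange only produces \emph{another} optimal allocation for $S$, not a strictly cheaper one, so it cannot by itself contradict optimality — this is the crux. I would resolve it with a minimal‑counterexample (descent) device: among all pairs $(x,x')$ of tight optimal allocations for $S$ and $R$, choose one minimizing the nonnegative, $m$‑quantized potential $\Phi(x,x'):=\sum_{l\in S}(x'_l-x_l)_+$. If $\Phi(x,x')>0$, the construction above exhibits a tight optimal pair $(\tilde x,x')$ with $\Phi(\tilde x,x')=\Phi(x,x')-m$, contradicting minimality; hence the minimizer has $\Phi\equiv 0$, i.e.\ $x'_l\le x_l$ for all $l\in S$. (If one insists on the conclusion for the canonical tie‑broken allocations, it suffices that the tie‑breaking rule be consistent across bidder sets — e.g.\ a fixed global priority over $(\text{bidder},\text{increment})$ pairs — which makes the $\Phi$‑minimizing pair canonical.) The remaining points are routine: verifying that all perturbed quantities $x_{l_0}\pm m$, $x_{l_1}\pm m$ are admissible bid amounts (they lie between $0$ and quantities already used by $x$ or $x'$, hence in the submitted grids), and matching indices when invoking the marginally‑increasing inequality. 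This monotonicity of the cleared quantities under bidder removal is then the key input to the supermodularity proof of \cref{thm:conditions_on_bids}.
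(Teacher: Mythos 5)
Your proof is correct and rests on the same basic idea as the paper's --- play the optimality of $x$ (for $S$) and of $x'$ (for $R$) against each other through a quantity transfer, using the marginally increasing increments to compare the two transfers --- but the execution is genuinely different. The paper performs a \emph{single global} transfer: it builds one feasible allocation $q'$ for $R$ by moving the aggregate surplus $m'=\sum_{l\in S\setminus K}(x'_l-x_l)$ into the set $K=\{l: x'_l<x_l\}$, and then uses the \emph{strict} form of the marginal-increase inequality (note the strict ``$<$'' in \cref{thm:conditions_on_bids}) to produce an allocation $q$ for $S$ that is \emph{strictly} cheaper than $x$, contradicting optimality in one shot --- no descent argument is needed. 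You instead do a \emph{local} two-bidder swap of a single increment $m$, correctly observe that the resulting chain of inequalities collapses to equality so the swap is only cost-neutral, and close the gap with a minimal-counterexample descent on the potential $\Phi(x,x')=\sum_{l\in S}(x'_l-x_l)_+$. Your route is longer but arguably more honest about the degenerate case: when the relevant increments coincide (e.g.\ $x'_{l_0}=x_{l_0}+m$), the paper's strict inequality \eqref{eq:increas} can itself degenerate to an equality, and the statement as literally written can fail for an adversarial choice among tied optima (two identical bidders splitting $3m$ as $(2m,m)$ versus $(m,2m)$); both proofs ultimately lean on the paper's standing assumption of a tie-breaking rule, which you make explicit by requiring it to be consistent across bidder sets. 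What the paper's approach buys is brevity --- one exchange, one strict inequality, done; what yours buys is a cleaner separation between the combinatorial bookkeeping (the potential) and the economic comparison (the four increment inequalities), plus an explicit account of why cost-neutral swaps do not derail the conclusion. Both are valid; no gap.
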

\begin{proof}
	The proof  follows by contradiction. That is, we will show that when $x'$ is such that ${x}'_{{l}} > x_{{l}}$, for some $l\in S$, then $x'$ can be modified to provide a lower cost via the allocation $q$ for bidders~$S$ (thus contradicting optimality of $x$). 
	
	First, we notice that since bids are equally spaced by the amount $m$ and with marginally increasing cost,  $\sum_{l \in S}x_l = \sum_{l \in R}{x'_l} = M$ holds. Now, in order to procure exactly $M$ amount from bidders $R$, some bidders' allocations must decrease, that is, the set $K = \{ l \in S \,\rvert\, x_l'< x_l  \}$ is nonempty. Consider a feasible allocation $q'$ for the auction with bidders $R$ where the amount $M$ is being procured and
	\[
	q'_l = 
	\begin{cases}
	x'_j, &\text{ for }l=j,\\
	x_l, & \text{ for } l\in S\backslash K, \\
	q'_l, &\text{ for }l\in K \text{ where }x_l' \leq q'_l \leq x_l.
	\end{cases}
	\]
	Hence, $q'$ is constructed from $x'$ by transferring the amount $m'~=~\sum_{l\in S\backslash K} x'_l - x_l $ from bidders in $S\backslash K$ to bidders in $K$. In doing so, the inequality $x'_l \leq q_l' \leq x_l$ can be satisfied:
	\[
	m' \leq \sum_{l\in K} (x_l -x'_l).
	\]
	The above inequality holds because when summing over $l\in S$, $x_l$'s sum to $M$ and $x'_l$'s sum to $M-x_j$.
	
	Since {$x'$} is optimal for bidders $R$ and $q'$ is not:
	\begin{align}
	J(\mathcal{C}_R) =& c_j{(x'_j)} + \sum_{l\in S\backslash K } c_{{l}}{(x'_l)} + \sum_{l \in K }{c_l}{(x'_l)}\notag\\
	\leq & c_j{(x'_j)} + \sum_{l\in S\backslash K} c_l{(x_l)} + \sum_{l\in K} c_l{(q'_l)}=\bar{J}(q'),  \label{eq:q'}
	\end{align}
	where we used $\bar{J}(q')$ as a short-hand-notation for the cost corresponding to choosing the allocation~$q'$ under truthful bidding. 
	
	Now, we use the marginally increasing true costs to replace the summations over $K$ in \eqref{eq:q'}. 
	In particular, define $q=(q_l : l\in S)$ so that 
	$$ 
	q_l := x_l +(x'_l- {q'_l})=
	\begin{cases}
	x'_l & \text{for } l \in S \backslash K,\\
	x_l + x'_l -q'_l & \text{for } l \in K.
	\end{cases}
	$$
	Note that $q$ is feasible since $x'$ and $q'$ have the same sum over $S$ (and thus cancel) and $x_l$ is feasible. 
	Further, since $(q_l - x_l)  = (x'_l- {q'_l})$,
	\begin{equation} 
	\label{eq:increas}
	\sum_{l \in K}{ c_l{(q_l)}  - c_l{({x'_l})}} < \sum_{l \in K}{c_l{(x_l)} - c_l{({q'_l})}}.
	\end{equation}
	Adding \eqref{eq:increas} to both side of \eqref{eq:q'} (and canceling $c_j{(x'_j)}$) gives
	\begin{align*}
	\bar{J}(q) &= \sum_{l\in S\backslash K} c_l{(x'_l)} + \sum_{l\in K} c_l{(q_l)}\\
	& < \sum_{l\in S\backslash K} c_l{(x_l)} + \sum_{l\in K} c_l{(x_l)} = J(\mathcal{C}_S),
	\end{align*}
	which contradicts the optimality of $x$. This concludes the proof.
\end{proof}
\begin{corollary}\label{proofcor}
	Given the conditions in~\cref{thm:conditions_on_bids} and the optimal allocation to procure the amount $M$, for any lower amount $\tilde M\leq M$(while still being a multiple of $m$)  to be procured, the allocation for each bidder does not increase.
\end{corollary}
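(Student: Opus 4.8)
The plan is to read \cref{proofcor} as a monotone-comparative-statics statement: the optimal allocation of \eqref{eq:simpler_clearing_model} is componentwise non-decreasing in the procured amount. I would obtain it by a short reduction to \cref{lem:increasing_quantities}, which already establishes the analogous monotonicity when a bidder is \emph{removed}. Write $x$ for the optimal allocation when the amount $M$ is procured from $L$ and $x'$ for the optimal allocation when the smaller (still $m$-divisible) amount $\tilde M$ is procured from $L$; the goal is $x'_l\le x_l$ for all $l\in L$.

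First I would introduce a fictitious ``slack'' bidder $j\notin L$ whose single mutually exclusive bid is $(c_j,x_j)=(0,\,M-\tilde M)$; by \eqref{cdef} this is equivalent to offering any amount in $[0,M-\tilde M]$ at zero cost. Since $j$ exposes only one positive quantity level, the marginally increasing hypothesis holds vacuously for it, so the augmented instance with bidder set $R=L\cup\{j\}$ and procured amount $M$ still satisfies the hypotheses of \cref{lem:increasing_quantities}. The key observation is that this augmented instance is \emph{equivalent} to procuring $\tilde M$ from $L$: because $c_j\equiv 0$ and every true cost is non-decreasing, there is an optimum of the augmented instance that routes the entire gap $M-\tilde M$ through $j$ and whose remaining components form an optimal allocation of \eqref{eq:simpler_clearing_model} for the amount $\tilde M$. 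Fixing tie-breaking so that this is the selected optimum, \cref{lem:increasing_quantities} applied with $S=L$ and $R=L\cup\{j\}$ gives $x'_l\le x_l$ for every $l\in L$, which is the claim; the downstream use in the proof of \cref{thm:conditions_on_bids} needs only this monotonicity.

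I expect the only delicate point to be the equivalence at the level of the \emph{selected} optimum --- a tie-breaking rule for the augmented instance could in principle leave part of the gap on the real bidders --- and this is handled exactly as the ``$\sum_l x_l=M$'' step in the proof of \cref{lem:increasing_quantities}, using that non-decreasing costs let one assume the requested amount is procured exactly while the zero-cost bidder absorbs the surplus. If one prefers to avoid the fictitious bidder, the same conclusion follows by re-running the contradiction argument of \cref{lem:increasing_quantities} almost verbatim: assuming $x'_{l_0}>x_{l_0}$ for some $l_0$, set $P=\{l:x'_l>x_l\}$ and $K=\{l:x'_l<x_l\}$, note $\sum_{l\in K}(x_l-x'_l)=\sum_{l\in P}(x'_l-x_l)+(M-\tilde M)$, build a feasible allocation $q'$ for the amount $\tilde M$ by pushing the $P$-components down to $x_l$ and compensating on $K$ within $[x'_l,x_l]$, and mirror it into a feasible allocation $q$ for the amount $M$ via $q_l:=x_l+(x'_l-q'_l)$; optimality of $x'$ together with the strict marginal-increase inequality on $K$ then forces $\bar{J}(q)<J(\mathcal{C}_L)$, contradicting optimality of $x$. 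The bookkeeping with the two unequal demands --- keeping $q'$ and $q$ on the feasible quantity grids --- is the only real chore, and it is routine.
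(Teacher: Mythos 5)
Your proposal is correct, and it supplies the argument that the paper compresses into the single sentence ``this corollary follows directly from \cref{lem:increasing_quantities}.'' Your first route---augmenting the bidder set with a slack bidder whose only bid is $(0,\,M-\tilde M)$ and applying \cref{lem:increasing_quantities} with $S=L$ and $R=L\cup\{j\}$---is the cleanest way to make that one-line deferral literally true: $M-\tilde M$ lies on the grid of multiples of $m$, the marginal-increase hypothesis is vacuous for a bidder exposing a single positive quantity, and the lemma's proof never perturbs $j$'s allocation, so nothing in its argument breaks. The one delicate point is exactly the one you flag: since the slack bidder (and possibly a real bidder's first increment) has zero marginal cost, the augmented instance can admit optima that do not route the entire gap $M-\tilde M$ through $j$, so the identification of ``augmented optimum restricted to $L$'' with ``optimum for $\tilde M$'' holds only for a suitable selection; under the paper's standing tie-breaking assumption this is acceptable, and the downstream use inside the proof of \cref{thm:conditions_on_bids} only requires the monotonicity for the allocations actually selected there. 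Your second, self-contained route mirrors the exchange argument of \cref{lem:increasing_quantities} with the demand gap $M-\tilde M$ playing the role of the entering bidder's allocation, and the bookkeeping identity $\sum_{l\in K}(x_l-x'_l)=\sum_{l\in P}(x'_l-x_l)+(M-\tilde M)$ is precisely what makes the compensation on $K$ feasible within $[x'_l,x_l]$. The only residual looseness---that the final strict inequality needs some $l\in K$ with $\delta_l>0$ and $q'_l<x_l$---is inherited verbatim from inequality \eqref{eq:increas} in the paper's own proof of \cref{lem:increasing_quantities}, so your argument is no less rigorous than the source.
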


This corollary follows directly  from~\cref{lem:increasing_quantities}. Now, we are ready to prove~\cref{thm:conditions_on_bids}. 

\begin{proof}
	We prove that $J$ is supermodular. We adopt the same notation used in~\cref{lem:increasing_quantities} and we identify with $W\subseteq S$ the set of winners under the set $S$.
	For each $l \notin W $, we have by definition $x_{l} = 0$. 
	Thus $J(\mathcal{C}_{S\setminus{l}})-J(\mathcal{C}_S)=0$, (since the optimal solution is unchanged when $l$ is removed from S). By~\cref{lem:increasing_quantities},  ${x}'_{l}=0$ and so $J(\mathcal{C}_{R\setminus{l}})-J(\mathcal{C}_R)=0$ also. Thus, supermodularity holds for $l \notin W$.
	
	For each winning bidder $w \in W$, for the sake of notational compactness, we denote $u_{w}^{\text{VCG}}(\mathcal{C}_S) = J(\mathcal{C}_{S\setminus{w}})- J(\mathcal{C}_S)$. Note that these values are in fact the truthful VCG utilities under $\mathcal{C}_S$, which justifies the VCG term. Adopting the same notation of~\cref{lem:increasing_quantities}, we can indicate it as: 
	\[
	u_{w}^{\text{VCG}}(\mathcal{C}_S):= - c_{w}{(x_{w})}+\sum_{l \in S_{-w}} { ( c_l{(\varrho_l)} - c_l{(x_l)} ) }, 
	\]
	where $\varrho_l$ are the optimal allocations of each $l \in S_{-w}$, when $w$ leaves the auction. By~\cref{lem:increasing_quantities}, $\varrho_l\geq x_l$. Similarly, after bidder $i$ enters the auction, $u_{w}^{\text{VCG}}(\mathcal{C}_R) = J(\mathcal{C}_{R\setminus{w}})- J(\mathcal{C}_R)$. That is,
	\begin{equation*}
	u_{w}^{\text{VCG}}(\mathcal{C}_R):=- c_{w}{({x}'_{w})}  + \sum_{l \in S_{-w}} { ( c_l{(\varrho_l')} - c_l{({x'_l})} ) } +  c_i{(\varrho_i')} - c_i{({x'_i})},
	\end{equation*}
	where $\varrho_i'$ are the amounts accepted from $ l \in R_{- w}$ when $w$ leaves the new auction. By~\cref{lem:increasing_quantities}, we again have $\varrho_l'\geq {x'_l}$.
	
	Notice that so far we applied Lemma~\ref{lem:increasing_quantities} to justify the increase of the accepted amounts, first, from each $l\in S_{- w}$ and now from $l \in R_{-w}$, due to the exit of $w$ from the auctions. We can also apply~\cref{lem:increasing_quantities} again and affirm that $ x'_l \leq x_l\ \forall l \in S$, and in particular $ x'_{w} \leq x_{w}$, because of the entrance of $i$.
	
	We now find suitable lower and upper bounds to ensure  inequality  $u_{w}^{\text{VCG}}(\mathcal{C}_R) \leq u_{w}^{\text{VCG}}(\mathcal{C}_S)$, which would confirm supermodularity. 
	
	First, note that 
	$ J(\mathcal{C}_S)= \sum_{ l \in S_{-w}}{ c_l{(x_l)} } + c_{w}{(x_{w})} \leq \sum_{ l \in S_{-w}}{ c_l{(q_l)} } + c_{w}{(x'_{w})}$, 
	where $q_l$'s are from the cheapest allocation to procure the amount $(M - x'_w)$ among $S_{-w}$.
	By~\cref{lem:increasing_quantities} (and~\cref{proofcor}) we  have $\varrho_l \geq q_l \geq x_l, \hspace{1em} \forall l \in S_{-w}$. Note that $x_l$'s sum to $(M - x_w) \leq (M- x'_w) $ (due to $x'_w \leq x_w$), and $\varrho_l$'s sum to $M$. 
	
	Moreover, since every $c_l$ is increasing, $q_l$'s are such that $\sum_{l \in S_{-w}} ({\varrho_l}- q_l) = x'_{w}$, because exactly the amount $M$ is purchased.
	Using the above suboptimal allocation, we  have a lower bound for $u_{w}^{\text{VCG}}(\mathcal{C}_S)$:  
	\begin{equation}
	\label{lower_bound}
	u_{w}^{\text{VCG}}(\mathcal{C}_S) \geq \sum_{l \in S_{-w}} { ( c_l{(\varrho_l)} - c_l{(q_l)} ) } - c_{w}{(x'_{w})} .
	\end{equation}
	Defining now $\delta_l =({\varrho_l} - q_l), \forall l \in S_{-w}$ we must have $\sum_{ l\in S_{-w}} \delta_l = x'_{w}$. Notice that $$ J(\mathcal{C}_{R\setminus{w}})= \sum_{l \in S_{-w}} {c_l{(\varrho_l')} } + c_i{(\varrho_i')} \leq   \sum_{l \in S_{-w}} {c_l{({x'_l} + \delta_l)} } + c_i{(x'_i)},$$  since the right hand side is a feasible cost to procure the amount $M$ among the bidders $\{S,i\}\setminus w$. Indeed, $\sum_{l \in S}{x'_l} + x'_i = M$ and $\sum_{l \in S_{-w}} {\delta_l} = x'_{w}$. Hence, we have: 
	\begin{equation}
	\label{upper_bound}
	\begin{split}
	u_{w}^{\text{VCG}}(\mathcal{C}_R) \leq \sum_{l \in S_{-w}}& {(c_l{(x'_l + \delta_l)} - c_l{(x'_l)})} + (c_i{(x'_i)} - c_i{(x'_i)}) - c_{w}{(x'_{w})} .
	\end{split}
	\end{equation}
	Observe that the terms in the parentheses cancel each other.
	Moreover, via marginally increasing costs (also via strictly convex costs), we have: 
	\begin{equation} 
	\label{increasing_marginal}
	( c_l{(x'_l + \delta_l)} - c_l{(x'_l)} ) \leq  ( c_l{(\varrho_l)} - c_l{(q_l)} ),\, \forall l \in S_{-w}.
	\end{equation}
	The above holds because $\forall l \in S_{-w}$, $(\varrho_l - q_l) =( x'_l + \delta_l - x'_l) = \delta_l$ and $x'_l \leq q_l$. In particular, $x'_l$ are the amounts accepted to procure the amount $(M-x'_{w})$ among $\{S,i\} \setminus w$, while $q_l$ are those to procure the same amount among $S_{-w}$. Then, combining equations~\eqref{upper_bound}, \eqref{increasing_marginal} and \eqref{lower_bound}, we finally obtain $u_{w}^{\text{VCG}}(\mathcal{C}_R) \leq u_{w}^{\text{VCG}}(\mathcal{C}_S).$ As a result, we obtain supermodularity and this concludes the proof.
\end{proof}

\subsection{Proof of~\cref{thm:re2_proof}}\label{app:finproof}

	We prove that $J$ is supermodular. To this end, we first need to reparametrize the problem \eqref{eq:multiple} into another class of optimization problem. We define the function $f:2^{ L }\to\R_+$ as follows
	\begin{equation*}
		f(A)=\max_{\substack{\forall T \subseteq \{1,\ldots,t\}, \\ A\supseteq A^T}}\, M(T).
	\end{equation*}
	
	We consider the following optimization problem:
	\begin{equation}	\label{eq:multiple_type}
	\begin{split} 
	J(\mathcal{C}_S) =&  \min_{x\in\R^{|L|}_+  } {\ \sum_{l \in S}c_l( x_l)} \\
	&\ \ \mathrm{s.t. } \ \sum_{l\in A} x_l \geq f(A) ,\, \forall A \subseteq L, \\
	&\quad \quad\ \  x_l \leq \bar{X}_l,\, \forall l\in L, \\
	&\quad \quad\ \ x_l = 0,\, \forall l\in L\setminus S.
	\end{split}
	\end{equation}
	
	First, notice that $f(A^T)=M(T)$. Since $x_l\geq0,\,\forall l$, the constraints added by the definition of the function $f$ are all redundant constraints. Then, these constraints in \eqref{eq:multiple_type} are feasible, once the constraints in \eqref{eq:multiple} are satisfied. Hence, these problems are equivalent.  We also remark that $f(\emptyset)=0$ and $f$ is nondecreasing.
	
	Before we proceed, we need to show that supermodularity of the function $M$ implies supermodularity of the function $f$. Suppose $f(A)=M(T^A)$ and $f(B)=M(T^B)$ for some $T^A$, $T^B$. Then,
	\begin{equation*}
		\begin{split}
		f(A)+f(B)=\,&M(T^A)+M(T^B)\\
					\leq\,&M(T^A\cup T^B)+M(T^A\cap T^B)\\
					\leq\,&f(A\cup B)+f(A\cap B).
		\end{split}
	\end{equation*}
	First inequality follows from supermodularity of the function~$M$. The last inequality holds since these sets are feasible suboptima for $f(A\cup B)$ and $f(A\cap B)$. We conclude that the function $f$ is supermodular.
	
	Note that, given supermodularity of the function $f$, the first set of constraints in \eqref{eq:multiple_type} defines a \textit{contra-polymatroid}, see \cite[Section~44]{schrijver2003combinatorial} for a detailed exposition. {This class of problems are important in combinatorial optimization because they can often be solved in polynomial time. For the remainder of the proof, we extend the work of \cite{he2012polymatroid} on replenishment games to reverse auctions over contra-polymatroids and box constraints as in~\eqref{eq:multiple_type}.}
	
	We first show that the constraint  $\sum_{l\in L} x_l = f(L)$ is redundant and can be added to the original constraint set. We denote $S^{c}=L\setminus S$ and denote $x^*$ as the optimal allocation for \eqref{eq:multiple_type}. 
	It can be shown that for every $x_k^*$, $k\in S$ , there exists a set $k\in A_k \subseteq L$ such that  $\sum_{l\in A_k} x^*_l = f(A_k)$ is tight at the optimal solution. We can prove this via contradiction. Assume, for  $x_k^*$, $k\in S$, there does not exist a set $k\in A_k\subseteq L$ such that $\sum_{l\in A_k} x^*_l \geq f(A_k)$ is tight. Then, one can simply decrease the value of $x^*_k$ and get a lower objective value. Furthermore, note that any constraint corresponding to $A\not\supset S^c$ is redundant to $A\cup S^c$ because $f$ is nondecreasing and $x_l = 0$, for all $l\in S^c$. Then, this set $A_k$ has to be a superset of $S^c$, $A_k \supset S^c $.
	
	Next, we show that if the constraints for $A$ and $B$ are tight, so is the constraint for $A\cup B$: 
	\begin{subequations}
		\begin{align}
		f(A\cup B) + f(A\cap B) &\geq f(A) + f(B)\label{11a} \\ &= \sum_{l\in A} x^*_l + \sum_{l\in B} x^*_l\label{11b} \\ &= \sum_{l\in A\cup B} x^*_l + \sum_{l\in A\cap B} x^*_l\label{11c} \\ &\geq f(A\cup B) + f(A\cap B). \label{11d}
		\end{align}
	\end{subequations}	
	Inequality~\eqref{11a} follows from supermodularity of $f$, $f(A\cup B)  - f(A) \geq f(B) - f(A\cap B)$. Equality~\eqref{11b} follows from $A$ and $B$ being tight. We arranged the terms in the equality~\eqref{11c}. Inequality~\eqref{11d} follows from the feasibility of $x^*$ for the problem in~\eqref{eq:multiple_type}. Then, it is easy to see that  \eqref{11d} is in fact an equality and we can conclude that $\sum_{l\in A\cup B} x^*_l = f(A\cup B) $. 
	
	Recall that the constraint corresponding to the set $ A_l\supset S^c\cup \{l\}$ is tight for $x_l^*$. Hence, we can conclude that the constraint corresponding to the set $\bigcup_{l\in S}A_l = L$ is also tight and $\sum_{l\in L} x_l^* = f(L)$ holds for optimal solution.
	
	Next, we reformulate the first set of constraints in \eqref{eq:multiple_type} as follows:
	\begin{equation}\label{eq:ref1}
	\mathcal{P} = \Big\{ x\in\R_+ \,\rvert\,  -f(A) \geq \sum_{l\in A^c} x_l -\sum_{l\in L} x_l ,\, \forall A \subseteq L\Big\}.
	\end{equation}	
	Define $h(A)=-f(A^c)$ where $h(\emptyset)= -f(L)= -\sum_{l\in L} x_l $ and reorganize the constraint set~\eqref{eq:ref1}. 
	\begin{equation*}\label{eq:ref}
	\begin{split}
	\mathcal{P} &= \Big\{ x\in\R_+ \,\rvert\, h(A^c) +\sum_{l\in L} x_l \geq \sum_{l\in A^c} x_l ,\, \forall A \subseteq L\Big\}\\
	& = \Big\{ x\in\R_+ \,\rvert\,  h(A) +\sum_{l\in L} x_l \geq \sum_{l\in A} x_l ,\ ,\forall A \subseteq L\Big\}\\
	&  = \Big\{ x\in\R_+ \,\rvert\,  h(A) - h(\emptyset) \geq \sum_{l\in A} x_l ,\, \forall A \subseteq L,\, h(\emptyset) = -\sum_{l\in L} x_l  \Big\}. 
	\end{split}	
	\end{equation*}	
	We see that $k(A)=h(A)-h(\emptyset)$ is a nondecreasing submodular function and it is normalized. In literature, the function  $k$ is often called a \textit{rank function}~\cite{schrijver2003combinatorial}. Note that, $k(S)=h(S)-h(\emptyset)=-g(L\setminus S)+g(L)$. From the feasibility of the problem $\eqref{eq:multiple_type}$, we have that $g(L\setminus S)=0$ and $k(S)=\sum_{l\in L} x_l  $. We reorganize the constraint set  and obtain the following:
	\begin{equation*}
	\mathcal{P} = \Big\{ x\in\R_+  \,\rvert\, \sum_{l\in A} x_l \leq k(A) ,\, \forall A \subseteq L,\, k(S) = \sum_{l\in L} x_l  \Big\}. \\
	\end{equation*}	
	Finally, given that $k$ is nondecreasing and $x_l = 0,\, \forall l\in L\setminus S$, we can show that the constraints corresponding to $A\supset S$ are all redundant upper bounds. Then, we obtain the following
	\begin{equation*}	
	\mathcal{P} = \Big\{ x\in\R_+  \,\rvert\, \sum_{l\in A} x_l \leq k(A) ,\, \forall A \subseteq S,\, k(S) = \sum_{l\in S} x_l  \Big\}.
	\end{equation*}	
	The following result is known, and we refer to \cite[Theorem~6.1]{yao1997stochastic} and \cite{fujishige1980lexicographically}. The set $\mathcal{P}\cap \{x \,\rvert\,   x_l \leq \bar{X}_l,\, \forall l\}$ is equivalent to the set,
	\begin{align*}
	&\mathcal{P}'= \Big\{ x\in\R_+  \,\rvert\, \sum_{l\in A} x_l \leq \bar f(A) ,\, \forall A \subseteq S,\,k(S) = \sum_{l\in S} x_l  \Big\}, 
	\end{align*}	
	where $\bar f(A)=\min_{B\subseteq A}\{ k(A\setminus B)+\sum_{l\in B}\bar{X}_l  \}$ and this function is also a rank function \cite{yao1997stochastic}. We also assert that $\bar f(S)=k(S)$. To verify this equality, notice that: 
	\begin{equation*}\label{eq:deriv}
	\begin{split}
	k(S)=\sum_{l\in S} x_l &= \sum_{l\in S\setminus B} x_l +\sum_{l\in B} x_l\\
	&\leq k(S\setminus B)+\sum_{l\in B}\bar{X}_l,\, \forall B\subseteq S, \\
	\end{split}	
	\end{equation*}	
	hence, $\bar f(S)=\min_{B\subseteq S}\{ k(S\setminus B)+\sum_{l\in B}\bar{X}_l  \}= k(S)$.
	Finally, we obtain:
	\begin{equation}	\label{eq:final_form}
	\begin{split} 
	-J(\mathcal{C}_S) =&  \max_{\substack{x_l\in\R_+, \forall l \in S}}  \,\sum_{l \in S}-c_l( x_l) \\
	&\ \ \quad  \mathrm{s.t. } \  \ \sum_{l\in A} x_l \leq \bar f(A) ,\, \forall A \subseteq S, \\
	&\quad\quad\quad\ \ \sum_{l\in S} x_l=\bar f(S). \\
	\end{split}
	\end{equation} 
	
	Now, we are ready to bring in the results from the work of~\cite{he2012polymatroid}. The first set of constraints in \eqref{eq:final_form} defines a \textit{polymatroid}, see \cite[Section 44]{schrijver2003combinatorial}. In \cite[Theorem~3]{he2012polymatroid} (which builds upon the well-known optimality of the greedy algorithm over a polymatroid~\cite{edmonds2003submodular} and \cite[Theorem~2.1]{schulz2010sharing} by forming a linearization argument to the concave objective functions), it is proven that maximizing a separable concave function over a polymatroid results in a submodular objective function. The result is in fact proven for optimizing over the base polymatroid where the polymatroid constraint set is intersected with the equality $\sum_{l\in S} x_l=\bar f(S)$. Then, invoking this result, we conclude that $-J$ is submodular, and $J$ is supermodular. \QEDA

\chapter{Coalition-proof core-selecting mechanisms}\label{sec:p1_4}
In the previous chapter, we showed that the VCG mechanism can be guaranteed to achieve coalition-proofness only in restricted market problems.
In this chapter, we study core-selecting mechanisms. We first show that they are coalition-proof, in the sense that a group of bidders whose bids are not accepted when bidding truthfully cannot profit from collusion, and a shill bidder cannot profit more than its VCG utility corresponding to its truthful bidding. 
We then show that, in addition to being coalition-proof, core-selecting mechanisms generalize the economic rationale of the LMP mechanism. Namely, they are also the exact class of mechanisms that ensure the existence of a competitive equilibrium. 
This result implies that the LMP mechanism is core-selecting, and hence coalition-proof. Since the LMP mechanism may not exist, we then define a class of core-selecting mechanisms, applicable to any market modeled by~\eqref{eq:main_model}, that also approximates DSIC without the price-taking assumption. Finally, we prove its budget-balance.

\section{Coalition-proofness via core}

To design coalition-proof payments, we first define the \textit{revealed utilities}, that is, the utilities with respect to the submitted bids. This is crucial since the true utilities of the bidders are unknown for the design of the payment rules. We then bring in the definition for the \textit{revealed core}. 

The revealed utility of bidder $l$ is defined by $\bar u_l(\BB)=p_l(\BB)-b_l(x^*_l(\BB)),$ and the revealed utility of the operator is the same as its utility, $\bar u_0(\BB)=-\sum_{l\in L} p_l(\BB) - d(x^*(\BB),y^*(\BB))$.  
\begin{definition}\label{def:r_core_def}
	For every set of bidders $R\subseteq L$, the revealed core $Core(\mathcal{B}_R)\in\R\times\R^{\rvert R\rvert}_+$ is defined as follows
	\begin{equation}\label{eq:coref}
	\begin{split}
	Core(\mathcal{B}_R)=\Big\{\bar u\in\R\times\R^{\rvert R\rvert}_+ \,|\, &\bar u_0+\sum\limits_{l\in R}\bar u_l=-J(\mathcal{B}_R),\\
	&\bar u_0+\sum\limits_{l\in S}\bar u_l\geq-J(\mathcal{B}_S),\, \forall S \subset R \Big\}.
	\end{split}
	\end{equation}	
\end{definition}

Both the revealed utilities and the revealed core above would correspond to the true utilities and the true core as in~\cref{def:core_def} if we assume that the submitted bids are the true costs.
A mechanism is said to be \textit{core-selecting} if its payments ensure that the revealed utilities lie in the revealed core. Then, the payment rule is given by
$$p_l(\BB)=b_l(x^*_l(\BB))+\bar u_{l}(\BB),$$ 
where $\bar u(\BB)\in Core(\BB)$. For instance, it can be verified that the pay-as-bid mechanism is a core-selecting mechanism where $\bar u_l(\BB)=0$ for all $l\in L$, and $\bar u_0(\BB)=-J(\BB)$. This implies that the core is nonempty and these mechanisms exist.\footnote{If $b_l(0)\neq0$, the function $J$ may not be nonincreasing, the pay-as-bid revealed utilities may not lie in the revealed core, and the revealed core may be empty. We kindly refer to \cref{footnote:corempty} of \cref{sec:p1_3} for a way of addressing this problem.}  
Furthermore, core-selecting mechanisms are IR for the revealed utilities, since they are restricted to the nonnegative orthant for the bidders in \eqref{eq:coref}.

Our main result of this section shows that these mechanisms give rise to coalition-proof outcomes. 

\vspace{.1cm}
\begin{theorem}\label{thm:no_collusions_bocs}
	Consider a core-selecting auction mechanism modeled by \eqref{eq:main_model}.
	\begin{itemize}
		\item[(i)] {A group of bidders who lose when bidding their true values cannot profit by a joint deviation.}
		\item[(ii)] \hspace{-.05cm}{Bidding with multiple identities is unprofitable for all bidders with respect to the VCG utilities.}\end{itemize}
\end{theorem}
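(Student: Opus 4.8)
The plan is to follow the template of \cref{thm:no_collusions} for the supermodular VCG mechanism, but to replace the single place where supermodularity was used --- the bound $\sum_{l\in K}u_l^{\text{VCG}}\le J(\cdot_{-K})-J(\cdot)$ --- by an inequality that is valid for \emph{every} core-selecting mechanism with no structural assumption at all. Concretely, write $\mathcal{B}$ for the deviated profile and, exactly as in \cref{thm:no_collusions}, let the benchmark profile be the one in which the manipulating group reverts to its true costs while the remaining bidders keep their (not necessarily truthful) bids $\mathcal{B}_{-K}$, respectively $\mathcal{B}_{-l}$. Since the mechanism is core-selecting, $\bar u(\mathcal{B})\in Core(\mathcal{B})$ (\cref{def:r_core_def}); feeding the coalition $S=L\setminus K$ (respectively, all bidders except the shills) into the revealed-core inequalities and subtracting the efficiency equality yields the one estimate that does all the work, namely $\sum_{l\in K}\bar u_l(\mathcal{B})\le J(\mathcal{B}_{-K})-J(\mathcal{B})$.

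For part (i) I would first expand $\sum_{l\in K}u_l(\mathcal{B})=\sum_{l\in K}\bar u_l(\mathcal{B})+\sum_{l\in K}\bigl(b_l(x_l^*(\mathcal{B}))-c_l(x_l^*(\mathcal{B}))\bigr)$, that is, the revealed coalition utility plus the gap between declared and true cost on the allocated quantities. I would bound the second sum by $J(\mathcal{B})-J(\mathcal{C})$, where $\mathcal{C}=\mathcal{C}_K\cup\mathcal{B}_{-K}$: the pair $(x^*(\mathcal{B}),y^*(\mathcal{B}))$ is feasible for the program defining $J(\mathcal{C})$, and its objective there equals $J(\mathcal{B})$ with $\sum_{l\in K}b_l(x_l^*(\mathcal{B}))$ replaced by $\sum_{l\in K}c_l(x_l^*(\mathcal{B}))$, hence is at least $J(\mathcal{C})$. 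Combining with the core estimate gives $\sum_{l\in K}u_l(\mathcal{B})\le J(\mathcal{B}_{-K})-J(\mathcal{C})$. Finally, because the group loses under $\mathcal{C}$ one has $x_l^*(\mathcal{C})=0$ for all $l\in K$, so $c_l(0)=0$ and monotonicity of $J$ force $J(\mathcal{B}_{-K})=J(\mathcal{C})$; thus $\sum_{l\in K}u_l(\mathcal{B})\le 0$, whereas this same group earns exactly $0$ under $\mathcal{C}$, so the joint deviation cannot be profitable.

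For part (ii) I would let $\mathcal{B}_S=\{b_k\}_{k\in S}$ be bidder $l$'s shill bids, $\mathcal{B}=\mathcal{B}_S\cup\mathcal{B}_{-l}$, and introduce the merged bid $\tilde b_l(x_l)=\min\{\sum_{k\in S}b_k(x_k)\mid\sum_{k\in S}x_k=x_l\}$ together with $\tilde{\mathcal{B}}=\{\tilde b_l\}\cup\mathcal{B}_{-l}$, as in the proof of \cref{thm:no_collusions}-(ii). Since same-type goods are fungible, $d,g,h$ see only $\sum_l x_l$, which gives $J(\tilde{\mathcal{B}})=J(\mathcal{B})$ and, at the optimum, $\sum_{k\in S}b_k(x_k^*(\mathcal{B}))=\tilde b_l(Q)$ with $Q=\sum_{k\in S}x_k^*(\mathcal{B})=x_l^*(\tilde{\mathcal{B}})$. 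Then $\sum_{k\in S}u_k(\mathcal{B})=\sum_{k\in S}b_k(x_k^*(\mathcal{B}))+\sum_{k\in S}\bar u_k(\mathcal{B})-c_l(Q)$; bounding $\sum_{k\in S}\bar u_k(\mathcal{B})$ by $J(\mathcal{B}_{-l})-J(\mathcal{B})=J(\tilde{\mathcal{B}}_{-l})-J(\tilde{\mathcal{B}})$ with the core estimate and substituting the merged-bid identities, the right-hand side collapses to $\tilde b_l(Q)+J(\tilde{\mathcal{B}}_{-l})-J(\tilde{\mathcal{B}})-c_l(Q)$, which is exactly the VCG utility $u_l^{\text{VCG}}(\tilde{\mathcal{B}})$ that $l$ (with true cost $c_l$) would receive by entering with the single bid $\tilde b_l$. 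By DSIC of the VCG mechanism (\cref{thm:incentive_comp}-(i)) this is at most $u_l^{\text{VCG}}(\mathcal{C}_l\cup\mathcal{B}_{-l})$, the VCG utility of $l$ bidding truthfully as one identity, which is the claimed bound.

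The hard part will be that core-selection constrains the \emph{revealed} (submitted-bid) utilities while profitability is phrased in terms of the \emph{true} utilities, so in both parts one must control the declared-versus-true cost gap of the manipulating group: in (i) this gap is closed by a single feasibility comparison against the benchmark program $J(\mathcal{C})$, and in (ii) by the merged-bid identity, which itself relies on the fungibility of same-type goods (and, in passing, on the tie-breaking rule). Once that bridge is in place, (i) is bookkeeping with the core inequality and (ii) reduces to invoking DSIC of the VCG mechanism, and crucially no extra structural hypothesis on \eqref{eq:main_model} is needed --- which is precisely the advantage of core-selecting payments over VCG.
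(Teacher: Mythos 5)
Your proposal is correct and follows essentially the same route as the paper's proof: the same consequence of the revealed-core constraints ($\sum_{l\in K}\bar u_l(\mathcal{B})\le J(\mathcal{B}_{-K})-J(\mathcal{B})$, the paper's \cref{lem:implicore}), the same feasibility comparison of $(x^*(\mathcal{B}),y^*(\mathcal{B}))$ against the benchmark profile $\mathcal{C}_K\cup\mathcal{B}_{-K}$ together with $J(\mathcal{B}_{-K})=J(\mathcal{C}_K\cup\mathcal{B}_{-K})$ for losers in part (i), and the same merged-bid/fungibility/DSIC chain in part (ii). The only difference is cosmetic bookkeeping in (i), where you telescope through $J(\mathcal{B})-J(\mathcal{C})$ rather than substituting everything into one bracket as the paper does.
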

\vspace{.1cm}

The proof is relegated to the appendix in~\cref{app:Gcore}, and it extends a similar observation from~\cite[Theorem 1]{day2008core} in multi-item auctions to the electricity market setting under consideration. This proof can be used as an alternative approach to prove~\cref{thm:no_collusions}. We remark that the proof method differs from~\cref{thm:no_collusions} since this proof does not require supermodularity. 

We conclude that core-selecting mechanisms are coalition-proof. As a remark, the revealed core ensures this property since the inequality constraints in \eqref{eq:coref} restrict the revealed utilities such that they cannot be improved upon by forming coalitions. In fact, as it is discussed in~\cref{sec:p1_3}, the VCG mechanism fails to attain coalition-proofness since it is in general not a core-selecting mechanism.

\section{Coalition-proofness via competitive equilibrium}

This section shows that the core-selecting mechanisms offer an economic rationale similar to that of the LMP mechanism.
To state this result, we bring in tools from competitive equilibrium theory~\cite{mas1995microeconomic}. 
\begin{definition}\label{def:compeq}
	An allocation $x^*\in\R^{t|L|}_+$ and a set of price functions $\{\psi_l\}_{l\in L}$, where $\psi_l:\R_+^t\rightarrow\R$ and $\psi_l(0)=0$, constitute a competitive equilibrium if and only if the following two conditions hold
	\begin{align}
	&\text{(i)}\ x_l^* \in \argmax_{x_l\in \X_l}\, \psi_l(x_l)-c_l(x_l),\quad \forall l\in L,\label{eq:firstcondition} \\
	&\text{(ii)}\ x^* \in \argmin_{x\in \R_+^{t|L|}} \bigg\{\min_{\substack{y:\, h(x,y)= 0\\ g(x,y)\leq 0 }}\,\, \sum\limits_{l\in L} \psi_l(x_l) + d(x,y) \bigg\}. \label{eq:secondcondition}
	\end{align}
\end{definition}

As a remark, price functions~$\{\psi_l\}_{l\in L}$ map from power supplies to the payment received by each bidder, whereas the payments~$\{p_l\}_{l\in L}$ map from the bid profile to the payments.
Notice that the functions $\{\psi_l\}_{l\in L}$ can be nonlinear and bidder-dependent. As a result, this definition extends beyond the traditional Walrasian competitive equilibrium which requires a linear price for each type of supply \cite{mas1995microeconomic,bikhchandani2002package,parkes2002indirect}.

The central assumption of competitive equilibrium conditions is that participants take the price functions~$\{\psi_l\}_{l\in L}$ as given and they do not anticipate their effects on them. Consequently, condition~\eqref{eq:firstcondition} implies that bidders are willing to supply their allocations since these allocations maximize their utilities given the price functions. Furthermore, these allocations also minimize the total payment of the central operator given the price functions by condition~\eqref{eq:secondcondition}. We highlight that a competitive equilibrium is not a game-theoretic solution, under neither the cooperative nor noncooperative approaches. Instead, it is a set of consistency conditions that models how payments would be formed from economic interactions. These conditions are considered to be a powerful benchmark in economic~analysis of electricity markets~\cite{mas1995microeconomic,schweppe2013spot,bose2019some}.

{Next, we show that for competitive equilibrium analysis we can restrict our attention to the optimal allocation of~\eqref{eq:main_model} under the truthful bid profile. To this end, the following lemma proves that a competitive equilibrium is efficient. \begin{lemma}\label{lem:effce}
 If an allocation $x^*$ and price functions~$\{\psi_l\}_{l\in L}$ constitute a competitive equilibrium, then~$x^*=x^*(\CC)$, that is, the allocation is the optimal allocation of~\eqref{eq:main_model} under true costs.
\end{lemma}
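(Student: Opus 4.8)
The plan is to adapt the first fundamental theorem of welfare economics to the two-stage market~\eqref{eq:main_model}, arguing by contradiction. First I would dispose of the degenerate case: if~\eqref{eq:main_model} is infeasible under the true costs, then $J(\CC)=\infty$ and there is nothing to prove, so I assume feasibility. It is convenient to introduce the reduced second-stage value $D(x):=\min\{d(x,y)\mid h(x,y)=0,\ g(x,y)\le 0\}$, with $D(x)=+\infty$ when no feasible $y$ exists; then condition~\eqref{eq:secondcondition} of~\cref{def:compeq} reads $x^*\in\argmin_{x\in\R_+^{t|L|}}\ \sum_{l\in L}\psi_l(x_l)+D(x)$, and I would fix $y^*$ to be a minimizer attaining $D(x^*)=d(x^*,y^*)$.

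Now suppose, for contradiction, that $x^*$ is not an optimal allocation of~\eqref{eq:main_model} under $\CC$. Then there is a feasible pair $(\tilde x,\tilde y)$ (so $\tilde x_l\in\X_l$ for each $l$, and $h(\tilde x,\tilde y)=0$, $g(\tilde x,\tilde y)\le0$) with $\sum_{l\in L}c_l(\tilde x_l)+d(\tilde x,\tilde y)<\sum_{l\in L}c_l(x_l^*)+d(x^*,y^*)$. I would then combine the two equilibrium conditions. Summing the bidder-optimality inequality $\psi_l(x_l^*)-c_l(x_l^*)\ge\psi_l(\tilde x_l)-c_l(\tilde x_l)$ from~\eqref{eq:firstcondition} over $l\in L$ gives $\sum_l\psi_l(x_l^*)-\sum_l\psi_l(\tilde x_l)\ge\sum_l c_l(x_l^*)-\sum_l c_l(\tilde x_l)$. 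On the other hand, applying~\eqref{eq:secondcondition} with $x=\tilde x$ and using $D(\tilde x)\le d(\tilde x,\tilde y)$ yields $\sum_l\psi_l(x_l^*)+d(x^*,y^*)\le\sum_l\psi_l(\tilde x_l)+d(\tilde x,\tilde y)$. Adding these two displays and cancelling the price sums gives $\sum_l c_l(x_l^*)+d(x^*,y^*)\le\sum_l c_l(\tilde x_l)+d(\tilde x,\tilde y)$, contradicting the strict inequality above. Hence $x^*$ attains the minimum in~\eqref{eq:main_model} under $\CC$, and invoking the tie-breaking rule one concludes $x^*=x^*(\CC)$.

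The two inequality chains are routine; the content lies in setting up the two-stage competitive equilibrium correctly, and the main obstacle I expect is the bookkeeping around the second-stage variable: one must take $y^*$ to be a genuinely \emph{optimal} second-stage response to $x^*$ (not merely a feasible one), so that $D(x^*)=d(x^*,y^*)$ — this is the analogue, applied to the equilibrium allocation, of the standing convention that $(x^*(\BB),y^*(\BB))$ solves~\eqref{eq:main_model}. The inner minimization $D(\cdot)$ is well posed because $d$, $g$, $h$ depend on $x$ only through $\sum_{l}x_l$. A lesser point: since~\eqref{eq:secondcondition} minimizes over all of $\R_+^{t|L|}$, I should check that $x^*$ is itself feasible for~\eqref{eq:main_model} (i.e.\ $D(x^*)<\infty$); but an infeasible $x^*$ would give the operator objective $+\infty$ while any feasible $\tilde x$ of~\eqref{eq:main_model} gives a finite value, contradicting minimality. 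Note the argument invokes neither convexity nor $c_l(0)=0$, nor linearity of the prices $\psi_l$, so it covers the full nonconvex class~\eqref{eq:main_model}; this is precisely what makes the subsequent core/competitive-equilibrium equivalence meaningful for nonconvex electricity markets.
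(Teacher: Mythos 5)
Your proposal is correct and follows essentially the same route as the paper's proof: sum the bidder-optimality condition \eqref{eq:firstcondition} over $l\in L$, apply the operator-optimality condition \eqref{eq:secondcondition} to a feasible comparison allocation, chain the two inequalities so the price terms cancel, and finish with feasibility of $(x^*,y^*)$ for \eqref{eq:main_model} plus the tie-breaking rule. The only differences are cosmetic — you argue by contradiction against an arbitrary strictly better point and introduce the reduced value $D(\cdot)$, whereas the paper compares directly to $(x^*(\CC),y^*(\CC))$ and concludes $\sum_{l}c_l(x_l^*)+d(x^*,y^*)\le J(\CC)$.
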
}

The proof is relegated to the appendix in \cref{compeff}. We say that a mechanism \textit{ensures the existence of a competitive equilibrium} if under any true cost profile $\CC$ from the bidders there exists a set of price functions $\{\psi_l\}_{l\in L}$ such that these price functions constitute a competitive equilibrium with $x^*(\CC)$, and $\psi_l(x^*_l(\CC))=p_l(\CC)$ for all $l\in L$. Under such mechanisms, condition in~\eqref{eq:firstcondition} implies the following. Suppose bidders treat their price functions to be independent of their bids. Then, $x^*_l(\CC)$ maximizes the utility of each bidder. In this case, bidders would be willing to bid truthfully to ensure they are allocated the optimal quantity $x^*_l(\CC)$.
 
 As an example, consider the DC-OPF problem with a single bidder at each node. Recall that $\lambda_l^*(\CC)\in\R$ is the Lagrange multiplier of the $l^{\text{th}}$ nodal balance equality. Let the stacked vector $\lambda^*(\CC)\in\R^t$ be the concatenation of $\lambda_l^*(\CC)$, $\forall l$ where the number of supply types $t$ is equal to the number of nodes. If strong duality holds, then the LMP mechanism results in a competitive equilibrium with the allocation $x^*(\CC)$, and the set of price functions $\psi_l^{\text{LMP}}(x)=\lambda^*(\CC)\,x$ for all $l\in L$~\cite{schweppe2013spot,wu1996folk}. As a remark, in case there are several bidders on a single node in a DC-OPF problem, these bidders would be supplying the same type of supply to the grid. These bidders would then face the same LMP price $\lambda_l^*(\CC)$ since the price functions are formed by the Lagrange multipliers of the nodes. Hence, the LMP price functions are considered to be bidder-independent.

We now prove that core-selecting mechanisms coincide with those ensuring the existence of a competitive equilibrium.\begin{theorem}\label{thm:lmpiscs}
	A mechanism is core-selecting if and only if it ensures the existence of a competitive equilibrium, that is, for any true cost profile $\CC$ there exists a set of price functions $\{\psi_l\}_{l\in L}$ such that these functions constitute a competitive equilibrium with $x^*(\CC)$, and $\psi_l(x^*_l(\CC))=p_l(\CC)$ for all $l\in L$. 
\end{theorem}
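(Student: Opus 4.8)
The plan is to establish both implications directly by explicitly constructing (for ``only if'' of the converse form, rather \emph{using}) a family of bidder-dependent price functions. Throughout, recall that when the submitted bids are truthful the revealed utilities and revealed core of \cref{def:r_core_def} coincide with the true ones, so ``core-selecting'' amounts to $\bar u(\BB)\in Core(\BB)$ for \emph{every} submitted profile $\BB$.

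\emph{($\Rightarrow$) Core-selecting implies existence of a competitive equilibrium.} Fix a true cost profile $\CC$ and set $u_l:=p_l(\CC)-c_l(x_l^*(\CC))$. Since the payments put $\bar u(\CC)$ in $Core(\CC)$, we have $u_l\ge 0$ for all $l$, $\bar u_0(\CC)=-J(\CC)-\sum_{l}u_l$, and the unblocked inequalities $\bar u_0(\CC)+\sum_{l\in S}u_l\ge -J(\CC_S)$ for all $S\subseteq L$. I would define the candidate price functions by
$$
\psi_l(x_l)=\begin{cases} 0 & x_l=0,\\ c_l(x_l)+u_l & x_l\neq 0,\end{cases}
$$
so that $\psi_l(0)=0$ and, checking the winner/loser cases separately, $\psi_l(x_l^*(\CC))=p_l(\CC)$. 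Condition \eqref{eq:firstcondition} is immediate: off zero the surplus $\psi_l-c_l$ equals the constant $u_l$, at zero it is $0\le u_l$, so $x_l^*(\CC)$ is a maximizer either way. The substantive step is \eqref{eq:secondcondition}: for any feasible $(\tilde x,\tilde y)$, put $P=\{l:\tilde x_l\neq 0\}$; then the operator's perceived cost is $\big(\sum_{l\in P}c_l(\tilde x_l)+d(\tilde x,\tilde y)\big)+\sum_{l\in P}u_l\ge J(\CC_P)+\sum_{l\in P}u_l\ge -\bar u_0(\CC)$, where the first inequality holds because $(\tilde x,\tilde y)$ is feasible for the coalition-$P$ problem \eqref{eq:33} (its support lies in $P$) and the second is exactly the core/unblocked constraint for $S=P$. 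Since the perceived cost at $x^*(\CC)$ equals $\sum_l p_l(\CC)+d(x^*(\CC),y^*(\CC))=-\bar u_0(\CC)$ (the inner minimization over $y$ there being attained at $y^*(\CC)$, as the bid terms are then constant in $y$), the allocation $x^*(\CC)$ attains the minimum, so \eqref{eq:secondcondition} holds.

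\emph{($\Leftarrow$) Existence of a competitive equilibrium implies core-selecting.} Take any submitted profile $\BB$ and apply the hypothesis with $\BB$ playing the role of the ``true costs'' (a valid instantiation, since bids and costs live in the same function space); this yields price functions $\{\psi_l\}$ with $\psi_l(x_l^*(\BB))=p_l(\BB)$ satisfying \eqref{eq:firstcondition}–\eqref{eq:secondcondition} against $\BB$. I would then verify the three conditions of \cref{def:r_core_def} for $\bar u(\BB)$. Nonnegativity $\bar u_l(\BB)=\psi_l(x_l^*(\BB))-b_l(x_l^*(\BB))\ge 0$ follows from \eqref{eq:firstcondition} evaluated at $x_l=0$ (using $\psi_l(0)=b_l(0)=0$). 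The efficiency equality $\bar u_0(\BB)+\sum_{l\in L}\bar u_l(\BB)=-J(\BB)$ is a direct computation from the definitions of $\bar u_0,\bar u_l$ and the optimality of $(x^*(\BB),y^*(\BB))$ for \eqref{eq:main_model} under $\BB$. For the unblocked inequality at $S\subset L$, let $(\hat x,\hat y)$ be optimal for $J(\BB_S)$, so $\hat x_{-S}=0$; condition \eqref{eq:secondcondition} gives $\sum_l p_l(\BB)+d(x^*(\BB),y^*(\BB))\le \sum_{l\in S}\psi_l(\hat x_l)+d(\hat x,\hat y)$ (using $\psi_l(0)=0$), while \eqref{eq:firstcondition} at $\hat x_l$ gives $\psi_l(\hat x_l)\le b_l(\hat x_l)+\bar u_l(\BB)$ for each $l\in S$; combining and rearranging yields $\bar u_0(\BB)+\sum_{l\in S}\bar u_l(\BB)\ge -J(\BB_S)$. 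Hence $\bar u(\BB)\in Core(\BB)$.

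\emph{Main obstacle.} The crux is the price-function construction in the forward direction: each $\psi_l$ must be small enough off the optimal allocation so that bidder $l$ has no profitable deviation (\eqref{eq:firstcondition}), yet large enough off it so that the operator has no profitable deviation (\eqref{eq:secondcondition}). The choice $\psi_l=c_l+u_l$ away from zero is the largest one permitted by individual rationality, and the key realization is that it is reconciled with the operator's side precisely by the coalitional (unblocked) inequalities of the core: a deviation that drops a set of bidders out of production is matched to the core constraint for the coalition $P$ of still-producing bidders. A secondary, bookkeeping-level point is the reading of ``true cost profile'' in the statement — in the ``only if'' part one must apply the competitive-equilibrium hypothesis to the submitted profile $\BB$ itself (so the $c_l$ of \cref{def:compeq} become the $b_l$), and one should check that $x_l^*(\BB)$, $0$, and $\hat x_l$ all lie in the relevant bid domain $\hat\X_l$.
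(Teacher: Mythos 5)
Your proof follows essentially the same route as the paper's in both directions: the same price-function construction $\psi_l=c_l+u_l$ off zero for the forward implication (your direct lower-bound argument for condition \eqref{eq:secondcondition} is just the contrapositive of the paper's proof by contradiction), and the same instantiation of the hypothesis at the submitted profile $\BB$ followed by verification of the three core constraints for the converse.

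One small but real omission: your $\psi_l$ is only defined on $\{0\}\cup(\X_l\setminus\{0\})$, whereas \cref{def:compeq} requires $\psi_l:\R_+^t\to\R$ and the operator's minimization in \eqref{eq:secondcondition} ranges over all of $\R_+^{t|L|}$, not over $\X$. The paper closes this by setting $\psi_l(x)=\infty$ for $x\notin\X_l$, which is exactly what makes your step ``$(\tilde x,\tilde y)$ is feasible for the coalition-$P$ problem'' legitimate for \emph{every} feasible deviation the operator could consider: without it, a deviation with $\tilde x_l\notin\X_l$ is not covered by the core inequality for $P$, and $c_l(\tilde x_l)$ is not even defined. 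You flag the domain issue in your closing remarks but leave it unresolved in the construction; adding the $+\infty$ clause fixes it with no other change to your argument.
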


The proof is relegated to the appendix in~\cref{app:lmpiscs}.
As a corollary, the LMP mechanism is core-selecting, and hence it is also coalition-proof. Coalition-proofness can be another motivation for using LMP in DC-OPF problems. {As a side note, using~\cref{thm:lmpiscs}, we can prove that the LMP payments are upper bounded by the VCG payments, we kindly refer to the appendix in~\cref{app:lmpvcg}.} 

\cref{thm:lmpiscs} shows that core-selecting mechanisms generalize the economic rationale of the LMP prices to core prices. The latter prices can be nonlinear and bidder-dependent. 
As mentioned earlier, Lagrange multipliers may not constitute a competitive equilibrium since strong duality may not hold for the general class of markets in~\eqref{eq:main_model}. In this case, the market may not have a competitive equilibrium in linear price functions. For instance, deriving a meaningful payment mechanism to accompany nonconvex AC-OPF dispatch and/or nonconvex unit commitment decisions is an open problem~\cite{bose2019some,FERC}. As we discussed in \cref{sec:p1_2}, there are various approximation techniques being investigated and implemented for the latter~\cite{gribik1993market,o2005efficient,chao2019incentives}.\footnote{We remark that condition~\eqref{eq:secondcondition} of a competitive equilibrium is ignored whenever one derives linear prices from an approximated problem, and complements it with uplift payments. Hence, the proposals in~\cite{gribik1993market,o2005efficient,chao2019incentives} are not core-selecting mechanisms, and they cannot be guaranteed to achieve coalition-proofness. Notice that, if we ignore the uplift payments, the linear prices derived in~\cite{gribik1993market,o2005efficient,chao2019incentives} satisfy the condition~\eqref{eq:secondcondition} for the approximated problems, however, in this case, it is easy to see that these linear prices would violate condition~\eqref{eq:firstcondition}.}
As core-selecting mechanisms exist even under nonconvex bids and nonconvex constraint sets, they are viable payment mechanisms for any market modeled by~\eqref{eq:main_model}.

Nevertheless, the VCG mechanism is the unique DSIC mechanism computing the optimal allocation under the submitted bids~\cite{green1979incentives,green1977characterization} (and IR in a minimal sense, see \cref{footnote:minima} from \cref{sec:p1_2}). Since the VCG mechanism may not be core-selecting, the DSIC property is in general violated under core-selecting mechanisms, and unilateral deviations can be profitable. Furthermore, competitive equilibrium theory relies on bidders treating their price functions as independent from their bids. This assumption does not take into account the full set of strategic behaviors. 
In the next section, we address the design of coalition-proof mechanisms to approximate DSIC without the price-taking assumption. We then address the case where the demand-side also submits bids.

\section{Designing coalition-proof mechanisms}\label{sec:3b}

\subsection{Approximating incentive-compatibility}\label{sec:3b1}
In this section, we characterize the class of core-selecting mechanisms that approximates DSIC by minimizing the sum of potential profits of each bidder from a unilateral deviation. Invoking the claim in~\cite{parkes2002achieving}, approximating DSIC provides us also with a meaningful method to approximate the efficiency property.

First, we quantify the violation of the DSIC property under any core-selecting mechanism.\begin{lemma}\label{lem:lie}
	Let $u(\BB)$ denote the utilities for any bid profile~$\BB$ under a core-selecting mechanism. The additional profit of bidder~$l$ by a unilateral deviation from its truthful bid, that is, $u_l(\hat \BB_l\cup \BB_{-l})-u_l({\CC_l\cup \BB_{-l}})$ for a nontruthful bid~$\hat \BB_l$, is at most $u^{\text{VCG}}_l(\CC_l\cup \BB_{-l})-u_l({\CC_l\cup \BB_{-l}})$, where $u^{\text{VCG}}_l(\CC_l\cup \BB_{-l})=J(\BB_{-l})-J(\CC_l\cup \BB_{-l})$. 
\end{lemma}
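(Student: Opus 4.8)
The plan is to reduce the stated bound to the single claim that, under the core-selecting mechanism, the utility bidder~$l$ extracts from \emph{any} bid it could submit never exceeds its truthful VCG utility, that is,
$$u_l(\hat\BB_l\cup\BB_{-l})\;\le\;u^{\text{VCG}}_l(\CC_l\cup\BB_{-l})\;=\;J(\BB_{-l})-J(\CC_l\cup\BB_{-l}).$$
Subtracting $u_l(\CC_l\cup\BB_{-l})$ from both sides of this inequality gives exactly the conclusion of the lemma, so it suffices to establish the display. If $\hat\BB_l$ renders \eqref{eq:main_model} infeasible, the deviation produces no allocation and hence no gain over truthful VCG (which is nonnegative by the IR property), so I may assume $J(\hat\BB_l\cup\BB_{-l})<\infty$; write $\hat\BB:=\hat\BB_l\cup\BB_{-l}$ and let $\hat b_l$ denote the deviating bid function.

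First I would invoke the core-selecting property at the deviated profile~$\hat\BB$. By definition the revealed utilities satisfy $\bar u(\hat\BB)\in Core(\hat\BB)$ (\cref{def:r_core_def}). Taking the efficiency equality $\bar u_0(\hat\BB)+\sum_{k\in L}\bar u_k(\hat\BB)=-J(\hat\BB)$ together with the blocking inequality for the coalition $S=L\setminus\{l\}$, namely $\bar u_0(\hat\BB)+\sum_{k\neq l}\bar u_k(\hat\BB)\ge-J(\hat\BB_S)$, and noting that only bidder~$l$ changed its bid so that $\hat\BB_S=\BB_{-l}$ and hence $J(\hat\BB_S)=J(\BB_{-l})$, subtracting the inequality from the equality yields
$$\bar u_l(\hat\BB)\;\le\;J(\BB_{-l})-J(\hat\BB).$$

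Next I would pass from revealed to true utilities. Since $\bar u_l(\hat\BB)=p_l(\hat\BB)-\hat b_l(x^*_l(\hat\BB))$ and $u_l(\hat\BB)=p_l(\hat\BB)-c_l(x^*_l(\hat\BB))$, we have $u_l(\hat\BB)=\bar u_l(\hat\BB)+\hat b_l(x^*_l(\hat\BB))-c_l(x^*_l(\hat\BB))$. The constraints of \eqref{eq:main_model} do not depend on the bids, so $(x^*(\hat\BB),y^*(\hat\BB))$ is feasible for the program defining $J(\CC_l\cup\BB_{-l})$, giving
$$J(\CC_l\cup\BB_{-l})\;\le\;c_l(x^*_l(\hat\BB))+\sum_{k\neq l}b_k(x^*_k(\hat\BB))+d(x^*(\hat\BB),y^*(\hat\BB)),$$
whereas $J(\hat\BB)=\hat b_l(x^*_l(\hat\BB))+\sum_{k\neq l}b_k(x^*_k(\hat\BB))+d(x^*(\hat\BB),y^*(\hat\BB))$. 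Subtracting these gives $\hat b_l(x^*_l(\hat\BB))-c_l(x^*_l(\hat\BB))\le J(\hat\BB)-J(\CC_l\cup\BB_{-l})$. Adding this to the previous display cancels $J(\hat\BB)$ and produces $u_l(\hat\BB)\le J(\BB_{-l})-J(\CC_l\cup\BB_{-l})=u^{\text{VCG}}_l(\CC_l\cup\BB_{-l})$, which is the display we wanted. The corner case $x^*_l(\hat\BB)=0$ is consistent, since then $\hat b_l(0)=c_l(0)=0$ and a non-allocated bidder is unpaid, so $u_l(\hat\BB)=0\le u^{\text{VCG}}_l(\CC_l\cup\BB_{-l})$.

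I do not anticipate a substantive obstacle; the argument is short. The points requiring care are purely organizational: using precisely the coalition $S=L\setminus\{l\}$ in $Core(\hat\BB)$ (not a smaller one), keeping the revealed utilities $\bar u$ cleanly separated from the true utilities $u$ throughout, and exploiting $\hat\BB_{-l}=\BB_{-l}$ so that the operator's ``remove bidder~$l$'' value is the same under both profiles. One should also note in passing that $J(\CC_l\cup\BB_{-l})$ is finite, because feasibility of \eqref{eq:main_model} under $\hat\BB$ (assumed above) forces feasibility under $\CC_l\cup\BB_{-l}$, the feasible set being bid-independent.
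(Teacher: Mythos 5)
Your proof is correct and follows essentially the same route as the paper's: you bound the revealed utility at the deviated profile by the revealed VCG utility via the core constraint for the coalition $L\setminus\{l\}$, and then pass to true utilities using the feasible-suboptimality argument that underlies the DSIC property of the VCG mechanism. The paper packages the same two steps as a proof by contradiction, citing the max-over-the-core characterization of the VCG revealed utility and the DSIC property of the VCG mechanism rather than re-deriving them inline as you do.
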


The proof is relegated to the appendix in~\cref{app:lemlie}, and it extends~\cite[Theorem 3.2]{day2007fair} to our electricity market setting.
This lemma provides a measure for the loss of incentive-compatibility under core-selecting mechanisms. The bound on the additional profit is exactly given by the difference between the VCG payment and the core-selecting payment in the case in which the bidder is truthful in both mechanisms. Hence, the closer the payments are to the VCG payments, the better the bound is.
Note that calculating the optimal deviation by the bidder given in the proof of~\cref{lem:lie} requires full information of the bids. Attempting this optimal deviation involves a risk, since bidding any amount higher to obtain a larger profit is not possible, and it would result in zero allocation and zero~utility.

Next, we design an approximately DSIC core-selecting mechanism.
A mechanism is said to be \textit{maximum payment core-selecting} (MPCS) if its revealed utilities are given by,
\begin{equation}\label{eq:mpcs}
	\bar u^{\text{MPCS}}(\BB) = \argmax_{u \in \text{Core}(\BB)}\ \sum_{l\in L} u_l - \epsilon\, \norm{u_l-\bar u_l^{\text{VCG}}(\BB)}_2^2\,, \vspace{-.1cm}
\end{equation}
where $\epsilon$ is a small positive number. This problem maximizes sum of the revealed utilities of the bidders over the revealed core constraints. The second term in the objective of~\eqref{eq:mpcs} is used as a tie-breaker. This term ensures that the optimizer is unique by picking the revealed utilities that are nearest to the VCG revealed utilities. See~\cref{fig:core_picture} for an illustration of the revealed utilities for two bidders under different mechanisms.\footnote{By definition, under the MPCS mechanism sum of the revealed utilities of the bidders are higher than the one under the LMP mechanism. However, we underline that every bidder may not receive an MPCS utility higher than its LMP utility. In the numerics, we provide one such instance.} Observe that problem~\eqref{eq:mpcs} is a convex quadratic program since the core is given by a set of linear equality and inequality constraints in~\eqref{eq:coref}. In the appendix in~\cref{appcomps}, we discuss its computational aspects and provide computationally efficient approaches.  

\begin{figure}[t]
	\begin{center}
		\begin{tikzpicture}[scale=0.79, every node/.style={scale=0.58}]
		\coordinate (r0) at (0,6);
		\coordinate (s0) at (6,0);
		\coordinate (si) at (0,0);
		\coordinate (s1) at (0,6);
		\draw[->] (0,0) -- (7,0) node[right] {\huge $\bar u_1$};
		\draw[->] (0,0) -- (0,6.75) node[above] {\huge $\bar u_2$};
		\filldraw[draw=black!80!blue,line width=.22mm, fill=gray!20] (r0) -- (s0) -- (si) -- (s1) -- cycle;
		\draw[scale=1, line width=.22mm, domain=0:6,smooth,variable=\x,black!80!blue] plot ({\x},{6-\x}) node[below] at(-2.75,1.5){\Huge $\text{Core}$};
		\draw[scale=1, line width=.22mm, domain=0:6,dashed,variable=\x,black!40!red]  plot ({\x},{6}) node[below] at(8.4,6.9){\Huge $(\bar u_1^{\text{VCG}},\,\bar u_2^{\text{VCG}})$};
		\draw[scale=1, line width=.22mm, domain=0:6,dashed,variable=\y,black!40!red]  plot ({6},{\y});
		\node[black!40!red] at (6,5.95) {\Huge\textbullet};
		\node[black!60!blue] at (3,2.95) {\Huge\textbullet};
		\node[black!] at (3.3,2.97) {\footnotesize\textbullet};
		\node[black!60!green] at (1.2,3.85) {\Huge\textbullet};
		\draw[scale=1, line width=.22mm, domain=0:3,smooth,variable=\x,black!60!blue] plot ({6-\x},{6-\x});
		\draw[scale=1, line width=.22mm, domain=0:0.312,smooth,variable=\x,black!60!blue] plot ({3.3+\x},{3.3-\x});
		\draw[scale=1, line width=.22mm, domain=-0.012:0.3,smooth,variable=\x,black!60!blue] plot ({3.6-\x},{3.0-\x});
		\draw[scale=1, line width=.22mm, domain=0:3,dashed,variable=\x,black!60!blue]  plot ({\x},{3});
		\draw[scale=1, line width=.22mm, domain=0:3,dashed,variable=\y,black!60!blue]  plot ({3},{\y});
		\draw[scale=1, line width=.22mm, domain=0:1.2,dashed,variable=\x,black!60!green]  plot ({\x},{3.9});
		\draw[scale=1, line width=.22mm, domain=0:3.9,dashed,variable=\y,black!60!green]  plot ({1.2},{\y});
		\node[black!40!red] at (-0.95,6) {\Huge $\bar u_2^{\text{VCG}}$};
		\node[black!60!blue] at (-1,2.9) {\Huge $\bar u_2^{\text{MPCS}}$};
		\node[black!60!green] at (-0.95,4.0) {\Huge $\bar u_2^{\text{LMP}}$};
		\node[black!40!red] at (6,-.6) {\Huge $\bar u_1^{\text{VCG}}$};
		\node[black!60!blue] at (3,-.6) {\Huge $\bar u_1^{\text{MPCS}}$};
		\node[black!60!green] at (1.2,-.6) {\Huge $\bar u_1^{\text{LMP}}$};
		\node[black] at (-.75,-.5) {\Huge $(0,0)$};
		\draw[->,line width=.22mm] (-1.7,1.1) -- (2.1,1.1) ;
		\end{tikzpicture}
		\vspace{-.1cm}
		\caption{Illustration of the revealed utilities under different mechanisms}\label{fig:core_picture}
		\vspace{-.1cm}
	\end{center}
\end{figure}
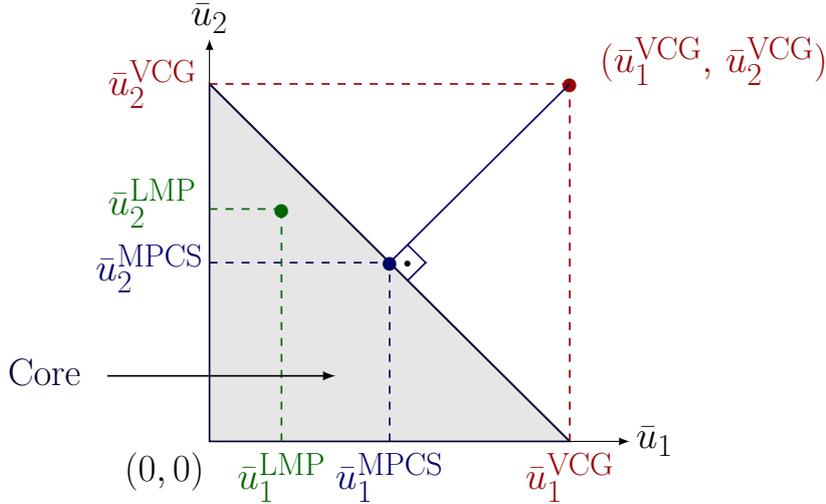

The following theorem shows that the MPCS mechanism approximates the DSIC property.

\begin{theorem}\label{thm:mpcsmax}
	The MPCS mechanism minimizes the sum of maximum additional profits of each bidder by a unilateral deviation from a truthful bid profile $\CC$, that is, the sum of the maximum values of $u_l(\hat \BB_l\cup \CC_{-l})-u_l({\CC})$ for a nontruthful bid $\hat \BB_l$ for all bidders $l\in L$, among all core-selecting mechanisms.
\end{theorem}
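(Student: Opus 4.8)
The plan is to reduce the statement, via \cref{lem:lie}, to the observation that the MPCS rule maximizes $\sum_{l\in L}u_l$ over the core polytope, while exploiting the fact that the bound in \cref{lem:lie} is \emph{attained}. Fix the truthful profile $\CC$ and let $M$ be an arbitrary core-selecting mechanism with utility rule $u^M$. First I would apply \cref{lem:lie} with $\BB_{-l}=\CC_{-l}$: for each bidder $l$, every unilateral deviation $\hat\BB_l$ yields additional profit $u^M_l(\hat\BB_l\cup\CC_{-l})-u^M_l(\CC)\le u^{\text{VCG}}_l(\CC)-u^M_l(\CC)$, where $u^{\text{VCG}}_l(\CC)=J(\CC_{-l})-J(\CC)$. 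Crucially, I would also argue tightness: the explicit deviation constructed in the proof of \cref{lem:lie}, in which bidder $l$ inflates its truthful bid by exactly its VCG discount (well defined by the feasibility assumptions on \eqref{eq:main_model}), leaves the allocation optimal and gives $l$ utility $u^{\text{VCG}}_l(\CC)$; for an originally losing bidder the bound is $0$ and is met by truthfulness. Hence for \emph{every} core-selecting mechanism the quantity the theorem asks us to minimize equals
\[
F(M)=\sum_{l\in L}\bigl(u^{\text{VCG}}_l(\CC)-u^M_l(\CC)\bigr)=\sum_{l\in L}u^{\text{VCG}}_l(\CC)-\sum_{l\in L}u^M_l(\CC).
\]

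Next I would note that $\sum_{l\in L}u^{\text{VCG}}_l(\CC)=\sum_{l\in L}\bigl(J(\CC_{-l})-J(\CC)\bigr)$ depends only on the true costs and not on $M$, so minimizing $F(M)$ over core-selecting mechanisms is equivalent to maximizing $\sum_{l\in L}u^M_l(\CC)$. Under truthful bidding the true and revealed utilities coincide, $u^M_l(\CC)=\bar u^M_l(\CC)$, and core-selection gives $\bar u^M(\CC)\in Core(\CC)$; the core inequalities for $S=L\setminus\{l\}$ force $0\le\bar u_l\le u^{\text{VCG}}_l(\CC)$ for every $\bar u\in Core(\CC)$, so $Core(\CC)$ is a bounded polyhedron and $\sum_{l\in L}u^M_l(\CC)\le\max_{u\in Core(\CC)}\sum_{l\in L}u_l$, a bound shared by all core-selecting mechanisms.

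To finish I would show that MPCS attains this bound. By \eqref{eq:mpcs}, $\bar u^{\text{MPCS}}(\CC)$ maximizes $\sum_{l\in L}u_l-\epsilon\,\norm{u_l-\bar u^{\text{VCG}}_l(\CC)}_2^2$ over $Core(\CC)$; invoking the exact-regularization property of linear programs over a bounded polyhedron---there is $\epsilon_0>0$ such that for $\epsilon\in(0,\epsilon_0)$ a strictly convex quadratic penalty merely selects, among the maximizers of the linear part, the one nearest to $\bar u^{\text{VCG}}$---for $\epsilon$ small enough one has $\sum_{l\in L}\bar u^{\text{MPCS}}_l(\CC)=\max_{u\in Core(\CC)}\sum_{l\in L}u_l$. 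Combining the three steps gives $F(\mathrm{MPCS})\le F(M)$ for every core-selecting mechanism $M$, which is the claim.

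The main obstacle is the tightness step: since the theorem concerns the \emph{actual} maximum additional profit rather than the upper bound of \cref{lem:lie}, one must carry along the explicit optimal deviation from that lemma's proof, verify it is a legitimate bid in the class \eqref{eq:main_model}, and check that it produces the equality $u^M_l(\hat\BB_l\cup\CC_{-l})=u^{\text{VCG}}_l(\CC)$; only then does $F(M)$ collapse to the VCG-minus-core utility gap for \emph{all} core-selecting mechanisms, which is precisely what makes the comparison with MPCS legitimate. A secondary point requiring care is that the quadratic term in \eqref{eq:mpcs} with a fixed small $\epsilon$ acts as a pure tie-breaker, so the MPCS revealed-utility sum equals the core maximum exactly rather than only in the limit.
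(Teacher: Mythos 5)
Your proposal is correct and follows essentially the same route as the paper's proof: invoke \cref{lem:lie} (including the tightness of its bound, established in that lemma's proof), note that $\sum_{l\in L}u_l^{\text{VCG}}(\CC)$ is a constant so minimizing $\sum_l\bigl(u_l^{\text{VCG}}(\CC)-u_l(\CC)\bigr)$ is equivalent to maximizing $\sum_l u_l$ over $Core(\CC)$, which is exactly what the MPCS objective does once the quadratic term is treated as a tie-breaker. Your added care about attainment of the bound (the paper's construction attains it only as $\epsilon\rightarrow 0^+$) and about the exact-regularization role of the small quadratic penalty are refinements of details the paper handles more informally, not a different argument.
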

The proof is relegated to the appendix in~\cref{app:mpcsmax}.
Under the MPCS mechanism, the total incentives to deviate from truthful bidding are minimal, and hence it is approximately DSIC. This mechanism extends the previous proposals from the multi-item forward auctions computing minimum revenue core in \cite{day2007fair}, and the VCG-nearest core in \cite{day2012quadratic}.\footnote{We kindly refer to \cite{erdil2010new,bunz2018designing} for discussions and analyses on different core outcome choices in multi-item auctions.} Notice that when the VCG utilities lie in the core, they constitute the optimizer to the problem \eqref{eq:mpcs}. This follows since $\bar u_l^{\text{VCG}}({{\BB}})=\max\{\bar u_l\ \rvert\ \bar u\in Core({{\BB}})\}$. Hence, for such instances, the MPCS mechanism is equivalent to the VCG mechanism. The LMP mechanism does not have this~property.

The MPCS mechanism does not rely on the price-taking assumption to approximate DSIC. In return, it generally yields nonlinear and bidder-dependent payments. Nonlinearity might be regarded as a big shift for some existing markets~\cite{bose2019some}. Moreover, the bidders could find it hard to accept bidder-dependency, or this property might even be precluded by law~\cite{cramton2003electricity}. Nevertheless, the MPCS mechanism can still provide an elegant economic rationale when meaningful linear prices do not exist. The coalition-proofness property of the MPCS mechanism discourages bidders from entering the market with multiple identities to try and exploit the bidder-dependency.

\subsection{Market design considerations in exchanges}\label{sec:444}

In this section, we extend our model to exchange markets. 
We assume that each bidder~$l$ has a private true cost~function $c_l: \X_l \rightarrow \mathbb R$, where $0\in \X_l\subseteq\R^t$ and $c_l(0)=0$. Furthermore, the bid function is denoted by $b_l:\hat \X_l \rightarrow \mathbb R$, where $0\in \hat \X_l\subseteq\R^t$  and $b_l(0)=0$.  Note that the domains of these functions are now relaxed to $\R^t$. As a remark, an exchange market is more general than a two-sided market since these bid functions can also represent a bidder interested in buying and selling different types of supplies, simultaneously.
The remaining definitions for the one-sided auctions naturally extend to exchanges. 
Moreover, results on IR, DSIC, coalition-proofness, and competitive equilibrium further hold in exchanges when we relax the power supply domains from $\R_+^t$ to $\R^t$.
For instance, the core is again defined by~\eqref{eq:coref}. Then, the proof of~\cref{thm:no_collusions_bocs} applies to exchanges, proving the connection between coalition-proofness and core-selecting.\footnote{In the combinatorial exchange literature, the core is usually defined by intersecting \eqref{eq:coref} with $\bar u_0=0$~\cite{hoffman2010practical,day2013division,milgrom2007package}. However, this new core is in general empty. Furthermore, such core definitions implicitly assume that a transaction can occur even without the involvement of the central operator. This is not true for the electricity market problems since the dispatch has to be secure via the transmission network.}

In addition to the properties we studied so far, an exchange requires that the operator obtains a revenue, adequate to cover its total payment to balance its budget. In one-sided markets, this might be less of an issue since the demand-side is assumed to be inelastic to the price changes. We say that a mechanism is \textit{budget-balanced} if the operator has a nonnegative utility under any bid profile~$\BB$, $u_0(\BB)\geq0$. We say that it is \textit{strongly budget-balanced} if this utility is exactly zero, $u_0(\BB)=0$. 
Under DC-OPF exchange problems, the LMP mechanism is budget-balanced~\cite[Fact~4]{wu1996folk}. 
On the other hand, the VCG mechanism is not always budget-balanced.\footnote{In the appendix in \cref{app:deficvcg}, we characterize the instances in which the VCG mechanism has a deficit. In~\cref{app:coalpvcg}, we also prove that the VCG mechanism cannot be guaranteed to achieve coalition-proofness in exchanges.} This also follows from the Myerson-Satterthwaite impossibility theorem, which shows that no exchange can always solve for the optimal allocation under the submitted bids, and attain DSIC, budget-balance, and IR simultaneously~\cite{myerson1983efficient} (even in the more general setting of Bayesian implementation, we kindly refer to the extensive discussions in \cite[\S 3.2]{parkes2002achieving}). Fortunately, under the core selecting mechanisms, we can guarantee budget-balance.

\begin{theorem}\label{thm:corewb}
	Any core-selecting mechanism is~budget-balanced, that is, $\bar u_0\geq 0$.
\end{theorem}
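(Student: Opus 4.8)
The plan is to obtain budget-balance from a single inequality of the revealed core, namely the one indexed by the empty coalition, together with the fact that in an exchange the central operator can always abstain from every transaction.

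Since a core-selecting mechanism picks $\bar u(\BB)\in Core(\BB)$, applying the revealed-core definition~\eqref{eq:coref} with $R=L$ and the proper subset $S=\emptyset\subsetneq L$ gives
\[
\bar u_0(\BB)\;=\;\bar u_0(\BB)+\sum_{l\in\emptyset}\bar u_l(\BB)\;\ge\;-J(\BB_\emptyset).
\]
It therefore remains only to show $J(\BB_\emptyset)\le 0$. When no bidder participates, problem~\eqref{eq:33} reduces to $J(\BB_\emptyset)=\min\{\,d(0,y):h(0,y)=0,\ g(0,y)\le 0\,\}$; in the exchange setting the null dispatch $(x,y)=(0,0)$ is feasible and has objective value $\sum_{l\in\emptyset}b_l(0)+d(0,0)=0$, so $J(\BB_\emptyset)\le 0$. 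Combining the two facts yields $\bar u_0(\BB)\ge -J(\BB_\emptyset)\ge 0$, i.e.\ $\bar u_0(\BB)\ge 0$, which is the asserted budget-balance. I would also remark that the same no-trade argument shows $J(\BB_S)\le 0$ for every $S\subseteq L$, so in particular even the pay-as-bid point $\bar u_0=-J(\BB)$ already satisfies $\bar u_0\ge 0$, and every other core point merely redistributes part of $-J(\BB)$ to the bidders while still respecting the empty-coalition inequality.

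The only step that requires care is the claim $J(\BB_\emptyset)\le 0$, i.e.\ that ``no trade'' is a feasible zero-cost baseline for the operator; equivalently, the empty-coalition problem must not be infeasible, which would otherwise make its core inequality vacuous. This is precisely the modelling feature that separates the present exchange setting from the combinatorial-exchange literature, where the core is instead intersected with $\bar u_0=0$ under the implicit assumption that trades can occur without the operator; here the operator is indispensable for a secure dispatch, so the no-trade outcome is always admissible and the empty-coalition inequality pins $\bar u_0$ to be nonnegative rather than exactly zero. Once this normalization is granted, no further structure on $h$, $g$, $d$, or on the bids is needed, and the conclusion holds for all bid profiles.
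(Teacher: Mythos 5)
Your proof is correct and reaches the conclusion by a shorter route than the paper, but it is worth being explicit about where the two arguments place their one unavoidable assumption. You invoke the single revealed-core inequality for $S=\emptyset$, which gives $\bar u_0\ge -J(\BB_\emptyset)$, and then argue $J(\BB_\emptyset)\le 0$ by exhibiting the null dispatch $(x,y)=(0,0)$ as a feasible point with zero objective. The paper instead routes through the reparametrization of \cref{lem:lemma_core}: it takes the constraint for the coalition $K=W$ of winners, $\sum_{l\in W}\bar u_l\le J(\BB_{-W})-J(\BB)$, and combines it with the \emph{explicitly stated assumption} that the pay-as-bid mechanism is budget-balanced under any bid profile, i.e.\ $-J(\cdot)\ge 0$, which in particular yields $J(\BB_{-W})\le 0$ and hence $\bar u_0=-J(\BB)-\sum_{l\in L}\bar u_l\ge 0$. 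The two normalizations are close cousins, but yours is strictly stronger: you need $(0,0)$ to be feasible for $h,g$ \emph{and} $d(0,0)=0$, neither of which the paper's general model guarantees (with a second-stage cost $d$, the operator may be forced to procure at the later stage even when nothing is dispatched in the first, so $\min_y d(0,y)$ need not be zero or even nonpositive). The paper's assumption only requires the optimal value of each reduced problem to be nonpositive, without pinning down a specific feasible point, and the paper is careful to present it as a modelling assumption satisfied in combinatorial exchanges, DC-OPF exchanges, and two-sided markets rather than as a consequence of the model. So your argument buys a genuinely one-line proof (and, as you note, shows the pay-as-bid point itself is already budget-balanced), at the cost of a slightly stronger feasibility/normalization hypothesis that you correctly flag as the crux; restated under the paper's weaker hypothesis ``$J(\BB_S)\le 0$ for all $S$,'' your empty-coalition inequality and the paper's winner-coalition inequality deliver the same conclusion.
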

The proof is relegated to the appendix in~\cref{app:thmcore}, and it is a straightforward manipulation of the revealed core constraints.
It follows that the MPCS mechanism is budget-balanced in addition to the properties discussed in~\cref{sec:3b1}. This also provides an alternative proof to the budget-balance of the LMP mechanism. 

Note that the LMP mechanism provides methods to reallocate its budget surplus through financial transmission and flowgate congestion rights~\cite{wu1996folk,hogan1992contract}. These rights are important tools to provide market signals to incentivize investment in transmission capacity. In order to compensate the owners of these rights in the MPCS mechanism, a plausible solution is to include them as an additional constraint to \eqref{eq:mpcs}, that is, $\sum_{l \in W } \bar u_l\leq -J(\mathcal{B})-\Delta_{\text{r}}$, where $\Delta_{\text{r}}\geq 0$ is the total rights to be distributed. Defining ways to incorporate these rights, and providing the correct investment signals for transmission capacity expansion, is part of the future research directions.

\section{Appendix}
\subsection{Proof of~\cref{thm:no_collusions_bocs}}\label{app:Gcore}
First, we need the following lemma.
\begin{lemma}\label{lem:implicore}
	Let $\bar u\in\R\times\R^{\rvert L\rvert}_+$ be a revealed utility allocation in $Core(\mathcal{B})$. Then, for every set of bidders $K\subseteq L$ we have $\sum_{l\in K}\bar{u}_l({\mathcal{B}})\leq J({\mathcal{B}}_{-K} ) - J({\mathcal{B}})$.
\end{lemma}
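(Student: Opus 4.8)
\textbf{Proof plan for~\cref{lem:implicore}.} The statement follows directly from the two defining relations of the revealed core $Core(\mathcal{B})=Core(\mathcal{B}_L)$ in~\eqref{eq:coref}, applied with a single well-chosen coalition. First I would dispose of the trivial case $K=\emptyset$: here $\sum_{l\in K}\bar u_l=0$ and $J(\mathcal{B}_{-K})-J(\mathcal{B})=J(\mathcal{B})-J(\mathcal{B})=0$, so the inequality holds with equality. Hence assume $K\neq\emptyset$, so that $S:=L\setminus K$ is a proper subset of $L$, and the corresponding revealed-core inequality constraint is available.

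Next, I would invoke the equality constraint of~\eqref{eq:coref} with $R=L$, namely $\bar u_0+\sum_{l\in L}\bar u_l=-J(\mathcal{B})$, to solve for the operator's revealed utility, $\bar u_0=-J(\mathcal{B})-\sum_{l\in L}\bar u_l$. I would then substitute this expression into the inequality constraint of~\eqref{eq:coref} evaluated at $S=L\setminus K$, namely $\bar u_0+\sum_{l\in L\setminus K}\bar u_l\geq -J(\mathcal{B}_{L\setminus K})=-J(\mathcal{B}_{-K})$. This yields
\begin{equation*}
-J(\mathcal{B})-\sum_{l\in L}\bar u_l+\sum_{l\in L\setminus K}\bar u_l\ \geq\ -J(\mathcal{B}_{-K}),
\end{equation*}
and after cancelling the common terms in the two sums, $-J(\mathcal{B})-\sum_{l\in K}\bar u_l\geq -J(\mathcal{B}_{-K})$. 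Rearranging gives exactly $\sum_{l\in K}\bar u_l(\mathcal{B})\leq J(\mathcal{B}_{-K})-J(\mathcal{B})$, as claimed.

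There is essentially no obstacle here; the only point worth a remark is that the argument uses no finiteness assumption on $J$ --- if the market is infeasible once $K$ is removed then $J(\mathcal{B}_{-K})=\infty$ and the bound is vacuous, while the case where $J(\mathcal{B})=\infty$ cannot arise together with a nonempty $Core(\mathcal{B})$. The lemma is then used, together with the dominant-strategy incentive-compatibility and the definition of the VCG utilities, to bound the collective revealed utility of a deviating coalition in the proof of~\cref{thm:no_collusions_bocs}.
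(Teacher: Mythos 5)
Your proof is correct and follows essentially the same route as the paper's: both use the equality constraint of the revealed core to eliminate $\bar u_0$, substitute into the inequality constraint for $S=L\setminus K$, and rearrange. The explicit handling of $K=\emptyset$ and the remark on infeasibility are harmless additions but not needed.
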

\begin{proof} Since $\bar u_0=-J(\mathcal{B})-\sum_{l\in L}\bar u_l$, we reorganize the inequality constraint as follows $-J(\mathcal{B})-\sum_{l\in L\setminus S}\bar u_l\geq-J(\mathcal{B}_S),\, \forall S \subseteq L$.
	Setting $K=L\setminus S$ yields the statement. 
\end{proof}

Next, we prove that core-selecting mechanisms are coalition-proof.
\begin{proof}
	(i) Let $K $ be a set of colluders who would lose the auction when bidding their true values $\mathcal C_l=c_l$, when bidding ${\mathcal{B}}_l= b_l $ they become winners, that is, they are all allocated a positive quantity. We define $\hat{\mathcal{C}} = C_{K}\cup\mathcal{B}_{-K}$ and ${\mathcal{B}}={\mathcal{B}}_{ K}\cup \mathcal{B}_{-K}$ where $\mathcal{B}_{-K}=\{b_l\}_{l\in L\setminus K}$ denotes the bidding profile of the remaining bidders. As a remark, the profile $\mathcal{B}_{-K}$ is not necessarily a truthful or a strategic profile. We denote the \text{utility} that each bidder $l$ receives as $u_l$. The total utility that colluders receive under ${\mathcal{B}}$ is
	\begin{align*} 
	\sum_{l\in K}{u}_l({\mathcal{B}}) &= \sum_{l\in K}\bar{u}_l({\mathcal{B}})+b_l(x_l^*({\mathcal{B}}))-{c}_l(x_l^*({\mathcal{B}}))\\
	&\hspace{-0.05cm}\leq J({\mathcal{B}}_{-K} ) - J({\mathcal{B}})+\sum_{l\in K} b_l(x_l^*({\mathcal{B}}))-{c}_l(x_l^*({\mathcal{B}}))\\ 
	&\hspace{-0.05cm} = J(\hat{\mathcal{C}}) - \Big[\sum_{l\in L} b_l(x_l^*({\mathcal{B}})) + d(x^*({\mathcal{B}}),y^*({\mathcal{B}})) -\sum_{l\in K} b_l(x_l^*({\mathcal{B}}))+\sum_{l\in K}{c}_l(x_l^*({\mathcal{B}}))\Big] \\
	&\hspace{-0.05cm}  = J(\hat{\mathcal{C}}) - \Big[\sum_{l\in K} {c}_l(x_l^*({\mathcal{B}})) + \sum_{l\in L\setminus K} {b}_l(x_l^*({\mathcal{B}})) +d(x^*({\mathcal{B}}),y^*({\mathcal{B}}))\Big] \\ &\hspace{-0.05cm}\leq 0 = \sum_{l\in K}{u}_l({\hat{\mathcal{C}}}).
	\end{align*}
	The first equality follows from the core-selecting payment rule, where $\bar{u}({\mathcal{B}})$ is the revealed utility allocation. {The first inequality follows from~\cref{lem:implicore}.} The second equality comes from the fact that the set $K$ originally was a group of losers, {so $J(\mathcal{B}_{-K})=J(\hat{\mathcal{C}})$}. 
	After substituting these terms, we see that  the term in brackets is the cost $\bar{J}$ of $\hat{\mathcal{C}}$ but evaluated at a feasible suboptimal allocation $(x^*(\mathcal{B}),y^*(\mathcal{B}))$. Then, $\sum_{l\in K}{u}_l({\mathcal{B}}) $ is upper bounded by $0$ which is the total utility that the colluders would receive bidding truthfully. 
	
	As a result, there is at least one colluder not facing any benefit in collusion. Moreover, they cannot increase their collective utility by a joint deviation. Hence, collusion is not profitable for the losing bidders.
	
	(ii) Define $\mathcal{C}=\mathcal{C}_{-l}\cup\mathcal{C}_{l}$, where $\mathcal{C}_{-l}$ denotes the bidding profile of the remaining bidders. The profile $\mathcal{C}_{-l}$ is not necessarily a truthful profile. Shill bids of bidder~$l$ are given by $\mathcal{B}_{S}=\{b_k\}_{k\in S}$.~We define a merged bid $\tilde{\mathcal{B}}_l$ as $$\tilde{b}_l(x_l)=\min_{x_k\in\R_+^t,\,\forall k}\, \sum_{k\in S}b_k(x_k)\ \mathrm{s.t. }\sum_{k\in S}x_k=x_l.$$ We then define ${\tilde{\mathcal{B}}}=\mathcal{C}_{-l}\cup{\tilde{\mathcal{B}}}_{l}$.
	The total utility obtained from shill bidding under ${\mathcal{B}}=\mathcal{C}_{-l}\cup{\mathcal{B}}_{S}$, $\sum_{k\in S}{u}_k({\mathcal{B}})$, is given by
	\begin{align*} &= \sum_{k\in S}[\bar{u}_k({\mathcal{B}})+b_k(x_k^*({\mathcal{B}}))]-{c}_l(\sum_{k\in S}x_k^*({\mathcal{B}}))\\
	&\leq [J({\mathcal{B}}_{-S} ) - J({\mathcal{B}})]+\sum_{k\in S} b_k(x_k^*({\mathcal{B}}))-{c}_l(\sum_{k\in S}x_k^*({\mathcal{B}}))\\ 
	&= [J({\mathcal{C}}_{-l} ) - J({\tilde{\mathcal{B}}})]+\tilde{b}_l(\sum_{k\in S}x_k^*({\mathcal{B}}))-{c}_l(\sum_{k\in S}x_k^*({\mathcal{B}}))\\ 
	&=  {u}_l^{\text{VCG}}({\tilde{\mathcal{B}}}) \\ 
	&\leq  {u}_l^{\text{VCG}}({\mathcal{C}})
	\end{align*}
	The first inequality follows from the core-selecting payment rule and~\cref{lem:implicore}. The second equality holds since we have $J({\tilde{\mathcal{B}}})=J({{\mathcal{B}}})$. This follows from the definition of the merged bid and the following implication. Since the goods of the same type are fungible for the central operator, the functions $g$, $h$ and~$d$ in fact depend on $\sum_{l\in L} x_l$. 
	The third equality follows from the definition of the VCG utility. The second inequality is the dominant-strategy incentive-compatibility of the VCG mechanism. Therefore, the total utility that $l$ receives from shill bidding is upper bounded by the utility that $l$ would receive by bidding truthfully as a single bidder in a VCG auction. Making use of shills, hence, is not profitable with respect to the VCG utilities. 
\end{proof}
\subsection{Proof of~\cref{lem:effce}}\label{compeff}
		We show that if an allocation~$x^*$ and price functions $\{\psi_l\}_{l\in L}$ constitute a competitive equilibrium, then $x^*$ is the optimal solution to the optimization problem defined by $J(\CC)$ in \eqref{eq:main_model}. By the first condition in \eqref{eq:firstcondition}, we have
		\begin{equation}\label{eq:cond1trick}
		\sum_{l\in L}\psi_l(x_l^*)-c_l(x_l^*)\geq \sum_{l\in L} \psi_l(x_l)-c_l(x_l),\,\forall x\in \X,
		\end{equation}
		where $\X=\prod_{l\in L} \X_l$. Define $(x^*,y^*)$ as the optimal solution pair to the optimization problem in \eqref{eq:secondcondition}, and $(x^*(\CC),y^*(\CC))$  as the optimal solution pair to the problem defined by $J(\CC)$. 
		Then, we obtain the following
		\begin{align*}
		\sum\limits_{l\in L} c_l(x_l^*) + d(x^*,y^*)& \leq  \sum_{l\in L}\psi_l(x_l^*) - \psi_l(x_l^*(\CC)) + c_l(x_l^*(\CC))+ d(x^*,y^*)  \\
		& =  J(\CC) + \sum_{l\in L}\psi_l(x_l^*) + d(x^*,y^*)\\&\hspace{1cm}- \big(\sum_{l\in L} \psi_l(x_l^*(\CC)) + d(x^*(\CC),y^*(\CC))\big) \leq J(\CC).
		\end{align*}
		The first inequality follows from~\eqref{eq:cond1trick}. We then obtain the equality by adding and subtracting the term $d(x^*(\CC),y^*(\CC))$, and substituting $J(\CC)=\sum_{l\in L}c_l(x_l^*(\CC))+d(x^*(\CC),y^*(\CC))$. The second inequality follows since $(x^*(\CC),y^*(\CC))$ is a suboptimal feasible solution to the optimization problem in \eqref{eq:secondcondition}. Note that $(x^*,y^*)$ is originally feasible for the problem defined by $J(\CC)$, otherwise, it would not satisfy the constraints in both \eqref{eq:firstcondition} and~\eqref{eq:secondcondition}. Hence, from the last inequality, it follows that $(x^*,y^*)$ is an optimal solution pair to the optimization problem defined by $J(\CC)$. Since previously we assumed that the optimal solution is unique according to some tie-breaking rule, we obtain the desired result $x^*=x^*(\CC)$.\QEDA
		
\subsection{Proof of~\cref{thm:lmpiscs}}\label{app:lmpiscs}
	We generalize the arguments from~\cite{bikhchandani2002package,parkes2002indirect} that characterize competitive equilibria of multi-item auction problems with a simple supply-demand balancing equality constraint to continuous goods, second stage cost, and general nonlinear constraints. Our proof (in the ``$\impliedby$" direction) is different from these works since it does not rely on linear-programming duality and weak duality arguments that are available to the multi-item setting with simple constraints.
	Invoking~\cref{lem:effce}, for both directions of the proof we restrict our attention to the optimal allocation under truthful bids.
	
	($\impliedby$) For the market \eqref{eq:main_model}, we first prove that if a mechanism ensures the existence of a competitive equilibrium, then it is a core-selecting mechanism. To do so, we show that the revealed utilities lie in the revealed core under any bid profile. 
	
	Given the bid profile $\BB=\{b_l\}_{l\in L}$, allocation $x^*(\BB)$ and price functions $\{\psi_l\}_{l\in L}$, we have;
	\begin{align}
	x^*_l(\BB) &\in \argmax_{x_l\in \hat \X_l}\, \psi_l(x_l)-b_l(x_l), \forall l\in L, \label{eq:feasubtot1} \\
	x^*(\BB) &\in \argmin_{x\in \R_+^{t|L|}} \left\{\min_{\substack{y:\, h(x,y)= 0\\ g(x,y)\leq 0 }}\, \sum\limits_{l\in L} \psi_l(x_l) + d(x,y) \right\}. \label{eq:feasubtot}
	\end{align}
	These conditions must hold because the mechanism does not know the true costs $\CC$, and it has to ensure the existence of an efficient competitive equilibrium in case the true costs are given by $\BB=\{b_l\}_{l\in L}$. Notice that the price functions depend on the bid profile, $\psi_l(x_l)=\psi_l(x_l;\BB)$, and we drop this dependence for the sake of simplicity in notation. Using $\psi_l(x^*_l(\BB))=p_l(\BB)$, the revealed utilities are defined by  $\bar u_l(\BB)=\psi_l(x^*_l(\BB))-b_l(x^*_l(\BB))$, and $\bar u_0(\BB)=-\sum_{l\in L}\psi_l(x^*_l(\BB))- d(x^*(\BB),y^*(\BB))$, where $y^*(\BB)$ is the optimal solution to \eqref{eq:feasubtot}.
	
	Next, we show that $\bar u(\BB)\in Core(\BB)$, which would conclude that the mechanism is core-selecting. First, observe that the individual-rationality constraints are satisfied; $\bar u_l(\BB)\geq 0$ since $0\in \hat \X_l,\, \psi_l(0)=0$, and $b_l(0)=0$. Second, we have the equality constraints in~\eqref{eq:coref}: $\sum_{l\in L} \bar u_l(\BB)+\bar u_0(\BB) = -\sum_{l\in L}b_l(x^*_l(\BB)) - d(x^*(\BB),y^*(\BB))=-J(\BB)$. Third, we show that the inequality constraints in~\eqref{eq:coref} hold, that is, \begin{equation}\label{eq:stcorein}
	-\bar u_0(\BB)\leq J(\BB_S)+\sum\limits_{l\in S}\bar u_l(\BB),\, \forall S \subset L.
	\end{equation}
	Define the following restricted problem for any subset $S\subset L$,
	\begin{equation}\label{eq:restrict}
	\begin{split}
	\mu_0(S) =  -&\min_{\substack{x\in\hat \X,\,y\\ x_{-S}=0}}\ \sum\limits_{l\in S} \psi_l(x_l) + d(x,y)\\
	&\ \ \mathrm{  s.t. } \ \, h(x,y)= 0,\, g(x,y)\leq 0,
	\end{split}
	\end{equation}
	where the optimal allocation is denoted by $x^*(S)$. Because $x^*(S)$ is a feasible solution to \eqref{eq:feasubtot}, we obtain $-\bar u_0(\BB) \leq -\mu_0(S)$. 
	We then let  $x^*(\BB_S)$ be the optimal solution to $J(\BB_S)$. This solution is a suboptimal feasible solution to \eqref{eq:restrict} and hence $-\mu_0(S)\leq\sum_{l\in S} \psi_l(x^*_l(\BB_S))+d(x^*(\BB_S),y^*(\BB_S)).$ 
	Then, it suffices to show that 
	\begin{equation}\label{eq:deseq}
	\sum_{l\in S} \psi_l(x^*_l(\BB_S))+d(x^*(\BB_S),y^*(\BB_S))\leq J(\BB_S)+\sum_{l\in S}\bar u_l(\BB),\,\forall S\subset L,\end{equation}since this would imply the inequality in \eqref{eq:stcorein}. 
	Via the condition in~\eqref{eq:feasubtot1}, we have 
	\begin{equation*}
	\psi_l(x^*_l(\BB_S))-b_l(x^*_l(\BB_S))\leq \psi_l(x^*_l(\BB))-b_l(x^*_l(\BB))=\bar u_l(\BB).
	\end{equation*}
	Summing the inequality above over all $l\in S$, we obtain 
	$$\sum_{l\in S} \psi_l(x^*_l(\BB_S))-b_l(x^*_l(\BB_S))\leq\sum_{l\in S}\bar u_l(\BB).$$
	By adding $d(x^*(\BB_S),y^*(\BB_S))$ on both sides and reorganizing, the above inequality yields \eqref{eq:deseq}.	
	Consequently, we have $-\bar u_0(\BB)\leq J(\BB_S)+\sum_{l\in S}\bar u_l(\BB)$, for any $S\subset L$. Hence, the revealed utilities lie in $Core(\BB)$. 
	
	($\implies$) We now prove that any core-selecting mechanism ensures the existence of an efficient competitive equilibrium. In other words, we show that, for the truthful optimal allocation $x^*(\CC)\in \X$, there exists a set of price functions $\psi_l:\R_+^t\rightarrow\R$,~$\forall l$ such that the conditions in~\cref{def:compeq} are satisfied, and $\psi_l(x^*_l(\CC))=p_l(\CC)$.  Consider the utility allocation $u\in Core(\CC)$ of a core-selecting mechanism under truthful bidding. Define the price functions $\{\psi_l\}_{l\in L}$ as follows
	\begin{equation*}
	\psi_l(x) = 
	\begin{cases}
	0 & x=0,\\
	c_l(x)+u_l& x\in \X_l \setminus \{0\},\\
	\infty& \text{otherwise}. \\
	\end{cases}
	\end{equation*}
	For these price functions, the first condition in~\eqref{eq:firstcondition} holds by construction. To show that, we study two possible cases. If $x^*_l(\CC)$ is nonzero, then $x^*_l(\CC)\in \argmax_{x_l\in \X_l}\, \psi_l(x_l)-c(x_l)$ since $u_l\geq 0$. On the other hand, if $x^*_l(\CC)=0$, then $u_l=0$ and $0\in \argmax_{x_l\in \X_l}\, 0$.

	We prove the second condition in \eqref{eq:secondcondition} by contradiction. Assume there exists $x,\,y$ such that $h(x,y)=0$, $g(x,y)\leq0$,~and{\medmuskip=2mu\thinmuskip=2mu\thickmuskip=2mu\begin{equation}\label{eq:contrastate}
		\sum_{l\in L} \psi_l(x_l^*(\CC)) + d(x^*(\CC),y^*(\CC))>\sum_{l\in L} \psi_l(x_l) +d(x,y).
		\end{equation}}Define a subset $S\subseteq L$ such that $x_l=0$ for all $l\in L\setminus S$. Observe that if $x_l>0,\,\forall l$, then $S=L$. 
	Then, the core implies
\begin{equation}\label{eq:coretr}\begin{split}
		-u_0&=J(\CC)+\sum\limits_{l\in L} u_l\\&= \sum_{l\in L} \psi_l(x_l^*(\CC)) + d(x^*(\CC),y^*(\CC)).
		\end{split}
		\end{equation}The second equality follows from the definition of the price functions. Using the second equality in~\eqref{eq:coretr}, the inequality in~\eqref{eq:contrastate} is equivalent to
	\begin{equation*}
	J(\CC)+\sum\limits_{l\in L} u_l>\sum_{l\in S} \psi_l(x_l) + d(x,y).
	\end{equation*}
	By the definition of $\psi_l$, we obtain 
	\begin{equation*}
	J(\CC)+\sum\limits_{l\in L\setminus S} u_l>\sum_{l\in S} c_l(x_l) + d(x,y) \geq J(\CC_S),
	\end{equation*}
	where the last inequality is from the feasible suboptimality of $(x,y)$ for the problem defined by $J(\CC_S)$. This is because $h(x,y)=0$, $g(x,y)\leq0$, and $x\in \X$ (otherwise, the price functions are unbounded). 
	
	Using the first equality in \eqref{eq:coretr}, we have $u_0+\sum_{l\in S}u_l<-J(\CC_S)$ for $S\subseteq L$. This contradicts $u\in Core(\CC)$, and thus the mechanism cannot be a core-selecting mechanism. We conclude that $x^*(C)$ is optimal for the central operator given the price functions. As a result, $x^*(\CC)$ and $\{\psi_l\}_{l\in L}$ constitute a competitive equilibrium. Finally, from $p_l(\CC)=c_l(x^*_l(\CC))+u_l$, we obtain $\psi_l(x^*_l(\CC))=p_l(\CC)$ for each bidder $l\in L$. This concludes the proof.\QEDA
	
	\subsection{Comparison of the LMP and the VCG payments}\label{app:lmpvcg}
\begin{proposition}\label{prop:upb}
	Given any bid profile $\BB$, for every bidder~$l$ the payment under the LMP mechanism is upper bounded by the payment under the VCG mechanism.
\end{proposition}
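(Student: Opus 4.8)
The plan is to assemble three ingredients that are already available: the LMP mechanism is core-selecting (the corollary of \cref{thm:lmpiscs}), every point of the revealed core assigns bidder $l$ at most its VCG revealed utility, and the winning allocation $x^*(\BB)$ — hence the term $b_l(x^*_l(\BB))$ common to every core-selecting payment and to the VCG payment — is determined by the economic dispatch \eqref{eq:main_model} and does not depend on which payment rule is used. Chaining these gives the bound.

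First I would record the per-point core bound. Fix any $\bar u \in Core(\BB)$ and any bidder $l$. The equality constraint in \eqref{eq:coref} gives $\bar u_0 + \sum_{k\in L}\bar u_k = -J(\BB)$, and the inequality constraint for $S = L\setminus\{l\}$ gives $\bar u_0 + \sum_{k\in L\setminus\{l\}}\bar u_k \ge -J(\BB_{L\setminus\{l\}})$. Subtracting the second from the first yields
\[
\bar u_l \;\le\; J(\BB_{L\setminus\{l\}}) - J(\BB) \;=\; J(\BB_{-l}) - J(\BB) \;=\; \bar u^{\text{VCG}}_l(\BB),
\]
for every bidder $l\in L$. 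In particular, since the LMP mechanism is core-selecting, its revealed utilities satisfy $\bar u^{\text{LMP}}(\BB)\in Core(\BB)$, so $\bar u^{\text{LMP}}_l(\BB)\le \bar u^{\text{VCG}}_l(\BB)$ for all $l$. (Note that this covers losing bidders automatically: for $l\notin W$ one has $J(\BB_{-l})=J(\BB)$, so both sides are zero and $\bar u^{\text{LMP}}_l(\BB)=0=\bar u^{\text{VCG}}_l(\BB)$.)

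Then I would conclude: by the definition of a core-selecting mechanism the LMP payment is $p^{\text{LMP}}_l(\BB) = b_l(x^*_l(\BB)) + \bar u^{\text{LMP}}_l(\BB)$, while by the VCG payment rule with the Clarke pivot $p^{\text{VCG}}_l(\BB) = b_l(x^*_l(\BB)) + \bar u^{\text{VCG}}_l(\BB)$, and the allocation $x^*_l(\BB)$ is the same in both expressions. Hence
\[
p^{\text{LMP}}_l(\BB) = b_l(x^*_l(\BB)) + \bar u^{\text{LMP}}_l(\BB) \le b_l(x^*_l(\BB)) + \bar u^{\text{VCG}}_l(\BB) = p^{\text{VCG}}_l(\BB),
\]
which is the claim. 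There is essentially no obstacle here; the only point requiring a little care is to use the bound $\bar u_l\le \bar u^{\text{VCG}}_l$ \emph{at every core point} rather than merely the fact that the VCG outcome is the componentwise-maximal candidate core point (which need not itself lie in the core), and that bound is exactly what the two core constraints above deliver, with the membership $\bar u^{\text{LMP}}(\BB)\in Core(\BB)$ supplied by \cref{thm:lmpiscs}.
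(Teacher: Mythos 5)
Your proof is correct and follows essentially the same route as the paper: both arguments reduce the claim to showing that every revealed utility in $Core(\BB)$ is bounded above by the VCG revealed utility $J(\BB_{-l})-J(\BB)$, using \cref{thm:lmpiscs} to place the LMP utilities in the revealed core and the shared allocation $x^*(\BB)$ to transfer the utility bound to the payments. The only cosmetic difference is that you obtain the per-point bound directly by subtracting the core inequality for $S=L\setminus\{l\}$ from the efficiency equality, whereas the paper phrases the identical manipulation as a contradiction via the blocking coalition $L_{-l}$.
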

\begin{proof}
	A similar result was proven in~\cite{xu2017efficient}, using convex analysis in the context of DC-OPF markets. We provide a simple and more general proof applicable to any setting where the LMP mechanism ensures the existence of a competitive equilibrium.
	The proof is an application of \cite[Theorem~5]{ausubel2002ascending} that compares the utilities of iterative ascending auctions with that of the VCG mechanism. Since the LMP utilities lie in the revealed core, it suffices to show that the VCG revealed utilities are greater than any other revealed utility in the revealed core. 
	
	Observe that the VCG payment of bidder $l$ is given by $p_l^{\text{VCG}}=b_l(x^*_l(\BB))+(J(\BB_{-l})-J(\BB)),$  whereas the revealed VCG utility is $\bar u_l^{\text{VCG}}=J(\BB_{-l})-J(\BB)$.
	Assume there exists a revealed utility allocation $\tilde{u}\in Core(\BB)$ where $\tilde{u}_l> \bar u_l^{\text{VCG}}$. These utilities are blocked by the coalition $L_{-l}$; $-J(\BB_{-l})> -J(\BB)-\tilde{u}_l=\tilde{u}_0+\sum_{k\in L_{-l}}\tilde{u}_k,$ where the equality follows from the definition of the revealed core. This contradicts that $\tilde{u}\in Core(\BB)$. We conclude that the revealed core utilities are upper bounded by the ones under the VCG mechanism. Moreover, it is not hard to show that this upper bound is tight for some core-selecting mechanism since $\bar u_l^{\text{VCG}}=\max\left\{\bar u_l\ \rvert\ \bar u\in Core(\BB)\right\}$~\cite[Theorem~5]{ausubel2002ascending}. We obtain the proposition.
\end{proof}

Proof of~\cref{prop:upb} simplifies the arguments in~\cite{xu2017efficient} greatly. 
\subsection{Proof of~\cref{lem:lie}}\label{app:lemlie}

	Assume there exists a bid $\hat b_l:\hat \X_l \rightarrow \mathbb R_+$ such that 
	\begin{equation*}
	\begin{split}
	\big[\hat{b}_l(x_l^*(\hat{{\BB}}_l\cup \BB_{-l})) -& c_l(x_l^*(\hat{{\BB}}_l\cup \BB_{-l})) + \bar u_l(\hat{{\BB}}_l\cup \BB_{-l})\big]\\ &- u_l({\CC_l\cup \BB_{-l}}) > \big[J(\BB_{-l})-J(\CC_l\cup \BB_{-l})\big]-u_l({\CC_l\cup \BB_{-l}}),
	\end{split}
	\end{equation*}
	where $\hat{\BB}_l=\{\hat b_l\}$, $x^*(\hat{{\BB}}_l\cup \BB_{-l})$ is the optimal allocation of the market problem corresponding to $J(\hat{{\BB}}_l\cup \BB_{-l})$, and $\bar u(\BB)$ denote the revealed utilities for any bid profile~$\BB$ under core-selecting mechanism.
	The inequality above is equivalent to the existence of a deviation that is more profitable than the given upper bound. 
	Notice that the following holds, $$\bar u_l({\hat{{\BB}}_l\cup \BB_{-l}})\leq \bar u^{\text{VCG}}_l(\hat{{\BB}}_l\cup \BB_{-l})=J(\BB_{-l})-J(\hat{{\BB}}_l\cup \BB_{-l}),$$ since $\bar u_l^{\text{VCG}}(\hat{{\BB}}_l\cup \BB_{-l})=\max\{\bar u_l\ \rvert\ \bar u\in Core(\hat{{\BB}}_l\cup \BB_{-l})\}$~\cite[Theorem~5]{ausubel2002ascending}.
	Combining the inequalities above, we have
		\begin{equation*}
	\begin{split}
	\hat{b}_l(x_l^*(\hat{{\BB}}_l\cup \BB_{-l})) - c_l(x_l^*(\hat{{\BB}}_l\cup \BB_{-l}))& + J(\BB_{-l})-J(\hat{{\BB}}_l\cup \BB_{-l}) \\
	&> J(\BB_{-l})-J(\CC_l\cup \BB_{-l}).
	\end{split}
	\end{equation*}Observe that the first term is the VCG utility under a non-truthful bid, whereas the second term is the VCG utility under a truthful bid. The strict inequality above contradicts the DSIC property of the VCG mechanism. We conclude that $ u^{\text{VCG}}_l(\CC_l\cup \BB_{-l})-u_l({\CC_l\cup \BB_{-l}})$ is an upper bound on the additional profit obtained from a unilateral deviation. 
	
	We remark that, for any bidder, there exists a bid achieving the profit in the upper bound. Define $\epsilon$ to be a small positive number, which is required to avoid ties. It is straightforward to show that the following bid achieves exactly the truthful VCG utility, and hence the exact upper bound in $ u^{\text{VCG}}_l(\CC_l\cup \BB_{-l})- u_l({\CC_l\cup \BB_{-l}})$:
	\begin{equation*}
	\hat{b}_l(x) = 	\begin{cases}
	0 & x=0,\\
	c_l(x)+ u^{\text{VCG}}_l(\CC_l\cup \BB_{-l})-\epsilon& x\in \X_l \setminus 0,\\
	\infty& \text{otherwise}. \\ \end{cases}
	\end{equation*}
	Since the market solves for the optimal allocation in \eqref{eq:main_model}, if bidder~$l$ is allocated a positive quantity while bidding truthfully, then this bidder is also allocated a positive quantity while bidding $\hat{b}_l$. Moreover, under any core-selecting mechanism, we have $\epsilon\geq\bar{u}_l^{\text{VCG}}(\hat{\BB}_l\cup \BB_{-l})\geq\bar{u}_l(\hat{\BB}_l\cup \BB_{-l})\geq0$ where $\hat{\BB}_l=\{\hat b_l\}$. As a result, by bidding $\hat b_l$, the bidder~$l$ obtains its truthful VCG utility, ${u}_l(\hat{\BB}_l\cup \BB_{-l})= u^{\text{VCG}}_l(\CC_l\cup \BB_{-l})-\epsilon+\bar{u}_l(\hat{\BB}_l\cup \BB_{-l})$, and achieves exactly the upper bound in the lemma as $\epsilon\rightarrow 0^+$. \QEDA
	
\subsection{Computing the MPCS payments}\label{appcomps}

The MPCS payments are computationally difficult for auctions involving many bidders, because one needs to solve the auction problem \eqref{eq:main_model} for $2^{|L|}$ different subsets to define the revealed core constraints in~\eqref{eq:coref}. Furthermore, problem \eqref{eq:main_model} can be NP-hard in some cases. Invoking~\cref{lem:lemma_core}, we can reduce the number of constraints to $2^{\rvert W\rvert}$, which grows exponentially only in the number of winners. We call this approach with the reduced number of core constraints \textit{the direct approach}. Note that we have $\bar u \in Core(\mathcal B)$, if and only if $\bar u_0= -J(\mathcal B)-\sum_{l\in L}\bar u_l$ and $\bar u_{-0}$ lies in
\begin{equation}\label{eq:simp_core}
\mathbb{K}(\mathcal B)=\Big\{\bar u_{-0}\in\R^{\rvert L\rvert}_+\,|\,\sum_{l \in K } \bar u_l\leq J(\mathcal{B}_{-K}) - J(\mathcal{B}), \forall K\subseteq W\Big\}.
\end{equation}
Unfortunately, this approach may still not be a computationally feasible one since there can be many winners to~\eqref{eq:main_model}. Therefore, we study iterative approaches where core constraints are generated on demand.

As suggested in \cite{day2007fair,bunz2015faster}, the state of the art approach for calculating a revealed core outcome is to use constraint generation and in practice, this algorithm requires the generation of only several revealed core constraints. The method was initially used in the '50s in order to solve linear programs that have too many constraints \cite{dantzig1954solution}. Instead of directly solving the large problem, one solves a primary problem with only a subset of its original constraints. From this primary solution, one can formulate a secondary problem that adds another constraint to the first step. The algorithm iterates between these two problems and converges to the optimal solution of the large problem. 
Next, we formulate the core constraint generation algorithm for the electricity market problem in \eqref{eq:main_model}, similar to the way it was previously utilized in forward multi-item auctions in~\cite{day2007fair}.

Bidders' revealed utilities at the first step of our algorithm are given by $\bar u^0_{-0}=\bar u^{\text{VCG}}_{-0}(\mathcal  B)$, where $\bar u^{\text{VCG}}_{-0}$ is the revealed utilities of the VCG mechanism. Convergence of the algorithm does not require this choice. However, this choice is intuitive since $\bar u^{\text{VCG}}_{-0}(\mathcal  B)$ is the solution to \eqref{eq:mpcs} if the VCG outcome lies in the core. 

As an iterative method, at each step $k$, we find the blocking coalition that has the largest violation for the revealed utility allocation~$\bar u^k_{-0}$.\footnote{We remark that the coalition $C$ is a blocking coalition if $J(\mathcal{B}_C)+\sum_{l\in C}\bar u_l<-\bar u_0$, see the constraints in \eqref{eq:coref}. } If a blocking coalition exists, the coalition with the largest violation for the revealed utility allocation~$\bar{u}^k_{-0}$ is given by \begin{equation}\label{eq:simpleform}C^k=\argmin_{C\subseteq L}\ J(\mathcal{B}_C)+\sum_{l \in C } \bar u_l^k.\end{equation} This follows from the constraints in $Core(\mathcal B)$ in \eqref{eq:coref}. 

Let $W$ be the set of winners. It is straightforward to see that the existence of this~blocking~coalition~$C^k$ is equivalent to the violation of the constraint $\sum_{l \in W\setminus C^k} \bar u_l^k\leq J(\mathcal  B_{L\setminus\{W\setminus C^k\}})- J(\mathcal  B)$ from the set in \eqref{eq:simp_core}. This follows from the equivalent characterization of the core in~\cref{lem:lemma_core}. We call the set $W\setminus C^k$ the blocked winners, and the problem \eqref{eq:simpleform} generates the revealed core constraint on the revealed utilities of this set of bidders.


Next, we reformulate the problem in \eqref{eq:simpleform} using inflated bids. Given a revealed utility allocation $\bar u_{-0}^k\in\R^{\rvert L\rvert }_+$, we define the central operator's objective at step~$k$ as 
\begin{equation*}
\bar{J}^k(x,y;\mathcal B)=\sum\limits_{l\in L} b_l^{\bar u_l^k}(x_l) + d(x,y),
\end{equation*}
where the inflated bid $b_l^{\bar u_l^k}(x_l)\in\R_+$ is given by
\begin{equation*}
b_l^{\bar u_l^k}(x_l) = 
\begin{cases}
0 & x_l=0,\\
b_l(x_l)+\bar u_l^k& \text{otherwise}. 
\end{cases}
\end{equation*}
Note that even if the bid $b_l$ is a convex bid curve,  the inflated bid $b_l^{\bar u_l^k}$ is not convex if $\bar u_l^k\neq 0$, because of the discontinuity at $0$. As a remark, the discontinuity at $0$ can be described by binary variables.
Then, the optimization problem \eqref{eq:simpleform} for finding the blocking coalition with the largest violation is reformulated as follows
\begin{equation}\label{eq:main_algorithm_ccg}
\begin{split}
z(\bar u^k_{-0})&=\min_{x\in\hat\X,y}\ \bar{J}^k(x,y;\mathcal B)\ \mathrm{s.t.}\ h(x,y)= 0,\, g(x,y)\leq 0,\\
x^*(\bar  u^k_{-0})&=\argmin_{x\in \hat \X} \left\{\min_{\substack{y:\, h(x,y)= 0\\ g(x,y)\leq 0 }}\,\bar{J}^k(x,y;\mathcal B) \right\}, \\
C^k&=\{l\in L\,\rvert\,  x_l^*(\bar  u^k_{-0})\neq0\},\\
\end{split}
\end{equation}
where $C^k$ is the blocking coalition with the largest violation for the revealed utility allocation~$\bar{u}^k_{-0}$. Notice that the problem \eqref{eq:main_algorithm_ccg} essentially solves the auction where winners' bids are inflated by their revealed utilities from the earlier step. 

After obtaining the blocking coalition and the corresponding central operator cost $z(\bar{u}^k_{-0})$, we solve the following two problems to obtain another candidate for a MPCS revealed utility allocation. First, we take the subset of revealed utilities that are maximizing the total revealed utility of bidders as follows
\begin{equation}	\label{eq:step_of_ccg}
\begin{split} 
\nu^k =&  \max_{\substack{\bar u_{-0}\in\R^{\rvert L\rvert }_+} } \bm 1^\top \bar u_{-0} \\
&\ \ \ \ \mathrm{s.t. } \sum_{l \in W\setminus{C^t} } \bar u_l\leq J(\mathcal B_{L\setminus\{W\setminus{C^t}\}})- J(\mathcal B),\, \forall t\leq k,\\
& \quad\quad\quad\ \bar u_{-0}\leq\bar u_{-0}^{\text{VCG}}(\mathcal  B),\\
\end{split}
\end{equation}
where $J(\mathcal B_{L\setminus\{W\setminus{C^t}\}}) =z(\bar u_{-0}^t)-\sum_{l \in W\cap C^t } \bar u_l^t $. We highlight that this term, $J(\mathcal B_{L\setminus\{W\setminus{C^t}\}})$, does not require any further solution to the reverse auction problem. As a result, it is straightforward to see that the problem \eqref{eq:step_of_ccg} essentially solves the problem \eqref{eq:mpcs} with only a subset of its original constraints and without the tie-breaker and the central operator's revealed utility.

We then implement the tie breaker in the objective of \eqref{eq:mpcs}. We have $\bar{u}_{-0}^{k+1}$, as the solution to the following quadratic program:
\begin{equation}	\label{eq:next_step_of_ccg}
\begin{split} 
\bar{u}_{-0}^{k+1} =&  \argmin_{\substack{\bar u_{-0}\in\R^{\rvert L\rvert }_+ } }\ \norm{\bar u_{-0}-\bar u_{-0}^{\text{VCG}}(\mathcal  B)}_2^2=\sum_{l\in L}(u_l-\bar u_l^{\text{VCG}}(\BB))^2\\
&\  \ \ \mathrm{s.t. }\ \ \sum_{l \in W\setminus{C^t} } \bar u_l\leq J(\mathcal  B_{L\setminus\{W\setminus{C^t}\}})- J(\mathcal  B),\, \forall t\leq k,\\
&\ \ \ \quad\quad\quad \bar u_{-0}\leq\bar u_{-0}^{\text{VCG}}(\mathcal  B),\  \bm 1^\top \bar u_{-0} = \nu^k. \\
\end{split}
\end{equation}

The entire process for determining core-selecting payments is summarized in~\cref{alg:ccg_algorithm}.
\begin{algorithm}[H]
	\caption{Core Constraint Generation (CCG) Algorithm}
	\begin{algorithmic}[1]\label{alg:ccg_algorithm}
		\renewcommand{\algorithmicrequire}{\textbf{Initialize:} Solve the optimization problem \eqref{eq:main_model}.}
		\renewcommand{\algorithmicensure}{  \textbf{Iteration step $k$:}                           }
		\REQUIRE 
		Calculate the VCG utilities and set  $\bar u^0=\bar u^{\text{VCG}}(\mathcal  B)$ and $k=0$. \\
		\STATE Solve the optimization problem \eqref{eq:main_algorithm_ccg}. 
		\WHILE {$z(\bar u^k_{-0})< J(\mathcal B)+\sum_{l \in W} \bar u_l^k$ (or check also $W\setminus{C^k}\neq\emptyset$)}
		\STATE Obtain $\bar u_{-0}^{k+1}$ by solving \eqref{eq:step_of_ccg} and then \eqref{eq:next_step_of_ccg}.
		\STATE Update $k = k+1$.
		\STATE Solve the optimization problem \eqref{eq:main_algorithm_ccg}. 
		\ENDWHILE 
		\RETURN $\bar u_{-0}^{k}$.
	\end{algorithmic}
\end{algorithm}

\cref{alg:ccg_algorithm} converges to the solution of \eqref{eq:mpcs} for the MPCS mechanism. The proof of convergence is straightforward, since the stopping criterion certifies that no revealed core constraints are violated by the solution $\bar u^k$ at that iteration, that is, $$J(\mathcal{B}_S)+\sum_{l\in S}\bar u_l^k\geq-\bar u_0^k=J(\mathcal B)+\sum_{l \in W} \bar u_l^k,$$ for all $S\subset L$, see \cite[Theorem~4.2]{day2007fair}. We note that this algorithm may still require the generation of all possible revealed core constraints, which is equivalent to solving the problem \eqref{eq:main_algorithm_ccg} $2^{\rvert W \rvert}$ times. In practice, even when there are many winners, the algorithm requires the generation of only several core constraints.

\subsection{Proof of~\cref{thm:mpcsmax}}\label{app:mpcsmax}

		For the proof, we ignore the second term in the objective of the MPCS mechanism since it is required only for tie-breaking purposes. Theorem considers deviations from the case in which the bidders are revealing their true costs $\CC=\{c_l\}_{l\in L}$. We can reformulate problem \eqref{eq:mpcs} as follows,
	\begin{align}
	\bar u^{\text{MPCS}}(\CC) &= \argmax_{u \in \text{Core}(\CC)}\ \sum_{l\in L} u_l - \sum_{l\in L} u_l^{\text{VCG}}(\CC) \nonumber \\
	&=\argmin_{u \in \text{Core}(\CC)}\ \sum_{l\in L} (u_l^{\text{VCG}}(\CC)-u_l)\ . \label{eq:mpcs3}
	\end{align}
	First equality follows since  $\sum_{l\in L} u_l^{\text{VCG}}(\CC)$ is a constant. Note that $u_l(\CC) = \bar u_l(\CC)$ under truthful bidding.
	Invoking~\cref{lem:lie}, the optimization problem in \eqref{eq:mpcs3} implies that the MPCS mechanism minimizes the sum of additional profits of each bidder by a unilateral deviation from a truthful bid profile $\CC$ among all other core-selecting mechanisms. \QEDA

\subsection{Characterizing deficit under the VCG mechanism}\label{app:deficvcg}
Notice that the Myerson-Satterthwaite impossibility theorem does not rule out the possibility of having realizations of the VCG mechanism that are budget-balanced. 
Next, we extend this observation by showing that in a DC-OPF market with no line limits the VCG mechanism is at most strongly budget-balanced.
\begin{proposition}\label{prop:bbvcg}
	The VCG mechanism never yields a positive utility for the operator in a DC-OPF market with no line limits.
\end{proposition}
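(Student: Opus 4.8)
The plan is to show that in a DC-OPF market with no line limits the VCG mechanism gives $u_0(\CC)=-\sum_{l\in L}p_l^{\text{VCG}}(\CC)-d(x^*(\CC),y^*(\CC))\le 0$, and since budget-balance requires $u_0\ge 0$, the only possibility left is $u_0=0$ (strong budget-balance). Equivalently, using the Clarke-pivot payments $p_l^{\text{VCG}}=c_l(x_l^*(\CC))+(J(\CC_{-l})-J(\CC))$, I want to establish
\begin{equation*}
-\sum_{l\in L}\bigl(J(\CC_{-l})-J(\CC)\bigr)-\Bigl(\sum_{l\in L}c_l(x_l^*(\CC))+d(x^*(\CC),y^*(\CC))\Bigr)\le 0,
\end{equation*}
i.e.\ $\sum_{l\in L}\bar u_l^{\text{VCG}}(\CC)=\sum_{l\in L}(J(\CC_{-l})-J(\CC))\ge -u_0^{\text{PAB}}=J(\CC)$ is \emph{not} the issue; rather I need the reverse-type bound $\sum_l\bar u_l^{\text{VCG}}\ge J(\CC)$ would give surplus, so I must show the opposite, $\sum_l (J(\CC_{-l})-J(\CC)) \ge J(\CC)$ fails — wait, let me restate cleanly: budget-balance $u_0\ge0$ is equivalent to $\sum_{l\in L}\bar u_l^{\text{VCG}}(\CC)\le -J(\CC)+J(\CC)=0$? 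No. The correct reduction: $u_0=-\sum_l p_l-d = -\sum_l c_l(x_l^*)-\sum_l(J(\CC_{-l})-J(\CC))-d = -J(\CC)-\sum_l(J(\CC_{-l})-J(\CC))$. So $u_0\ge0$ iff $\sum_{l\in L}(J(\CC_{-l})-J(\CC))\le -J(\CC)$, i.e.\ iff $\sum_{l\in L}J(\CC_{-l})\le (|L|-1)J(\CC)$. Since $d\equiv 0$ is not assumed here, I keep $d$; the claim to prove is therefore this inequality, and I expect in fact it holds with a specific structural reason in the no-line-limit DC case.

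The key structural fact I would exploit is that with no line limits, the DC power balance couples all nodes through a single aggregate constraint: the total injection must equal total demand, and the dispatch problem reduces (after eliminating phase angles, which are unconstrained) to a single-commodity procurement of the form \eqref{eq:simpler_clearing_model} with $M$ the total demand. First I would make this reduction precise: with no line limits the only binding network constraint is $\sum_l x_l = M$ (plus generation bounds), so $J(\CC_S)$ is exactly the cost of the cheapest way to procure $M$ from the bidders in $S$ (using whatever second-stage recourse $d$ allows, which I'd fold into an augmented per-unit cost or treat as a convex term common to all coalitions). Then I would invoke the elementary economics of such a single-good market: removing bidder $l$ can only increase the clearing cost, and the increments $J(\CC_{-l})-J(\CC)$ are precisely the "marginal contributions"; because the marginal-cost (merit-order) curve is monotone, the sum of these individual increments overshoots the cost that would be saved by the grand coalition relative to a smaller one. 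Concretely, I would order winners by marginal price and show that for each winner $w$, $J(\CC_{-w})-J(\CC)$ equals (at most) $w$'s accepted quantity times the marginal price \emph{of the replacement supply}, and summing these and comparing against $J(\CC)=\sum_w (\text{quantity}\times\text{own marginal cost})$ yields $\sum_l(J(\CC_{-l})-J(\CC)) > -J(\CC)$ strictly unless everything is degenerate, giving $u_0<0$, hence never positive.

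I would structure the argument as: (1) reduce the no-line-limit DC-OPF to the single-type model \eqref{eq:simpler_clearing_model} (absorbing $d$), citing that phase angles drop out when no flow constraints are present; (2) invoke \cref{lem:increasing_quantities}/\cref{proofcor} to control how accepted quantities grow when a bidder is removed; (3) write $J(\CC_{-w})-J(\CC)$ as the cost of the extra supply recruited to cover $w$'s former quantity $x_w^*$, bounded above by $x_w^*$ times the highest marginal price used, and bounded below by the merit-order argument; (4) sum over winners and use $\sum_w x_w^*=M$ together with the fact that every recruited unit costs at least as much as the most expensive accepted unit, to conclude $\sum_l J(\CC_{-l}) \ge |L|\,J(\CC) - $ (a surplus-of-infra-marginal term) and rearrange to $\sum_l(J(\CC_{-l})-J(\CC)) + J(\CC)\ge 0$, i.e.\ $u_0\le 0$; (5) combine with $u_0\ge 0$ (which would be needed for budget-balance) to get the stated conclusion that a positive operator utility never occurs. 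The main obstacle I anticipate is step (3)--(4): making the bound on $J(\CC_{-w})-J(\CC)$ tight enough, simultaneously over all winners $w$, so that the summed lower bounds actually beat $-J(\CC)$ rather than merely matching it up to slack — in particular handling the case of multiple winners sharing the marginal price, and cleanly accommodating the second-stage cost $d$ (which is not $0$ in this proposition, unlike in Examples \ref{ex:first_simple_example}--\ref{ex:second_simple_example}) without breaking the merit-order monotonicity I rely on.
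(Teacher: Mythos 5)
There is a genuine gap, and it sits exactly where you flagged it. Your target inequality is right ($u_0\le 0$ is equivalent to $\sum_{l}J(\CC_{-l})\ge(|L|-1)J(\CC)$), but the merit-order argument in steps (3)--(4) that is supposed to deliver it is never actually carried out, and the route you chose makes it unnecessary in one case and unavailable in the other. If the market really reduced to the one-sided procurement model \eqref{eq:simpler_clearing_model} with inelastic demand $M$ and nonnegative costs, the claim would be immediate with no merit-order machinery at all: $J(\CC_{-l})\ge J(\CC)\ge 0$ gives $u_0=-J(\CC)-\sum_l\bigl(J(\CC_{-l})-J(\CC)\bigr)\le-J(\CC)\le 0$. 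The proposition's actual content lives in the two-sided exchange setting (this is why it sits in the exchange appendix, why \cref{lem:templem} is stated for $\sum_l x_l=0$, and why the follow-up example has a demand-side bidder with cost $3x_2$ on $[-1,0]$): there $J(\CC)$ can be negative, $-J(\CC)$ can be positive, and the trivial bound fails. Your reduction to a fixed demand $M$ with a monotone merit-order curve, and your appeal to \cref{lem:increasing_quantities} (which is proved for discrete, equally spaced, marginally increasing one-sided bids), do not cover that case.

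The idea you are missing is the paper's replication argument. \cref{lem:templem} shows via the KKT conditions that the exchange exhibits constant returns to scale under replication, $J(q\,\CC)=qJ(\CC)$; then, assuming $u_0>0$ and rearranging gives $|W|J(\CC)>J(\CC_W)+\sum_{l\in W}J(\CC_{-l})$, and since each winner (and each loser) appears exactly $|W|$ times across the optimal allocations of those $|W|+1$ subproblems, their concatenation is feasible for $J(|W|\,\CC)$, yielding $J(\CC_W)+\sum_{l\in W}J(\CC_{-l})\ge J(|W|\,\CC)=|W|J(\CC)$ --- a contradiction. This sidesteps entirely the simultaneous tightness problem you anticipated in bounding each $J(\CC_{-w})-J(\CC)$ individually. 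Separately, your step (5) rests on a misreading: the VCG mechanism is \emph{not} budget-balanced here (the paper's two-bidder example has $u_0=-\$2$), so there is no ``$u_0\ge 0$'' to combine with; the proposition claims only $u_0\le 0$, not $u_0=0$.
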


We need the following lemma for our proof.
\begin{lemma}\label{lem:templem}
	Assume $J$ is modeled by
	\begin{equation}\label{eq:simplej}
		J(\CC)=\min_{x_l\in \X_l,\,\forall l}\, \sum_{l\in L} c_l(x_l)\ \mathrm{s.t.}\, \sum_{l\in L } x_l=0,
	\end{equation}
	where $c_l,\,\forall l$ are convex increasing and $\X_l,\,\forall l$ are polytopic constraints. Define $q\,\CC$ as the bid profile consisting of $\CC$ replicated $q$ times. Then, for any $q\in\N_+$, we have $qJ(\CC) = J(q\,\CC)$.
\end{lemma}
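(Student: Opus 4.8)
The plan is to prove the two inequalities $qJ(\CC) \ge J(q\,\CC)$ and $qJ(\CC) \le J(q\,\CC)$ separately, exploiting the block-separable structure of the problem~\eqref{eq:simplej}. Throughout, index the $q$ copies of the bidder set by $r = 1,\ldots,q$, so a bidder in copy $r$ is a pair $(l,r)$ with cost $c_l$ and constraint set $\X_l$, and the single balance constraint becomes $\sum_{r=1}^q \sum_{l\in L} x_{l,r} = 0$.

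First I would show $J(q\,\CC) \le q J(\CC)$. Let $x^* = (x_l^*)_{l\in L}$ be an optimal solution of the problem defining $J(\CC)$. Setting $x_{l,r} := x_l^*$ for every copy $r$ gives a feasible point for $J(q\,\CC)$: each $x_{l,r}=x_l^*\in\X_l$, and the aggregate balance holds because $\sum_r\sum_l x_{l,r} = q\sum_l x_l^* = 0$. Its objective value is $\sum_r \sum_l c_l(x_l^*) = q\,J(\CC)$, which yields the bound. This direction uses only feasibility and needs no convexity.

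The reverse inequality $J(q\,\CC) \ge q J(\CC)$ is where convexity enters, and I expect it to be the main obstacle. Let $(x_{l,r}^*)$ be optimal for $J(q\,\CC)$, and define the averaged allocation $\bar x_l := \frac1q \sum_{r=1}^q x_{l,r}^*$. Convexity of each $\X_l$ gives $\bar x_l \in \X_l$, and the averaged allocation satisfies the single-copy balance: $\sum_{l} \bar x_l = \frac1q\sum_{l}\sum_r x_{l,r}^* = 0$, so $(\bar x_l)_{l\in L}$ is feasible for $J(\CC)$. By convexity of each $c_l$ and Jensen's inequality, $c_l(\bar x_l) \le \frac1q\sum_r c_l(x_{l,r}^*)$, hence
\begin{equation*}
J(\CC) \le \sum_{l\in L} c_l(\bar x_l) \le \frac1q \sum_{l\in L}\sum_{r=1}^q c_l(x_{l,r}^*) = \frac1q\, J(q\,\CC),
\end{equation*}
which is exactly $q J(\CC) \le J(q\,\CC)$. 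Combining the two inequalities gives $qJ(\CC) = J(q\,\CC)$. The only subtlety to check is that the infima are attained (so that ``optimal solution'' makes sense) — this follows from $\X_l$ being polytopes (compactness would suffice, and the $c_l$ are continuous being convex and finite) together with feasibility, which holds since $0\in\X_l$ by the standing assumptions, so $x=0$ is always feasible for~\eqref{eq:simplej}. I would remark that the increasing assumption on $c_l$ is not actually needed for this lemma — only convexity of the $c_l$ and of the $\X_l$ — but since it is part of the ambient setting, I would simply carry it along.
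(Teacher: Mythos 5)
Your proof is correct, but it takes a genuinely different route from the paper's. The paper argues via optimality conditions: it notes that \eqref{eq:simplej} satisfies constraint qualification, so the KKT conditions are necessary and sufficient, and then verifies that concatenating the optimal primal solution of $J(\CC)$ $q$ times, together with the same Lagrange multiplier for the single balance constraint, satisfies the KKT system of $J(q\,\CC)$ — whence the replicated point is optimal and the objective values match. You instead prove the two inequalities directly: $J(q\,\CC)\leq qJ(\CC)$ by replication (pure feasibility), and $J(q\,\CC)\geq qJ(\CC)$ by averaging an optimal solution of the replicated problem over the $q$ copies and invoking convexity of the $\X_l$ and Jensen's inequality for the $c_l$. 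Your argument is more elementary and arguably more robust: it needs no constraint qualification, no differentiability or subdifferential calculus to make sense of KKT, and it isolates precisely which hypotheses do the work (convexity of the sets and costs; monotonicity is indeed superfluous here, as you observe). The paper's KKT route is terser and has the side benefit of explicitly exhibiting an optimizer and a dual certificate for $J(q\,\CC)$, though nothing downstream (the budget-balance result in \cref{prop:bbvcg}) uses more than the equality of values. Your attention to attainment of the minima (compactness of the polytopes, continuity of finite convex functions, feasibility of $x=0$) is a point the paper glosses over.
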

\begin{proof}
	We prove this by showing that the optimal solution to $J(q\,\CC)$ is given by concatenating the decision variables in the optimal solution of $J(\CC)$ $q$ times. 
	Since the problem defined by $J$ satisfies constraint qualification conditions, the KKT conditions are both necessary and sufficient for the optimality of a solution~\cite{bertsekas1999nonlinear}. We see that the KKT conditions of $J(q\,\CC)$ are satisfied by a primal solution which is the concatenation of the optimal solution of $J(\CC)$ in~\eqref{eq:simplej}~$q$~times, and a dual solution which is the Lagrange multiplier of the equality constraint of $J(\CC)$ in~\eqref{eq:simplej}. This concludes that $qJ(\CC) = J(q\,\CC)$. 
\end{proof}	

Note that this market model includes the DC-OPF markets where the network graph is connected and there are no line limits. 
We are now ready to prove~\cref{prop:bbvcg}.
\begin{proof}(Proof of~\cref{prop:bbvcg})	
We prove this by contradiction. Under the VCG mechanism, assume the operator has a positive utility:
	\begin{equation*}
	0< u_0(\CC) = - J(\CC) - \sum_{l\in W} (J(\CC_{-l})-J(\CC)), 
	\end{equation*}
	where $W\subseteq L$ is the set of bidders whose allocations are not zero. By reorganizing, we obtain
	\begin{equation*}
	|W| J(\CC)>J(\CC_W)+\sum_{l\in W} J(\CC_{-l})\geq J(|W|\,\CC),
	\end{equation*}
	where $|W|\,\CC$ is a bid profile consisting of $\CC$ replicated $|W|$ times.
	The last inequality follows because the allocation of the problems on the left is a suboptimal feasible allocation to the problem on the right. Note further that $|W|J(\CC) = J(|W|\,\CC)$. This follows from~\cref{lem:templem}. We obtain a contradiction $J(|W|\,\CC)< J(|W|\,\CC)$. Hence, the VCG mechanism achieves at most strong budget-balance, and it never yields a positive utility for the operator in this case. 
\end{proof}

As a remark, the LMP mechanism is strongly budget-balanced for the same market under any bid profile~\cite[Fact 5]{wu1996folk}. 
Under the VCG mechanism, it is straightforward to create a two-bidder example of the market in \eqref{eq:simplej} that yields a negative utility for the operator. 

\begin{example}
	Suppose there are two bidders in the market~\eqref{eq:simplej}. The cost function of bidder 1 is given by $c_1(x_1)=x_1,\, 0\leq x_1\leq1$. The cost function of bidder 2 is given by $c_2(x_2)=3x_2,\, -1\leq x_1\leq0$. Under the VCG mechanism, bidder $1$ receives the payment $\$3$ since $p_1^{\text{VCG}} = 1 + (0-(-2))= \$3$. Whereas bidder 2 makes the payment $\$1$ since $p_2^{\text{VCG}} = -3 + (0-(-2))= -\$1$. Hence, the central operator has a $\$2$ deficit.
\end{example}
\subsection{Coalition-proofness of the VCG mechanism in an exchange market}\label{app:coalpvcg}
We remind the reader the definition of supermodularity from~\cref{def:supms}.
\begin{definition}\label{def:sup}
	A function $J$ is \textit{supermodular} if  
	$J(\BB_S)-J(\BB_{S\setminus{l}})\leq J(\BB_R)-J(\BB_{R\setminus{l}})$
	for all coalitions $S\subseteq R\subseteq L$ and for each bidder $l\in S$ under any bid profile. 
\end{definition}

We show that the VCG mechanism cannot be guaranteed to achieve coalition-proofness in an exchange market by showing that the market objective function~$J$ can never be supermodular.
For the following, we assume that if there is at most one bidder in the exchange, no exchange occurs, that is, $J(\BB_l)=0,$ for all $l\in L$ and $J(\emptyset)=0$.
\begin{proposition}\label{pro:cpvcg} 
	In an exchange market, if the function $J$ in~\eqref{eq:main_model} is supermodular, then no exchange occurs, $J(\BB_S)=0,\, \forall S\subseteq L$.
\end{proposition}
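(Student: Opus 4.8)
The plan is to combine three facts about the set function $S\mapsto J(\BB_S)$: the assumed normalization $J(\BB_\emptyset)=0$ and $J(\BB_{\{l\}})=0$ for all $l\in L$; monotonicity (nonincreasingness) of $J$, which forces $J(\BB_S)\le 0$; and supermodularity, which will force $J(\BB_S)\ge 0$. Pinching between these two bounds gives $J(\BB_S)=0$ for every $S\subseteq L$.

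First I would check that $J$ remains nonincreasing after relaxing the supply domains from $\R_+^t$ to $\R^t$. This rests only on $b_l(0)=0$: for $R\subseteq S$, any point $(x,y)$ feasible for the problem defining $J(\BB_R)$ (which has $x_l=0$ for $l\notin R$) is feasible for the problem defining $J(\BB_S)$ and attains the same value there, since $\sum_{l\in S}b_l(x_l)=\sum_{l\in R}b_l(x_l)$ on such a point. Hence $J(\BB_S)\le J(\BB_R)$, and in particular $J(\BB_S)\le J(\BB_\emptyset)=0$ for all $S$.

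Next I would prove $J(\BB_S)=0$ by induction on $\lvert S\rvert$, the cases $\lvert S\rvert\le 1$ being exactly the normalization assumption. For the inductive step, fix $S$ with $\lvert S\rvert\ge 2$, pick any $l\in S$, and note $J(\BB_{S\setminus\{l\}})=0$ by the induction hypothesis. Applying the supermodularity inequality of \cref{def:sup} to the chain $\{l\}\subseteq S$ with element $l$ yields
\[
J(\BB_{\{l\}})-J(\BB_{\emptyset})\le J(\BB_S)-J(\BB_{S\setminus\{l\}}),
\]
whose left side is $0$; since $J(\BB_{S\setminus\{l\}})=0$ this gives $J(\BB_S)\ge 0$. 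Together with $J(\BB_S)\le 0$ from the monotonicity step, we get $J(\BB_S)=0$, closing the induction.

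There is no genuine obstacle here; the proof is a short sandwich argument. The only points requiring care are to state clearly that $J(\BB_\emptyset)=0$ and $J(\BB_{\{l\}})=0$ are modeling conventions for a ``trivial'' exchange (an exchange with at most one participant cannot produce a surplus), and to verify—as above—that monotonicity of $J$ is not lost when the domains are enlarged to $\R^t$, since it uses only $b_l(0)=0$ and nothing about the sign of the traded quantities. The economic reading of the result is then that supermodularity of $J$, which would be needed to make VCG coalition-proof in the exchange, is incompatible with any nontrivial exchange taking place.
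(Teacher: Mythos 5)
Your proof is correct and rests on the same two ingredients as the paper's: monotonicity of $J$ (from $b_l(0)=0$) for the upper bound $J(\BB_S)\le 0$, and the supermodularity inequality applied against the singleton/empty chain for the lower bound $J(\BB_S)\ge 0$. Your packaging as an induction on $\lvert S\rvert$ is in fact slightly cleaner than the paper's, which derives $J(\BB_{\{j,k\}})\le 0$ through an extra three-bidder supermodularity step before "repeating the steps" for larger sets, but the route is essentially identical.
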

\begin{proof}
	Assume that the function $J$ is supermodular. Then, for bidders $i,j,k\in L$, we have
	\begin{equation*}
	J(\BB_{\{i,j\}})-J(\BB_j)\leq J(\BB_{\{i,j,k\}})-J(\BB_{\{j,k\}}).
	\end{equation*}
	This concludes that $J(\BB_{\{j,k\}})\leq 0$, since $J(\BB_j)=0$ and $J(\BB_{\{i,j\}})\geq J(\BB_{\{i,j,k\}})$. Supermodularity further implies that
	\begin{equation*}
	J(\BB_k)-J(\emptyset)\leq J(\BB_{\{j,k\}})-J(\BB_j).
	\end{equation*}
	This yields $J(\BB_{\{j,k\}})\geq 0$ since $J(\emptyset)=0$. Hence, we obtain $J(\BB_{\{j,k\}})= 0$. Moreover, this holds for any bidder pair $j,k$. By using this result, we repeat the steps above to obtain $J(\BB_{\{i,j,k\}})= 0,\, \forall i,j,k$. We conclude that similar steps can be repeated until we obtain $J(\BB_S)=0,$ for every subset $S\subseteq L$.	
	Note that this holds under any bid profile $\BB$. 
\end{proof}

Considering~\cref{thm:no_collusions}, we conclude that the VCG mechanism is not guaranteed to achieve coalition-proofness. We highlight that, to the best of our knowledge, this impossibility result for exchanges is novel. An intuition behind it is that an exchange allows for different kind of manipulations than the ones in a one-sided auction. For instance, a bidder can enter the market both as a buyer and as a seller. Then, it can manipulate the outcome by changing the amount it buys and it sells, knowing that its final allocation is given by their difference.

\subsection{Proof of~\cref{thm:corewb}}\label{app:thmcore}
		Consider the revealed utility $\bar u\in Core(\BB)$ of a core-selecting mechanism. We remind the reader the proof of~\cref{lem:lemma_core}. Using $\bar u_0=-J(\BB)-\sum_{l\in L}\bar u_l$, we can derive an equivalent characterization of the inequalities in the core as follows
		$\sum_{l\in L\setminus S}\bar u_l\leq J(\BB_S)-J(\BB),\, \forall S \subseteq L.$
		Setting $K=L\setminus S$, these inequalities are equivalent to $\sum_{l\in K}\bar{u}_l({\BB})\leq J({\BB}_{-K} ) - J({\BB})$, for every set of bidders $K\subseteq L$. 
		By keeping the most binding constraints, we obtain
		\begin{equation}\label{eq:bind}
		\sum_{l\in K}\bar{u}_l\leq J({\BB}_{-K} ) - J({\BB}),\, \forall K\subseteq W,
		\end{equation}
		where $W\subseteq L$ is the set of bidders whose allocations are not zero.
		Next, we assume that the pay-as-bid mechanism is budget-balanced, $u_0(\BB)=-J(\BB)\geq 0$ under any bid profile. This assumption is satisfied in many exchange markets, for instance, combinatorial exchanges~\cite{hoffman2010practical}, DC-OPF exchanges~\cite{wu1996folk} and two-sided electricity markets~\cite{xu2017efficient}.  
		Using this assumption, the inequality in \eqref{eq:bind} implies that $\sum_{l \in W } \bar u_l\leq J(\mathcal{B}_{-W}) - J(\mathcal{B})\leq- J(\mathcal{B}),$ since $J(\mathcal{B}_{-W})\leq0$. By reorganizing, we have $- J(\mathcal{B})-\sum_{l \in L } \bar u_l\geq 0$, since bidders who receive zero allocation are not paid. Using the equality constraint of the core in \eqref{eq:coref}, we conclude that $\bar u_0=u_0\geq 0$. Hence, the central operator ends up with a nonnegative utility, and the mechanism is budget-balanced. \QEDA
\chapter{Numerical case studies}\label{sec:p1_5}

The goal of this chapter is to compare the effectiveness of the mechanisms we have discussed, based on electricity market examples. We first consider several different DC-OPF problems where the demand is inelastic.
We then consider an AC-OPF problem from \cite{bukhsh2013local}. This problem yields a nonzero duality gap, and hence the Lagrange multipliers cannot be guaranteed to have an economic rationale. As an alternative, we show that the MPCS mechanism coincides with the VCG mechanism, and hence it is both coalition-proof and DSIC. We then study the two-stage Swiss reserve procurement auction from~\cite{abbaspourtorbati2016swiss} which also fails to attain strong duality. For this example, the VCG mechanism is not core-selecting, and hence it is not coalition-proof. Instead, the MPCS mechanism yields coalition-proof outcomes. Finally, we consider a two-sided market with DC-OPF constraints to compare the budget-balance under different mechanisms. We show that we can compute the MPCS mechanism in a reasonable time. All problems are solved on a computer equipped with 32 GB RAM and a 4.0 GHz quad-core Intel i7 processor. 
\section{Four-node three-generator network model}
We consider a dispatch problem with polytopic DC power flow constraints in \cref{fig:three_node}, based on the models considered in \cite{wu1996folk}, we also kindly refer to the constraints in \eqref{mod:A-DA} in \cref{part:2}. Cost curves are quadratic polynomials. All lines have the same susceptance. In \cref{fig:three_node}, line limit from node $i$ to node $j$ is denoted by $C_{i,j}=C_{j,i}\in\R_+$. Under the VCG mechanism, payments and utilities are given in the first column of~\cref{tableNET}. Suppose via coalition bidders $1$ and~$2$ change their bids to $b_l(x)=0$ for all $x\in\R_+$. Then, bidders $1$ and $2$ are the only winners of the dispatch problem and their payments and utilities are given in the second column of \cref{tableNET}. Collusion is profitable and the total payment of the operator increases from \$$260$ to \$$280$. 
\vspace{.2cm}
\begin{figure}[ht]
	\begin{center}
		\begin{tikzpicture}[scale=0.8, every node/.style={scale=0.5}]
		\draw[-,black!80!blue,line width=.32mm] (-0.3,0) -- (-0.3,0.42);
		\draw[-,black!80!blue,line width=.32mm] (0.3,0) -- (0.3,0.42);
		\draw[-,black!80!blue,line width=.32mm] (0.3,3.58) -- (0.3,4);
		\draw[-,black!80!blue,line width=.32mm] (-0.3,3.58) -- (-0.3,4);
		\draw[-,black!80!blue,line width=.32mm] (-3.59,1.8) -- (-4,1.8);
		\draw[-,black!80!blue,line width=.32mm] (-3.59,2.2) -- (-4,2.2);
		\draw[-,black!80!blue,line width=.32mm] (3.59,1.8) -- (4,1.8);
		\draw[-,black!80!blue,line width=.32mm] (3.59,2.2) -- (4,2.2);
		
		\draw[-,black!80!blue,line width=.3mm] (-3.6,1.8) -- (-0.3,0.4) node[anchor=west]  at(-4,0.25) {\LARGE $C_{3,2}=10\text{ MW}$};
		\draw[-,black!80!blue,line width=.3mm] (-0.3,3.6) -- (-3.6,2.2) node[anchor=west]  at(-4, 3.65) {\LARGE $C_{3,1}=10\text{ MW}$};
		\draw[-,black!80!blue,line width=.3mm] (0,4) -- (0,0) node[anchor=west]  at(0.1,2) {\LARGE $C_{1,2}=10\text{ MW}$};
		\draw[-,black!80!blue,line width=.3mm] (0.3,3.6) -- (3.6,2.2) node[anchor=west]  at(1.75,3.65) {\LARGE $C_{1,4}=10\text{ MW} $};
		\draw[-,black!80!blue,line width=.3mm] (0.3,0.4) -- (3.6,1.8) node[anchor=west]  at(1.75,0.25) {\LARGE $C_{2,4}=10\text{ MW} $};
		\draw[-,line width=.85mm] (-1,0) -- (1,0) node[anchor=west]  {\LARGE $2$};
		\draw[-,line width=.85mm] (-1,4) -- (1,4) node[anchor=west]  {\LARGE $1$};
		\draw[-,line width=.85mm] (-4,3) -- (-4,1) node[anchor=west]  {\LARGE $3$};
		\draw[-,line width=.85mm] (4,3) -- (4,1) node[anchor=west]  {\LARGE $4$};	
		\draw[<-,line width=.3mm] (0,-0.05) -- (0,-1);
		\draw[<-,line width=.3mm] (0,4.05) -- (0,5);
		\draw[<-,line width=.3mm] (-4.05,2) -- (-5,2);
		\draw[->,line width=.3mm] (4,2) -- (5,2);
		\draw (0,-1.5) circle (.5cm) node {\LARGE $G_2$} node at(2.7,-1.5) {\huge $c_2(x)=.1x^2 + 12x$};
		\draw (0,5.5) circle (.5cm)node {\LARGE $G_1$} node at(2.7, 5.6) {\huge $c_1(x)=.1x^2 + 12x$};
		\draw (-5.5,2) circle (.5cm) node {\LARGE $G_3$}node at(-8,2) {\huge $c_3(x)=.1x^2 + 5x$};
		\draw (5.5,2) circle (.5cm)node {\LARGE $D$} node at(7.6, 2) {\huge $D=20\text{ MWh}$};
		\end{tikzpicture}
		\caption{Four-node three-generator network}\label{fig:three_node}
	\end{center}
\end{figure}
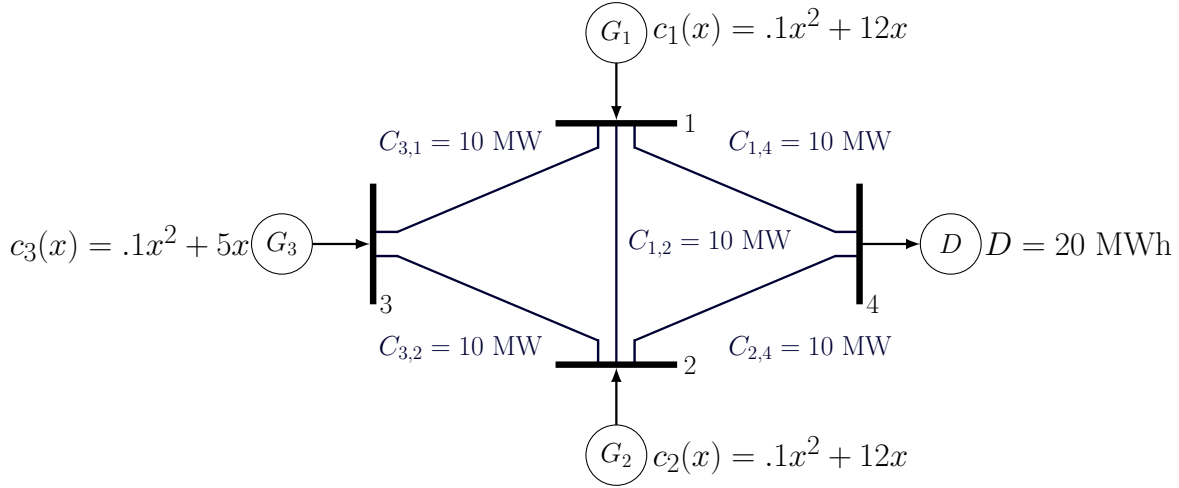

\begin{table}[h]
	\caption{The VCG outcomes for the network model (\$, MWh)}
	\label{tableNET}
	\begin{center}
		\begin{tabular}{|l||l||l||l||l|}
			\hline
			& \multicolumn{2}{l||}{Truthful Bidding} & \multicolumn{2}{l|}{Collusion ($1$, $2$)}\\
			\hline
			& payment (utility)  & $x$ & payment (utility)  & $x$  \\
			\hline
			Bidder $1$ & $0$ $(0)$ & 0 & $140$ $(10)$& 10 \\
			\hline
			Bidder $2$  & $0$  $(0)$ & 0 & $140$ $(10)$ & 10 \\
			\hline
			Bidder $3$ & $260$  $(120)$ & 20& $0$ $(0)$ & 0\\
			\hline
		\end{tabular}
	\end{center}
\end{table}

Next, we consider the MPCS mechanism. Under the MPCS mechanism, payments and utilities are given in the first column of~\cref{tableNET2}. Suppose, via coalition, bidders $1$ and $2$ change their bids to $b_l(x)=0$ for all $x\in\R^t_+$. Then, bidders $1$ and $2$ are the only winners of the dispatch problem and their payments and utilities are given in the second column of~\cref{tableNET2}. We observe that the collective utility of bidders $1$ and~$2$ reduces to $-\$120$. Hence, in this case, collusion is not profitable for bidders~$1$~and~$2$. Furthermore, after the collusion, the total payment of the operator also reduces from \$$260$ to \$$140$. This example shows that the core-selecting mechanisms can eliminate collusion of the losing bidders.\footnote{
In the example above, it can be verified that the constraint set can be reformulated as $x_1 = x_2$, $x_1 + x_2 + x_3 = 20$, $x_l \leq 20$ for all $l$.
Hence, this example is in fact an extension of the well-studied Local-Local-Global (LLG) multi-item auction model from~\cite{ausubel2010core,krishna1996simultaneous} to a DC-OPF problem involving continuous goods. In the LLG model, three bidders compete for two distinct discrete items. Bidders 1 and 2 are considered local bidders, each wanting only one of the items. Bidder 3 is global in the sense that it only wants both items at the same time. Consequently, the LLG model can have only two scenario outcomes, namely, either the local bidders are allocated one item each, or the global bidder receives both items. On the other hand, our example can be considered as having infinitely many continuous allocation outcomes that lie in between these two scenarios. } We note that it is in general not possible to provide guarantees on the collusion of the winning bidders. Specifically, core-selecting mechanisms are known to not exhibit revenue monotonicity~\cite{lamy2010core}.

\begin{table}[h]
	\caption{The MPCS outcomes for the network model (\$, MWh)}
	\label{tableNET2}
	\begin{center}
		\begin{tabular}{|l||l||l||l||l|}
			\hline
			& \multicolumn{2}{l||}{Truthful Bidding} & \multicolumn{2}{l|}{Collusion ($1$, $2$)}\\
			\hline
			& payment (utility)  & $x$ & payment (utility)  & $x$  \\
			\hline
			Bidder $1$ & $0$ $(0)$ & 0 & $70$ $(-60)$& 10 \\
			\hline
			Bidder $2$  & $0$  $(0)$ & 0 & $70$ $(-60)$ & 10 \\
			\hline
			Bidder $3$ & $260$  $(120)$ & 20& $0$ $(0)$ & 0\\
			\hline
		\end{tabular}
	\end{center}
\end{table}
\section{IEEE test systems with DC power flow models}
The following simulations are based on the IEEE test systems with polytopic DC power flow constraints again adopting the models considered in \cite{wu1996folk}.
\subsection{14-bus, 30-bus and 118-bus test systems}

We consider the IEEE $14$-bus \cite{christie2000power}, $30$-bus \cite{alsac1974optimal, ferrero1997transaction} and $118$-bus test systems \cite{christie2000power}. We assume all bidders are truthful and the true cost curves are convex quadratic polynomials, provided in the references. In practice, truthfulness may only hold under the VCG mechanism since it is dominant-strategy incentive-compatible.\footnote{Our previous work in~\cite{karaca2020regret} developed no-regret learning algorithms that can take advantage of the partially observed data in electricity markets under several different payment rules including the pay-as-bid, the core-selecting, and the VCG mechanisms. Such algorithms are known to converge to a \emph{coarse-correlated equilibrium}. This equilibrium concept generalizes Nash equilibrium to the case where all players are endowed with a probability distribution over the state of the game, see~\citep{cesa2006prediction}. This work is not included in this thesis because bidding algorithms will not be treated. We kindly refer to~\cite{karaca2020regret}.} The corresponding total payments of the mechanisms are shown in \cref{tablenet1}. All the mechanisms lead to the same winner allocation as expected. 
\begin{table}[ht]
	\caption{Total payments of the IEEE test systems}
	\label{tablenet1}
	\begin{center}
		\begin{tabular}{|l||l||l||l|}
			\hline
			Mechanism & $14$-bus & $30$-bus & $118$-bus \\
			\hline
			Pay-as-bid & \$$7642.6$  & \$$565.2$  & \$$125947.8$  \\
			\hline
			LMP & \$$10105.1$  &  \$$716.9$  & \$$167055.8$  \\  
			\hline
			MPCS& \$$10513.4$ & \$$746.4$  & \$$169300.4$ \\  
			\hline
			VCG & \$$10513.4$  & \$$746.4$  & \$$169300.4$  \\
			\hline
		\end{tabular}
	\end{center}
\end{table}

For all three test systems, we observe that the VCG mechanism has a slightly larger total payment than the LMP mechanism. Moreover, as we expect, the VCG payment of every bidder is larger than its LMP payment. 

Another observation is that the VCG outcomes are in the core for all systems. Next, we provide an explanation for each system. $14$-bus and $118$-bus systems do not have any line limits, hence, they have the form of \eqref{eq:simpler_clearing_model}. Invoking~\cref{thm:conditions_on_bids}, we conclude that supermodularity condition holds. Despite the fact that $30$-bus system has line limits, the VCG outcome is in the core. This result can be explained in two ways. First, none of the line limit constraints are tight. Second, we observe that removing two bidders can yield to an infeasible problem, similar to~\cref{lem:removal_of_two}. These test systems are specialized instances and they do not necessarily conclude that the VCG mechanism is coalition-proof for the DC power flow models. We examine this shortcoming of the VCG mechanism in our next simulation.

Computation times are provided only for the $118$-bus case, because the other problems are trivially small. For this electricity market, the direct solution approach is not computationally feasible, because there are $19$ winners out of $54$ bidders and the optimal cost calculation takes $451$~milliseconds (with GUROBI 7.5~\cite{gurobi} called through MATLAB via YALMIP~\cite{lofberg2005yalmip}). This approach would require $66$ hours. Computation times for the VCG mechanism and the CCG algorithm are $24.8$ and $31.4$ seconds respectively. After the VCG mechanism, the CCG algorithm converges only in a single iteration. This iteration takes 6.6 seconds, because it involves binary variables whereas the optimal cost calculation for the market model does not. 

\subsection{Effect of line limits}

We consider the IEEE $14$-bus test system, with a line limit on lines exiting node $1$, connecting node~$1$ to nodes $2$ and $5$. We set this line limit to be $10$ MW.  We again assume all bidders are truthful. The corresponding total payments of the mechanisms are shown in~\cref{tablenet2}. 

\begin{table}[ht]
	\caption{Total payments of the IEEE 14-bus test system with line limits}
	\label{tablenet2}
	\begin{center}
		\begin{tabular}{|l||l|}
			\hline
			Mechanism & 14-bus with line limits \\
			\hline
			Pay-as-bid & \$$9715.2$  \\
			\hline
			LMP & \$$10361.0$  \\  
			\hline
			MPCS& \$$11220.1$   \\  
			\hline
			VCG & \$$11432.1$  \\
			\hline
		\end{tabular}
	\end{center}
\end{table}

We observe that the VCG outcome does not lie in the core. Line limits are tight and the problem does not have the form of \eqref{eq:simpler_clearing_model}. Hence, shill bidding and collusion can be profitable for bidders. Moreover, we observe that the MPCS mechanism yields a larger total payment than the LMP mechanism. 

With this result, we also reiterate that convex bid curves are not enough to ensure that the VCG outcomes are in the core. Similar results can be obtained for $118$-bus test system by fixing $50$~MW limits on two lines, one connecting nodes~$5$ and~$6$, another connecting nodes~$9$ and~$10$.  

For the 14-bus example, computation times for the VCG mechanism and the CCG algorithm are $3.6$ and $8.2$ seconds, respectively. After the VCG mechanism, the CCG algorithm converges in $4$~iterations. 

\section{AC-OPF problem with a duality gap}

The following simulation is based on a 5-bus network model given in~\cite{bukhsh2013local}. Apart from having additional generators, the network model is the same as the model found in~\cite{bukhsh2013local}. We provide the true generator costs for active power in~\cref{table:actable}. Note that it would be straightforward to include also the costs for reactive power. For this problem, strong duality does not hold. We verified this by showing that the semidefinite programming relaxation is not tight for this polynomial optimization problem~\cite{molzahn2014moment}. Consequently, Lagrange multipliers may not be meaningful in an economic sense. We highlight that we can solve this problem to global optimality via the second level of moment relaxations (sum-of-squares hierarchy)~\cite{molzahn2014moment}. 

Payments under the pay-as-bid and the MPCS mechanisms are provided in~\cref{table:actable}. Note that the pay-as-bid would actually not lead to truthful behavior. It is provided for comparison since the LMP mechanism is not applicable. The comparison of these two core-selecting mechanisms under truthful bidding can be regarded as a result of competitive equilibrium assumptions. These assumptions imply that bidders would be willing to bid truthfully since truthful allocations maximize their utilities, see the discussion provided after~\cref{lem:effce}.  

\begin{table}[h]\caption{Generator data for 5-bus AC-OPF problem}\label{table:actable}
	\begin{center}
	{\begin{tabular}{|l||l||l||l||l||l|}
		\hline
		Gen. & Node& Cost & $x_l^*$ MW & Pay-as-bid & MPCS\\
		\hline 
		1& 1  & $.1x_1^2+4x_1$ & 246.0 & \$7038.0& \$12772.3\\
		\hline 
		2 & 5  & $.1x_2^2+1x_2$ & 98.2 & \$1061.5 & \$2435.6 \\
		\hline 
		3 & 1  & $.1x_3^2+30x_3$ & 0 & 0 & 0 \\
		\hline 
		4 & 5  & $.1x_4^2+15x_4$ & 0 & 0 & 0\\
		\hline
	\end{tabular}}
	\end{center}
\end{table}

For this example, even though the constraints are not polymatroid-type, the MPCS mechanism happens to coincide with the VCG mechanism, attaining the DSIC property. 
Moreover, we can ensure that losing bidders 3 and 4 cannot profit from collusion, and no bidder can profit from bidding with multiple identities. Here, we briefly illustrate one collusion instance for bidders 3 and 4. Suppose their bids are given by $b_3(x_3)=.1x_3^2+2x_3$ and $b_4(x_4)=.1x_4^2+.5x_4$. Then, their new allocations would be $x_3^*=128.0$ MW and $x_4^*=50.3$ MW respectively. However, these bidders would now be making losses under the MPCS mechanism since we can compute their utilities as $u_3^*=-\$306.7$ and $u_4^*=-\$147.7$. Hence, collusion would not be profitable for bidders 3 and 4.

As a remark, a rational bidder would overbid under the pay-as-bid mechanism. Suppose all the bidders increase their cost coefficients by $87.7\%$ under the pay-as-bid mechanism. Then, the total payment under the MPCS would be less than the one under the pay-as-bid since the MPCS mechanism incentivizes truthful bidding via both coalition-proofness and~DSIC.

For the problem under true costs, we can also show that there is no linear price function that would yield a competitive equilibrium. Since the bids are strictly convex, the condition in \eqref{eq:firstcondition} requires assigning bidder~1 a linear price equal to its marginal cost at $246$ MW. This price is given by $\$53.2/\text{MW}$, yielding the payment $\$13087.2$. Under this payment mechanism, utility of the bidder~1 cannot be in the core since its utility is greater than the one under the VCG mechanism. This concludes the nonexistence of a competitive equilibrium in linear prices for this problem.

This AC-OPF problem was solved in 1.85 seconds using the method in~\cite{molzahn2015sparsity} with MOSEK 9 \cite{mosek2015mosek} called through MATLAB. 

\section{Swiss reserve procurement auctions}

The following simulation is the Swiss reserve procurement auction in the 46th week of 2014 which is based on a pay-as-bid payment rule~\cite{abbaspourtorbati2016swiss}. This auction involves 21 plants bidding for secondary reserves, 25 for positive tertiary reserves and 21 for negative tertiary reserves. The bids are discrete, that is, they are given by sets of reserve size and price pairs. We remind the reader that the formulation in~\cref{sec:p1_2} can capture such bids. The objective also includes a second stage cost corresponding to the uncertain daily auctions. Moreover, the market involves complex constraints arising from nonlinear cumulative distribution functions. These~constraints imply that the deficit of reserves cannot occur with a probability higher than 0.2\%, and they include coupling between the first and the second stage decision variables. 

Since this problem does not attain strong duality, meaningful linear prices cannot be derived. 
The total payments of the pay-as-bid, MPCS, and VCG mechanisms are shown in~\cref{tablen3}. Notice that the VCG utilities do not lie in the core since otherwise the MPCS mechanism would coincide with the VCG mechanism. As a result, the MPCS mechanism is coalition-proof, but it does not attain the DSIC property. As is discussed in~\cref{lem:lie}, we can still quantify the loss of the DSIC property by the difference between the MPCS and the VCG payments.
\begin{table}[h]
	\caption{Total payments of the reserve market (million CHF) }
	\label{tablen3}
\begin{center}
		\begin{tabular}{|l||l||l|}
			\hline
Pay-as-bid & MPCS & VCG\\
			\hline
 $2.293$ & $2.437$ & $2.529$ \\  
			\hline
		\end{tabular}
		\end{center}
\end{table}

For this electricity market, the direct MPCS computation approach is not computationally feasible, because there are $28$ winners and the optimal cost calculation takes $8$ seconds (with GUROBI 7.5~\cite{gurobi} called through MATLAB via YALMIP~\cite{lofberg2005yalmip}). This approach would require $68$ years. Computation times for the VCG mechanism and the CCG  algorithm are $580.6$ and $659.2$~seconds respectively. After the VCG mechanism, the CCG algorithm converges in $4$~iterations. This shows that the MPCS mechanism can be computed in a reasonable time even when there are many~participants. 

\section{Two-sided markets with DC power flow models}

We consider the DC-OPF problem in~\cref{fig:three_node_two}. Bidders are truthful, and the true costs are quadratic polynomials. All lines have the same susceptance. Line limit from node $i$ to node $j$ is denoted again by $C_{i,j}=C_{j,i}\in\R_+$. The optimal allocation is computed as $x^*=[0.58,\, 0.58,\, 4,\, -5.16]$ MW. The pay-as-bid mechanism yields a positive budget of~$\$48.3$. The LMP mechanism also results in a positive budget since the limits $C_{3,1}$ and $C_{3,2}$ are tight at the optimal solution~\cite[Fact~5]{wu1996folk}. This balance is $\$2.8$. For this problem, the MPCS mechanism achieves strong budget-balance with~$\$0$. Finally, the VCG mechanism has a deficit of $-\$34.8$. 

By definition, under the MPCS mechanism sum of the utilities of the bidders are higher than the one under the LMP mechanism. However, we underline that under the MPCS mechanism, not every bidder receives a utility higher than its LMP utility. In this DC-OPF problem, the total utility of the supply-side reduces by $\$4.1$, whereas the utility of the demand-side increases by $\$6.9$ when we compare the MPCS outcome with the LMP outcome.  
Finally, we compute the bound in~\cref{lem:lie} for bidder~$4$. The pay-as-bid, the LMP, and the MPCS mechanisms yield the maximum additional profits  $\$48.3$, $\$21.7$, and $\$14.9$ respectively. Under the VCG mechanism, there is no room for unilateral deviation, and hence we obtain $\$0$ as the maximum additional profit.
This DC-OPF problem was solved in 0.32 seconds with GUROBI 7.5 \cite{gurobi} called through MATLAB via YALMIP~\cite{lofberg2005yalmip}. To calculate the MPCS payments, we solved the market problem under $2^4-1$ different coalitions. 
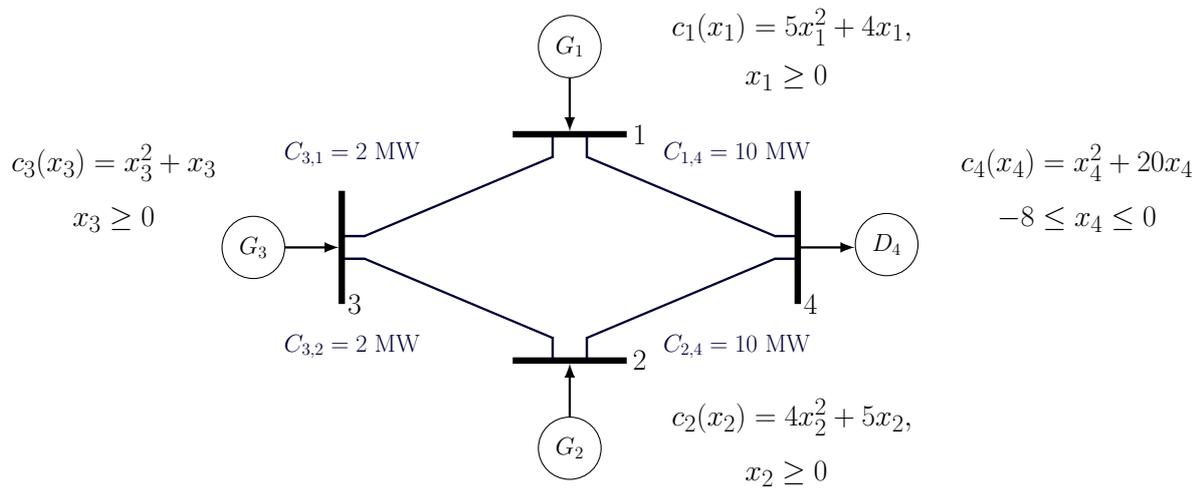
\begin{figure}[h]
	\begin{center}
		\begin{tikzpicture}[scale=0.75, every node/.style={scale=0.45}]
		\draw[-,black!80!blue,line width=.32mm] (-0.3,0) -- (-0.3,0.42);
		\draw[-,black!80!blue,line width=.32mm] (0.3,0) -- (0.3,0.42);
		\draw[-,black!80!blue,line width=.32mm] (0.3,3.58) -- (0.3,4);
		\draw[-,black!80!blue,line width=.32mm] (-0.3,3.58) -- (-0.3,4);
		\draw[-,black!80!blue,line width=.32mm] (-3.59,1.8) -- (-4,1.8);
		\draw[-,black!80!blue,line width=.32mm] (-3.59,2.2) -- (-4,2.2);
		\draw[-,black!80!blue,line width=.32mm] (3.59,1.8) -- (4,1.8);
		\draw[-,black!80!blue,line width=.32mm] (3.59,2.2) -- (4,2.2);
		\draw[-,black!80!blue,line width=.3mm] (-3.6,1.8) -- (-0.3,0.4) node[anchor=west]  at(-5.1,0.25) {\LARGE $C_{3,2}=2\text{ MW}$};
		\draw[-,black!80!blue,line width=.3mm] (-0.3,3.6) -- (-3.6,2.2) node[anchor=west]  at(-5.1, 3.65) {\LARGE $C_{3,1}=2\text{ MW}$};
		\draw[-,black!80!blue,line width=.3mm] (0.3,3.6) -- (3.6,2.2) node[anchor=west]  at(1.55,3.65) {\LARGE $C_{1,4}=10\text{ MW} $};
		\draw[-,black!80!blue,line width=.3mm] (0.3,0.4) -- (3.6,1.8) node[anchor=west]  at(1.55,0.25) {\LARGE $C_{2,4}=10\text{ MW} $};
		\draw[-,line width=.85mm] (-1,0) -- (1,0) node[anchor=west]  {\huge $2$};
		\draw[-,line width=.85mm] (-1,4) -- (1,4) node[anchor=west]  {\huge $1$};
		\draw[-,line width=.85mm] (-4,3) -- (-4,1) node[anchor=west]  {\huge $3$};
		\draw[-,line width=.85mm] (4,3) -- (4,1) node[anchor=west]  {\huge $4$};	
		\draw[<-,line width=.3mm] (0,-0.05) -- (0,-1);
		\draw[<-,line width=.3mm] (0,4.05) -- (0,5);
		\draw[<-,line width=.3mm] (-4.05,2) -- (-5,2);
		\draw[->,line width=.3mm] (4,2) -- (5,2);
		\draw (0,-1.55) circle (.55cm) node {\LARGE $G_2$} node at(3.9,-1) {\huge $c_2(x_2)= 4x_2^2 + 5x_2,$};
		\draw (0,5.55) circle (.55cm)node {\LARGE $G_1$} node at(3.9, 5.9) {\huge $c_1(x_1)=5x_1^2 + 4x_1,$};
		\draw (-5.55,2) circle (.55cm) node {\LARGE $G_3$}node at(-8,3.5) {\huge $c_3(x_3)=x_3^2 + x_3$};
		\draw (5.56,2.05) circle (.55cm)node {\LARGE $D_4$} node at(8.9, 3.5) {\huge $c_4(x_4)=x_4^2 + 20x_4$};
		\draw node at(8.9,2.5) {\huge $-8\leq x_4\leq0$};
		\draw node at(3.8,5) {\huge $x_1\geq0$};
		\draw node at(3.8,-2) {\huge $x_2\geq0$};
		\draw node at(-8,2.5) {\huge $x_3\geq0$};
		\end{tikzpicture}
		\vspace{-.3cm}
		\caption{Two-sided DC-OPF model}\label{fig:three_node_two}
	\end{center}
\end{figure}


\part{Enabling inter-area reserve exchange through stable benefit allocation mechanisms}\label{part:2}
\chapter{Introduction}\label{secintro}

{The existing electricity market architectures and the predominant models for power system balancing were designed in a time when fully controllable generators with nonnegligible marginal costs were prevalent.}
{However, as the increasing shares of variable and partly predictable renewable resources displace controllable generation the dispatch flexibility decreases, while the operational uncertainty characteristics become increasingly more complex.}
{In light of this new operational paradigm, there is an imperative need to re-evaluate the current electricity market design}, and there has been a surge of interest in proposing new market frameworks~\cite{neuhoff2005large,ahlstrom2015evolution}.

According to the European electricity market design, the bulk volume of energy trading takes place in the day-ahead market, which is typically cleared 12-36 hours before the actual delivery, based on single-valued point forecasts of the stochastic power output of renewable resources. In turn, a balancing market is cleared  close to the hour of delivery in order to compensate any deviations from the day-ahead schedule. Apart from these energy-only trading floors, a reserve capacity auction is organized, usually prior to the day-ahead market, in order to ensure that sufficient capacity is set aside for the provision of real-time balancing services. 
Following this sequential clearing and single-valued forecast approach, the current market structure attains only limited coordination between the day-ahead and the balancing stages.
Aiming to enhance this temporal coupling, recently proposed dispatch models employ scenario-based stochastic programming in order to co-optimize the day-ahead and the reserve capacity markets \cite{pritchard2010single,morales2012pricing}. {However, these approaches cannot be directly applied to any of the existing electricity markets, since they would require significant restructuring of the current market frameworks.}
\looseness=-1

In terms of geographical considerations, the European electricity market is fully coordinated only at the day-ahead stage, while reserve capacity and balancing markets are still operated on {a country/regional level}~\cite{dominguez2019reserve}. If we can remove the existing barriers for cross-border trading, geographic diversification of the uncertain renewable resources would smooth out the forecast errors and reduce the need for balancing actions. In order to improve the security of supply and the efficiency of the balancing system, the European Commission (EC) regulation has already established a detailed guideline \cite{EC1} that lays out the rules also for the integration of the balancing markets. 
In this ongoing process that is expected to be completed by 2023 \cite{ENTSOE1}, several questions remain open regarding the specific structure and final coordination arrangement among the ENTSO-E (European Network of Transmission System Operators for Electricity) member countries, since there is no binding legislation that enforces transmission system operators (TSOs) to enter such collaborations. The recent study in \cite{EC2} to investigate the benefits of different organizational models for the integration of balancing markets, shows that the 10-year net present value of coordinated balancing ranges from 1,7 to 3,8 B\euro, depending on the degree of coordination in the inter-area exchanges, see also~\cite{newbery2016benefits}.

{Nonetheless, any regional coordination arrangement in the procurement and activation of reserves depends upon the availability of inter-regional transmission capacity.
Given that the energy and reserve capacity markets are cleared separately and sequentially, a portion of this {transmission} capacity made available has to be withdrawn from day-ahead energy {market} to be allocated to reserve {exchange}. 
The exact methodology for the process of allocating inter-area transmission capacity to reserves remains to be discussed and approved by the ENTSO-E members \cite{ENTSOE2}~\cite[Art. (14)]{eur19}. Moreover, an application can be filed by even two or more TSOs.} {In coalitional game theory, such arrangements would be called coalitional deviations, since they involve only some of the members.}  {Motivated by this, in order to gain technical and economical insights about such a process, the goal of \cref{part:2} is to propose a coalitional game-theoretic approach for the design of a transmission allocation mechanism.}

\section{Related works}

\subsubsection*{Existing methodologies for deciding on transmission allocations}

{The reservation of inter-area interconnections for reserve exchange withdraws transmission resources from the day-ahead market, where the main volume of electricity is being traded. Currently, these cross-border capacities for the day-ahead market are decided by the TSOs and computed while respecting the minimum remaining available margin (minRAM) rule of $70\%$ following the requirements of the Clean Energy Package~\cite[Article 16(8a-8b)]{eur19}, see~\cite[\S 2]{elia20} for their computation. A sub-optimal reservation of transmission capacity for reserve exchange from these day-ahead quantities may lead to significant efficiency losses at the day-ahead stage. To prevent this situation, the Agency  for  the  Cooperation  of  Energy  Regulators  (ACER)~\cite{ACER} mandates to perform detailed analyses demonstrating that such reservation from day-ahead market would increase overall social welfare. Up to this date, inter-area transmission capacity is typically not removed from day-ahead energy exchange for reserves.} One notable exemption is the Skagerrak interconnector between Western Denmark and Norway, in which $15\%$ of the day-ahead cross-border transmission capacity is permanently set aside for reserve exchange~\cite{energinet}. Nevertheless, this allocation is static, while the true optimum varies dynamically depending on generation, load, and system uncertainties. As such, \cite{delikaraoglou2018optimal} developed a preemptive transmission allocation model that defines the optimal inter-area transmission capacity allocation to improve both spatial and temporal coordination {at the reserve procurement stage}.

{
The recently proposed preemptive transmission allocation model of~\cite{delikaraoglou2018optimal} focuses on the minimization of the expected system cost, assuming implicitly full coordination among all the regional operators and their market participants. This assumption is in line with the current state of the day-ahead market, which is fully integrated across Europe, or even for the balancing markets in certain regions, e.g., in the Nordic system all reserve activation offers are pooled into a common merit-order list and are available to all TSOs~\cite{bondy2014operational}. {This assumption allows us to model each of these three trading floors by one respective optimization problem, which would not be possible in case of partial coordination.} However, the initial version of the preemptive model} does not suggest an area-specific cost allocation which guarantees that all areas have sufficient benefits to accept the proposed solution.\footnote{An area (country or region) as a whole includes consumers and generators pertaining to that area and area operators (and potentially the transmission owners). Notice that the transmission capacity allocated to the reserve exchange affects the incentive structure of all these market participants. Then, the benefits are the reductions in the total cost allocated to an area from all three stages of the sequential market, whereas the cost refers to minus the social welfare, which is given by the sum of the consumers' and generators' surplus pertaining to that area and the congestion rents collected by the corresponding area operator~\cite{kristiansen2018mechanism}.} {This would be concerning for the fairness of the future integrated balancing markets.} 

Similar issues regarding fair settlements are already under investigation by Swissgrid for the simpler setting of imbalance netting~\cite{avramiotis2018investigations}. Moreover, the stakeholder document from the International Grid Control Cooperation \cite[\S 6]{IGCC}, developed by ten European operators, describes analytically a fair settlement scheme for the imbalance netting process. Hence, such incentives will be an actual issue while moving towards fully integrated European markets. To address this, we {integrate} the preemptive model of \cite{delikaraoglou2018optimal} in a {mathematical} framework that allows the application of tools from coalitional game theory in order to obtain {a stable benefit allocation, that is, sufficient benefits providing immunity to coalitional deviations ensuring that all areas are willing to coordinate via the preemptive model.}

\subsubsection*{Related works on coalitional game theory}

The {concepts} from coalitional game theory have recently been widely used in the energy community. The Shapley value has been employed in problems regarding the distribution of social welfare among TSOs participating in an imbalance netting cooperation~\cite{avramiotis2018investigations} as well as the benefit allocation in transmission network expansion~\cite{ruiz2007effective} and in cross-border interconnection development~\cite{kristiansen2018mechanism}.
Other applications of the Shapley value in the energy field include cooperation problems in the Eurasian gas supply system \cite{nagayama2014network} and the $\text{CO}_2$ emissions abatement in mainland China~\cite{he2018estimation}.
However, the Shapley value is in general not within the core, that is, the set of stable outcomes.
\cite{baeyens2013coalitional} shared the expected profits from aggregating wind power generation using the core benefit allocations, whereas a similar concept was applied for prosumer cooperation in a combined heat and electricity system in~\citep{mitridatidesign}, and for cross-border transmission expansion in Northeast Asia in~\citep{churkin2019can}.

In contrast to the case studies in the aforementioned works and \cref{part:1}, realistic instances of our problem in \cref{part:2} can exhibit an empty core. To this end, we utilize the least-core as a solution concept, since it achieves minimal stability violation, that is, minimal benefit improvements from coalitional deviations~\cite{maschler1979geometric}. To obtain a unique outcome, we propose the approximation of a fairness criterion, which {is at the discretion of the regulatory authorities to define.
We propose two variations of the benefit allocation mechanism that can be executed either at the day-ahead or the real-time stage to distribute the expected or the actual benefits (that is, when the uncertainty is revealed), respectively.} 
We illustrate how these formulations establish {a trade-off between allocating the risk of facing scenario-dependent benefit outcomes to the regulator of this organization or to the areas.}
To overcome the exhaustive enumeration of all coalitional deviations, we show that the least-core selecting allocations in this work can be computed efficiently via an iterative constraint generation algorithm. Similar algorithms were utilized to compute an outcome from the core in combinatorial auctions~\cite{day2007fair} and electricity markets~in~\cref{sec:p1_4} in \cref{part:1} of this thesis. In contrast, this part shows that this algorithm can also be extended to the least-core in a general nonconvex problem. 

\section{Summary of goals and contributions}

The contributions of \cref{part:2} are as follows. 
\begin{enumerate}
\item We formulate the coalition-dependent version of the preemptive transmission allocation model such that we can consider coalitional arrangements between only a subset of operators. This is a novel extension of the model proposed by \cite{delikaraoglou2018optimal}. 

\item We then study the coalitional game that treats the benefits as an ex-ante process with respect to the uncertainty realization {and} we provide a condition under which the core is nonempty. Under this condition, it is possible to obtain a stable outcome. In case this condition is not satisfied, we prove that the least-core, which is an outcome that attains minimal stability violation, also ensures the individual rationality property. These two results are obtained for coalitional games where the coalitional value function is given by a stochastic bilevel optimization problem.

\item We then propose the least-core selecting mechanism as a benefit allocation that achieves minimal stability violation, while enabling the approximation of an additional fairness criterion. In order to {implement} this mechanism with only a few queries to the preemptive model, we formulate a constraint generation algorithm. 

\item In addition, we formulate a variation of the coalitional game that {allocates} the benefits {in} an ex-post process, which can be applied only after {the uncertainty realization is known}. 
For this game, we provide conditions under which the core is empty. 

\item We propose an ex-post version of our least-core benefit allocation mechanism. 
The ex-ante and ex-post versions of this mechanism can achieve the same fundamental properties for the areas either for every uncertainty realization or in expectation, respectively. (However, we note that the former requires the regulator to have a financial reserve to buffer the fluctuations in the budget.)

\item Finally, we provide techno-economic insights on the factors that drive benefit allocations first with an illustrative three-area nine-node system and then with a more realistic case study based on a larger IEEE test system.
\end{enumerate}
\looseness=-1

\subsubsection*{Organization}

\cref{sec:elecmarkframe} describes the organizational structure and introduces a set of necessary assumptions to obtain tractable models. \cref{sec:transcapalloc} discusses the issues related to reserve exchanges {and motivates the formulation of} the preemptive transmission allocation model. \cref{sec:3} {introduces necessary background from} coalitional game theory, whereas \cref{sec:4} focuses on the games arising from the preemptive model, {which provide the basis for the} benefit allocation mechanisms that accomplish the implicit coordination requirements outlined in the previous section.
The numerical case studies are presented in \cref{sec:CaseStudies}.

\chapter{European electricity market framework}\label{sec:elecmarkframe}
In this chapter, we first describe the basic organizational structure of the European electricity market, which impels essentially the need for the benefit allocation mechanisms proposed in this part. In turn, we present a set of necessary assumptions that must be imposed to obtain a tractable formulation of the market-clearing problems and finally we provide the mathematical formulations that we employ for modelling different trading floors.

\section{Sequential electricity market design and modeling assumptions}\label{sec:seqassm}

The existing market design based on the sequential and independent clearing of reserves, day-ahead, and balancing markets (illustrated in~\cref{fig:seq_mark}) suffers from two main caveats that become increasingly pronounced as we move towards larger shares of renewable energy production. On the one hand, the day-ahead schedule is optimized based on purely deterministic inputs, that is, single-valued forecasts of renewable resources.
As a result, the day-ahead market is not responsive to the uncertainty associated with the forecast errors and thus it is weakly coordinated with the real-time balancing. {On the other hand, the decoupling of energy and upward/downward reserve capacity trading into two independent auctions ignores the substitution and complementary properties of these two services and leads to inefficient reserve procurement and energy schedules. Eliminating this issue requires that the participants are perfectly capable of accounting for these properties internally in their trading strategies. However, quantifying such opportunity costs is a challenging problem for the participants, see~\cite{swider2007bidding} and references therein.} As we discussed in the previous chapter, in order to enable the inter-area exchange of reserves given this decoupling, the operator has to withdraw a certain share of the interconnection capacities from the day-ahead energy trading and then use this headroom for the interconnections in the reserve capacity market.

\begin{figure}[h]
    \centering
    \includegraphics[width=0.34\columnwidth]{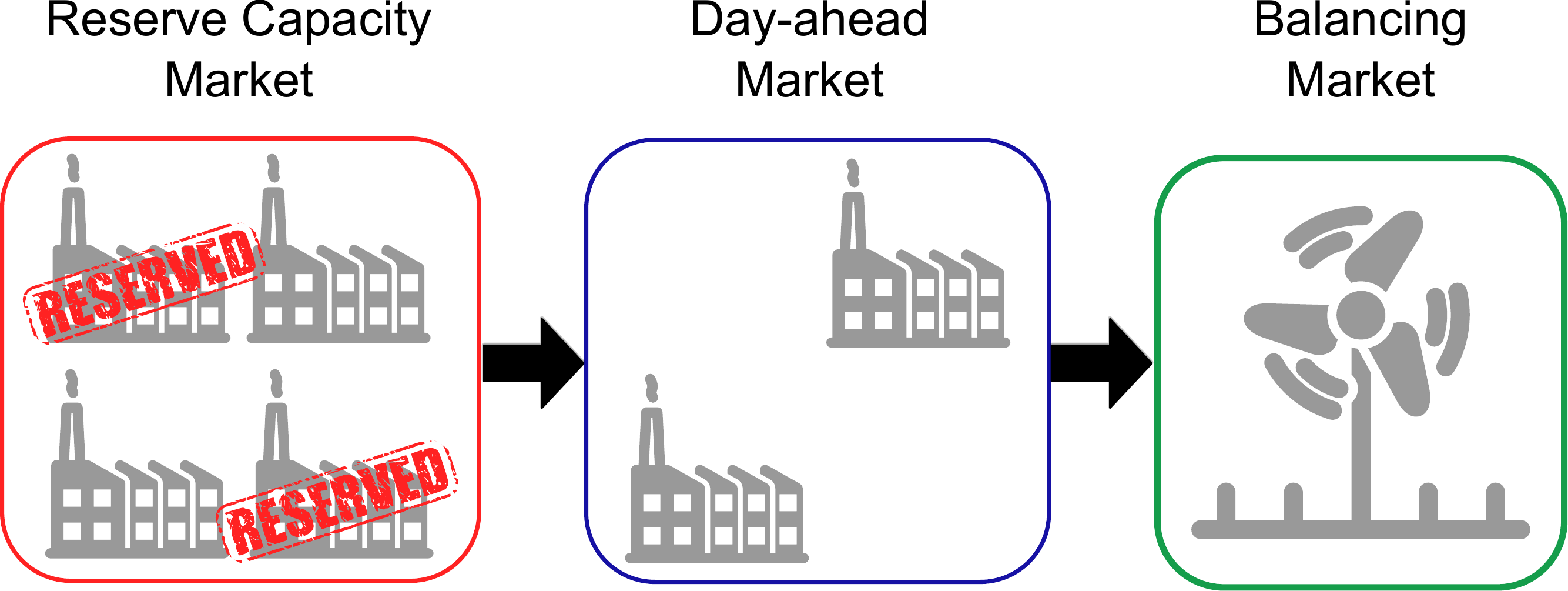}%
    \includegraphics[width=0.62\columnwidth]{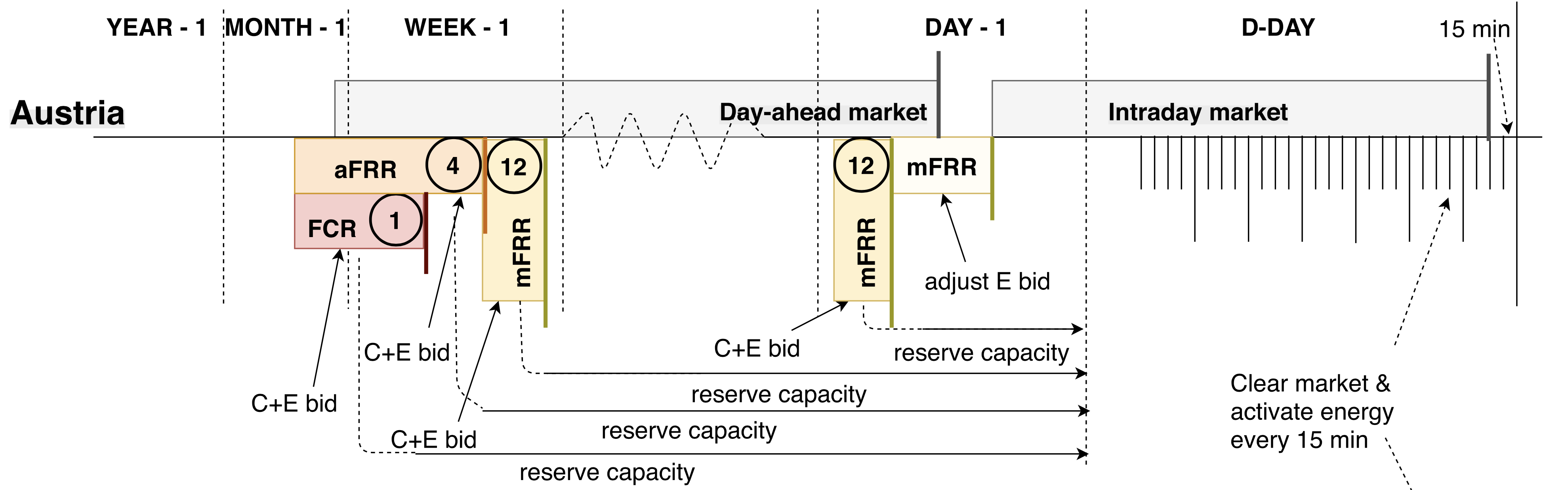}
    \caption{On the left: An illustration of the sequential European electricity market framework. On the right: The current Austrian market structure described in~\cite{poplavskaya2019distributed}: Intraday market corresponds to real-time balancing; aFRR, FCR, and mFRR are different types of reserves depending on how fast they are to react to imbalances.}
    \label{fig:seq_mark}
\end{figure}

From a theoretical perspective, these two issues can be contained if reserve capacity procurement, day-ahead energy schedules, and real-time re-dispatch actions are jointly optimized based on a probabilistic description of the uncertainty, see \cite{pritchard2010single,morales2012pricing}. {However, the adoption of a (scenario-based) stochastic dispatch model as a market-clearing algorithm requires significant restructuring of the current market framework, and poses several computational challenges when it is applied to real-life scale power systems.
Owing to the restructuring and the computational issues, we restrict ourselves to the status-quo market architecture and we embody its design attributes in our methodology aiming to mitigate the resulting inefficiencies.}

In the following sections of the chapter, we build the mathematical models of the different trading floors based on a set of assumptions that allows us to capture the main attributes of the European market, while maintaining tractability. 

In line with the current practice, we consider a zonal network representation during the reserve procurement. However, the full network topology is taken into account in the day-ahead and the balancing market-clearing models, using a DC power flow approximation. Note that {our network model can be readily adapted to a zonal day-ahead market (as it is today in Europe), for instance, where the inter-zonal transmission energy flows are constrained by the available transfer capacity (ATC) or by a flow-based domain. The challenge would be additional complexity originating from parameter choices for modelling the zonal day-ahead market, {as well as} the unscheduled flows and congestion to be tackled by {counter-trading or} re-dispatching in the balancing stage. {Using a full network representation, that is, nodal pricing, both in the day-ahead and the balancing markets} eliminates {potential discrepancies that may arise due to} idiosyncratic congestion effects and {allows us to concentrate on issues related to reserves exchange and transmission capacity reservation}. Moreover, for the flow-based domain, \cite{mltw13} show that the parameter choices {(e.g., the selection of a base case for the zonal injections, the determination of Power Transfer Distribution Factors and Generation Shift Keys)} can lead to very different market exchanges and prices. We also refer to \cite{solisanalysis} for numerical results illustrating this paradigm, and {to} \cite{en17,CRE17} for discussions on this sensitivity.
Since this part focuses on transmission allocation issues concerning primarily the regional operators, we believe that allowing for a more complete network representation 
and judiciously abstracting certain details of the real-world operation do not undermine our main goal, that is, {the development of a decision-support tool that provides} techno-economical insights on reserves exchange {and transmission capacity allocation}. Due to these reasons, many studies that focus on renewable integration and coordination analyses commonly ignore the zonal day-ahead market design. {Instead, }they either assume a nodal market similar to our work~\cite{lw08,lw09,dominguez2019reserve}, or they implement a simple transportation network ignoring zonal congestion~\cite{sw14,kem14}. Nonetheless, for the sake of completeness}, we will also provide a zonal model when discussing the day-ahead market. Our conclusions would still generalize to such models, since our model is replicating the main imperfections of the current design: the reaction of the market to its parametric decisions originating from the separate trading of energy and reserves.

On the generation side, we consider that all market participants are perfectly competitive (that is, they bid truthfully). Day-ahead energy offers are submitted in price-quantity pairs that internalize the marginal production cost as well as the unit commitment and inter-temporal constraints, e.g., ramping limits, in accordance with the portfolio bidding practice in the European market.
{Moreover, we assume that the reserve capacity offer prices provide adequate incentives to the flexible generators for the provision of real-time balancing services such that the prices of up and down re-dispatches are the same as the marginal prices in the day-ahead stage, similar to the previous works of~\cite{ahmadi2013multi}.}
In terms of stochastic renewable in-feed, we focus on wind power generation and we model forecast errors using a finite set of scenarios. Assuming null production costs, the corresponding offer price and the spillage cost are set equal to zero. 
On the consumption side, we consider inelastic demand with a large penalty on lost load and thus the social welfare maximization becomes equivalent to the cost minimization.

Finally, we assume that the current implementation of the sequential market provides a budget balanced method to allocate the system cost to all the areas, that is, the costs of the reserve capacity market, the day-ahead market and the balancing market are allocated to the areas without any deficit or surplus. In the numerics, for the case of no inter-area exchange of reserves, we provide and discuss one such allocation method based on the zonal and nodal prices that assigns producer and consumer surpluses to their corresponding areas, and divides the congestion rent equally between the adjacent areas, see~\cite{kristiansen2018mechanism} for the reasoning and \cref{sec:illu_ex} for its computation.

\section{Mathematical formulation}
\label{sec:seq}
\subsection{Reserved symbols for \cref{part:2}}\label{sec:part2not}

Main notation for this part is stated below. An illustration is provided in \cref{fig:notation}. Additional symbols are defined throughout the part when~needed.

\begin{itemize}

\item[] \textbf{Sets and indices}
	\item [$\AC$] Set of areas indexed by $a$
	\item [$\mathcal E$] Set of inter-area links indexed by $e$
	\item [$\mathcal L$] Set of transmission lines indexed by~$\ell$
	\item [$a_r(e/\ell)$] Receiving-end area of link $e$/line $\ell$ 
	\item [$a_s(e/\ell)$] Sending-end area of link $e$/line $\ell$ 
	\item [$ \Lambda_e$] Set of tie-lines across link~$e$
	\item [$\mathcal I$] Set of dispatchable power plants indexed by $i$
	\item [$\mathcal J$] Set of stochastic power plants indexed by $j$
	\item [$\mathcal N$] Set of network nodes (buses) indexed by $n$.
	\item [$ \mathcal{M}_{n}^{\mathcal I}$] Set of dispatchable power plants  $i$ located at node $n$
	\item [$ \mathcal{M}_{n}^{\mathcal J}$] Set of stochastic power plants $j$ located at node $n$
	\item [$ \mathcal{M}_{a}^{\mathcal I}$] Set of dispatchable power plants  $i$ located in area $a$
	\item [$ \mathcal{M}_{a}^{\mathcal J}$] Set of stochastic power plants $j$ located in area $a$
	\item [$ \mathcal{M}_{a}^{\mathcal N}$] Set of nodes $n$ located in area $a$
	\item [$ \mathcal{S}$] Set of stochastic power production scenarios indexed by~$s$

\item[] \textbf{Parameters}
\item [$\overline{{W}}_{j}$] Expected power production of stochastic power plant~$j$ [MW]
\item [$W_{js}$] Power production by stochastic power plant $j$ in scenario~$s$ [MW]
\item [$\pi_{s}$] Probability of occurrence of scenario $s$
\item [$A_{\ell n}$] Line-to-bus incidence matrix
\item [$B_{\ell}$] Absolute value of the susceptance of AC line $\ell$
\item [$D_{n}$] Demand at node $n$ [MW]
\item [$C_{i}$] Energy offer price of power plant $i$ [\euro/MWh]
\item [$C^{+/-}_{i}$] Up/down reserve capacity offer price of power plant $i$ [\euro/MW]
\item [$C^{\text{sh}}$] Value of involuntarily shed load [\euro/MWh]
\item [$P_{i}$] Capacity of dispatchable power plant $i$ [MW]
\item [$R^{+/-}_{i}$] Up/down reserve capacity offer quantity of power plant $i$ [\euro/MW]
\item [$RR_a^{+/-}$] Up/down reserve capacity requirements of area $a$ [MW]
\item [$T_{e/\ell}$] Transmission capacity of link $e$/line $\ell$ [MW]

\item[] \textbf{Variables}

\item [$\delta_n$] Voltage angle at node $n$ at day-ahead stage [rad]
\item [${\delta}_{ns}$] Voltage angle at node $n$ in scenario $s$ [rad]
\item [$\chi_{e/\ell}$] Transmission allocation of link $e$/line $\ell$, that is, percentage of inter-area interconnection capacity of link $e$/line $\ell$ allocated to reserves exchange
\item [$f_{\ell}$] Power flow in line $\ell$ at day-ahead stage [MW]
\item [$f_{e}$] Power flow in link $e$ at day-ahead stage [MW]
\item [${f}_{\ell s}$] Power flow in line $\ell$ in scenario $s$ [MW]
\item [$l^{\text{sh}}_{ns}$] Load shedding at node $n$ in scenario $s$ [MW]
\item [${p}_{i}$] Day-ahead schedule of dispatchable power plant $i$ [MW]
\item [${p}_{is}^{+/-}$]Up/down regulation provided by dispatchable power plant $i$ in scenario $s$ [MW]
\item [$r^{+/-}_{e}$] Up/down reserve capacity (`exported') from area $a_s(e)$ to area $a_r(e)$ [MW] (equivalently imported to area $a_r(e)$ from area $a_s(e)$)
\item [${w}_{j}$] Day-ahead schedule of stochastic power plant $j$ [MW]
\item [$w^{\text{spill}}_{js}$] Power spilled by stochastic power plant $j$ in scenario $s$ [MW]
\end{itemize}

\begin{figure}[h]
	\centering
	\begin{tikzpicture}[scale=1.25, every node/.style={scale=0.6}]
	\draw[-,line width=.5mm] (-2,.5) -- (-2,-.5) node[anchor=west]  {\LARGE $n_s$};
	\draw[-,line width=.5mm] (2,.5) -- (2,-.5) node[anchor=west]  {\LARGE $n_r$};	
	\draw[-,line width=.1mm] (-2,0.35) -- (2,0.35) node at (.4, .6) {\LARGE $\ell_1$};
	\draw[-,line width=.1mm] (-1.35,-1.25) -- (1.6,-1.25) node at (.4, -1) {\LARGE $\ell_2$}; 
	\draw[-,line width=.1mm] (-1.35,-2.1) -- (1.6,-2.1) node at (0, -1.6) {\LARGE $\vdots$} node at (.4, -1.85) {\LARGE $\ell_q$}; 
	\draw[-,line width=.1mm] (-2,-0.15) -- (-1.5,-0.15); 
	\draw[-,line width=.1mm] (-2,0.1) -- (-2.65,0.1) node[anchor=west] at (2.8, 0.075)  {\LARGE $\cdots$}; 
	\draw[-,line width=.1mm] (2.65,0.1) -- (2,0.1) node[anchor=west] at (-3.4, 0.075) {\LARGE $\cdots$} ; 
	\draw[-,line width=.1mm] (-2,-0.25) -- (-2.5,-0.25); 
	\draw[-,line width=.1mm] (2.5,-0.25) -- (2,-0.25); 
	\draw[-,line width=.1mm] (-2.5,-0.65) -- (-2.5,-0.25) node at (-2.5, -.9) {\LARGE $\vdots$} ;
	\draw[-,line width=.1mm] (2.5,-0.25) -- (2.5,-0.65) node at (2.5, -.9) {\LARGE $\vdots$} ;
	\draw (-1.25,-0.25) circle (.25cm) node {\LARGE $i$} node at (-2.6, -1.8) {\LARGE $i\in\mathcal M^{\mathcal I}_{a_s(e)}$} ;
	\draw[->,line width=.3mm] (-2.1,-1.35) -- (-1.5,-.75);
	\draw[dashed, draw=black] (-4,-2.5) rectangle (-0.85,0.95) node at (-2.5, 1.3) {\LARGE $\text{Area}\ a_s(e)$} node at (-3.4, -2.9) {\LARGE $\mathcal{H}(e,a_s(e))=-1$};
	\draw[dashed,draw=black] (4,-2.5) rectangle (1.1,0.95) node at (2.5, 1.3) {\LARGE $\text{Area}\ a_r(e)$} node at (3.4, -2.9) {\LARGE $\mathcal{H}(e,a_r(e))=1$};
	\draw[dashed,draw=black] (-0.5,0.95) rectangle (.75,-2.5) node at (.1, 1.3) {\LARGE $\text{Link}\ e$} node at (.1, -2.9) {\LARGE $\Lambda_e=\{\ell_1,\ldots,\ell_q\}$};
	\end{tikzpicture}
	\caption{An illustration of the main notation used for graph $(\mathcal A,\mathcal E)$ }\label{fig:notation}
\end{figure}
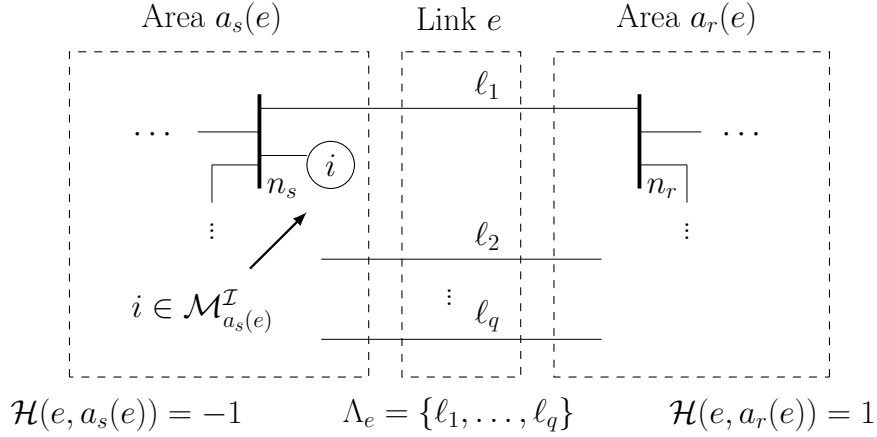

\subsection{Reserve capacity market} 

Having as a fixed input the upward/downward reserve requirements $RR_a^{+}/RR_a^{-}$ in each area~$a$ and a pre-defined share $\chi_e$ of the transmission capacity of each inter-area link~$e$ allocated to reserves, the reserve market clearing is formulated as:
\looseness=-1
\begin{subequations}
	\label{mod:A-Res}
	\begin{flalign}
		 &\underset{\Phi_{\text R}}{\min} \quad  \sum_{i \in \mathcal I}\left(C^{+}_{i} r^{+}_{i} + C^{-}_i r^{-}_{i}\right) \label{eq:A-Res-obj}	 
	\end{flalign}
	\begin{flalign}
	    \mathrm{s. t.}\ & \sum_{i \in \mathcal{M}_{a}^{\mathcal I}} r_{i}^{+} + \sum_{e\in\mathcal E} \mathcal{H}(e,a) r_{e}^{+} \geq RR_a^{+}, \quad \forall a\in\AC, \label{eq:A-Res-req-up}\\
	    & \sum_{i \in \mathcal{M}_{a}^{\mathcal I}} r_{i}^{-} + \sum_{e\in\mathcal E} \mathcal{H}(e,a) r_{e}^{-} \geq RR_a^{-}, \quad \forall a\in\AC,  \label{eq:A-Res-req-dn}\\
		& 0 \le r_{i}^{+} \le  {R}^{+}_{i}, \quad \forall i\in \mathcal I, \quad 0 \le r_{i}^{-} \le  {R}^{-}_{i}, \quad \forall i\in\mathcal I, \label{eq:A-Res-cdn-max} \\%
		& -\chi_{e} T_{e}\le r_{e}^{+} \le  \chi_{e} T_{e}, \quad \forall e \in \mathcal E, \quad
		-\chi_{e} T_{e} \le r_{e}^{-} \le \chi_{e} T_{e}, \quad \forall e \in \mathcal E, \label{eq:A-Res-X2} 
	\end{flalign}
\end{subequations}
where $\Phi_{\text R} = \{ r^{+}_{i},r^{-}_{i}, \forall i ; r_{e}^{+}, r_{e}^{-}, \forall e  \} $ is the set of optimization variables. The objective function \eqref{eq:A-Res-obj} to be minimized is the cost of reserve procurement. 
Constraints \eqref{eq:A-Res-req-up} and \eqref{eq:A-Res-req-dn} ensure, respectively, that the upward and downward reserve requirements of each area are satisfied either by procuring reserve capacity from intra-area generators or via inter-area reserves exchange that is modeled using the incidence matrix $\mathcal{H}(e,a)$. 
As shown in \cref{fig:notation}, for each link $e$ with sending and receiving ends in areas $a_s(e)$ and $a_r(e)$, respectively,
$\mathcal{H}(e,a)$ is equal to 1 (-1) if reserve import (export) is considered from (to) area $a=a_s(e)$ ($a=a_r(e)$) and zero for any other area. {With this definition, availability of cross-border reserves within the neighboring areas for each area~$a$ is modeled by \eqref{eq:A-Res-req-up} and \eqref{eq:A-Res-req-dn}.} We underline that directed links are used as a notational convention, and both $r_e^+$ and $r_e^-$ are free of sign.
Upward and downward capacity offers of dispatchable power plants are enforced by constraints \eqref{eq:A-Res-cdn-max}. {In the numerics, these capacities will be chosen such that both upward/downward reserves can be procured from the power plants in a feasible manner when needed in real-time.} 
The set of constraints \eqref{eq:A-Res-X2} models the bounds on reserves exchange between two areas across link $e$. 

Following the current practice, we consider a zonal network representation for the reserve capacity markets and thus the transmission capacity ${T}_e$ of link $e$ is defined as the aggregated flow limit of all tie-lines $\ell \in \Lambda_{e}$ across link~$e$ calculated as
	$${T}_e = \sum_{\ell \in \Lambda_{e}} \ {T}_{\ell},$$ for all $e\in\mathcal E$. Setting the transmission capacity allocation $\chi_e$ to any value different than zero, establishes practically a reserve exchange mechanism between the areas located at the two ends of the link and consequently it enables the exchange of balancing services during real-time operation. On the contrary, setting $\chi_e=0$ implies that there would be no reserve exchange at the procurement stage, that is, the cross-border transmission capacity is fully allocated to day-ahead energy exchanges. In that case, we also prevent the exchange of balancing services and the imbalance netting between the adjacent areas, as we will formally describe in the balancing market model formulation below.

\subsection{Day-ahead market}

Given the optimal reserve procurement $\hat \Phi_{\text R} = \{ \hat r^{+}_{i},\hat r^{-}_{i}, \forall i ; \hat r_{e}^{+}, \hat r_{e}^{-}, \forall e  \}$, the day-ahead schedule is the solution to the following optimization problem:  
\begin{subequations}\vspace{.5cm}
	\label{mod:A-DA}
	\begin{flalign}
		&\underset{\Phi_{\text D}}{\min} \quad  \sum_{i\in \mathcal I}{C}_{i}{p}_{i} \label{eq:DA-Obj}
	\end{flalign}
	\begin{flalign}	
		\mathrm{s. t.}\ & \sum_{j \in \mathcal{M}_n^{\mathcal J}} w_j + \sum_{i\in \mathcal{M}_n^{\mathcal I}}{p}_{i} - \sum_{\ell \in \mathcal L} A_{\ell n} f_{\ell}   = D_n, \quad \forall n\in \mathcal N,  \label{eq:A-CV-da-bal-constr} \\
		&  \hat{r}_{i}^{-} \le p_i \leq \ {P}_{i}- \hat{r}_{i}^{+},   \quad \forall i\in\mathcal I,\quad 0 \le w_j \leq \overline{{W}}_{j},  \quad \forall j\in\mathcal J,  \label{eq:A-CV-da-conv-max}\\
		& f_{\ell} = B_{\ell} \sum_{n\in\mathcal N} A_{\ell n} \delta_n, \quad \forall \ell \in \mathcal L,\label{eq:A-CV-da-flowAC-def}\\  &-(1- \chi_{\ell}) \ {T}_{\ell} \le f_{\ell} \le (1- \chi_{\ell}) \ {T}_{\ell}, \quad \forall \ell \in \mathcal L , \label{eq:A-CV-da-flowAC-constr}\\
		& \delta_1 = 0, \quad \delta_n \; \text{free}, \quad \forall n \in \mathcal{N}, \label{eq:A-CV-da-RefBus-constr}\vspace{.5cm}
	\end{flalign}
\end{subequations}where $\Phi_{\text D} = \{p_i, \forall i; w_{j}, \forall j; \delta_n, \forall n; f_{\ell}, \forall \ell \}$ is the set of variables. We define $\chi_\ell= \chi_e$ for all tie-lines $\ell \in\Lambda_e$ and $ \chi_\ell=0$ for all intra-area lines. For the remainder of this part, we strictly follow this notation. The objective is the day-ahead cost of energy production. Constraints \eqref{eq:A-CV-da-bal-constr} enforce the day-ahead power balance for each node. The upper and lower production limits of dispatchable power plants are enforced by~\eqref{eq:A-CV-da-conv-max}, taking into account the reserve schedule from the previous trading floor. Constraints \eqref{eq:A-CV-da-conv-max} also limit the stochastic production to a point forecast, typically the expected value of the stochastic process. Power flows are first computed in \eqref{eq:A-CV-da-flowAC-def} and then restricted in \eqref{eq:A-CV-da-flowAC-constr} by the capacity limits considering that $(1- \chi_{\ell})$ percent of the capacity is available for day-ahead energy trade. {This is because a $\chi_{\ell}$ portion of the cross-border transmission resource made available is withdrawn from day-ahead energy to instead be allocated to reserves, following the regulations \cite{ENTSOE2}.

As previously discussed, the current practice is in fact a zonal hybrid where the flows are constrained by the available transmission capacity or a flow-based domain. In the case of a zonal market, this could be a portion of the cross-border transmission capacities previously computed for the day-ahead market respecting the minRAM rule of $70\%$ in~\cite[Article 16(8a-8b)]{eur19}~\cite[\S 2]{elia20}. (Note that the minRAM rule may also be applied after the reservation of a transmission share for reserves exchange.) Our method can also be readily adapted to the zonal scheme. We present in detail the integration of a zonal market into our method in the appendix in~\cref{app:zonal}. Finally, the voltage angle at node~$1$ is fixed to zero in \eqref{eq:A-CV-da-RefBus-constr} setting this as the reference node, whereas the remaining voltage angles are declared as free~variables. }

\subsection{Balancing market}

Being close to real-time operation uncertainty realization $s'$ and actual wind power production $W_{js'}, \; \forall j \in \mathcal{J}$ are known. Any energy deviations from the optimal day-ahead schedule $\hat \Phi_{\text D} = \{\hat p_i, \forall i;\allowbreak \hat w_{j}, \forall j; \hat \delta_n, \forall n;\allowbreak \hat f_{\ell},\allowbreak \forall \ell \}$ must be contained using proper re-dispatch actions that respect the reserve procurement schedule $\hat \Phi_{\text R}$.
To determine the re-dispatch actions that minimize the balancing cost, the balancing market is cleared based on the following optimization problem: 
\begin{subequations}
	\label{mod:A-CV-rt}
	\begin{flalign}
		&\underset{ \Phi_{\text B}^{s'} }{\min} \quad  \sum_{i \in \mathcal I}{C}_{i}\left({p}_{is'}^{+}-{p}_{is'}^{-}\right)+\sum_{n \in \mathcal N} {C}^{\text{sh}}l^{\text{sh}}_{ns'} \label{eq:A-CV-rt-Obj}
	\end{flalign}
	\begin{flalign}
		&\mathrm{s. t.}\nonumber\\ & \sum_{i \in \mathcal{M}_n^{\mathcal I}} \left(p_{is'}^{+}-p_{is'}^{-} \right) + l^{\text{sh}}_{n s'} + \sum_{j \in \mathcal{M}_n^{\mathcal J}}\left(W_{j s'}-\hat{w}_j-w^{\text{spill}}_{j s'}\right)
		+ \sum_{\ell \in \mathcal{L}} A_{\ell n} \left( \hat{f}_{\ell} - {f}_{\ell s'} \right) = 0, \ \forall n\in\mathcal N,   \label{eq:A-CV-rt-bal-constr} \\	
		& 0 \le p^+_{is'} \leq  \hat{r}_{i}^{+}, \quad \forall i\in\mathcal I, \quad 0 \le p^-_{is'} \leq  \hat{r}_{i}^{-} , \quad \forall i\in\mathcal I, \label{eq:A-CV-rt-conv-res-dn-max}\\
		& 0 \le l^{\text{sh}}_{n s'} \leq D_n, \quad \forall n\in\mathcal N,  \quad 0 \le w^{\text{spill}}_{j s'} \leq W_{j s'}, \quad \forall j\in\mathcal J,  \label{eq:A-CV-rt-wind-spill-max}\\
		& {f}_{\ell s'} = B_{\ell} \sum_{n\in\mathcal N} A_{\ell n} {\delta}_{n s'}, \quad \forall \ell \in \mathcal L, \label{eq:A-CV-rt-flowAC-def}\\ &- T_{\ell} \le {f}_{\ell s'} \le T_{\ell}, \quad \forall \ell \in \mathcal L, \label{eq:A-CV-rt-flowAC-constr}\\
		& f_{\ell s'} = \hat f_\ell,\quad\forall \ell \in \cup_{e\in\mathcal E^-(\chi)} \Lambda_e,\label{eq:A-CV-rt-tieline-1}\\ &{\delta}_{1 s'} = 0, \quad \delta_{n s'} \; \text{free}, \quad \forall n \in \mathcal{N}, \label{eq:A-CV-rt-tieline}
	\end{flalign}
\end{subequations}
where $\Phi_{\text B}^{s'}\allowbreak =\allowbreak \{ {p}^{+}_{i s'},\allowbreak{p}^{-}_{is'}, \allowbreak\forall i; \allowbreak w^{\text{spill}}_{js'},\allowbreak \forall j; {l}^{\text{sh}}_{ns'},\allowbreak {\delta}_{ns'}, \forall n; {f}_{\ell s'},\forall \ell \}$ is the set of variables. The objective is the cost of re-dispatch actions, that is, reserve activation and load shedding. {Up and down re-dispatch {actions} have the same {cost} as their day-ahead {energy market counterpart, under the assumption that} the reserve market {price} is enough to compensate for the opportunity cost from withdrawing capacity from the day-ahead stage, see the assumption in~\cref{sec:seqassm}.} Equality constraints \eqref{eq:A-CV-rt-bal-constr} ensure that all the nodes remain in balance after the re-dispatch of generation and any necessary wind power curtailment or load shedding. Constraints \eqref{eq:A-CV-rt-conv-res-dn-max} ensure that upward and downward reserve deployment respects the corresponding procured quantities. The upper bounds on load shedding and power spillage are set equal to the nodal demand and the realized wind power production by constraints \eqref{eq:A-CV-rt-wind-spill-max}. Real-time  power flows are first modeled in \eqref{eq:A-CV-rt-flowAC-def} and then restricted by the transmission capacity limits in \eqref{eq:A-CV-rt-flowAC-constr}. 

{Constraints \eqref{eq:A-CV-rt-tieline-1}, where $\mathcal E^-(\chi)=\{e\in\mathcal E \rvert  \chi_e=0\}$ denotes the set of inter-area links with no existing cross-border agreement across them, ensure that if $\chi_e=0$,  the real-time flows on the tie lines are fixed to their day-ahead values. In the existing market framework, the balance responsible parties (in our case, the area operators) are entitled to maintain their scheduled day-ahead net positions in the real-time market~\cite{ent14}, \cite[Article 17]{eur17a}. Since our day-ahead market is modeled by a nodal market and $\hat{f}_\ell$ are already well-defined, we translate this regulation as preventing any reserve sharing or imbalance netting during real-time operation across any line within link $e$ if $\chi_e=0$. For this constraint/requirement, see \cite[\S 2.3]{elia20}, \cite[\S 4.A.7]{solisanalysis}.} Node~$1$ is again the reference node in~\eqref{eq:A-CV-rt-tieline}.
\looseness=-1
\section{Appendix}

{\subsection{Zonal day-ahead market models}\label{app:zonal}
The European electricity market exhibits discrepancies between the day-ahead and real-time representations of the physical system in the electricity market. Day-ahead markets schedule consumption and production using a zonal representation of the underlying nodal electricity network. Such zonal aggregations of the grid allows market participants to trade freely withing each zone and to export/import energy to/from other zones up to certain flow limitations. For the ease of notation, the zones of the day-ahead market are defined to be the same as the areas defined for the reserve market. In the remainder, this appendix presents the models for
the current day-ahead practice in Europe, which is a zonal hybrid where the flows are constrained by the available transmission capacity (ATC) or a flow-based domain. We show how our methods can be readily adapted to these models.

\subsubsection*{Zonal day-ahead market model with ATC}
In zonal electricity markets with ATC, the flow limitations are imposed on the exchanges between neighboring areas. In other words, areas correspond to vertices in a transportation network where every pair of areas connected by a transmission line in the grid are connected by an edge as in Figure~\ref{fig:notation}. The flows through the edges are then limited by the ATCs~\citep{es05}. The zonal market clearing with ATC is formulated as:
\begin{subequations}
	\begin{flalign}
		&\underset{\Phi_{\text D}^{\text{ATC}}}{\min} \quad  \sum_{i\in \mathcal I}{C}_{i}{p}_{i} 
	\end{flalign}
	\begin{flalign}	
		\mathrm{s. t.}\ & \sum_{j \in \mathcal{M}_a^{\mathcal J}} w_j + \sum_{i\in \mathcal{M}_a^{\mathcal I}}{p}_{i} +\sum_{e\in\mathcal E} \mathcal{H}(e,a) f_{e}   = \sum_{n \in\mathcal{M}_a^{\mathcal N}}D_n, \quad \forall a\in \mathcal A,  \\
		&  \hat{r}_{i}^{-} \le p_i \leq \ {P}_{i}- \hat{r}_{i}^{+},   \quad \forall i\in\mathcal I,\quad 0 \le w_j \leq \overline{{W}}_{j},  \quad \forall j\in\mathcal J, \\
		&  -(1- \chi_{e}) \ {ATC}_{e}^- \le f_{e} \le (1- \chi_{e}) \ {ATC}_{e}^+, \quad \forall e \in \mathcal E,\label{eq:minramatc}
	\end{flalign}
\end{subequations}
where ${ATC}_{e}^-$ and ${ATC}_{e}^+$ are the backward and forward exchange limits defined over the link $e\in\mathcal E$. These ATC limits are computed by TSOs respecting the minimum remaining available margin (minRAM) rule of $70\%$ following the requirements of the Clean Energy Package~\citep[Article 16(8a-8b)]{eur19},~\citep[\S 2]{elia20}, and they are supposed to reflect the maximum energy that can be transferred from one zone to the other. Their computation is outside the scope of this part, we kindly refer to~\citep{jensen2017cost}. As an alternative to \eqref{eq:minramatc}, the ATC computation respecting the minRAM rule may also be implemented after the reservation of a transmission share for reserves exchange is implemented to the actual exchange limits $T_e$ of a nodal representation.}

{Notice that the objective function is identical to that of the nodal electricity market, whereas the balance constraints are imposed over each area instead of over each node. Since this problem does not consider the real network, the optimal day-ahead schedule of the market above ${\hat\Phi_{\text D}^{\text{ATC}}}$  might be infeasible for the real network in \eqref{mod:A-DA}. Re-dispatch measures modifying ${\hat\Phi_{\text D}^{\text{ATC}}}$ might then be required in order to recover implementable schedules even when there is no forecast uncertainty. Hence, using this model could potentially increase the need for load shedding and wind curtailment. Replacing the day-ahead market with the above model has implications also for the other models. In the reserve capacity market of~\eqref{mod:A-Res}, the bounds on the reserve exchange between two areas should be replaced with $\chi_{e} {ATC}_{e}^-$ and $\chi_{e} {ATC}_{e}^+$. Moreover, the balancing market of~\eqref{mod:A-CV-rt} should prevent reserve sharing or imbalance netting across any link~$e$ with $\chi_e=0$ (instead of all lines within this link~$e$), since the above market model defines only the total day-ahead link flows but it does not define the individual day-ahead line flows. Finally, the preemptive transmission allocation model of the following chapter can easily  be updated with these three models, since they are all given by linear programs when transmission capacity allocations are fixed. }

{\subsubsection*{Zonal day-ahead market model with flow-based domain}
Zonal electricity market model with a flow-based (FB) domain tries to find a middle ground between nodal model and the ATC model above~\citep{solisanalysis}. The main idea is to approximate the flow on each line as $$f_\ell\sim f_{\ell}^0+\sum_{a\in\mathcal A}PTDF_{\ell,a}({NP}_a-{NP}_a^0),$$ where $f_{\ell}^0$ is the flow through line $\ell$ on a base case, ${NP}_a$ is the net position of area $a$ and ${NP}_a^0$ is the net position on the base case, and $PTDF_{\ell,a}$ are area-to-line power-transfer-distribution-factors computed by TSOs. The zonal market clearing with FB domain is formulated as:
\begin{subequations}\label{eq:zonfb}
	\begin{flalign}
		&\underset{\Phi_{\text D}^{\text{FB}}}{\min} \quad  \sum_{i\in \mathcal I}{C}_{i}{p}_{i} 
	\end{flalign}
	\begin{flalign}	
		&\mathrm{s. t.}\nonumber\\ & \sum_{j \in \mathcal{M}_a^{\mathcal J}} w_j + \sum_{i\in \mathcal{M}_a^{\mathcal I}}{p}_{i} -\sum_{n \in\mathcal{M}_a^{\mathcal N}}D_n   = {NP}_a, \quad \forall a\in \mathcal A,\\
		&  \hat{r}_{i}^{-} \le p_i \leq \ {P}_{i}- \hat{r}_{i}^{+},   \quad \forall i\in\mathcal I,\quad 0 \le w_j \leq \overline{{W}}_{j},  \quad \forall j\in\mathcal J, \\
		&  -(1- \chi_{\ell}) \ T_{\ell} \le f_{\ell}^0+\sum_{a\in\mathcal A}{PTDF}_{\ell,a}({NP}_a-{NP}_a^0) \le (1- \chi_{\ell}) \ T_\ell, \quad \forall \ell \in \mathcal L, \label{eq:linlimzon}\\
		&\sum_{a\in\AC} {NP}_a=0.
	\end{flalign}
\end{subequations}
The last two sets of constraints impose export and import limitations directly on the configuration of net positions of areas, defining what is called the FB domain. As is the case with the ATCs,  the optimal day-ahead schedule of the market above might be infeasible for the real network in~\eqref{mod:A-DA}. Re-dispatch measures modifying ${\hat\Phi_{\text D}^{\text{FB}}}$ might then be required in order to recover implementable schedules even when there is no forecast uncertainty. Replacing the day-ahead market with the above model again has implications also for the other models. Note that the limits $T_\ell$ are computed by TSOs respecting the minimum remaining available margin (minRAM) rule of $70\%$ following the requirements of the Clean Energy Package~\citep[Article 16(8a-8b)]{eur19},~\citep[\S 2]{elia20}, and they could be different from those used in~\eqref{mod:A-Res}. Their computation is outside the scope of this part, we kindly refer to~\citep[\S 2]{elia20}. In this case, for the market in~\eqref{mod:A-Res}, the bounds on the reserve exchange between two areas should be replaced with the new values ${T}_e = \sum_{\ell \in \Lambda_{e}} \ {T}_{\ell},$ for all $e\in\mathcal E$. Moreover, instead of preventing reserve sharing or imbalance netting across links, the balancing market of~\eqref{mod:A-CV-rt} can enforce that day-ahead zonal net positions are maintained in the real-time market for all $a\in\{a'\,|\, \chi_e=0,\,\forall e: \mathcal{H}(e,a')\neq0\}$, since the solution to the above market model defines ${NP}_a$. Finally, the preemptive transmission allocation model of the following chapter can again be easily updated with these three models, since they are all given by linear programs when transmission capacity allocations are fixed.}

{We now provide a brief discussion on the parameter choices. Significant attention has been dedicated towards understanding how discretionary parameters determined by TSOs affect the outcome of the zonal market with FB domain. These parameters include the selection of a base case, but also the determination of $PTDF_{\ell,a}$ requires additional parameters called Generation Shift Keys (GSKs) for disaggregating zonal injections into nodal injections. Morover, the actual methodology is even more complex than \eqref{eq:zonfb}. For instance, a link has a limit as in \eqref{eq:linlimzon} only if it is considered to be critical, e.g, if $PTDF_{\ell,a}$ is larger than $5\%$~\citep[\S 2]{elia20}~\citep[\S 3.2.2.1]{solisanalysis}. \cite{mltw13} show that the same system can lead to very different market outcomes depending on all of these choices. Defining such parameters is outside the scope of this part, we kindly refer to~\citep{solisanalysis}.}

\chapter{Transmission capacity allocations for cross-border balancing}\label{sec:transcapalloc}
In this chapter, we discuss  the  issues  closely connected to reserve exchanges and we then formulate the preemptive transmission allocation model.
\section{Possible coordination schemes and transmission allocation arrangements}

The transition to an integrated balancing market requires several organizational changes to the prevailing operational model, in which reserves are procured and deployed on an intra-area basis.
A prerequisite for the establishment of a well-functioning balancing framework is the standardization of the rules and
products as well as the definition of transparent mechanisms that will facilitate the cooperation among the TSOs \cite{hobbs2005more}. Below, we outline the main coordination schemes and transmission allocation arrangements as defined in the current European regulation~\cite{EC1}.

Inter-area reserve procurement can be organized as a reserve exchange scheme {and}/{or} as a reserve sharing agreement. Implementing the former scheme, regional TSOs can procure balancing capacity resources located in adjacent areas in order to meet their own area reserve requirements. 
Since the reserve requirements of each area remain unchanged, this coordination setup requires limited organizational changes, as it basically reallocates the reserve quantities towards areas with lower procurement costs. This setup is the focus of our studies in this part. To improve also the dimensioning efficiency of the procurement process, a reserve sharing agreement allows a TSO to use available reserve capacity from adjacent TSOs. An implied prerequisite for this arrangement would be that the definition of regional reserve requirements is performed jointly by all TSOs that participate in the sharing agreement. 

In terms of coordination during reserves activation, the main organizational setups are the so-called imbalance netting and exchange of balancing energy. The first setup pertains to the out-of-market inter-area exchange of imbalances with opposite sign, thus preventing the counteracting activation of balancing resources and reducing the total balancing energy volumes. 
In turn, the exchange of balancing energy enables the system-wide least-cost activation of reserves through a common merit-order list to meet the net imbalance of the joint TSO area. This improves the supply efficiency of balancing energy, at the expense of more extensive coordination requirements.

The establishment of any cross-border reserve procurement scheme requires the reservation of a certain share of the inter-area transmission capacity from the day-ahead market for the reserves and their activation~\cite[Art. (14)]{eur19}. Such a reservation increases the cost of the day-ahead market, but in return decreases the costs for the reserve market and the balancing market.

Before we describe the attributes of any specific transmission allocation mechanism, let us provide an illustrative example for the resulting total cost from all three stages. This example highlights the seams issues pertaining to the ex-ante definition of transmission allocation between two neighboring areas.
\cref{fig:plotta} shows the expected system cost, that is, the sum of reserve procurement, day-ahead energy and expected balancing costs, as a function of the share of transmission capacity $\chi$ that is allocated to inter-area reserves trading. The data for this two-area power system is provided in the appendix in~\cref{app:two_area_ex}, and the models are those that were discussed in the previous chapter. {We can observe that the efficiency of an integrated market, in terms of expected system cost defined above, is highly susceptible to the portion~$\chi$ of transmission capacity removed from the day-ahead market.} Moreover, its optimal value minimizing the cost changes significantly under different levels of wind power penetrations.
Even from this simple example, it becomes apparent that there exists an optimal allocation to be made, which however may dynamically vary depending on generation, load and system uncertainties. 
This in turn asks for a systematic method to optimally define~$\chi$, accounting for the market dynamics and the uncertainty involved in the operation of the power system.
\looseness=-1

\begin{figure}[h]
	\centering
	\begin{tikzpicture}[scale=1.02, every node/.style={scale=0.755}]
			\draw (4,1) circle (.35cm) node {\LARGE $a_1$};
			\draw (4,-1) circle (.35cm) node {\LARGE $a_2$};
			\draw[-,line width=.25mm] (4,.65) -- (4,-.65) node  at (4.4, -0.1) {\LARGE $\chi$};
			\node at (-.85,0) {\includegraphics[width=.6\textwidth, trim={3.4cm 9.7cm 3.9cm 9.8cm},clip]{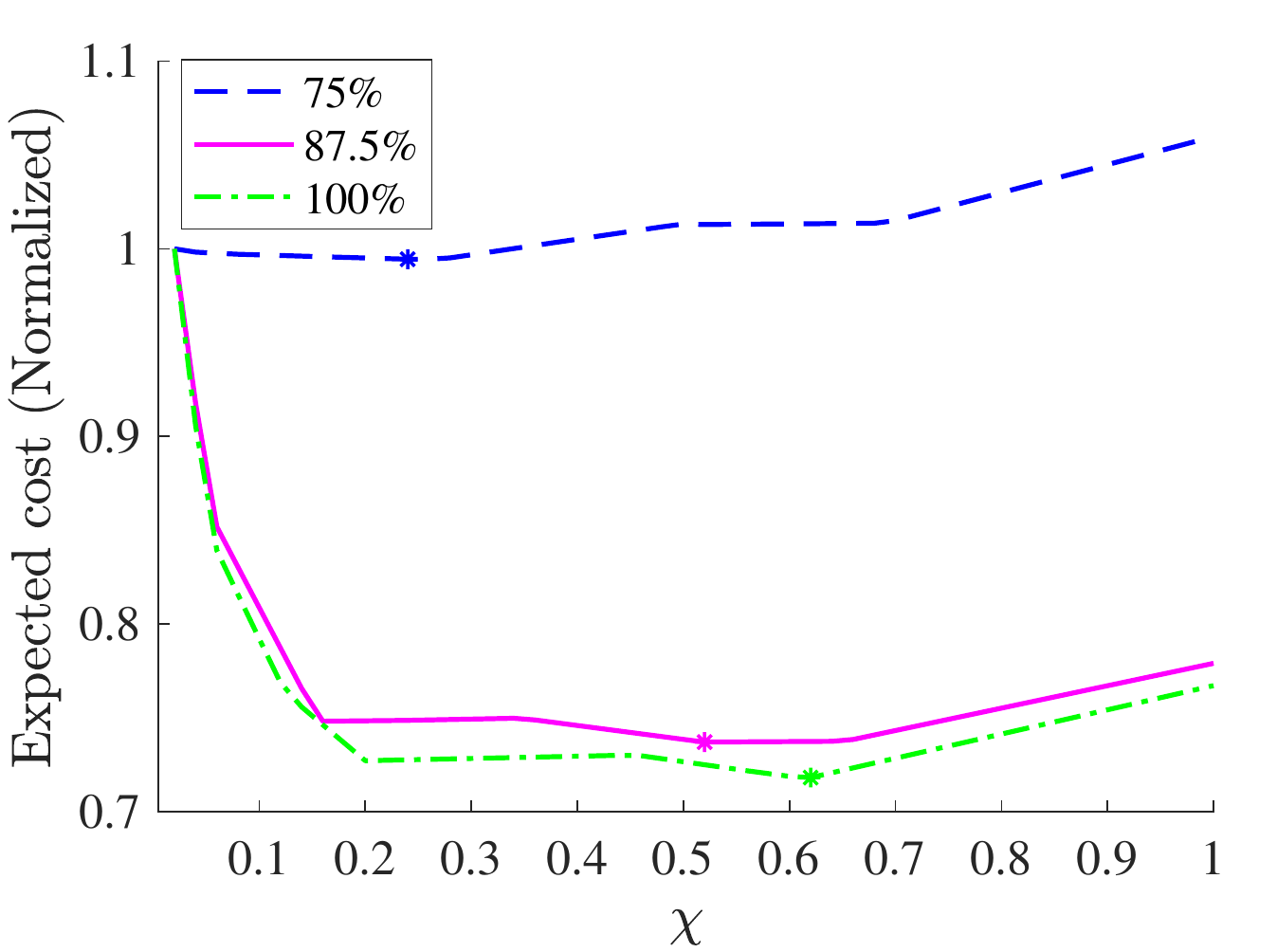}};
			\fill[fill=white] (-2.93,2.25) rectangle (-2.07,1.4);
			\node at (-2.45,1.58) {\footnotesize $312\; \text{MW}$};
			\node at (-2.45,1.87) {\footnotesize $273\; \text{MW}$};
			\node at (-2.45,2.15) {\footnotesize $234\; \text{MW}$};
	\end{tikzpicture}\vspace{-.5cm}
	\caption{Expected operation cost as a function of transmission capacity allocated to inter-area reserves trading under different levels of wind power penetration (in MW). Asterisk symbols correspond to the optimal values of transmission capacity allocations.}\label{fig:plotta}
\end{figure}

In this part, we focus on the prevailing market-based mechanism for the allocation of cross-border transmission capacity from day-ahead energy to reserves. According to this methodology,
a share of inter-area transmission capacity is set aside from day-ahead energy for reserves based on the comparison of the market value of cross-zonal capacity for the exchange of balancing capacity or sharing of reserves and the market value of cross-zonal capacity for the exchange of day-ahead energy. {This methodology can attain a reasonable allocation efficiency while having practical applicability within the current framework, albeit it still incurs the inherent drawbacks of the sequential market structure regarding the deterministic view of uncertainty and the separation of energy and reserve services.} We refer the interested reader to \cite{EC1} for discussions on alternative transmission allocation mechanisms.

\section{Preemptive transmission allocation model}
In this section, we first describe the preemptive transmission allocation model that was initially proposed in~\cite{delikaraoglou2018optimal} as a market-based mechanism. This work, however, defines the preemptive model in a more general framework, which allows us to consider coalitional deviations and in turn define the necessary benefits that support the solution proposed by the preemptive model. The motivation for considering coalitional deviations originates from the Clean Energy Package regulation which states that an application for a methodology of allocating cross-border capacity to reserves can be filed by even two neighboring operators~\cite[Art. (14)]{eur19}. 

The preemptive transmission allocation model can be perceived as a decision-support tool, which aims at defining the optimal shares of transmission capacity for inter-area trading of energy and reserves. Being fully aligned with the existing sequential market structure, the preemptive model is essentially a market-based allocation process that is performed prior to the reserve capacity and day-ahead energy markets to find the optimal transmission allocations $\{\hat\chi_e, \forall e\}$ that minimize the expected system cost, see~\cref{fig:struct} for a schematic representation. It is worth mentioning that the coordinated reserve exchange would require a transfer of some responsibilities (e.g., transmission capacity computations) to European bodies, even though some TSOs might be hesitant to assign some of their autonomy to a central authority.

\begin{figure}[h]
	\centering
	\begin{tikzpicture}[scale=0.79, every node/.style={scale=0.52}]
	\draw[dashed, draw=black] (-5,-2.5) rectangle (-1.35,0.95);
	\draw[draw=black,rounded corners] (-4.8,-2.3) rectangle (-1.55,0.75) node at (-3.15,-0.4) {\Large Preemptive model} node at (-3.15,-1) {\Large Problem~\eqref{mod:B-Pre}};
	\draw[draw=black,rounded corners] (.7,-0.7) rectangle (3.95,0.75) node at (2.35,0.4) {\Large Res. cap. market} node at (2.35,-0.2) {\Large Problem~\eqref{mod:A-Res}};
	\draw[draw=black,rounded corners] (6.1,-1.3) rectangle (9.35,0.2)
	node at (7.75,-0.2) {\Large Day-ahead market} node at (7.75,-.8) {\Large Problem~\eqref{mod:A-DA}};
	\draw[draw=black,rounded corners] (11.7,-2.3) rectangle (14.95,0.75) node at (13.35,-0.1) {\Large Balancing market} node at (13.35,-0.65) {\Large Problem~\eqref{mod:A-CV-rt}} node at (13.35,-1.15) {\Large for scenario $s'$};
	\draw[->,line width=.1mm] (-1.55,0) -- (.7,0) node at (-.4,-.5) {\LARGE $\{\hat\chi_e, \forall e\}$};
	\draw[->,line width=.1mm] (3.95,-0) -- (6.1,-0) node at (5.1,-.5) {\LARGE $\left(\begin{subarray}{c} \hat{r}^{+}_{i},\hat{r}^{-}_{i}, \\  \hat{r}_{e}^{+}, \hat{r}_{e}^{-} \end{subarray} \right)$};
	\draw[->,line width=.1mm] (9.35,-0.75) -- (11.7,-.75) node at (10.7,-1.25) {\LARGE $\left(\begin{subarray}{c} \hat{p}_{i}, \hat{w}_{j}, \\  \hat{\delta}_n, \hat{f}_{\ell} \end{subarray} \right)$};
	\draw[->,line width=.1mm] (-1.55,-1.2) -- (6.1,-1.2) node at (2.75,-1.6) {\LARGE $\{1-\hat\chi_e, \forall e\}$};
	\draw[->,line width=.1mm] (3.95,.5) -- (11.7,.5) node at (10.7,0) {\LARGE $\left(\begin{subarray}{c} \hat{r}^{+}_{i},\hat{r}^{-}_{i}, \\  \hat{r}_{e}^{+}, \hat{r}_{e}^{-} \end{subarray} \right)$};
	\draw[->,line width=.1mm] (-1.55,-2) -- (11.7,-2) node at (10.4,-2.35) {\Large $\CC\subseteq\AC$};
	\end{tikzpicture}\vspace{-.2cm}
	\caption{Schematic representation of preemptive transmission allocation model}\label{fig:struct}
\end{figure}
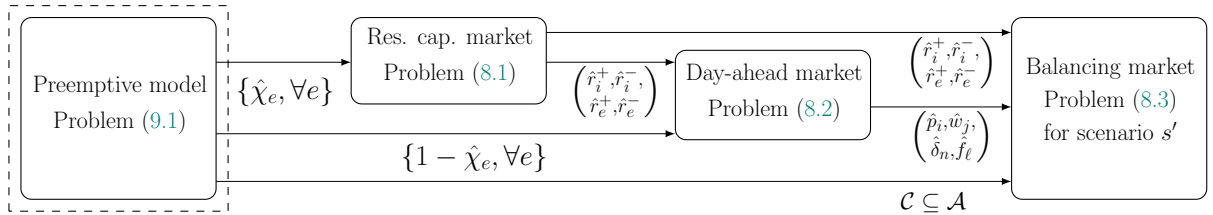

In this work, the focus is on the establishment of coalitional agreements at the reserve procurement stage among a set of areas $\CC\subseteq\AC$. In the balancing market, we assume that all areas that participate in the coalition, exchange balancing energy in a perfectly coordinated setup.
 As a result, real-time tie-line flows are treated as free variables, allowing for deviations from the day-ahead schedule. 
 The coalition-dependent preemptive model is given by:
\begingroup
\allowdisplaybreaks
\begin{subequations} \label{mod:B-Pre}
\begin{align}
 {J(\CC)}&=\,\underset{ \Phi_{\text{PR}} }{\min}\quad \sum_{i \in \mathcal I}\left(C^{+}_{i} r^{+}_{i} + C^{-}_i r^{-}_{i}\right) + \sum_{i\in \mathcal I}{C}_{i}{p}_{i}\nonumber \\ &\quad\quad\quad\quad\quad\quad\quad\quad+ \sum_{s\in\mathcal S} \pi_s \Big[ \sum_{i \in \mathcal I}{C}_{i}\left({p}_{is}^{+}-{p}_{is}^{-}\right)+\sum_{n \in \mathcal N} {C}^{\text{sh}}l^{\text{sh}}_{ns} \Big]  \label{eq:B-pre-obj}
\end{align}
\begin{align}
\mathrm{s. t.}\ & \;\;\;  0\leq\chi_{e}'\leq 1,\quad \forall e\in\mathcal E_{\CC},\label{eq:B-pre-ta}\\
& \;\;\;  \chi_{e}'=\chi_{e},\quad \forall e\in\mathcal E\setminus\mathcal E_{\CC},\label{eq:B-pre-ta-fixed}\\
& \;\;\;  \text{Constraints} \;\; \eqref{eq:A-CV-rt-bal-constr} -\eqref{eq:A-CV-rt-flowAC-constr}\text{ and }{\delta}_{1 s} = 0, \ \delta_{n s} \; \text{free}, \ \forall n \in \mathcal{N},\quad \forall s\in \mathcal S,\label{eq:B-A-CV}\\
& \;\;\;  {{f}_{\ell s} = {f}_{\ell},\quad \forall \ell\in\cup_{e\in\mathcal E^-(\chi,\CC)} \Lambda_{e},\quad\forall s\in \mathcal S},\label{eq:B-A-CV-flow} \\
& \;\; \left(\begin{subarray}{c} r^{+}_{i},r^{-}_{i}, \\  r_{e}^{+}, r_{e}^{-} \end{subarray} \right)  \in \text{arg}
	\left\{\!\begin{aligned}
	&  \underset{\Phi_{\text{R}}}{\min} \quad \eqref{eq:A-Res-obj} \quad \mathrm{s. t.}\ \text{constraints \eqref{eq:A-Res-req-up} - \eqref{eq:A-Res-X2}}
	\end{aligned}\right\}, \label{eq:LLR} \\
	& \;\;\; \left(\begin{subarray}{c} p_{i}, w_{j}, \\  \delta_n, f_{\ell} \end{subarray} \right)  \in \text{arg} \;
	\left\{\!\begin{aligned}
	&  \underset{\Phi_{\text{D}}}{\min} \quad \eqref{eq:DA-Obj} \quad \mathrm{s. t.}\
	\text{constraints \eqref{eq:A-CV-da-bal-constr} - \eqref{eq:A-CV-da-RefBus-constr}}
	\end{aligned}\right\}, \label{eq:LLD} 
\end{align}
\end{subequations}
\endgroup
where $\Phi_{\text{PR}}\allowbreak=\allowbreak\{\chi_e',\allowbreak \forall e \cup \Phi_{\text R} \cup \Phi_{\text D} \cup  \Phi_{\text B}^{s}, \allowbreak\forall s\}$ is the set of primal optimization variables. For the sake of {brevity}, the Lagrange multipliers of the lower-level optimization problems are omitted here, but the complete set of Karush-Kuhn-Tucker (KKT) conditions of problems \eqref{mod:A-Res} and \eqref{mod:A-DA} are listed in the appendix in \cref{app:a}. Unless stated otherwise, the preemptive model refers to problem~\eqref{mod:B-Pre} associated with $J(\AC)$ and the corresponding optimal transmission allocation is denoted by $\{\hat\chi_e, \forall e\}$.

Model \eqref{mod:B-Pre} is a stochastic optimization problem, since wind power is described by a finite set of scenarios~$\mathcal{S}$, with $W_{js}$ being the realization of stochastic generation of farm $j$ in scenario $s$ and $\pi_s$ the corresponding probability. The objective is the expected system cost according to the sequential structure described in \cref{sec:seq}.
Constraint~\eqref{eq:B-pre-ta} bounds the share of transmission capacity ${\chi}'_{\textcolor{blue}{e}}\in[0,\,1]^{\mathcal E}$ allocated to reserve exchange for links $e\in\mathcal E_{\CC}$, where $\mathcal E_{\CC}\allowbreak=\allowbreak\{e\,|\, \mathcal{H}(e,a)=0,\,\forall a\in \AC\setminus\CC\}$ is the set of links among only the areas in the coalition $\CC$. The transmission capacity of the remaining links $e\in\mathcal E\setminus\mathcal E_{\CC}$, that is, links that are connected to areas that are not members of coalition $\mathcal{C}$, is fixed according to existing cross-border agreements ${\chi_e}\in[0,\,1]^{\mathcal E}$ in~\eqref{eq:B-pre-ta-fixed}. 
Constraints~\eqref{eq:B-A-CV} ensure feasibility of re-dispatch actions for each scenario, whereas constraint \eqref{eq:B-A-CV-flow} restricts the real-time tie-line flows to the respective day-ahead values, for links $$e \in \mathcal E^-(\chi,\CC)=\{e\in\mathcal E \rvert  \chi_e=0,\ \text{and}\ \exists a\in\AC\setminus\CC\ \text{such}\ \text{that}\ \mathcal{H}(e,a)\neq0\},$$ which is the set of links that are, at least on one end, connected to an area that is not in the coalition~$\CC$ and does not have an existing cross-border agreement for reserves exchange, that is, $\chi_e=0$. In other words, constraint~\eqref{eq:B-A-CV-flow} prevents any imbalance netting or exchange of balancing energy between areas that are not members of coalition $\CC$. Constraints~\eqref{eq:B-A-CV-flow} are removed from problem~\eqref{mod:B-Pre} associated with $J(\AC)$, since $\mathcal E(\chi,\AC)=\emptyset$ for any~$\chi$. 
\looseness=-1

The lower-level problems \eqref{eq:LLR} and \eqref{eq:LLD} are identical to models \eqref{mod:A-Res} and \eqref{mod:A-DA} implementing the shares of transmission capacities in~${\chi}'$. Having this bilevel model \eqref{mod:B-Pre} ensures by construction that the reserve capacity, day-ahead and balancing markets are cleared in consecutive and independent auctions. This structure allows the definition of $\{\hat\chi_e, \forall e\}$ anticipating the impact of these parameters in all subsequent trading floors. From a computational perspective, to obtain a solvable instance of the model \eqref{mod:B-Pre}, we can equivalently replace the lower-level problems \eqref{eq:LLR} and \eqref{eq:LLD} by the KKT conditions, given that \eqref{eq:LLR} and \eqref{eq:LLD} are linear programs. 
The resulting problem is a single-level mathematical program with equilibrium constraints (MPEC) that involves the complementary slackness constraints, which can be transformed into a mixed-integer linear program (MILP) using disjunctive constraints. We refer to the appendix in \cref{app:a}. for the KKT conditions of \eqref{mod:A-Res} and \eqref{mod:A-DA}. 
\looseness=-1

Regarding the structure of the bilevel model \eqref{mod:B-Pre}, in contrast to the reserve capacity and the day-ahead markets, the balancing market is modeled in the upper-level by the last term of the objective~\eqref{eq:B-pre-obj} and constraints \eqref{eq:B-A-CV}-\eqref{eq:B-A-CV-flow}. 
The proposed structure leverages the fact that
 the variables of the balancing market in~\eqref{mod:A-CV-rt} do not enforce any restriction on the upper-level variables $\{ \chi_e',\,\forall e\}$ and also they do not impact the lower-level problems \eqref{eq:LLR}~and~\eqref{eq:LLD}. Following this observation, our formulation reduces the computational complexity of the final MILP, since it avoids the integer reformulation of the balancing market complementarity conditions for each scenario. {We provided a mathematical explanation in the appendix in \cref{app:a}, and for similar applications of bilevel programming, the interested reader is kindly referred to~\cite{pineda2016capacity,morales2014electricity}.}

Having defined the main properties of the preemptive model, the following comments are in order.
Define the sum of the costs~\eqref{eq:A-Res-obj},~\eqref{eq:DA-Obj} and~\eqref{eq:A-CV-rt-Obj}, as $J^{s'}(\emptyset)$ and $J^{s'}(\AC)$, when $\chi_e, \forall e$ are fixed to the existing cross-border arrangements and to the optimal $\{\hat\chi_e, \forall e\}$ from the preemptive model, respectively. From an economic intuition, it follows that the preemptive model reduces the total expected cost since the establishment of broader coalitions enlarges the pool of available reserves and balancing resources, and enables the more efficient allocation of available generation capacity between these services.
This can be mathematically stated as: $$J(\AC)=\mathbb E_s[J^s(\AC)] \leq J(\emptyset) =\mathbb E_s[J^s(\emptyset)],$$ where $\mathbb{E}_s[\cdot] $ is the expectation calculated over the scenario set $\mathcal{S}$.
Using a similar reasoning, it follows that $J$ is a nonincreasing function with respect to the number of areas participating in the coalition, that is, $J(\CC)\leq J(\hat \CC)$ for all $\hat \CC\subseteq\CC.$ (Note that the preemptive model does not guarantee $J^s(\CC)\leq J^s(\hat \CC),$ $\forall\hat \CC\subseteq\CC$ for each scenario independently.) 
It becomes apparent that the implementation of the preemptive model results in a different {system} cost than the one under the current sequential market. Next, we discuss cost allocation in this new sequential market.

\section{Cost allocations under the preemptive transmission allocation model}\label{sec:marketbased}

The preemptive model \eqref{mod:B-Pre} implements a centralized transmission allocation mechanism, under the implicit assumption that all areas are willing to accept the $\{\hat\chi_e, \forall e\}$ solution that by construction minimizes the system-wide expected cost. However, this model does not suggest an area-specific cost allocation that guarantees sufficient benefits for all areas to remain in the grand coalition~$\AC$.\footnote{{As it is introduced earlier in this part and also defined in~\cite{kristiansen2018mechanism}, benefits are the change in total operational cost allocated to a particular area after all three market stages are cleared. The operational cost is equivalent to minus the social welfare (with inelastic demand) and thus it is reasonable to define positive benefits as the reduction of the total operational cost.}} As an alternative methodology, the task of setting the transmission shares and allocating the resulting costs could be accomplished through establishing a new market, cleared before the reserve market, in which regional operators (by also consulting their market participants) would place their bids/offers for the reservation of inter-area transmission capacities, akin to the decision variable of the preemptive model, $\{\chi_e', \forall e\}$. This new market would constitute an ideal benchmark of the market-based allocation process described in \cite{EC1}, implementing a \textit{complete} market for transmission allocations in which capacities would be traded based on bids/offers that reflect the valuations from regional operators.

Since deriving such valuations might be a hard problem for operators, in our studies, we follow another path to promote the formation of stable coalitions for the exchange of reserves. 
Our approach builds an ex-post benefit allocation mechanism on top of the preemptive model, aiming to realize the necessary conditions that accomplish the coordination requirements of this model, without any new marketplace.
In the remainder, we outline the concepts related to benefit allocations for the preemptive model and we discuss the desirable properties that we want to achieve.

Let $J^s_a(\emptyset)$ denote the cost allocated to area~$a$ in scenario~$s$ in the existing sequential market. As previously discussed, the current implementation of the sequential market provides a cost allocation method that satisfies budget balance under every scenario, that is, $J^s(\emptyset)=\sum_{a\in\mathcal A}J^s_a(\emptyset)$. 
The implementation of the preemptive model requires a new method to allocate costs to the areas that participate in this arrangement. This task can equivalently be viewed as allocating benefits based on the change in the total cost as a discount or a mark-up on the original cost allocation of each area defined by $J^s_a(\emptyset)$. 
{While choosing these benefits, our main goal is to ensure that all areas in~$\AC$ are willing to use the preemptive model as a decision-support tool, since otherwise some areas may opt for having their own reserve {exchange} agreement following~\cite[(14)]{eur19}.} In addition, we should aim to form coalitions as large as possible in order to achieve the highest reduction in the expected system cost. To achieve these, we will treat the preemptive model as a coalitional game, which allows us to approach the benefit allocation problem in two ways. First, we can allocate the expected cost reduction, $J(\emptyset)-J(\AC)\geq 0$, to all areas as benefits. Allocating benefits this way achieves budget balance in expectation, which implies that there is no deficit or surplus if the preemptive model is used repeatedly and the uncertainty modeling is accurate enough.\footnote{In practice, the scenario set is inevitably an approximation to the real world. There are various results showing asymptotic guarantees for convex optimization models as long as the scenario set is rich enough~\cite{birge2011introduction}} However, this method does not guarantee that the resulting allocation satisfies budget balance in every scenario, thus requiring a large financial reserve to buffer the fluctuations in the budget in case of surplus or deficit for some realizations.
The second approach is to allocate the scenario-specific cost variation, $J^s(\emptyset)-J^s(\AC)$. This would guarantee budget balance for every scenario.  {However}, the participating areas would {collect} benefits that vary under scenarios, {possibly raising risk considerations}. 

{As a remark, the benefit allocation framework studied in this thesis/part defines these monetary quantities on an area level. They provide each area with an idealized total cost allocation, which would be minus the sum of three terms, that is, the consumers'  and generators' surplus pertaining to that area and the congestion rents collected by the corresponding area operator. We highlight that the methods proposed in the following chapters do not readily define the payment rules for the new sequential market such that these idealized total cost allocations (or the total available surpluses) are distributed on a market participant level, which should be the second step in this analysis (this direction will be discussed in our future work). However, it is possible to provide a guarantee on the cost recovery property of market participants. Later in our work, by picking nonnegative benefits, we in fact guarantee that the available surplus for each area increases. This implies that it is possible to define payment rules to achieve cost recovery, moreover, it is also possible to improve each generator's and consumer's surplus compared to their values in the existing sequential market. {If the preemptive model is used and all market stages are cleared with their new unit prices given by the new Lagrange multipliers, then we can obtain cost recovery on a market participant level. However, such an approach cannot provide any guarantees on an area level at all. Defining side payments to ensure that the cost allocation of each area is close what is suggested by the benefit allocation methods is part of our future research directions.} } 

\section{Appendix}

\subsection{Data for two-area example}\label{app:two_area_ex}
For this example, we removed area 3 from the example we provide in detail in~\cref{sec:illu_ex} while all the other parameters are kept unchanged.
The different levels of wind power penetration are modeled by changing the installed capacities of the wind power plants $j_3$ and $j_6$. $312$ MW corresponds to $120$ and $192$ MW, respectively. $273$ MW corresponds to $105$ and $168$ MW, whereas $234$ MW corresponds to $90$ and $144$ MW. Normalization is done by dividing the expected operation cost by the cost under the same wind power penetration but with $\chi=\epsilon$ where $\epsilon>0$ is a small positive number.
\subsection{KKT conditions}\label{app:a}
We provide the complete set of KKT conditions for the reserve and day-ahead markets in~\eqref{eq:LLR} and~\eqref{eq:LLD}, that appear in the lower level of the stochastic bilevel optimization in \eqref{mod:B-Pre}. The dual multipliers of inequality constraints are listed to the right of the complementarity relationships denoted by $\perp$. For the equality constraints, the dual multipliers are listed after a colon.

The KKT conditions for the reserve market in~\eqref{eq:LLR} are:
\begin{align*}
		& 0 \le  {R}^{+}_{i}- r_{i}^{+}\perp \mu_i^{\text{R}^{+}}\geq 0, \quad \forall i,\\
		& 0 \le  {R}^{-}_{i}-r_{i}^{-}\perp \mu_i^{\text{R}^{-}}\geq 0, \quad \forall i, \\
		 &0 \le \sum_{i \in \mathcal{M}_{a}^{\mathcal I}} r_{i}^{+} + \sum_{e\in\mathcal E} \mathcal{H}(e,a) r_{e}^{+} - RR_a^{+}\perp \mu_a^{\text{RR}^{+}}\geq0,\quad \forall a, \\
	    &0 \le \sum_{i \in \mathcal{M}_{a}^{\mathcal I}} r_{i}^{-} + \sum_{e\in\mathcal E} \mathcal{H}(e,a) r_{e}^{-} - RR_a^{-}\perp \mu_a^{\text{RR}^{-}}\geq0,\quad \forall a, \\
		& 0 \le r_{e}^{+}+\chi_{e}' T_{e}\perp \zeta_e^{\text{L}^{+}}\geq0,\quad \forall e, \\
		& 0 \le  \chi_{e}' T_{e} - r_{e}^{+}\perp \zeta_e^{\text{U}^{+}}\geq0,\quad \forall e, \\
		& 0 \le r_{e}^{-}+\chi_{e}' T_{e}\perp \zeta_e^{\text{L}^{-}}\geq0,\quad \forall e, \\
		& 0 \le \chi_{e}' T_{e}-r_{e}^{-}\perp \zeta_e^{\text{U}^{-}}\geq0,\  \forall e,\\
		&0\le C_i^++\mu_i^{\text{R}^{+}}-\sum_{a: i \in \mathcal{M}_{a}^{\mathcal I}}\mu_a^{\text{RR}^{+}}\perp r_{i}^{+}\geq 0,\quad \forall i, \\
		&0\le C_i^-+\mu_i^{\text{R}^{-}}-\sum_{a: i \in \mathcal{M}_{a}^{\mathcal I}}\mu_a^{\text{RR}^{-}}\perp r_{i}^{-}\geq 0,\quad \forall i, \\
		&\sum_{a} \mu_a^{\text{RR}^{+}}\mathcal{H}(e,a) + \zeta_e^{\text{L}^{+}} - \zeta_e^{\text{U}^{+}} = 0,\quad \forall e,\\
		&\sum_{a} \mu_a^{\text{RR}^{-}}\mathcal{H}(e,a) + \zeta_e^{\text{L}^{-}} - \zeta_e^{\text{U}^{-}} = 0,\quad \forall e.
\end{align*}

The KKT conditions for the day-ahead market in~\eqref{eq:LLD} are:
\begin{align*}
    		& \sum_{j \in \mathcal{M}_n^{\mathcal J}} w_j + \sum_{i\in \mathcal{M}_n^{\mathcal I}}{p}_{i} - \sum_{\ell \in \mathcal L} A_{\ell n} f_{\ell}   = D_n: \lambda_n\ \text{free},\quad \forall n,\\
		&  0 \le p_i-{r}_{i}^{-}\perp\mu_i^{\text{PL}}\geq 0,   \quad \forall i, \\
		&  0 \leq  {P}_{i}- {r}_{i}^{+}-p_i\perp\mu_i^{\text{PU}}\geq 0,   \  \forall i, \\
		& 0 \leq \overline{{W}}_{j}-w_j\perp\mu_j^{\text{WU}}\geq 0,  \quad \forall j,\\
		& f_{\ell} = B_{\ell} \sum_{n\in\mathcal N} A_{\ell n} \delta_n: \lambda_\ell^\text{F}\ \text{free},\quad \forall \ell, \\
		& 0 \le f_{\ell}+(1- \chi_{\ell}') \ {T}_{\ell}\perp\zeta_\ell^{\text{L}}\geq0,\quad \forall \ell ,\\
		&0 \le (1- \chi_{\ell}') \ {T}_{\ell}-f_{\ell}\perp\zeta_\ell^{\text{U}}\geq0,\quad \forall \ell,\\	
		& \delta_1 = 0: \lambda^\text{REF}\ \text{free},\\
		& C_i + \sum_{n:i\in \mathcal{M}_n^{\mathcal I}}\lambda_n-\mu_i^{\text{PL}}+\mu_i^{\text{PU}}=0,\quad \forall i,\\
		&0\leq  \sum_{n:j \in \mathcal{M}_n^{\mathcal J}}\lambda_n +\mu_j^{\text{WU}}\perp w_j\geq 0, \quad \forall j,\\
		&\sum_{n}A_{\ell n}\lambda_n-\lambda_\ell^\text{F}-\zeta_\ell^{\text{L}}+\zeta_\ell^{\text{U}}=0, \quad \forall \ell,\\
		&\sum_{\ell}\lambda_\ell^\text{F}B_{\ell}  A_{\ell n}=0, \quad \forall n\neq 1,\\
		&\sum_{\ell}\lambda_\ell^\text{F}B_{\ell} A_{\ell n}-\lambda^\text{REF}=0, \ n=1.
\end{align*}
Note that the conditions above involve the solutions of the reserve market ${r}_{i}^{+}$ and ${r}_{i}^{-}$, and the transmission allocations $\chi'$ from the optimization variables of preemptive model~\eqref{mod:B-Pre}.

\subsubsection*{Mathematical explanation for the upper-level balancing market formulation}

{As it is previously discussed, an alternative but an equivalent formulation of model~\eqref{mod:B-Pre} could be obtained by including the balancing market as a lower-level problem. However, the proposed structure leverages the fact that the variables of the balancing market in~\eqref{mod:A-CV-rt} do not enforce any restriction on the upper-level variables $\{ \chi_e',\,\forall e\}$ and also they do not impact the lower-level reserve and day-ahead market problems. Moreover, the last term of the objective function in~\eqref{mod:B-Pre}, which relates to the balancing market, is practically a copy of the objective function of model~\eqref{mod:A-CV-rt} for {each} $s\in\mathcal{S}$. These observations combined imply that the KKT conditions of model~\eqref{mod:A-CV-rt}, other than the primal feasibility conditions included in~\eqref{mod:B-Pre} are redundant. Similar formulations for bilevel programming in the~electricity markets can also be found in~\citep{pineda2016capacity,morales2014electricity,dvorkin2018setting,jensen2017cost}.}
\chapter{Coalitional game theory framework}\label{sec:3}
In this chapter, we bring in the preliminaries for benefit allocation mechanisms and {their} desirable properties. We then review {existing} mechanisms from the literature and discuss whether they {attain} these properties. {To facilitate the exposition}, this chapter treats a general coalitional game. 

\section{Preliminaries}
A coalitional game is {defined} by a set of players and {the so-called} coalitional value function, that maps from the subsets of players to the values, that is, the total benefits created by these players~\citep{osborne1994course,peleg2007introduction}. In the preemptive model, the set of players are given by the set of areas $\AC$\footnote{The players involved in this game are required to be areas as a whole (country or region). This includes consumers and generators pertaining to that area and area operators (and potentially the transmission owners). This is because the transmission capacity allocated to the reserve exchange affects the incentive structure of all these market participants.}, whereas the coalitional value function $v:2^{\AC}\rightarrow \mathbb{R}$ can be defined 
either as the expected cost reduction achieved, that is, $$\bar v(\CC) = J(\emptyset) - J(\CC),$$ for all $\CC\subseteq \AC$ or based on the resulting change in the cost of the realized scenario $s\in\mathcal{S}$, that is, $$v^s(\CC) = J^s(\emptyset) - J^s(\CC),$$ for all $\CC\subseteq \AC$.
Clearly, {it holds that} $\bar v(\CC)=\mathbb{E}_s[v^s(\CC)].$ Later, we will see that these functions yield different structures for the game. In the remainder of this section, we study a generic~$v$ for the preemptive model satisfying $v(\CC)=0$, for all $|\CC|\leq 1$. This assumption holds since coordination is not possible in the preemptive model without the participation of at least two adjacent areas.

Given the coalitional value function $v$, \textit{a benefit allocation mechanism} defines the benefit received by each area $a\in\AC$ with $\beta_a(v)\in \R$. The cost allocated to area $a$ under the preemptive transmission allocation model would then be given by $J^s_a(\AC)=J^s_a(\emptyset)-\beta_a(v)$. Depending on its sign, the benefit can be considered as a discount or a mark-up on the original cost allocation. 

{When designing benefit allocation mechanisms, there are three }fundamental properties we want to guarantee, namely, efficiency, individual rationality, and stability. 
A benefit allocation $\beta(v) = \{\beta_a(v)\}_{a\in\mathcal A}\in\mathbb{R}^{\mathcal A}$ is \textit{efficient} if {the whole} value created by the grand coalition, {that is, $\CC = \AC$}, is allocated to the {member}-areas, that is, $\sum_{a\in\AC}\beta_a(v)=v(\AC)$.\footnote{In coalitional games, efficiency is also often referred to as budget-balance. For clarity, this part of the thesis uses efficiency for the benefit allocation, and the term budget-balance is reserved for the cost allocation.} A benefit allocation ensures \textit{individual rationality} if all areas {obtain} nonnegative benefits, that is, $\beta_a(v)\geq 0,$ for all $a\in\AC$. If this property does not hold, the coordination arrangement would yield increased costs for some areas. As a result, these areas may decide not to participate in the preemptive model. Finally, a benefit allocation attains \textit{stability} (in other words, group rationality) if it eliminates the benefit improvements of the areas from forming {sub-}coalitions, that is, $\nexists \mathcal C\subset\AC$ such that $v(\CC)>\sum_{a\in\CC}\beta_a(v)$. This last property is crucial for the preemptive model, since otherwise some areas may opt for having their own reserve {exchange} agreement by excluding the remaining areas. This coincides with our aforementioned goal of ensuring that all areas participate in the preemptive model.

In coalitional game theory, these properties are known to be attained if the benefit allocation lies in the \textit{core}\footnote{To avoid any confusion, we point out that this is a slightly different definition to the core when compared to~\cref{def:core_def} in \cref{part:1}. Here in $K_{\text{Core}}(v)$, we consider deviations from all subsets of players, whereas in~\cref{def:core_def} we assume that all deviations have to include the central operator. A way to combine these definitions would be to extend the domain of the function $-J$ in \eqref{eq:33} as $\hat v$ by assigning $0$ cost to all deviations that do not include the central operator. In this case, we would obtain $\hat v(\{0\})=-\infty$, where $\{0\}$ denotes the deviation set involving only the central operator. Thus, this extension does not satisfy our requirement in \cref{part:2}:  $\hat v(\CC)=0$, for all $|\CC|\leq 1$. Moreover, for this extension, $\hat v(\CC)$ may not be well-defined also for other subsets: $\hat v(\CC)=-\infty$. Such instances are not present in any of the coalitional games considered in \cref{part:2}. Finally, $K_{\text{Core}}(v)$ can also be considered as the intersection of the core in~\cref{def:core_def} ($-J=v$) with $u_0=0$ (without any monotonicity requirement).} {defined as} $\beta(v) \in K_{\text{Core}}(v)$, where
$$K_\text{Core}(v)=\{\beta\in \mathbb{R}^\AC\,|\, \sum_{a\in\AC}\beta_a=v(\AC),\ \sum_{a\in\CC}\beta_a\geq v(\CC),\ \forall \CC\subset\AC\}.$$ 
In this definition, the equality constraint ensures efficiency, while inequality constraints guarantee stability, that is, there is no subset of areas $\CC\subset\AC$ that can yield higher total benefits for its members compared to the benefit allocation under the grand coalition.
The inequality constraints also include $\beta_a(v)\geq v(a)=0$ for all $a\in\AC$.\footnote{For the sake of simplicity, singleton sets are denoted by $a$ instead of $\{a\}.$} This restriction ensures individual rationality. 

The core is a closed polytope involving $2^{|\AC|}$ linear constraints. 
This polytope is nonempty if and only if the coalitional game is balanced~\citep{shapley1967balanced}. Such settings include the cases in which the coalitional value function exhibits supermodularity\footnote{We remind the reader that supermodularity is attained if for any set the participation of an area results in a larger value increment when compared to the subsets of the set under consideration, that is, $v(\CC\cup\{a\})-v(\CC)\geq v(\CC'\cup\{a\})-v(\CC')$, $\forall a\notin\CC,\CC'\subset\CC\subseteq\AC,$ see \cref{def:supms}.}~\citep{shapley1971cores} and the cases in which the coalitional value function can be modeled by a concave exchange economy~\citep{shapley1969market}, a linear production game~\citep{owen1975core} or a risk-sharing game~\citep{csoka2009stable}. In their most general form, the coalitional value functions in these works are given by an optimization problem minimizing a convex objective subject to linear constraints. In the problem at hand, coalitional value functions are associated with solutions to the general non-convex optimization problem \eqref{mod:B-Pre}. As a result, previous works on the nonemptiness of the core are not applicable to our setup.

In case the core is empty, we need to devise a method to approximate a core allocation. To this end, we bring in the {notion of} strong $\epsilon$-core, defined in \cite{shapley1966quasi} as 
$$K_\text{Core}(v,\epsilon)=\{\beta\in \mathbb{R}^\AC\,|\, \sum_{a\in\AC}\beta_a=v(\AC),\ \sum_{a\in\CC}\beta_a\geq v(\CC)-\epsilon,\ \forall \CC\subset\AC\}.$$
This definition can be interpreted as follows. If organizing a coalitional deviation entails an additional cost of $\epsilon\in\mathbb{R}$, coalition values would be given by $v(\CC)-\epsilon$ for all $\CC\neq\AC$. Then, the resulting core would correspond to the strong $\epsilon$-core. For $\epsilon=0$, we retrieve the original core definition, that is, $K_\text{Core}(v,0)=K_\text{Core}(v)$. Let $\epsilon^*(v)$ be the critical value of $\epsilon$ such that the strong $\epsilon$-core is nonempty, {which is mathematically defined as} $$\epsilon^*(v)=\min\{\epsilon\,|\,K_\text{Core}(v,\epsilon)\not=\emptyset\}.$$ The value $\epsilon^*(v)$ is guaranteed to be finite for any function~$v$ and the set $K_\text{Core}(v,\epsilon^*(v))$ is called the \textit{least-core}~\cite{maschler1979geometric}. 
Let the excess of a coalition be defined by $\theta(v,\beta,\CC)=v(\CC)-\sum_{a\in\CC}\beta_a,$ for any nonempty $\CC\subset\AC$. In other words, the set $K_\text{Core}(v,\epsilon^*(v))$ is the set of all efficient benefit allocations minimizing the maximum excess.
If the core is empty, the maximum excess is the maximum violation of a stability constraint. This implies that the least-core achieves an approximate stability property. As a remark, the least-core relaxes also the inequality constraints corresponding to singleton sets $\beta_a\geq v(a)-\epsilon^*(v)=-\epsilon^*(v)$ for all $a\in\AC$ since $\epsilon^*(v)> 0$, and hence it yields approximate individual rationality. Finally, if the core is not empty, we have $\epsilon^*(v)\leq 0$ and the least-core is a subset of the core. 

With the discussion above, we conclude that whenever the core is empty, we can use the least-core to achieve the second best outcome available, that is, a benefit allocation which is efficient, approximately individually rational and approximately stable. Observe that there are generally many points to choose from the least-core (or the core if it is nonempty) achieving the same fundamental properties. In this case, it could be desirable to require additional intuitively acceptable properties to pick a unique benefit allocation. Later, we revisit this idea in our proposed methods. 

Apart from the aforementioned fundamental properties that pertain to the economic side of the problem, computational tractability is also a practical concern, considering that we may need the complete list of coalition values $v(\CC)$ for all $\CC\subseteq\AC$ to fully describe the core and the least-core.
For the coalitional games arising from the preemptive model, each coalition value requires another solution to MILP in~\eqref{mod:B-Pre}, which is NP-hard in general. 
Hence, our goal is to find a core or a least-core benefit allocation that can be computed with limited queries to the coalitional value function. 

Next, we briefly review two benefit allocation mechanisms that are widely used in the literature. 

\section{Shapley value}
The benefit assigned by the Shapley value is given by
$$\beta_a^{\text{Shapley}}(v) = \sum_{\CC\subseteq\AC} \frac{(|\CC|-1)!(|\AC|-|\CC|)!}{|\AC|!}(v(\CC)-v(\CC\setminus a)).$$
This benefit is the average of the marginal contribution of the area $a$ under all coalitions, considering also all possible orderings of areas. 
The Shapley value results in an efficient benefit allocation. Individual rationality is also satisfied if the coalitional value function is nondecreasing, since the marginal contributions would be nonnegative. On the other hand, the Shapley value is guaranteed to lie in the core only when the coalitional value function is supermodular. This is a restrictive condition that is not applicable to our problem. In addition, when the core is empty, the Shapley value does not necessarily lie in the least-core, making it incompatible with the fundamental properties we desire~\cite{maschler1979geometric}. 
In terms of the computational performance, the calculation of the Shapley value requires the exhaustive enumeration of coalition values $v(\CC)$ for all $\CC\subseteq\AC$. Finally, it should be noted that the Shapley value is the unique efficient benefit allocation that satisfies dummy player, symmetry, and additivity properties simultaneously.
Dummy player property requires $\beta_a=0$ for all $a$ for which $v(\CC)-v(\CC\setminus a)=0$ for all~$\CC\subseteq\AC$. In other words, an area incapable of contributing to any coalition~$\CC$ ends up with zero benefits. Next, we show the relation between the previously discussed properties and the dummy player property.

\begin{proposition}\label{prop:dummy} For the core and the least-core, we have,\begin{enumerate}
		\item[(i)] if $a'$ satisfies $v(\AC)-v(\AC\setminus a')=0$, then $K_\text{Core}(v)\subset\{\beta\,|\,\beta_{a'}=0\}$,\item[(ii)]  if $a'$ satisfies $v(\CC)-v(\CC\setminus a')=0$ for all~$\CC\subseteq\AC$, then $K_\text{Core}(v,\epsilon^*(v))\subset\{\beta\,|\,\beta_{a'}=0\}$.
		\end{enumerate}
\end{proposition}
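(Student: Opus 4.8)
\textbf{Proof plan for Proposition~\ref{prop:dummy}.}

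The plan is to exploit the efficiency constraint together with individual rationality, both of which hold on the core and (in a relaxed form) on the least-core. For part (i), suppose $\beta \in K_\text{Core}(v)$ and let $a'$ satisfy $v(\AC)-v(\AC\setminus a') = 0$. Efficiency gives $\sum_{a\in\AC}\beta_a = v(\AC)$, and applying the stability inequality to the coalition $\CC = \AC\setminus a'$ gives $\sum_{a\in\AC\setminus a'}\beta_a \geq v(\AC\setminus a') = v(\AC)$. Subtracting, $\beta_{a'} = v(\AC) - \sum_{a\in\AC\setminus a'}\beta_a \leq 0$. Combined with individual rationality $\beta_{a'}\geq v(a')=0$ (recall $v(\CC)=0$ for $|\CC|\leq 1$, which is the standing assumption on the preemptive game), we conclude $\beta_{a'}=0$. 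Hence $K_\text{Core}(v)\subset\{\beta\,|\,\beta_{a'}=0\}$.

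For part (ii), I would run essentially the same argument but carefully track the slack $\epsilon^*(v)$. Take $\beta \in K_\text{Core}(v,\epsilon^*(v))$ and $a'$ with $v(\CC)-v(\CC\setminus a')=0$ for all $\CC\subseteq\AC$. The key observation is that the hypothesis is strong enough to ``telescope'': for any $\CC$, the value $v(\CC)$ equals $v(\CC\setminus a')$, so removing $a'$ from any coalition never changes its value, and in particular $v(\AC\setminus a')=v(\AC)$. Applying the relaxed stability constraint to $\CC=\AC\setminus a'$ yields $\sum_{a\in\AC\setminus a'}\beta_a \geq v(\AC\setminus a')-\epsilon^*(v) = v(\AC)-\epsilon^*(v)$, and efficiency $\sum_{a\in\AC}\beta_a=v(\AC)$ gives $\beta_{a'}\leq \epsilon^*(v)$. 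This alone is not enough; I need a matching lower bound $\beta_{a'}\geq -\epsilon^*(v)$ would only give $|\beta_{a'}|\leq\epsilon^*(v)$, not $\beta_{a'}=0$. So the real work is to show the extra slack cannot be ``used up'' on $a'$: I would argue that if $\beta_{a'}\neq 0$, one can construct $\beta'$ with $\beta'_{a'}=0$ by redistributing $\beta_{a'}$ among the other areas (e.g., shifting it to any fixed $a''\neq a'$, or spreading it), and check that $\beta'$ has a \emph{strictly smaller} maximum excess whenever $\beta_{a'}>0$, or at least not larger, contradicting that $\beta$ attains the least-core value $\epsilon^*(v)$ — unless $\beta_{a'}=0$ already. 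Here the telescoping property is what guarantees that the excess $\theta(v,\beta,\CC) = v(\CC)-\sum_{a\in\CC}\beta_a$ behaves predictably under the shift: for coalitions not containing $a'$ the excess changes only through the area that receives the mass, and for coalitions containing $a'$, $v(\CC)$ is unaffected by dropping $a'$ so the relevant binding constraints can be re-expressed on the subgame where $a'$ is deleted.

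The main obstacle I anticipate is precisely this second step of part (ii): establishing that the least-core, as the \emph{set} of minimizers of the maximum excess, forces $\beta_{a'}=0$ rather than merely bounding it. The clean way is probably to invoke the equivalence between the least-core and the subgame obtained by deleting the dummy player — i.e., show $\epsilon^*(v) = \epsilon^*(v|_{\AC\setminus a'})$ using the telescoping identity, and that any least-core allocation of the big game restricts to a least-core allocation of the subgame with $\beta_{a'}$ appended as zero; conversely any allocation with $\beta_{a'}\neq 0$ either violates efficiency-plus-stability or can be strictly improved. I would also double-check the edge case where $\AC\setminus a'$ itself is a singleton or empty, where the standing assumption $v(\CC)=0$ for $|\CC|\leq 1$ makes everything degenerate but consistent. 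Once the subgame reduction is in place, part (i) is just the special case where only the top coalition's marginal contribution is assumed to vanish, and no telescoping is needed.
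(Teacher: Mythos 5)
Part (i) of your proposal is correct and is exactly the paper's argument: combine the efficiency equality with the stability inequality for the coalition $\AC\setminus a'$ to get $\beta_{a'}\le v(\AC)-v(\AC\setminus a')=0$, then use $\beta_{a'}\ge v(a')=0$. For part (ii) your overall strategy --- perturb an allocation with $\beta_{a'}\neq 0$ and derive a contradiction --- is also the paper's, but as written it has a gap at the decisive step. The fallback ``or at least not larger'' cannot work: the least-core is the entire set of minimizers of the maximum excess, so producing another allocation with the \emph{same} maximum excess contradicts nothing. The contradiction must be with the minimality of $\epsilon^*(v)$ itself, i.e., you must exhibit some $\epsilon<\epsilon^*(v)$ with $K_\text{Core}(v,\epsilon)\neq\emptyset$. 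Concretely (this is what the paper does): if $\beta_{a'}>0$, the hypothesis $v(\CC)=v(\CC\setminus a')$ gives, for every $\CC\ni a'$, $\sum_{a\in\CC}\beta_a>\sum_{a\in\CC\setminus a'}\beta_a\ge v(\CC\setminus a')-\epsilon^*(v)=v(\CC)-\epsilon^*(v)$, so every constraint indexed by a coalition containing $a'$ holds with \emph{strict} slack; shifting $(|\AC|-1)\delta$ away from $a'$ and adding $\delta$ to each other area preserves efficiency, only tightens the strictly slack constraints, loosens all the others by at least $\delta$, and for $\delta$ small enough lands in $K_\text{Core}(v,\epsilon^*(v)-\delta)$ --- the required contradiction.

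A second point you cannot skip: unlike in part (i), the case $\beta_{a'}<0$ is not excluded a priori, because the least-core only guarantees $\beta_{a'}\ge v(a')-\epsilon^*(v)=-\epsilon^*(v)$. It needs its own mirror-image perturbation: now the coalitions \emph{not} containing $a'$ have strict slack (since $\sum_{a\in\CC}\beta_a\ge v(\CC\cup a')-\epsilon^*(v)-\beta_{a'}>v(\CC)-\epsilon^*(v)$ by the telescoping identity and $-\beta_{a'}>0$), and you move mass toward $a'$ instead. Your alternative ``delete the dummy and relate the two least-cores'' route would also work if you prove $\epsilon^*(v)=\epsilon^*(v|_{\AC\setminus a'})$ and the correspondence of minimizers, but it is more machinery than needed; the direct two-sided perturbation closes the proof.
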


{This proposition provides a missing link in the comparisons of the Shapley value, the core, and the least-core in a generic coalitional game.}
This result shows that the core attains a more restrictive version of the dummy player property, that is, an area incapable of contributing to the set $\AC$ ends up with zero benefits. Finally, the least-core attains the dummy player property in the same way that it is defined for the Shapley value.
The proof and the discussions on symmetry and additivity are relegated to the appendix in \cref{app:a2}. 
\looseness=-1

\section{Nucleolus allocation}

 Among all efficient benefit allocations, the nucleolus allocation is the unique benefit allocation that minimizes the excesses of all coalitions in a lexicographic manner~\cite{schmeidler1969nucleolus}. Nucleolus allocation lies in the least-core and hence attains the desirable economic properties. 
 
 In terms of practical implementation, the lexicographic minimization is computationally demanding in the general case. Nucleolus allocation can be computed by solving a sequence of ${\mathcal{O}(|\AC|)}$ linear programs with constraint sets that are parametrized versions of the core $K_{\text{Core}}(v)$, see~\cite{kopelowitz1967computation,fromen1997reducing}.
 However, each linear program requires the complete list of coalition values. In case the coalition values are given implicitly by the objective value of a single linear optimization problem with constraints depending on the participants of the coalition, the work by~\cite{hallefjord1995computing} proposes using constraint generation algorithms. 
In this approach, ${\mathcal{O}(|\AC|)}$ linear programs are solved by ${\mathcal{O}(|\AC|)}$ constraint generation algorithms that iteratively generates coalitional values on demand. Nevertheless, we may still need to generate all possible coalition values~\cite{hallefjord1995computing,kimms2012approximate}. When the number of areas is large, this approach involving the execution of the constraint generation algorithm $\mathcal{O}(|\AC|)$ times becomes computationally prohibitive for our application.
     
     {For the sake of completeness,}~the appendix in \cref{app:a2-2} provides the mathematical definition for the nucleolus allocation and its comparison with the Shapley value.
 
\section{Appendix}
\subsection{Unique properties of the Shapley value}\label{app:a2}
In this appendix, we analyze each of the unique properties of the Shapley value and how they relate to the fundamental properties associated with the core and the least-core.
    \subsubsection{Dummy player} Dummy player property requires $\beta_a=0$ for all $a$ such that $v(\CC)-v(\CC\setminus a)=0$ for all~$\CC\subseteq\AC$. In other words, an area incapable of contributing to any coalition~$\CC$ ends up with zero benefits. We now reiterate \cref{prop:dummy} and then prove the two claims.
\begin{proposition}\label{prop:dummy2} For the core and the least-core, we have, 
	\begin{itemize}
		\item[(i)] 	If $a'$ satisfies $v(\AC)-v(\AC\setminus a')=0$, then $K_\text{Core}(v)\subset\{\beta\,|\,\beta_{a'}=0\}$,
		\item[(ii)]	If $a'$ satisfies $v(\CC)-v(\CC\setminus a')=0$ for all~$\CC\subseteq\AC$, then $K_\text{Core}(v,\epsilon^*(v))\subset\{\beta\,|\,\beta_{a'}=0\}$.
	\end{itemize}
\end{proposition}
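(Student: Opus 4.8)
The plan is to prove both claims directly from the defining linear (in)equalities of the core and the least-core, together with the efficiency equality. For part (i), suppose $a'$ satisfies $v(\AC)-v(\AC\setminus a')=0$. Take any $\beta\in K_\text{Core}(v)$. The efficiency constraint gives $\sum_{a\in\AC}\beta_a=v(\AC)$, and the stability constraint applied to the coalition $\CC=\AC\setminus a'$ gives $\sum_{a\in\AC\setminus a'}\beta_a\geq v(\AC\setminus a')=v(\AC)$. Subtracting the latter from the former yields $\beta_{a'}\leq 0$. Since the individual rationality constraint (which is the stability constraint for the singleton $\{a'\}$, using $v(a')=0$) gives $\beta_{a'}\geq 0$, we conclude $\beta_{a'}=0$, i.e. $\beta\in\{\beta\,|\,\beta_{a'}=0\}$. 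This establishes $K_\text{Core}(v)\subset\{\beta\,|\,\beta_{a'}=0\}$.

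For part (ii), suppose now $a'$ satisfies $v(\CC)-v(\CC\setminus a')=0$ for every $\CC\subseteq\AC$; in particular $v(a')=0$ is consistent with our standing assumption $v(\CC)=0$ for $|\CC|\le 1$. Take any $\beta\in K_\text{Core}(v,\epsilon^*(v))$. Efficiency gives $\sum_{a\in\AC}\beta_a=v(\AC)$. The $\epsilon$-core stability constraint for $\CC=\AC\setminus a'$ gives $\sum_{a\in\AC\setminus a'}\beta_a\geq v(\AC\setminus a')-\epsilon^*(v)=v(\AC)-\epsilon^*(v)$, so subtracting yields $\beta_{a'}\leq\epsilon^*(v)$. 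For the lower bound I would argue by contradiction using the least-core's optimality: if some $\beta\in K_\text{Core}(v,\epsilon^*(v))$ had $\beta_{a'}<-\epsilon^*(v)$ — actually the cleaner route is to exploit the minimality of $\epsilon^*(v)$ directly. Consider the perturbed allocation $\tilde\beta$ obtained from $\beta$ by moving a small amount $\eta>0$ from some other coordinate into $\beta_{a'}$ (this preserves efficiency). Using the hypothesis $v(\CC)-v(\CC\setminus a')=0$ for all $\CC$, one checks that the excess $\theta(v,\tilde\beta,\CC)$ of every coalition $\CC$ either stays the same (if $a'\in\CC$ and the donor is also in $\CC$) or can only change in a controlled way; choosing the donor coordinate to be one whose excess constraints are slack and $\eta$ small enough, one shows that if $\beta_{a'}<0$ then all excesses strictly below $\epsilon^*(v)$ remain below it while the binding ones improve, contradicting that $\epsilon^*(v)$ is the minimal value with nonempty $\epsilon$-core — unless the allocation already has $\beta_{a'}=0$. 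Combined with $\beta_{a'}\leq\epsilon^*(v)$ and $\epsilon^*(v)>0$ this forces $\beta_{a'}=0$.

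The routine part is part (i), which is a two-line linear-algebra argument. The main obstacle is the lower bound in part (ii): unlike the core, the least-core does not contain an explicit individual-rationality inequality $\beta_{a'}\geq 0$, so one cannot simply read off the bound from a constraint. The cleanest fix I expect is the standard least-core argument: express $K_\text{Core}(v,\epsilon^*(v))$ as the optimal face of the linear program $\min_{\beta,\epsilon}\{\epsilon : \sum_a\beta_a=v(\AC),\ \sum_{a\in\CC}\beta_a\geq v(\CC)-\epsilon\ \forall\CC\subset\AC\}$, observe that the dummy hypothesis makes the pair of coalitions $\CC$ and $\CC\cup\{a'\}$ (for $a'\notin\CC$) interchangeable in the excess structure, and deduce via a transfer/averaging argument on an optimal $\beta$ that any optimal solution must set $\beta_{a'}=0$ — otherwise transferring $\beta_{a'}$ toward $0$ strictly decreases the maximal excess, contradicting optimality of $\epsilon^*(v)$. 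I would present this transfer argument carefully, since the bookkeeping over which coalitions contain $a'$ and the donor is where the proof could go wrong; the rest is immediate from the definitions given earlier in the excerpt.
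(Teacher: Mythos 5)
Your part (i) is exactly the paper's argument: efficiency plus the stability constraint for $\AC\setminus a'$ gives $\beta_{a'}\le v(\AC)-v(\AC\setminus a')=0$, and the singleton constraint gives $\beta_{a'}\ge 0$. Your overall strategy for part (ii) — a perturbation argument exploiting the minimality of $\epsilon^*(v)$ — is also the paper's route. But as written, your part (ii) has a logical gap in how the two bounds are combined. The inequality $\beta_{a'}\le\epsilon^*(v)$ that you extract from the $\epsilon$-core constraint for $\AC\setminus a'$ is too weak: together with having ruled out $\beta_{a'}<0$ it only yields $0\le\beta_{a'}\le\epsilon^*(v)$, and since $\epsilon^*(v)>0$ in the interesting case this does \emph{not} force $\beta_{a'}=0$. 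The case $0<\beta_{a'}\le\epsilon^*(v)$ must be excluded by a second, symmetric perturbation: if $\hat\beta_{a'}>0$, then for every $\CC\ni a'$ the dummy hypothesis gives $\sum_{a\in\CC}\hat\beta_a>\sum_{a\in\CC\setminus a'}\hat\beta_a\ge v(\CC\setminus a')-\epsilon^*(v)=v(\CC)-\epsilon^*(v)$, i.e.\ all constraints involving $a'$ are strictly slack, so one can take $(|\AC|-1)\delta$ away from $a'$ and give $\delta$ to each other area, landing in $K_\text{Core}(v,\epsilon^*(v)-\delta)$ and contradicting minimality. Your closing sentence (``transferring $\beta_{a'}$ toward $0$'') gestures at this, but the middle of your argument asserts the conclusion follows from the constraint bound alone, which it does not.

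A second, smaller issue: your proposed transfer moves mass from ``some other coordinate'' chosen to have slack constraints. Such a donor need not exist, and a single donor makes the bookkeeping fragile, since coalitions containing the donor but not $a'$ could be binding at $\epsilon^*(v)$. The paper instead spreads the transferred amount uniformly over all areas other than $a'$: every coalition $\CC\not\ni a'$ (the only candidates for binding constraints) then gains $|\CC|\delta>0$, while coalitions containing $a'$ are strictly slack by the dummy hypothesis and tolerate a small loss. Adopting the uniform spread, and handling the trivial case $\epsilon^*(v)\le 0$ by noting the least-core is then contained in the core so part (i) applies, closes the proof.
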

\begin{proof}
	(i) Assume core is nonempty, since otherwise the proof is trivial. Combining the equality constraint with the inequality constraint corresponding to $\AC\setminus a'$, we obtain $\beta_{a'}\leq v(\AC)-v(\AC\setminus a')=0$ for any $\beta\in K_\text{Core}(v)$. Combining this with $K_\text{Core}(v)\subset\mathbb{R}^\AC_+$ gives us $\beta_{a'}=0$ for any $\beta\in K_\text{Core}(v)$.
	
	(ii) Assume core is empty, $\epsilon^*(v)>0$, since otherwise part~(i) concludes that $$K_\text{Core}(v,\epsilon^*(v))\subseteq K_\text{Core}(v)\subset\{\beta\,|\,\beta_{a'}=0\}.$$ Next, we prove by contradiction that first $\beta_{a'}>0$ is not possible and then $\beta_{a'}<0$ is not possible.
	
	Let $\hat \beta \in K_\text{Core}(v,\epsilon^*(v))$ be a benefit allocation with $\hat \beta_{a'}>0$. We now show that there exists $\epsilon <\epsilon^*(v)$ such that $K_\text{Core}(v,\epsilon)\neq \emptyset$. This would contradict the definition of the least-core. 
	
	For any $\CC\ni a'$, we have $\sum_{a\in\CC}\hat\beta_a>\sum_{a\in\CC\setminus a'}\hat\beta_a\geq v(\CC\setminus a')-\epsilon^*(v)=v(\CC)-\epsilon^*(v)$. Notice that we can always find a small positive number $\delta$ such that  $\sum_{a\in\CC}\hat\beta_a - (|\AC|-|\CC|)\delta> v(\CC)-\epsilon^*(v)+\delta$ holds for any $\CC\ni a'$. Next, we show that $K_\text{Core}(v,\epsilon^*(v)-\delta)$ is nonempty for this particular choice. 
	
	Define $\bar{\beta}$ such that $\bar\beta_a=\hat \beta_a + \delta$ for all $a\neq a'$and $\bar{\beta}_{a'}=\hat{\beta}_{a'}-(|\AC|-1)\delta$. This new allocation $\bar{\beta}$ clearly satisfies the equality constraint in $K_\text{Core}(v,\epsilon^*(v)-\delta)$. For inequality constraints $\CC\ni a'$, we have $\sum_{a\in\CC}\bar\beta_a=\sum_{a\in\CC}\hat\beta_a - (|\AC|-|\CC|)\delta> v(\CC)-\epsilon^*(v)+\delta$, where the strict inequality follows from the definition of $\delta$. For inequality constraints $\CC\not\ni a'$, we have $\sum_{a\in\CC}\bar\beta_a\geq\sum_{a\in\CC}\hat\beta_a+\delta\geq v(\CC)-\epsilon^*(v)+\delta$. Hence, $\bar \beta \in K_\text{Core}(v,\epsilon^*(v)-\delta)$, in other words, $K_\text{Core}(v,\epsilon^*(v)-\delta)\neq\emptyset.$ This contradicts $K_\text{Core}(v,\epsilon^*(v))$ being the least-core. Hence, $\hat \beta_{a'}\not>0$. 
	
	Next, let $\hat \beta \in K_\text{Core}(v,\epsilon^*(v))$ be a benefit allocation with $\hat \beta_{a'}<0$. We again show that there exists $\epsilon <\epsilon^*(v)$ such that $K_\text{Core}(v,\epsilon)\neq \emptyset$. 
	
	Since $\hat \beta \in K_\text{Core}(v,\epsilon^*(v))$ and $v(a')=0$, we have $0>\beta_{a'}\geq v(a') -\epsilon^*(v)=-\epsilon^*(v)$. Notice that, for any $\CC\not\ni a'$, we have $\sum_{a\in\CC}\hat\beta_a\geq v(\CC\cup a')-\epsilon^*(v)-\beta_{a'}$ by adding and subtracting $\beta_{a'}$, and by using the fact that $\epsilon^*(v)>0$ for the special case corresponding to $\CC\cup a'=\AC$. Since we have $v(\CC\cup a')=v(\CC)$ and $\beta_{a'}<0$, we obtain $\sum_{a\in\CC}\hat\beta_a> v(\CC)-\epsilon^*(v)$. 
	Notice that we can always find a small positive number $\delta$ such that  $\sum_{a\in\CC}\hat\beta_a - |\CC|\delta> v(\CC)-\epsilon^*(v)+\delta$ holds for any $\CC\not\ni a'$. Next, we show that $K_\text{Core}(v,\epsilon^*(v)-\delta)$ is nonempty for this particular choice. 
	
	Define $\bar{\beta}$ such that $\bar\beta_a=\hat \beta_a - \delta$ for all $a\neq a'$and $\bar{\beta}_{a'}=\hat{\beta}_{a'}+(|\AC|-1)\delta$. This new allocation $\bar{\beta}$ clearly satisfies the equality constraint in $K_\text{Core}(v,\epsilon^*(v)-\delta)$. For inequality constraints $\CC\not\ni a'$, we have $\sum_{a\in\CC}\bar\beta_a=\sum_{a\in\CC}\hat\beta_a - |\CC|\delta> v(\CC)-\epsilon^*(v)+\delta$, where the strict inequality follows from the definition of $\delta$. For inequality constraints $\CC\ni a'$, we have $\sum_{a\in\CC}\bar\beta_a\geq\sum_{a\in\CC}\hat\beta_a+\delta\geq v(\CC)-\epsilon^*(v)+\delta$. Hence, $\bar \beta \in K_\text{Core}(v,\epsilon^*(v)-\delta)$, in other words, $K_\text{Core}(v,\epsilon^*(v)-\delta)\neq\emptyset.$ This contradicts $K_\text{Core}(v,\epsilon^*(v))$ being the least-core.  Hence, $\hat \beta_{a'}\not<0$. This concludes that $\hat \beta_{a'}=0$.
\end{proof}

The proposition above provides a missing link in the comparisons of the Shapley value, the core, and the least-core in a generic coalitional game. It shows that the core attains a more restrictive version of the dummy player property, that is, $\beta_a=0$ for all $a$ such that $v(\AC)-v(\AC\setminus a)=0$. In other words, an area incapable of contributing to the set of all areas $\AC$ ends up with zero benefits. Finally, the least-core attains the dummy player property in the same way that it is defined for the Shapley value.

\subsubsection*{Symmetry} Symmetry property is achieved if the benefit allocations of two areas are the same whenever their marginal contributions to any coalition~$\CC$ are the same. It can be verified that this property does not hold for every benefit allocation from the core and the least-core. However, it is always possible to find a benefit allocation satisfying the symmetry property in any nonempty strong $\epsilon$-core (and hence both in the core and the least-core) since the linear inequality constraints imposed by the convex polytope $K_\text{Core}(v,\epsilon)$ on two such area benefits are identical. Notice that the symmetry property is computationally hard to check since it would require evaluating the function $v$ for all coalitions. We can instead aim for a more restrictive version of the symmetry property by considering only the marginal contributions to the set of all areas $\AC$. This stronger condition would be computationally tractable to check. 

\subsubsection*{Additivity} Additivity property is given by $\beta(\hat v + v)=\beta(\hat v)+\beta(v)$ for all $\hat v,v:2^\AC\rightarrow \mathbb{R}$. As it is discussed in~\citep{osborne1994course}, this property is mathematically convenient but hard to argue for since the sum of coalitional value functions is in general considered to induce an unrelated coalitional game. As a remark, additivity further implies that the benefit allocation of any player responds monotonically to changes in the coalition value $v(\CC)$ (a positive change if $\CC$ contains the area). In the general case, this property cannot be achieved by any benefit allocation chosen from both the core and the least-core, see the discussions and the counter examples provided in the work of \cite{young1985monotonic,megiddo1974nonmonotonicity}.
	
\subsection{Definitions and discussions for the nucleolus allocation}\label{app:a2-2}

In this section, we provide the mathematical definition for the nucleolus allocation, compare it with the Shapley value, and discuss additional aspects of its computation and lexicographic minimization property.

Denote the excesses as $\theta(v,\beta,\CC)=v(\CC)-\sum_{a\in\CC}\beta_a$ for any nonempty $\CC\subset\AC$. Let $\theta(v,\beta)\in\mathbb{R}^{2^{|\AC|}-2}$ be the vector whose entries are the excesses but arranged in a nonincreasing order. Given two such ordered vectors $x,y\in\mathbb R^{n_0}$, with $x<_{\text L}y$ we mean that $x$ is lexicographically smaller than $y$, that is, there exists an index $\nu_0\leq n_0$ such that $x_\nu = y_\nu$ for all $\nu<\nu_0$, and $x_{\nu_0}<y_{\nu_0}$. Let $\X\subset\mathbb{R}^{\AC}$ denote a set of benefit allocations that we are interested in. Then, the nucleolus of the set $\X$ is the benefit allocation that minimizes the excess of all coalitions in a lexicographic manner among all benefit allocations from the set $\X$.
Specifically, the nucleolus of the set $\X$, $\beta^{\text{Nuc}}(v,\X)\in\X$,  is defined by
 \begin{equation*}
 \theta(v,\beta^{\text{Nuc}}(v,\X))<_{\text L}\theta(v,\hat\beta),\ \forall \hat \beta\in\X\ \text{s.t.}\ \hat \beta\neq\beta^{\text{Nuc}}(v,\X).
 \end{equation*}

 In the literature, the nucleolus is generally defined with respect to two sets. Let $\X_{\text{BB}}=\{\beta\in\R^{\AC}|\sum_{a\in\AC}\beta_a=v(\AC)\}$ be the set of all efficient benefit allocations.
 The nucleolus of the set $\X_{\text{BB}}$, $\beta^{\text{Nuc}}(v,\X_{\text{BB}})$, was introduced in~\cite{sobolev1975characterization}, and is also called the prenucleolus.
 This allocation always exists for any coalitional game, and it is unique. Moreover, it lies in the least-core since its definition can be regarded as a stronger version of minimizing the maximum excess among all efficient benefit allocations~\citep{maschler1979geometric}. Hence, it satisfies efficiency, approximate stability and approximate individual rationality.
 
 On the other hand, let $\X_{\text{BB,IR}}=\{\beta\in\R^{\AC}_+|\sum_{a\in\AC}\beta_a=v(\AC)\}$ be the set of all efficient and individually rational benefit allocations.
 The nucleolus of the set $\X_{\text{BB,IR}}$, $\beta^{\text{Nuc}}(v,\X_{\text{BB,IR}})$, was introduced in \citep{schmeidler1969nucleolus}. This allocation is again unique, but it exists if and only if $v(\AC)\geq 0$. It is guaranteed to lie in the least-core only when the coalitional value function~$v$ is nondecreasing, since then it was proven that this allocation coincides with $\beta^{\text{Nuc}}(v,\X_{\text{BB}})$ in~\citep{aumann1985game}.
 
 In comparison with the Shapley value, both nucleolus allocations above are consistent with the symmetry property but they do not satisfy the additivity property in general, see~\citep{maschler1979geometric}. For the special class of convex/supermodular graph games, these two nucleolus allocations coincide with the Shapley value~\citep{deng1994complexity}. For our work, the nucleolus will refer to the nucleolus of the set $\X_{\text{BB}}$, since this allocation always exists for a general coalitional game, and it is always a unique allocation from the least-core.
 
 In terms of computational approaches, note that there are also methods to compute the nucleolus by solving a single linear program involving either $4^{|\AC|}$ constraints~\citep{owen1974note} or $2^{|\AC|}!$ constraints~\citep{kohlberg1972nucleolus}. To the best of our knowledge, there are no iterative approaches applicable, and these two methods necessitate the complete evaluation of the coalitional value function. 
 
 Finally, as discussed in~\citep{maschler1992general}, the lexicographic minimization property attained by the nucleolus allocation may not be relevant to the needs of every application, and it may even be considered hard to grasp in many cases. There could be more intuitively acceptable properties that yield a unique point. For instance, a prominent example is the work by \cite{young1982cost} suggesting to allocate benefits in a water supply project in proportion to the population and the total demand of an area.
\chapter{Benefit allocation mechanisms for preemptive transmission allocation model}\label{sec:4}

In this chapter, the first benefit allocation mechanism, which is an ex-ante process with respect to the uncertainty realization, employs as coalitional value function the expected cost reduction. The second mechanism is an ex-post process that can be applied only when the scenario is unveiled, since it uses as coalitional value function the scenario-specific cost variation.
\vspace{.5cm}
\section{Benefit allocations for expected cost reduction}\label{sec:4a}
\vspace{.2cm}

{For the ex-ante allocation mechanism, }the coalitional value function $\bar v(\CC) = J(\emptyset) - J(\CC)\geq 0$ for all $\CC\subseteq \AC$ is nondecreasing, since $J$ is nonincreasing.
Given the function~$\bar v$, an efficient benefit allocation, $\sum_{a\in\AC} \beta_a(\bar v)=\bar v(\AC)$, would result in a cost allocation that is budget-balanced in expectation, since $$J(\AC)= J(\emptyset) - \bar v(\AC) = \mathbb{E}_s\big[\sum_{a\in\AC}J^s_a(\emptyset)\big] - \sum_{a\in\AC} \beta_a(\bar v)= \mathbb{E}_s\big[\sum_{a\in\AC}J^s_a(\AC)\big].$$

While designing a benefit allocation mechanism, our goal is to achieve the three fundamental properties, that is, efficiency, individual  rationality and stability, associated with the core $K_{\text{Core}}(\bar v)$. However, as already mentioned, the previous results on the nonemptiness of the core are not applicable to our problem.

The following condition is applicable to some specialized instances of~$\bar v$ above.

\begin{proposition}\label{prop:corempt}
	The core, $K_\text{Core}(\bar v)$, is nonempty if there exists an area $a'\in\AC$ such that $\bar v(\AC\setminus a')=0$.
\end{proposition}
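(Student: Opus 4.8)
The plan is to exhibit an explicit benefit allocation lying in $K_\text{Core}(\bar v)$ under the hypothesis that some area $a'$ satisfies $\bar v(\AC\setminus a')=0$. Since $\bar v$ is nondecreasing with $\bar v(\CC)=0$ for all $|\CC|\le 1$, the assumption $\bar v(\AC\setminus a')=0$ forces $\bar v(\CC)=0$ for \emph{every} $\CC\subseteq\AC$ with $a'\notin\CC$ (such a $\CC$ is a subset of $\AC\setminus a'$). This is the structural fact that makes the construction work: area $a'$ is the only area that can generate value, in the sense that every coalition not containing $a'$ has zero worth.

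The natural candidate is to give all the value to $a'$: set $\beta_{a'}(\bar v)=\bar v(\AC)$ and $\beta_a(\bar v)=0$ for all $a\neq a'$. First I would check efficiency, which is immediate: $\sum_{a\in\AC}\beta_a(\bar v)=\bar v(\AC)$. Individual rationality holds because $\bar v(\AC)\geq 0$ (as $\bar v$ is nondecreasing and $\bar v(\emptyset)=0$) and all other benefits are $0$. For stability, take any $\CC\subset\AC$; there are two cases. If $a'\notin\CC$, then $\sum_{a\in\CC}\beta_a(\bar v)=0=\bar v(\CC)$ by the structural fact above, so the core inequality holds with equality. If $a'\in\CC$, then $\sum_{a\in\CC}\beta_a(\bar v)=\bar v(\AC)\geq \bar v(\CC)$ since $\CC\subseteq\AC$ and $\bar v$ is nondecreasing. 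Hence $\beta(\bar v)\in K_\text{Core}(\bar v)$, so the core is nonempty.

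There is essentially no hard obstacle here — the only point requiring a little care is the derivation that $\bar v(\AC\setminus a')=0$ together with monotonicity and $\bar v(\CC)=0$ for $|\CC|\le 1$ yields $\bar v(\CC)=0$ for all coalitions avoiding $a'$; this is just monotonicity applied along the chain $\CC\subseteq \AC\setminus a'$. One could alternatively phrase the argument via \cref{prop:dummy}: the hypothesis says $a'$ is almost a dummy-type player relative to $\AC$, but it is cleaner and self-contained to simply display the allocation and verify the $2^{|\AC|}$ defining (in)equalities directly as above. I would present the explicit allocation, note that it also realizes a strong form of the dummy-player property (all non-$a'$ areas receive zero), and remark that in this regime the entire expected cost reduction is attributable to the single pivotal area $a'$, which both explains why the core is nonempty and gives economic intuition for the result.
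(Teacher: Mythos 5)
Your proof is correct and follows essentially the same route as the paper's: both exhibit the allocation that assigns the entire value $\bar v(\AC)$ to $a'$ and zero to everyone else, then verify the core constraints by splitting coalitions according to whether they contain $a'$, using monotonicity of $\bar v$ in both cases. The only cosmetic difference is that you spell out explicitly why $\bar v(\CC)=0$ for every $\CC$ avoiding $a'$ (monotonicity plus $\bar v(\emptyset)=0$), which the paper states more tersely.
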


The proof is relegated to the appendix in \cref{app:a3}. 
Note that this condition can only be attained in specialized instances of the preemptive  model. For instance, in the case of a star graph $(\mathcal A,\mathcal E)$, the central area would satisfy this condition, since it is indispensable for enabling any reserve exchange.\footnote{Notice that this result has connections with the nonemptiness of the core in \cref{def:core_def} in \cref{part:1}, because the central operator was deemed indispensable for a secure grid operation via the transmission network.}
However, in a general graph, the core could potentially be empty and we focus on this case in the illustrative example provided in~\cref{sec:emptycore}. 

In case of an empty core, our goal is to achieve {a least-core solution}, {which can be perceived as} the second best outcome in our context. Other than approximating the stability property, the least-core also approximates the individual rationality property by relaxing the inequality constraints for the singleton sets, that is, $\beta_a\geq \bar v(a)-\epsilon^*(\bar v)=-\epsilon^*(\bar v)$ for all $a\in\AC$. The following proposition shows that the least-core is individually rational for the coalitional game given by~$\bar v$.

\begin{proposition}\label{prop:leastcore_volp}
	$K_\text{Core}(\bar v,\epsilon^*(\bar v))$ lies in $\mathbb{R}^\AC_+$.
\end{proposition}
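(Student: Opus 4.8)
The plan is to argue by contradiction, mirroring the structure of the proof of~\cref{prop:dummy}(ii) but replacing the dummy-player hypothesis there by the monotonicity of $\bar v$ (which holds here because $J$ is nonincreasing).

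First, I would dispose of the easy case: if the core $K_\text{Core}(\bar v)$ is nonempty, then $\epsilon^*(\bar v)\le 0$, so $K_\text{Core}(\bar v,\epsilon^*(\bar v))\subseteq K_\text{Core}(\bar v)\subseteq\R^\AC_+$ and there is nothing to prove. Hence I assume $\epsilon^*(\bar v)>0$. Suppose, for contradiction, that some $\beta\in K_\text{Core}(\bar v,\epsilon^*(\bar v))$ has $\beta_{a'}<0$ for an area $a'\in\AC$. The goal is then to exhibit a small $\delta>0$ and a new allocation $\bar\beta$ feasible for the strong $\epsilon$-core with $\epsilon=\epsilon^*(\bar v)-\delta<\epsilon^*(\bar v)$, contradicting the minimality of $\epsilon^*(\bar v)$.

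The candidate allocation is $\bar\beta_a=\beta_a-\delta$ for $a\ne a'$ and $\bar\beta_{a'}=\beta_{a'}+(|\AC|-1)\delta$, which preserves the efficiency equality automatically. Checking the inequality constraints of $K_\text{Core}(\bar v,\epsilon^*(\bar v)-\delta)$ splits into two cases. For a proper coalition $\CC\ni a'$, a direct computation gives $\sum_{a\in\CC}\bar\beta_a=\sum_{a\in\CC}\beta_a+(|\AC|-|\CC|)\delta$, and since $\CC$ is proper, $|\AC|-|\CC|\ge 1$, so this is at least $\bar v(\CC)-\epsilon^*(\bar v)+\delta$, as required. The delicate case is $\CC\not\ni a'$: here I would ``borrow'' the $\epsilon^*$-core inequality for the larger coalition $\CC\cup\{a'\}$ (or the efficiency equality, when $\CC\cup\{a'\}=\AC$), namely $\sum_{a\in\CC}\beta_a\ge\bar v(\CC\cup\{a'\})-\epsilon^*(\bar v)-\beta_{a'}$, and then invoke monotonicity $\bar v(\CC\cup\{a'\})\ge\bar v(\CC)$ together with the strict slack $-\beta_{a'}>0$. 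This yields $\sum_{a\in\CC}\beta_a\ge\bar v(\CC)-\epsilon^*(\bar v)-\beta_{a'}$ with a strictly positive buffer; choosing $\delta<-\beta_{a'}/|\AC|$ absorbs the loss $-|\CC|\delta$ incurred by the perturbation and gives $\sum_{a\in\CC}\bar\beta_a\ge\bar v(\CC)-(\epsilon^*(\bar v)-\delta)$.

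Collecting the two cases shows $\bar\beta\in K_\text{Core}(\bar v,\epsilon^*(\bar v)-\delta)\ne\emptyset$, the desired contradiction; therefore every $\beta$ in the least-core satisfies $\beta_{a'}\ge 0$ for all $a'$. The main obstacle, as indicated above, is exactly the coalitions not containing $a'$: the singleton constraints only guarantee $\beta_{a'}\ge-\epsilon^*(\bar v)$, which is too weak, so the argument must manufacture the extra slack elsewhere, and this is precisely where combining the larger-coalition inequality with the monotonicity of $\bar v$ (i.e., with $J$ being nonincreasing in the coalition) is essential — an observation one should state explicitly, since it is the only structural property of the preemptive model used in the proof.
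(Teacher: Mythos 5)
Your proof is correct and follows essentially the same route as the paper's: a contradiction argument with the perturbation $\bar\beta_a=\beta_a-\delta$ for $a\neq a'$, $\bar\beta_{a'}=\beta_{a'}+(|\AC|-1)\delta$, where the key step for coalitions $\CC\not\ni a'$ is exactly the paper's — borrow the $\epsilon^*$-core inequality for $\CC\cup\{a'\}$ and exploit the monotonicity of $\bar v$ together with $\beta_{a'}<0$ to create strict slack. Your explicit treatment of the nonempty-core case and the explicit choice $\delta<-\beta_{a'}/|\AC|$ are minor refinements of the same argument.
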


The proof is relegated to the appendix in \cref{app:a4}. It relies on the observation that 
whenever the coalitional value function is given by a stochastic bilevel program
any least-core allocation violating the individual rationality would imply the existence of an $\epsilon <\epsilon^*$, such that $K_\text{Core}(\bar v,\epsilon)$ is nonempty, contradicting the definition of the least-core. Thus, we can use the least-core to achieve efficiency, individual rationality and approximate stability, whenever the core is empty.

For this coalitional game, the Shapley value satisfies efficiency and individual rationality, but stability (or approximate stability) and computational tractability are not attained. We provide an example for the stability violation of the Shapley value in~\cref{sec:shapno}. The nucleolus allocation, on the other hand, lies in the least-core and it satisfies efficiency, individual rationality and approximate stability.

Based on these discussions, we propose a least-core-selecting mechanism:
\begin{equation}
\label{eq:ba_opt}
\underset{  \epsilon,\, \beta}{\min}\ \epsilon\quad \mathrm{s. t.}\ \ \epsilon\geq 0,\ \beta\in K_{\text{Core}}(\bar v,\epsilon).
\end{equation}
Let $\hat{\epsilon}$ denote the optimal value of $\epsilon$ for this problem. If the core is empty, we have $\hat\epsilon=\epsilon^*(\bar v)>0$ and problem~\eqref{eq:ba_opt} finds a least-core allocation. On the other hand, if the core is nonempty, we have $\hat\epsilon=0$ and problem~\eqref{eq:ba_opt} finds instead a core benefit allocation, which attains properties of efficiency, individual rationality and  stability.

The nucleolus allocation always forms an optimal solution pair with $\hat{\epsilon}$ to problem~\eqref{eq:ba_opt}, since it lies in the least-core. In fact, there are in general many optimal solutions to this problem. To this end, we will propose an additional criterion for tie-breaking purposes.

Let $\beta^{\text{c}}$ be a desirable and a fair benefit allocation that is easy to compute but not necessarily in the core or in the least-core. An example could be the marginal contribution of each area $\beta^{\text{m}}:\ \beta_a^{\text{m}} = \bar v(\AC)-\bar v(\AC\setminus a)$ for all $a\in\AC$ which requires $|\AC|+2$ calls to problem~\eqref{mod:B-Pre}. 
Receiving the marginal contribution can be regarded as a fair outcome.\footnote{This allocation coincides with the Vickrey-Clarke-Groves mechanism. \cref{part:1} discussed how this allocation ensures that truthfully reporting the preferences is a dominant strategy Nash equilibrium in an auction setting.} This allocation satisfies individual rationality, dummy player and symmetry properties. However, it is generally not efficient, and not stable.
Another example could be $\beta^{\text{eq}}=(\bar v(\AC)/|\AC|)\bm {1}^\top$ which assigns equal importance to each area. This choice requires two calls to problem~\eqref{mod:B-Pre} and it satisfies efficiency and individual rationality. However, this allocation also violates stability. 

Starting from such a desirable benefit allocation, we can solve the following problem 
\begin{equation}
\label{eq:ba_opt_tie}
\underset{\beta}{\min}\ ||\beta-\beta^{\text{c}}||_2^2\quad \mathrm{s. t.}\ \beta\in K_{\text{Core}}(\bar v, \hat \epsilon),
\end{equation}to obtain a unique benefit allocation for problem~\eqref{eq:ba_opt}.
The uniqueness follows from having a strictly convex objective ($2$-norm).
Let $\hat{\beta}(\bar v,\beta^{\text{c}})$ denote the optimal value of $\beta$ in problem~\eqref{eq:ba_opt_tie}. We define the benefit allocation $\hat{\beta}(\bar v,\beta^{\text{c}})$ as the\textit{ least-core-selecting mechanism}. {This allocation} achieves economic properties of the least-core, and also the core if the core is nonempty, while approximating an additional criterion defined by $\beta^{\text{c}}$. For instance, if the marginal contribution $\beta^{\text{m}}$ is chosen, problem~\eqref{eq:ba_opt_tie} would pick the allocation $\hat{\beta}(\bar v,\beta^{\text{m}})$ approximating the fairness of the marginal contribution. 

Characterizing the constraint sets of problems~\eqref{eq:ba_opt}~and~\eqref{eq:ba_opt_tie} still requires exponentially many solutions to \eqref{mod:B-Pre}. Next, we show that~\eqref{eq:ba_opt}~and~\eqref{eq:ba_opt_tie} can be solved by a single constraint generation algorithm.

\subsubsection*{Constraint generation algorithm}

Here, we describe the steps of the constraint generation algorithm at iteration $k\geq 1$. Let $\mathcal{F}^k\subset 2^{\AC}$ denote the family of coalitions for which we have already generated the coalition values. The algorithm first obtains a candidate solution by solving a relaxed version of problem~\eqref{eq:ba_opt} as follows:
\begin{equation}
\label{mod:candidate_k}
\underset{  \epsilon,\, \beta}{\min}\ \epsilon\quad
\mathrm{s. t.}\ \epsilon\geq 0,\quad \beta\geq0,\quad\sum_{a\in\AC}\beta_a=\bar v(\AC),\quad \sum_{a\in\CC}\beta_a\geq \bar v(\CC)-\epsilon,\quad \forall \CC\in\mathcal{F}^k. 
\end{equation}
Let the optimal solution be denoted by $\epsilon^{k}$, clearly, $\hat\epsilon\geq\epsilon^{k}.$
Solve the following as a tie-breaker:\begin{equation}
	\label{mod:candidate_k_unique}
\underset{\beta}{\min}\ ||\beta-\beta^{\text{c}}||_2^2\quad
\mathrm{s. t.}\ \beta\geq0,\quad\sum_{a\in\AC}\beta_a=\bar v(\AC),
\quad\sum_{a\in\CC}\beta_a\geq \bar v(\CC)-\epsilon^k,\quad \forall \CC\in\mathcal{F}^k.
\end{equation}
Denote the optimal benefit allocation for problem~\eqref{mod:candidate_k_unique} by~$\beta^{k}$. This allocation would form an optimal solution pair to problem~\eqref{mod:candidate_k} with~$\epsilon^{k}$. 
In principle, $\mathcal{F}^1$ can potentially be chosen as an empty set, by setting $\epsilon^{1}$  equal to zero and removing the last set of constraints in~\eqref{mod:candidate_k_unique}.

Given a candidate allocation $\beta^k$, we can then generate the coalition with the maximum stability violation by solving the following problem, which treats $\beta^k$ as a fixed parameter:
\begin{flalign}\label{eq:ba_findviolation}
	&\underset{ \CC\subseteq\AC}{\max}\ \bar v(\CC)-\sum_{a\in\CC}\beta_a^k .
\end{flalign}
Denote the optimal solution by $\CC^k$, and the optimal value by~$\eta^k$.
Using the fact that the coalitional value function~$\bar v$ is given implicitly by the MILP version of the preemptive model~\eqref{mod:B-Pre}, we can show that problem \eqref{eq:ba_findviolation} has an equivalent MILP reformulation, which can be solved efficiently by off-the-shelf optimization solvers. Using this approach, we eliminate the need for evaluating $\bar v(\CC)$ for all $\CC\subseteq \AC$ to solve problem~\eqref{eq:ba_findviolation}.
This MILP reformulation is relegated to the appendix in~\cref{app:a6b}. 
Finally, this problem generates the coalition value $\bar v(\CC^k)=\eta^k+\sum_{a\in\CC^k}\beta_a^k,$ and we add $\CC^k$ to $\mathcal{F}^k$ for the next iteration.

In order to define a stopping criterion ensuring that the iterative solution of problems~\eqref{mod:candidate_k},~\eqref{mod:candidate_k_unique} and~\eqref{eq:ba_findviolation} converges to the optimal solution of problem~\eqref{eq:ba_opt_tie}, we need the following two observations. First, if $\epsilon^k>0$ in problem~\eqref{mod:candidate_k}, then there exists a set $\CC\in\mathcal{F}^k$ such that $\bar v(\CC)-\sum_{a\in\CC}\beta_a^k=\epsilon^k$, which implies that $\eta^k\geq\epsilon^k$. On the other hand, if $\epsilon^k=0$, by setting $\CC=\AC$ we can show that $\eta^k\geq\bar v(\AC)-\sum_{a\in\AC}\beta_a^k=0=\epsilon^k$. Based on these remarks, we have $\eta^k\geq\epsilon^k$ for any iteration $k$. 

The iterative solution of problems~\eqref{mod:candidate_k},~\eqref{mod:candidate_k_unique}~and~\eqref{eq:ba_findviolation} terminates when $\eta^k=\epsilon^k$. In this case,
 problem \eqref{eq:ba_findviolation} provides a certificate that the pair $(\epsilon^k,\beta^k)$ is a feasible solution to problem~\eqref{eq:ba_opt}. Note that this pair is also optimal to a relaxed version of problem~\eqref{eq:ba_opt}, given by problem~\eqref{mod:candidate_k}. This concludes that $(\epsilon^k,\beta^k)$ is optimal for problem~\eqref{eq:ba_opt}.
Observing that $\epsilon^k=\hat \epsilon$ and using a similar reasoning, we conclude that $\beta^k$ is the optimal solution to problem~\eqref{eq:ba_opt_tie}. Hence, the algorithm converges. 

In the intermediate solution points of the iterative process, we have $\eta^k>\epsilon^k$ and consequently $\CC^k\notin \mathcal{F}^k$ according to the optimality of~$\epsilon^k$ for problem~\eqref{mod:candidate_k}. We then extend the family of generated coalitions by $\mathcal{F}^{k+1}=\mathcal{F}^k\cup\CC^k$ until convergence is achieved.
Since there are finitely many coalitions to be generated, the algorithm converges after a finite number of iterations. As a remark, the algorithm would not generate any set from $\mathcal{F}^1$, the full set, the empty set, and the singleton sets (since we enforce $\beta\geq0$ in problem~\eqref{mod:candidate_k}). In practice, even when there are many areas, the algorithm requires the generation of only several coalition values. We show this in the numerical results.

We summarize this iterative algorithm in \cref{alg:const_gen}.

 \begin{algorithm}[h]
 	\caption{Constraint Generation Algorithm for the Least-Core-Selecting Mechanism}
	\begin{algorithmic}[1]\label{alg:const_gen}
 		\renewcommand{\algorithmicrequire}{\textbf{Initialize:} Compute $\bar v(\AC)$,}
 		\REQUIRE 
 		and $\beta^{\text{c}}$, set $k=0$, $\eta_0=1$, $\epsilon_0=0$, initialize $\mathcal{F}^1$ (e.g., $\mathcal{F}^1=\emptyset$). \\
 		\WHILE {$\eta^{k}> \epsilon^k$}
 	\STATE Update $k = k+1$.
 		\STATE Obtain $\epsilon^{k}$ and $\beta^{k}$ by solving~\eqref{mod:candidate_k} and~\eqref{mod:candidate_k_unique}\STATE Obtain $\eta^{k}$, $\CC^k$, and $\bar v(\CC^k)$ by solving~\eqref{eq:ba_findviolation}.
		\STATE Set $\mathcal{F}^{k+1}=\mathcal{F}^k\cup\CC^k$.
 		\ENDWHILE 
 		\RETURN $\beta^k=\hat{\beta}(\bar v,\beta^{\text{c}})$.
 	\end{algorithmic}
 \end{algorithm}

\section{Benefit allocations per scenario}

Allocating benefits for the expected cost reduction does not guarantee that the resulting cost allocation satisfies budget-balance in every scenario. Having a surplus or a deficit might be undesirable, since this may necessitate a large financial reserve to buffer the fluctuations in the budget. To address this issue, here we focus on the allocation of {the scenario-specific cost variation}, $J^{s}(\emptyset)-J^{s}(\AC)$.
The coalitional value function in this case is given by $v^{s}(\CC) = J^{s}(\emptyset) - J^{s}(\CC),$ for all $\CC\subset \AC$. Observe that the set function $v^{s}$ is not necessarily nondecreasing, while it can also map to negative reals, since the preemptive transmission allocation model does not guarantee that $J^{s}(\CC)\leq J^{s}(\emptyset)$ holds. Given the function $v^{s}$, an efficient benefit allocation mechanism, $\sum_{a\in\AC} \beta_a(v^{s})=v^{s}(\AC)$, would result in a cost allocation that is budget-balanced in scenario~${s}$, since $$J^{s}(\AC)= J^{s}(\emptyset) - v^{s}(\AC) = \sum_{a\in\AC}J^{s}_a(\emptyset) - \sum_{a\in\AC} \beta_a(v^{s})= \sum_{a\in\AC}J^{s}_a(\AC).$$

Aiming at establishing a per-scenario benefit allocation, our goal now is to achieve the properties of the scenario-specific core $K_\text{Core}(v^{s})$. However, neither the previous results nor~\cref{prop:corempt} apply to this core to prove that it is nonempty 
as it can be affirmed by the following result.
\begin{proposition}\label{prop:empty}
	$K_\text{Core}(v^{s})$ is empty if there exists $\CC\subset\AC$ such that $v^{s}(\AC)< v^{s}(\CC)$.
\end{proposition}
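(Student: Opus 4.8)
The statement is an emptiness criterion for the core $K_\text{Core}(v^s)$: if some strict subcoalition $\CC\subset\AC$ generates strictly more scenario-specific value than the grand coalition, i.e.\ $v^s(\AC)<v^s(\CC)$, then no efficient and stable allocation exists. The plan is to derive a contradiction from the two defining families of constraints of $K_\text{Core}(v^s)$. First I would assume, for contradiction, that there exists $\beta\in K_\text{Core}(v^s)$. By the efficiency (equality) constraint, $\sum_{a\in\AC}\beta_a=v^s(\AC)$. By the stability (inequality) constraint applied to the specific coalition $\CC$ in the hypothesis, $\sum_{a\in\CC}\beta_a\ge v^s(\CC)$.

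The key step is then to control the leftover mass $\sum_{a\in\AC\setminus\CC}\beta_a$. Here I need to be slightly careful, since $v^s$ need not be nondecreasing and the individual-rationality inequalities $\beta_a\ge v^s(a)=0$ for singletons are part of the core constraints (recall the assumption $v(\CC)=0$ for all $|\CC|\le 1$). Using $\beta_a\ge 0$ for every $a\in\AC\setminus\CC$ (valid because $\CC\subset\AC$ is a strict subset, so $\AC\setminus\CC$ is nonempty and each such singleton constraint is indeed imposed), I get $\sum_{a\in\AC\setminus\CC}\beta_a\ge 0$. Combining,
\begin{equation*}
v^s(\AC)=\sum_{a\in\AC}\beta_a=\sum_{a\in\CC}\beta_a+\sum_{a\in\AC\setminus\CC}\beta_a\ge v^s(\CC)+0=v^s(\CC),
\end{equation*}
which contradicts the hypothesis $v^s(\AC)<v^s(\CC)$. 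Hence $K_\text{Core}(v^s)=\emptyset$.

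I do not expect a genuine obstacle here; the only subtlety worth spelling out in the writeup is the role of the nonnegativity constraints $\beta_a\ge 0$ — they are exactly the singleton stability/individual-rationality inequalities $\sum_{a\in\{a\}}\beta_a\ge v^s(a)=0$, which are part of the core definition, so no extra assumption is needed. (If one preferred to avoid even this, the same argument goes through verbatim with $\CC'=\AC\setminus\CC$ playing no role, since one can instead just invoke the single inequality for $\CC$ together with efficiency only when $\AC\setminus\CC$ contributes nonnegatively — but the cleanest statement uses individual rationality.) I would close by remarking that this is the precise counterpart, in the ex-post game, of the situation ruled out in the ex-ante game by the monotonicity of $J$ (which forces $\bar v$ nondecreasing and hence $\bar v(\AC)\ge\bar v(\CC)$ always); since the preemptive model does not guarantee $J^s(\CC)\le J^s(\widehat\CC)$ scenario by scenario, the hypothesis of the proposition can indeed occur, motivating the ex-post least-core construction of the following section.
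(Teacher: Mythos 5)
Your proof is correct and is essentially the paper's own argument: both combine efficiency, the stability inequality for $\CC$, and the singleton nonnegativity constraints $\beta_a\geq v^s(a)=0$ to force $v^s(\AC)\geq v^s(\CC)$ (equivalently, the paper phrases it as $\sum_{a\in\AC\setminus\CC}\beta_a\leq v^s(\AC)-v^s(\CC)<0$ contradicting nonnegativity). The only difference is the order in which the contradiction is exhibited.
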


The proof is relegated to the appendix in~\cref{app:a7}. In practice, this condition above would prevent the formation of the grand coalition $\AC$,
as shown in the example of~\cref{sec:shapno}. 
The coalition value $v^{s}(\AC)$ being negative is a special case of~\cref{prop:empty}, since we would then have $v^{s}(\AC)< v^{s}(a)=0$ for all $a\in \AC$. 

We see that it may not be realistic to achieve all three fundamental properties, and we should instead aim for the least-core $K_\text{Core}(v^s,\epsilon^*(v^s)).$ Note that in this case~\cref{prop:leastcore_volp} is not applicable and the least-core would instead achieve efficiency, approximate individual rationality, and approximate stability.

For the coalitional game arising from the function~$v^s$, the Shapley value satisfies efficiency, but individual rationality, stability, and computational tractability are not attained. On the other hand, the nucleolus allocation provides a least-core allocation. Note that, in contrast to the expected coalitional value function~$\bar v$, the function~$v^s$ is not implicitly given by an optimization problem. Instead, it is an ex-post calculation from the sequential electricity market after the uncertainty realization. As a result, the value function~$v^s$ is not amenable to a constraint generation approach. Thus, we look at an alternative approach that can be computed in a computationally tractable manner. This approach will extend our results from~\cref{sec:4a}, showing that any efficient benefit allocation for the expected cost reduction gives rise to an efficient scenario-specific benefit allocation that results in budget-balance in every scenario. 

Let $\beta(\bar v)\in\mathbb{R}^{\AC}_+$ be an efficient individually rational benefit allocation for the expected cost reduction, computed prior to the uncertainty realization. We then define the following scenario-specific benefit allocation, $\beta(v^{s},\beta(\bar v))\in\mathbb{R}^{\AC}$,
\begin{equation}\label{eq:scen_spec}\beta_{a}(v^{s},\beta(\bar v))=\dfrac{\beta_a(\bar v)}{\bar v(\AC)} v^{s}(\AC),\quad \forall a\in\AC.\end{equation}
The benefit $\beta_{a}(v^{s},\beta(\bar v))$ for each area $a$ is computed based on an ex-post computation of $v^{s}(\AC)$ for the specific uncertainty realization $s$. Given $\beta(\bar v)$, this definition does not require any further solutions to the preemptive transmission allocation model or the sequential market. The term ${\beta_a(\bar v)}/{\bar v(\AC)}\in[0,1]$ can be considered as a percentage share of profits/losses depending on the sign of $v^s(\AC)$. (This also holds for any other weighting from the $|\AC|$-simplex.)
Notice that since $\sum_{a\in\AC}\beta_{a}(v^s,\beta(\bar v))=v^s(\AC)$, the efficiency property holds. Moreover, {having} $\mathbb E_s [\beta_{a}(v^s,\beta(\bar v))]=\beta_a(\bar v)$ implies that the scenario-specific benefit allocation $\beta(v^s,\beta(\bar v))$ satisfies in expectation the other fundamental properties of the original benefit allocation~$\beta(\bar v)$. 

Given the above reasoning, we propose a \textit{scenario-specific least-core-selecting mechanism}, which builds upon the least-core-selecting benefit allocation mechanism from problems~\eqref{eq:ba_opt}~and~\eqref{eq:ba_opt_tie} to define $\beta(v^{s},\hat{\beta}(\bar v,\beta^{\text{c}}))\in\mathbb{R}^{\AC}$ according to the procedure above. 
We have previously showed that the allocation $\hat{\beta}(\bar v,\beta^{\text{c}})$ satisfies individual rationality and approximate stability, while enabling a tractable computation via a constraint generation algorithm. In a similar vein, the scenario-specific version $\beta(v^s,\hat{\beta}(\bar v,\beta^{\text{c}}))$ satisfies individual rationality and approximate stability in expectation, while still enabling a tractable computation. We illustrate this approach in~\cref{sec:CaseStudies}. As a remark, it is possible to use the Shapley value and the nucleolus allocation in a similar manner. The comparisons of these mechanisms in the previous section would remain unchanged.

{In contrast to the least-core-selecting mechanism $\hat{\beta}(\bar v,\beta^{\text{c}})$, the scenario-specific benefit allocation $\beta(v^s,\hat{\beta}(\bar v,\beta^{\text{c}}))$ would lie in the least-core only in expectation and thus the benefits vary under different scenarios. This implies, in turn, that coalition member areas are now exposed to risk. In case some of these areas are risk-averse endowed with a risk measure, we cannot guarantee that each area's risk adjusted benefits lie in the least core, which can potentially hamper their willingness to participate. Defining ways to incorporate risk measures is part of our future research directions.}
	
	{Note that scenario-specific benefit allocations are implemented only after obtaining the perfect knowledge of the uncertainty realization, that is, when the actual system imbalance is known. This information is available for the balancing market clearing, which is consistent with the existing market models. Notice that if this realization is not within the scenario set, that is, $s^{{*}}\notin \mathcal{S}$, we can still compute the benefits defined in equation~\eqref{eq:scen_spec}. This is true because we can compute $v^{s^{{*}}}(\AC)=J^{s^{{*}}}(\emptyset)-J^{s^{{*}}}(\AC)$, where $J^{s^{{*}}}(\emptyset)$ and $J^{s^{{*}}}(\AC)$ are the market costs computed from the reserve, day-ahead, and balancing markets. These computed benefits would still incentivize the areas to participate in the preemptive model, if all the areas agree in advance that the scenario set describes how they perceive the distribution of the uncertainty {at the stage of transmission allocation process, which takes place before the reserve market clearing}. The desirable properties of the least core would hold if the expectation is taken with respect to the scenario set, however, these properties may not hold with respect to the true distribution.}
\section{Appendix}
\subsection{Proof of~\cref{prop:corempt}}\label{app:a3}

	Define $\hat\beta\in\mathbb{R}^{\AC}$ such that $\hat\beta_{a'}=\bar v(\AC)$, and $\hat\beta_{a}=0$ for all $a\neq a'.$ We prove by showing that this allocation lies in the core $K_\text{Core}(\bar v)$. The equality constraint in $K_\text{Core}(\bar v)$ is satisfied by definition. Notice that the condition given in the proposition, $\bar v( \AC\setminus a')=0$, implies that $\bar v(\CC\setminus a')=0$ for all $\CC\subset\AC.$ Using this, inequality constraints are divided into two sets of constraints as $\sum_{a\in\CC}\hat\beta_a\geq \bar v(\CC)$, for all $\CC\ni a'$, and $\sum_{a\in\CC}\hat\beta_a\geq 0$, for all $\CC\not\ni a'$. The first set of inequalities are satisfied, since $\bar v(\AC)\geq \bar v(\CC)$, for all $\CC\ni a'$. The second set of inequalities are satisfied, since $\hat\beta\in\mathbb{R}^{\AC}_+$. This concludes that $\hat\beta\in K_\text{Core}(\bar v)$, and hence the core is nonempty. \hfill\QEDA

According to the proof above, an area that satisfies the condition in~\cref{prop:corempt} has a right to veto any coalitional deviation that does not include it. The proof constructs the benefit allocation that assigns all the expected cost reduction to this area and then shows that the remaining areas do not have any incentives to form coalitions.
\subsection{Proof of~\cref{prop:leastcore_volp}}\label{app:a4}

This proof is an application of the proof method in \citep[Theorem 2.7]{maschler1979geometric} by taking into account that our coalitional value function is defined by $\bar v(\CC) = J(\emptyset) - J(\CC)\geq 0,$ for all $\CC\subset \AC$. We prove by contradiction. Let $\hat \beta \in K_\text{Core}(\bar v,\epsilon^*(\bar v))$ be a benefit allocation with $\hat \beta_{a'}<0$. In this case, we show that there exists $\epsilon <\epsilon^*(\bar v)$ such that $K_\text{Core}(\bar v,\epsilon)\neq \emptyset$. This would contradict the definition of the least-core.
	
	Since $\hat \beta \in K_\text{Core}(\bar v,\epsilon^*(\bar v))$ and $\bar v(a')=J(\emptyset) - J(a')=0$, we have $0>\beta_{a'}\geq \bar v(a') -\epsilon^*(\bar v)=-\epsilon^*(\bar v)$. Notice that, for any $\CC\not\ni a'$, we have $\sum_{a\in\CC}\hat\beta_a\geq \bar v(\CC\cup a')-\epsilon^*(\bar v)-\beta_{a'}$ by adding and subtracting $\beta_{a'}$, and by using the fact that $\epsilon^*(\bar v)>0$ for the case corresponding to $\CC\cup a'=\AC$. Since $\bar v$ is nondecreasing and $\beta_{a'}<0$, we obtain $\sum_{a\in\CC}\hat\beta_a> \bar v(\CC)-\epsilon^*(\bar v)$. 
	
	Notice that we can always find a small positive number $\delta$ such that  $\sum_{a\in\CC}\hat\beta_a - |\CC|\delta> \bar v(\CC)-\epsilon^*(\bar v)+\delta$ holds for any $\CC\not\ni a'$. Next, we show that $K_\text{Core}(\bar v,\epsilon^*(\bar v)-\delta)$ is nonempty for this particular choice. Define $\bar{\beta}$ such that $\bar\beta_a=\hat \beta_a - \delta$ for all $a\neq a'$and $\bar{\beta}_{a'}=\hat{\beta}_{a'}+(|\AC|-1)\delta$. This new allocation $\bar{\beta}$ clearly satisfies the equality constraint in $K_\text{Core}(\bar v,\epsilon^*(\bar v)-\delta)$. For inequality constraints $\CC\not\ni a'$, we have $\sum_{a\in\CC}\bar\beta_a=\sum_{a\in\CC}\hat\beta_a - |\CC|\delta> \bar v(\CC)-\epsilon^*(\bar v)+\delta$, where the strict inequality follows from the definition of $\delta$. For inequality constraints $\CC\ni a'$, we have $\sum_{a\in\CC}\bar\beta_a\geq\sum_{a\in\CC}\hat\beta_a+\delta\geq \bar v(\CC)-\epsilon^*(\bar v)+\delta$. Hence, $\bar \beta \in K_\text{Core}(\bar v,\epsilon^*(\bar v)-\delta)$, in other words, $K_\text{Core}(\bar v,\epsilon^*(\bar v)-\delta)\neq\emptyset.$ This contradicts $K_\text{Core}(\bar v,\epsilon^*(\bar v))$ being the least-core. \hfill\QEDA
	
\subsection{MILP reformulation of problem \eqref{eq:ba_findviolation}}\label{app:a6b}

 Since we have~$\bar v(\CC)=J(\emptyset)-J(\CC)$, optimal solutions to problem \eqref{eq:ba_findviolation} coincide with the ones to the following problem:
\begin{flalign}\label{eq:ba_findviolation_ref}
	&\underset{ \CC\subseteq\AC}{\min}\ J(\CC)+\sum_{a\in\CC}\beta_a^k .
\end{flalign}
The problem above is given by the following MILP:
\begingroup
\allowdisplaybreaks
\begin{subequations} \label{mod:B-Pre-k}
\begin{align}
& \bar J(\beta^k)=\,\underset{ \Phi_{\text{PR}}^k }{\min}\quad \sum_{i \in \mathcal I}\left(C^{+}_{i} r^{+}_{i} + C^{-}_i r^{-}_{i}\right) + \sum_{i\in \mathcal I}{C}_{i}{p}_{i} \nonumber\\
& \hspace{1.5cm}+ \sum_{s\in\mathcal S} \pi_s \Big[ \sum_{i \in \mathcal I}{C}_{i}\left({p}_{is}^{+}-{p}_{is}^{-}\right)+\sum_{n \in \mathcal N} {C}^{\text{sh}}l^{\text{sh}}_{ns} \Big]+\sum_{a\in\CC}b_a\beta_a^k \label{eq:B-pre-k-obj}
\end{align}
$\mathrm{s. t.}$
\begin{align}
 &\hspace{-.25cm} b_a\in\{0,1\},\quad\forall a\in \AC,\label{eq:preempt_bin}\\
		&\hspace{-.25cm}  (1-b_{a_r(e)})\chi_e \le \chi_{e}' \le (1-b_{a_r(e)})\chi_e+b_{a_r(e)}\ \text{and}\nonumber\\
		&\hspace{-.25cm}  (1-b_{a_s(e)})\chi_e \le \chi_{e}' \le (1-b_{a_s(e)})\chi_e+b_{a_s(e)},  \forall e\in\mathcal E,\label{eq:preempt_bin_ta}\\
&\hspace{-.25cm}  \text{Constraints} \;\; \eqref{eq:A-CV-rt-bal-constr} -\eqref{eq:A-CV-rt-flowAC-constr}\text{ and }{\delta}_{1 s} = 0, \ \delta_{n s} \; \text{free}, \ \forall n \in \mathcal{N},\quad \forall s\in \mathcal S,\label{eq:B-pre-k-CV}\\
	&\hspace{-.25cm} - b_{a_r(\ell)}M  \le f_{\ell s}-{f}_{\ell} \le b_{a_r(\ell)} M \ \text{and}\nonumber\\
	&\hspace{-.25cm} - b_{a_s(\ell)}M  \le f_{\ell s}-{f}_{\ell} \le b_{a_s(\ell)} M ,\quad\forall \ell\in\cup_{e\in\mathcal E(\chi)}\Lambda_{e},\quad \forall s\in \mathcal S,\label{eq:preempt_resact}\\
&\hspace{-.25cm}  \text{Constraints} \;\; \eqref{eq:LLR}\  \text{and}\ \eqref{eq:LLD}.
\end{align}
\end{subequations}
\endgroup
where $M$ is a large positive number, $\Phi_{\text{PR}}^k\allowbreak=\allowbreak\{b_a,\allowbreak \forall a \cup\chi_e',\allowbreak \forall e \cup \Phi_{\text R} \cup \Phi_{\text D} \cup  \Phi_{\text B}^{s}, \allowbreak\forall s\}$  is the set of primal optimization variables.
For the sake of brevity, the KKT conditions and the Lagrange multipliers of the lower-level optimization problems are omitted, see~\cref{app:a}.

In problem~\eqref{mod:B-Pre-k}, the parameter ${\beta}_a^k$ can be considered as the activation fee of area~$a$ for participating in the preemptive model. Given such activation fees, problem~\eqref{mod:B-Pre-k} then finds the optimal set of participants for the preemptive model. Notice that this problem involves $|\AC|$ more binary variables than problem~\eqref{mod:B-Pre}. In~\cref{sec:IEEE_RTS}, the numerical case studies illustrate that the computation times are still similar for both problems. Let $\{\hat{b}_a, \forall a\}$ denote the optimal binary solution. Finally, we have $\CC^k=\{a\in\AC\,|\,\hat{b}_a=1\}$ and $\bar v(\CC^k)=J(\emptyset)-(\bar J(\beta^k)-\sum_{a\in\CC^k}\beta^k_a)$. 

\subsection{Proof of \cref{prop:empty}}\label{app:a7}
	Assume $\beta\in K_\text{Core}(v^s)$. We now prove that this yields a contradiction. Notice that $K_\text{Core}(v^s)\subset\mathbb{R}_+^{\AC}$ since $\beta_{a}\geq v^s(a)=0.$ The scenario-specific core also implies $$\sum_{a\in \AC\setminus \CC}\beta_{a}\leq v^s(\AC)- v^s(\CC)<0.$$ The inequality above follows from combining $\sum_{a\in\AC}\beta_{a}=v^s(\AC)$ and $\sum_{a\in\CC}\beta_{a}\geq v^s(\CC)$. We obtained a contradiction. \hfill\QEDA
	
\chapter{Numerical case studies}
\label{sec:CaseStudies} 
In this chapter, we first use an illustrative three-area nine-bus system to provide and discuss benefit allocation mechanisms under different system configurations and stochastic renewable in-feed. We then apply the models and the benefit allocation mechanisms in a more realistic case study. 
In our numerics, all problems are solved with GUROBI 7.5~\cite{gurobi} called through MATLAB on a computer equipped with 32 GB RAM and a 4.0 GHz Intel i7 processor. 
\section{Illustrative three-area examples}\label{sec:illu_ex}
We describe a base model, which will be subject to several modifications in the system configuration and penetration of stochastic renewables to discuss the resulting changes in the benefit allocations described in \cref{sec:3} and~\cref{sec:4}.

We consider the nine-bus system depicted ine~\cref{fig:three_area} which comprises three areas. The intra-area transmission network consists of AC lines with capacity and reactance equal to $100$ MW and $0.13$ p.u., respectively. The four tie lines between areas~$1$ and~$2$, and between areas~$2$ and~$3$ are AC lines with capacity of $20$ MW, and reactance of $0.13$ p.u. each. 
\looseness=-1
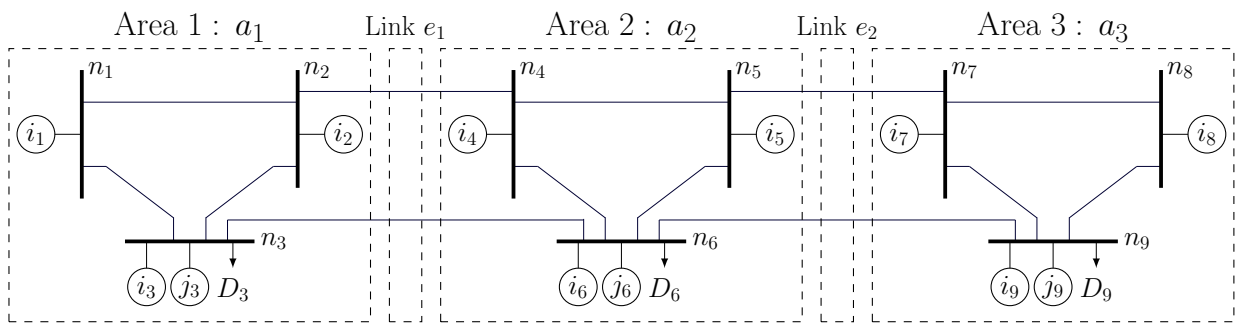
\begin{figure*}[h]
	\centering
	\begin{tikzpicture}[scale=0.71, every node/.style={scale=0.5}]
\draw[-,black!80!blue,line width=.1mm] (+0.7,1) -- (+0.7,1.4);
\draw[-,black!80!blue,line width=.1mm] (0.7,1.4) -- (7.3,1.4);
\draw[-,black!80!blue,line width=.1mm] (7.3,1) -- (7.3,1.42);
\draw[-,black!80!blue,line width=.1mm] (+8.7,1) -- (+8.7,1.4);
\draw[-,black!80!blue,line width=.1mm] (8.7,1.4) -- (15.3,1.4);
\draw[-,black!80!blue,line width=.1mm] (15.3,1) -- (15.3,1.4);
\draw[-,black!80!blue,line width=.1mm] (+0.3,1) -- (+0.3,1.45);
\draw[-,black!80!blue,line width=.1mm] (0.3,1.44) -- (1.55,2.4);
\draw[-,black!80!blue,line width=.1mm] (1.55,2.4) -- (2,2.4);
\draw[-,black!80!blue,line width=.1mm] (+8.3,1) -- (+8.3,1.45);
\draw[-,black!80!blue,line width=.1mm] (8.3,1.44) -- (9.55,2.4);
\draw[-,black!80!blue,line width=.1mm] (9.55,2.4) -- (10,2.4);
\draw[-,black!80!blue,line width=.1mm] (+16.3,1) -- (+16.3,1.45);
\draw[-,black!80!blue,line width=.1mm] (16.3,1.44) -- (17.55,2.4);
\draw[-,black!80!blue,line width=.1mm] (17.55,2.4) -- (18,2.4);
\draw[-,black!80!blue,line width=.1mm] (-0.3,1) -- (-0.3,1.45);
\draw[-,black!80!blue,line width=.1mm] (-0.3,1.44) -- (-1.55,2.4);
\draw[-,black!80!blue,line width=.1mm] (-1.55,2.4) -- (-2,2.4);
\draw[-,black!80!blue,line width=.1mm] (7.7,1) -- (7.7,1.45);
\draw[-,black!80!blue,line width=.1mm] (7.7,1.44) -- (6.45,2.4);
\draw[-,black!80!blue,line width=.1mm] (6.45,2.4) -- (6,2.4);
\draw[-,black!80!blue,line width=.1mm] (15.7,1) -- (15.7,1.45);
\draw[-,black!80!blue,line width=.1mm] (15.7,1.44) -- (14.45,2.4);
\draw[-,black!80!blue,line width=.1mm] (14.45,2.4) -- (14,2.4);
\draw[-,black!80!blue,line width=.1mm] (-2,3.6) -- (2,3.6);
\draw[-,black!80!blue,line width=.1mm] (6,3.6) -- (10,3.6);
\draw[-,black!80!blue,line width=.1mm] (14,3.6) -- (18,3.6);
\draw[-,black!80!blue,line width=.1mm] (2,3.8) -- (6,3.8);
\draw[-,black!80!blue,line width=.1mm] (10,3.8) -- (14,3.8);
\draw[-,line width=.5mm] (-1.2,1) -- (1.2,1) node[anchor=west]  {\LARGE $n_3$};
\draw[-,line width=.5mm] (6.8,1) -- (9.2,1) node[anchor=west]  {\LARGE $n_6$};
\draw[-,line width=.5mm] (14.8,1) -- (17.2,1) node[anchor=west]  {\LARGE $n_9$};
\draw[-,line width=.5mm] (-2,1.8) -- (-2,4.2) node[anchor=west]  {\LARGE $n_1$};
\draw[-,line width=.5mm] (6,1.8) -- (6,4.2) node[anchor=west]  {\LARGE $n_4$};
\draw[-,line width=.5mm] (14,1.8) -- (14,4.2) node[anchor=west]  {\LARGE $n_7$};
\draw[-,line width=.5mm] (2,2) -- (2,4.2) node[anchor=west]  {\LARGE $n_2$};
\draw[-,line width=.5mm] (10,2) -- (10,4.2) node[anchor=west]  {\LARGE $n_5$};
\draw[-,line width=.5mm] (18,2) -- (18,4.2) node[anchor=west]  {\LARGE $n_8$};
\draw[<-,line width=.1mm] (0.8,.5) -- (0.8,1) node at (0.8,0.1) {\LARGE $D_3$};
\draw[<-,line width=.1mm] (8.8,.5) -- (8.8,1) node at (8.8,0.1) {\LARGE $D_6$};
\draw[<-,line width=.1mm] (16.8,.5) -- (16.8,1) node at (16.8,.1) {\LARGE $D_9$};
\draw[-,line width=.1mm] (-0.8,.5) -- (-0.8,1);
\draw (-0.8,.15) circle (.35cm) node {\LARGE $i_3$};
\draw[-,line width=.1mm] (7.2,.5) -- (7.2,1);
\draw (7.2,.15) circle (.35cm) node {\LARGE $i_6$};
\draw[-,line width=.1mm] (15.2,.5) -- (15.2,1);
\draw (15.2,.15) circle (.35cm) node {\LARGE $i_9$};
\draw[-,line width=.1mm] (-2.5,3) -- (-2,3);
\draw (-2.85,3) circle (.35cm) node {\LARGE $i_1$};
\draw[-,line width=.1mm] (5.5,3) -- (6,3);
\draw (5.15,3) circle (.35cm) node {\LARGE $i_4$};
\draw[-,line width=.1mm] (13.5,3) -- (14,3);
\draw (13.15,3) circle (.35cm) node {\LARGE $i_7$};
\draw[-,line width=.1mm] (2.5,3) -- (2,3);
\draw (2.85,3) circle (.35cm) node {\LARGE $i_2$};
\draw[-,line width=.1mm] (10.5,3) -- (10,3);
\draw (10.85,3) circle (.35cm) node {\LARGE $i_5$};
\draw[-,line width=.1mm] (18.5,3) -- (18,3);
\draw (18.85,3) circle (.35cm) node {\LARGE $i_8$};
\draw[-,line width=.1mm] (0,.5) -- (0,1);
\draw (0,.15) circle (.35cm) node {\LARGE $j_3$};
\draw[-,line width=.1mm] (8,.5) -- (8,1);
\draw (8,.15) circle (.35cm) node {\LARGE $j_6$};
\draw[-,line width=.1mm] (16,.5) -- (16,1);
\draw (16,.15) circle (.35cm) node {\LARGE $j_9$};
	\draw[dashed,draw=black] (-3.35,-.5) rectangle (3.35,4.6) node at (0, 5) {\huge $\text{Area}\ 1:\, a_1$};
	\draw[dashed,draw=black] (4.65,-.5) rectangle (11.35,4.6) node at (8, 5) {\huge $\text{Area}\ 2:\, a_2$};
	\draw[dashed,draw=black] (12.65,-.5) rectangle (19.35,4.6) node at (16, 5) {\huge $\text{Area}\ 3:\, a_3$};
	\draw[dashed,draw=black] (3.7,-0.5) rectangle (4.3,4.6) node at (4, 5) {\LARGE $\text{Link}\ e_1$};
	\draw[dashed,draw=black] (11.7,-0.5) rectangle (12.3,4.6) node at (12, 5) {\LARGE $\text{Link}\ e_2$};
	\end{tikzpicture}
	\caption{Nine-node three-area interconnected power system}\label{fig:three_area}
\end{figure*}
\newpage
The day-ahead price offers and the generation capacities of conventional units are provided in~\cref{tab:three_generators}. Units $i_1$, $i_4$, and $i_7$ are inflexible, that is, these units cannot change their generation level during real-time operation, while all remaining units are flexible offering half of their capacity for upward and downward reserves provision at a cost equal to $10\%$ of their day-ahead energy offer~$C$. The cost of load shedding $C^\text{sh}$ is equal to $1000$\euro$/$MWh for the inelastic electricity demands $D_3=220$ MW, $D_6=190$ MW, and $D_9=220$ MW. 

\begin{table}[h]
\centering
\caption{Generator data}
{\begin{tabular}{|c||c|c|c|c|c|c|c|c|c|}
\hline
 Unit &$i_1$ &$i_2$&$i_3$&$i_4$&$i_5$&$i_6$&$i_7$&$i_8$&$i_9$ \\ \hline
 $C$ (\euro$/$MWh) & $20$ &$30$ &$40$ &$30$ &$40$ &$50$ &$25$ &$35$ &$45$ \\ \hline
 $P^\text{cap}$ (MW) & $120$ &$50$ &$50$ &$120$ &$50$ &$50$ &$120$ &$50$ &$50$ \\ \hline
 Flexible & No & Yes & Yes & No & Yes & Yes & No & Yes & Yes \\ \hline
\end{tabular}}\label{tab:three_generators}
\end{table}

In addition, there are three wind power plants, $j_3,$ $j_6,$ and $j_9$, with installed capacities $50$, $80$, and $50$ MW, respectively. The stochastic wind power is modeled using two scenarios, $s_1$ and $s_2$, listed in~\cref{tab:three_wind} with probability of occurrence $0.6$ and $0.4$, respectively. The expected wind power production $\overline{W}_j$ for $j_3$ is equal to $42$ MW, for $j_6$ is equal to $70.4$ MW, and for $j_9$ is equal to $42$ MW. Wind power price offers and subsequently the wind power spillage costs are considered to be zero.

\begin{table}[h]
\centering
\caption{Wind scenarios as percentage of the nominal value of the power plant }
{\begin{tabular}{|c||c|c|c|}
\hline
 Wind power plant & $j_3$ &$j_6$ & $j_9$ \\ \hline
 Scenario $s_1$&$1$ &$0.8$&$1$ \\ \hline
 Scenario $s_2$&$0.6$ &$1$&$0.6$ \\ \hline
\end{tabular}}\label{tab:three_wind}
\end{table}

Following the prevailing approach in which regional capacity markets are cleared separately, we set the percentage of transmission capacity allocated to reserves equal to $\chi=0$. Reserve requirements for each area are calculated based on the probabilistic forecasts, such that the largest negative and positive deviations from the expected wind power production foreseen in the scenario set are covered by domestic resources. For instance, the upward and downward area reserve requirements for area $1$ are calculated~as, $RR^+_{a_1}  = \overline{W}_{j_3} - \min\{W_{j_3s_1},W_{j_3s_2}\}= 12\, \text{MW},$ and $RR^-_{a_1}  = \max\{W_{j_3s_1},W_{j_3s_2}\}-\overline{W}_{j_3} = 8\, \text{MW}.$
For the other areas, these values are given as $RR^+_{a_2}  =6.4\, \text{MW}$, $RR^-_{a_2}  =9.6\, \text{MW}$, $RR^+_{a_3}  =12\, \text{MW}$, $RR^-_{a_3}  =8\, \text{MW}$.

The market costs and transmission allocations resulting from the preemptive model are provided in~\cref{tab:three_costs}. The preemptive model reallocated transmission resources from the day-ahead energy trading to the reserve capacity trading, increasing the costs in the day-ahead market. This reallocation yields an expected system cost of $13{,}238.0$\euro, which translates to $25.9\%$ reduction compared to the cost of $17{,}871.2$\euro\ from the existing sequential market. Under the existing setup with $\chi=0$, the uncertainty realization $s_2$ leads to significant load shedding in the balancing stage. In this scenario, even though we have enough reserve capacity, we are not able to deploy it due to network congestion. This problem is avoided by enabling reserve exchange when the preemptive model is implemented. Quantities assigned to each generator at all trading floors are provided in~the appendix in~\cref{app:market_outcome}. 
\looseness=-1

\begin{table}[h]
\centering
\caption{Comparison of market costs (in \euro)}
{\begin{tabular}{|c||c|c|}
\hline
 Model & Existing Seq. Market & Preemptive Model \\ \hline
 $[\chi_{e_1},\,\chi_{e_2}]$ & $[0,\,0]$ &$[0,\,0.0592]$\\ \hline 
 Reserve capacity cost &$194.0$ &$191.6$\\ \hline 
 Day-ahead cost &$13{,}087.2$ &$13{,}120.2$\\ \hline 
 Balancing cost in $s_1$ &$1{,}150.0$ &$-410.7$\\ \hline 
 Balancing cost in $s_2$ &$9{,}750.0$ &$431.5$\\ \hline 
 Total cost in $s_1$ &$14{,}431.2$ &$12{,}901.2$\\ \hline 
 Total cost in $s_2$ &$23{,}031.2$ &$13{,}743.4$\\ \hline 
\end{tabular}}\label{tab:three_costs}
\end{table}

We now provide a budget-balanced cost allocation method for the existing sequential market. For this method, we assume that all three trading floors are cleared by marginal pricing mechanisms (zonal prices for the reserve capacities, nodal prices for the day-ahead and balancing energy services), albeit, similar methods can be applied also to other payment mechanisms. 
This method assigns producer and consumer surpluses, and congestion rents of the intra-area lines to their corresponding areas, and divides the congestion rents of the tie lines equally between the adjacent areas, see~\cite{kristiansen2018mechanism}. budget-balance holds since the market cost is given by the opposite of the sum of producer and consumer surpluses, and congestion rent for each trading floor.  These values are summarized in~the appendix in~\cref{app:market_outcome}. We refer to~\cref{tab:budget_out} for the resulting cost allocations. Area~1 is allocated a large cost in scenario $s_2$ because of the load shedding in node~3.
\looseness=-1

\begin{table}[h]
\centering
\caption{Cost allocation for each area in the existing sequential market (in \euro)}
{\begin{tabular}{|c||c|c|c|}
\hline
 Areas & Area $1$  & Area $2$ & Area $3$\\ \hline
 $J^{s_1}_a(\emptyset)$& $4{,}348.4$  & $9{,}853.8$ & $229.0$\\ \hline
 $J^{s_2}_a(\emptyset)$& $16{,}348.4$  & $3{,}453.8$ & $3{,}229.0$\\ \hline
\end{tabular}}\label{tab:budget_out}
\end{table}


\subsection{Comparison of the different benefit allocations}\label{sec:shapno}

Benefit allocation mechanisms for the expected cost reduction are provided in~\cref{fig:three_benefit}. The core is nonempty since area 2 satisfies the veto condition in~\cref{prop:corempt}. Marginal contribution benefit allocation $\beta^\text{m}$ is not in the core since it is not efficient. We provide this allocation because it can be regarded as a fair outcome. Observe that the marginal contributions of areas 1 and 2 are larger than that of area 3. This is because area 1 has low cost generators and area 2 is indispensable for any coordination considering that in the current network configuration, in which areas 1 and 3 are not directly interconnected, area 2 has to act as an intermediary for any reserves exchange.  

\begin{figure}[h]
	\centering
	\begin{tikzpicture}[scale=1, every node/.style={scale=0.35}]
			\node at (0,0) {\includegraphics[width=2\textwidth]{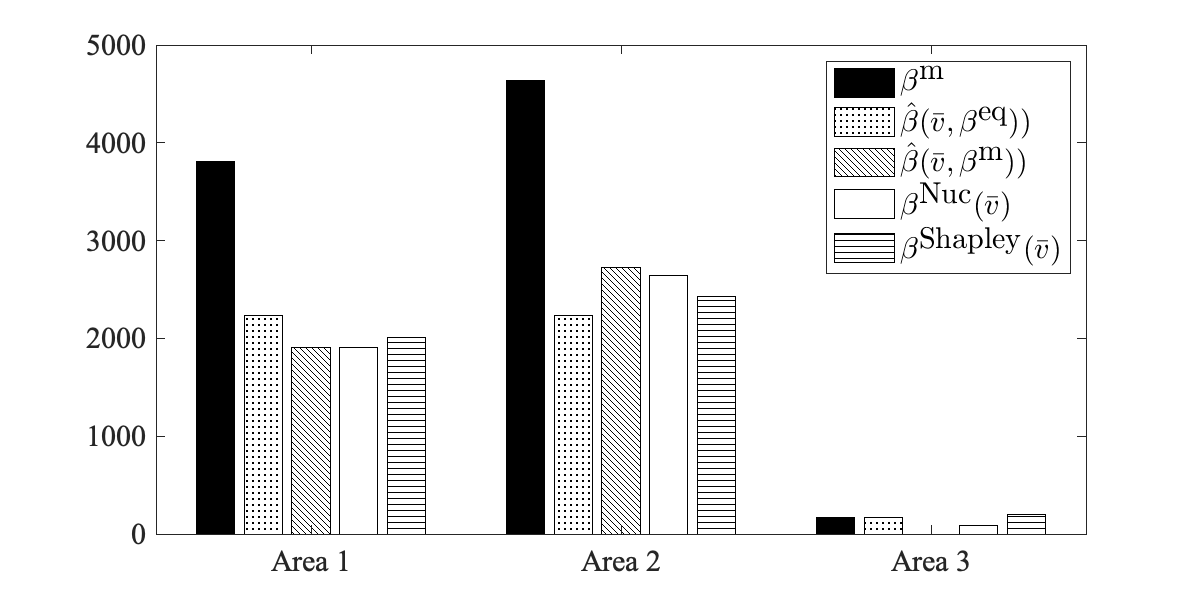}};
	\end{tikzpicture}
	\caption{Benefit allocations for the three-area system (in \euro)}\label{fig:three_benefit}
\end{figure}

The Shapley value $\beta^\text{Shapley}$ is not in the core. Among the core constraints, combining $\sum_{a\in\AC}\beta_a=v(\AC)$ with $\sum_{a\in\AC\setminus\hat{a}}\beta_a\geq v(\AC\setminus\hat{a})$ implies that $\beta_{\hat a}\leq\beta^\text{m}_{\hat a}=v(\AC)-v(\AC\setminus\hat{a})$, or equivalently, no area can receive more than its marginal contribution in the core. This condition is violated for the Shapley value assigned to area 3. The coalitional value function is also not supermodular, since
$$\bar v(\{1,2,3\})-\bar v(\{1,2\})\not\geq \bar v(\{2,3\})-\bar v(\{2\})\implies
    4{,}633.1-4{,}460.5 \not\geq 826.8-0.$$ On the other hand, the nucleolus $\beta^\text{Nuc}$ is in the core, however, the lexicographic minimization results in allocating benefits to area~3. We later see that there is a core allocation that better approximates the marginal contribution in terms of minimizing the Euclidean distance by allocating no benefits to~area~3.

Finally, we employ our approach approximating two different criteria, that is, marginal contribution and equal shares, with corresponding allocations being denoted as $\hat{\beta}(\bar{v},\beta^{\text{m}}))$ and $\hat{\beta}(\bar{v},\beta^{\text{eq}}))$, respectively. These two outcomes are different from each other, and they approximate their respective fairness consideration in an effective manner. This criterion should be decided either by the regulator or it should be based on the consensus of participating areas. In the following, we will approximate the marginal contribution, since similar discussions can be made for any other criteria.
\looseness=-1

Next, we study the budget-balance per scenario for the cost allocation in the preemptive model. For all efficient benefit allocations of the expected cost reduction, that is, all methods except the marginal contribution allocation, the budget  $\sum_{a\in\AC}J^s_a(\AC)-J^{s}(\AC)$ remains unchanged. In scenario $s_1$, there is a deficit of $3{,}103.1$\euro, whereas in scenario $s_2$ there is a surplus of $4{,}654.7$\euro, thus budget-balance is obtained in expectation. 

In the coalitional game arising from the scenario-specific cost variation, despite that $K_\text{Core}(\bar v)$ is nonempty, the core $K_\text{Core}(v^{s_1})$ is empty, since the condition in~\cref{prop:empty} is satisfied by \begin{equation*}\begin{split}
v^{s_1}(\{1,2\})&=J^{s_1}(\emptyset) - J^{s_1}(\{1,2\})\\ &= 14{,}431.2-12{,}884.6\\ &> 14{,}431.2-12{,}901.2 \\&= J^{s_1}(\emptyset) - J^{s_1}(\{1,2,3\})=v^{s_1}(\{1,2,3\}).
\end{split}
\end{equation*}
For scenario $s_2$, this condition is not satisfied and $K_\text{Core}(v^{s_2})$ is nonempty, since the coalitional game is supermodular. 

To address the budget-balance, we now employ {the proposed  scenario-specific least-core-selecting mechanism.} The scenario-specific allocations generated by $\hat{\beta}(\bar v,\beta^{\text{m}})$ for the expected cost reduction are given by $\beta(v^{s_1},\hat{\beta}(\bar v,\beta^{\text{m}}))=[628.5,\, 901.5,\, 0]^\top$~and~$\beta(v^{s_2},\allowbreak\hat{\beta}(\bar v,\beta^{\text{m}}))=[3{,}815.2,\, 5{,}472.6,\, 0]^\top$. These allocations result in a budget-balanced cost allocation under both scenarios, since they sum up to the scenario-specific cost variations in~\cref{tab:three_costs}. 

We now illustrate {the scenario-specific least-core-selecting mechanism} in an out-of-sample wind scenario $s_3$ as follows. In this scenario, the wind power plant productions of $j_3$, $j_6$, and $j_9$ are given by $0.7$, $0.8$, and $1$, respectively. All the other parameters remain unchanged. We computed $J^{s_3}(\emptyset)=18{,}428.7$ and $J^{s_3}(\{1,2,3\})=13{,}394.4$. The scenario-specific allocations generated by $\hat{\beta}(\bar v,\beta^{\text{m}})$ are given by $\beta(v^{s_3},\hat{\beta}(\bar v,\beta^{\text{m}}))=[2{,}068.0,\, 2{,}966.3,\, 0]^\top$. Clearly, this benefit allocation results in a budget-balanced cost allocation even if this scenario was not included in the scenario set. However, since our scenario set is not modeling the uncertainty exactly, we cannot guarantee that the properties pertaining to the least core are still satisfied in expectation. In the remainder, we focus our efforts on the game arising from the expected cost reduction, since we can always map the benefit allocations to the scenario-specific case using our proposed approach. 

{The computational comparison for the three benefit allocation mechanisms for the expected cost reduction can be summarized as follows. The Shapley value is computed in $39.6$ seconds, whereas the  nucleolus is computed in $41.3$ seconds. On the other hand, the least-core allocations for marginal contribution and equal shares are computed in $6.3$ and $6.7$ seconds, respectively. Following our previous discussions, the computation time difference between our methods and the others will be even more significant when there are more areas. Because of this reason, the computational methods will be studied and discussed in detail for the larger realistic case study.}
\subsection{Impact of the network topology on benefit allocations}\label{sec:three_area_netw_top}
In this example, our goal is to illustrate how the network topology and the specific location of each area in the electricity network affects the benefits allocated to this area. To this end, we modify our base model by removing the wind generator from area 2 and by changing all the units in area 2 to be inflexible. Benefit allocations for the expected cost reduction are provided in~\cref{fig:three_removeflex}. We highlight that removing a zero marginal cost wind generator increased the costs globally. In this setup, area 2 continues to receive the highest benefits under every allocation mechanism, even though this area is not capable of directly participating in any reserve exchange. Nevertheless, the role of area 2 in this network arrangement is instrumental, since it enables the coordination between areas 1 and 3 because of its location in the area graph.

\begin{figure}[h]
	\centering
	\begin{tikzpicture}[scale=1, every node/.style={scale=0.35}]
			\node at (0,0) {\includegraphics[width=2\textwidth]{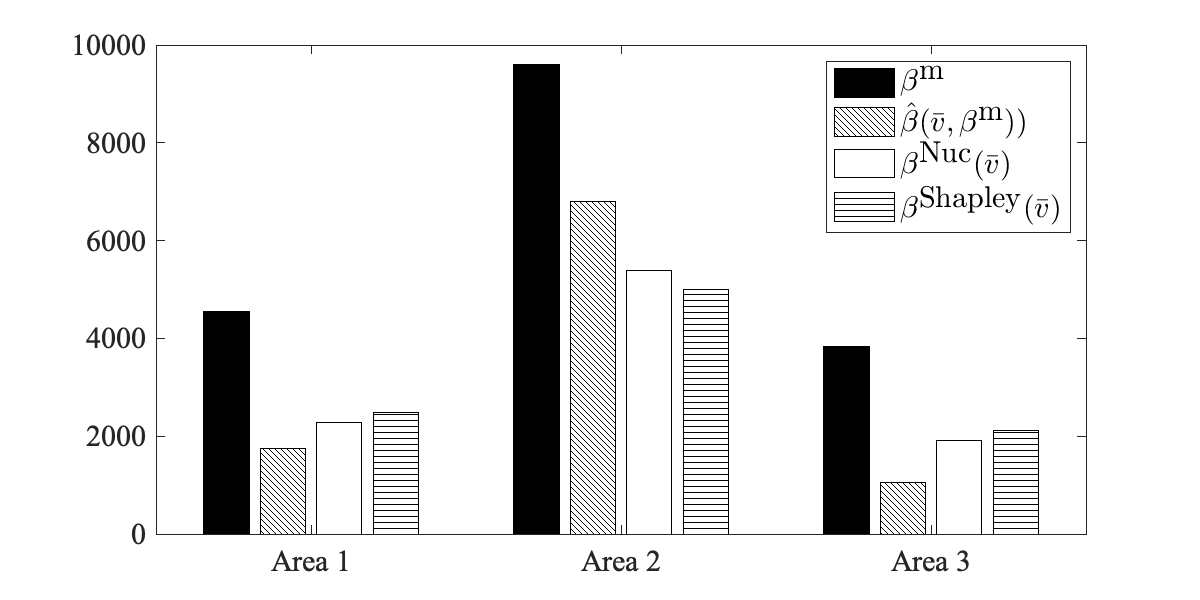}};
	\end{tikzpicture}
	\caption{Benefit allocations after removing the flexibility and the uncertainty from area 2 (in \euro)}\label{fig:three_removeflex}
\end{figure}

As a remark, the core is nonempty, since the condition in~\cref{prop:corempt} is satisfied. The Shapley value is also in the core, since we verified that the coalition value function is supermodular.

\subsection{Impact of the uncertainties on benefit allocations}\label{sec:threearea_windprof}
Here, we aim to assess the impact of the spatial correlation of the wind power forecast errors on the outcome of the different benefit allocation mechanisms that we consider in this work. In order to eliminate the impact of the network topology (cf.~\cref{sec:three_area_netw_top}), we connect areas 1 and 3 via two AC lines. The first connects nodes 1 and 8, and the second connects nodes 3 and 9, each with transmission capacity of $20$ MW, and reactance of $0.13$ p.u. The area graph is not a star anymore, and~\cref{prop:corempt} is not applicable. However, we verified that the core is still nonempty. 

The resulting benefit allocations for the expected cost reduction are provided in~\cref{fig:three_connected}, which shows that areas~1 and~2 receive most of the benefit under every allocation mechanism. This outcome can be explained considering that these areas have complementary wind power production scenarios, that is, the corresponding wind power scenarios exhibit negative correlation. Moreover, area~1 has low cost generation.
Finally, notice that the total benefits are greater than the ones from the example in~\cref{sec:shapno}. This follows since compared to~\cref{sec:shapno} expected system cost is increased by $49$\% ($26{,}687.9$\euro) in the existing sequential market due to additional network dependencies, whereas this cost is decreased by $0.01$\% ($13{,}161.0$\euro) in the preemptive model.
\begin{figure}[h]
	\centering
	\begin{tikzpicture}[scale=1, every node/.style={scale=0.35}]
			\node at (0,0) {\includegraphics[width=2\textwidth]{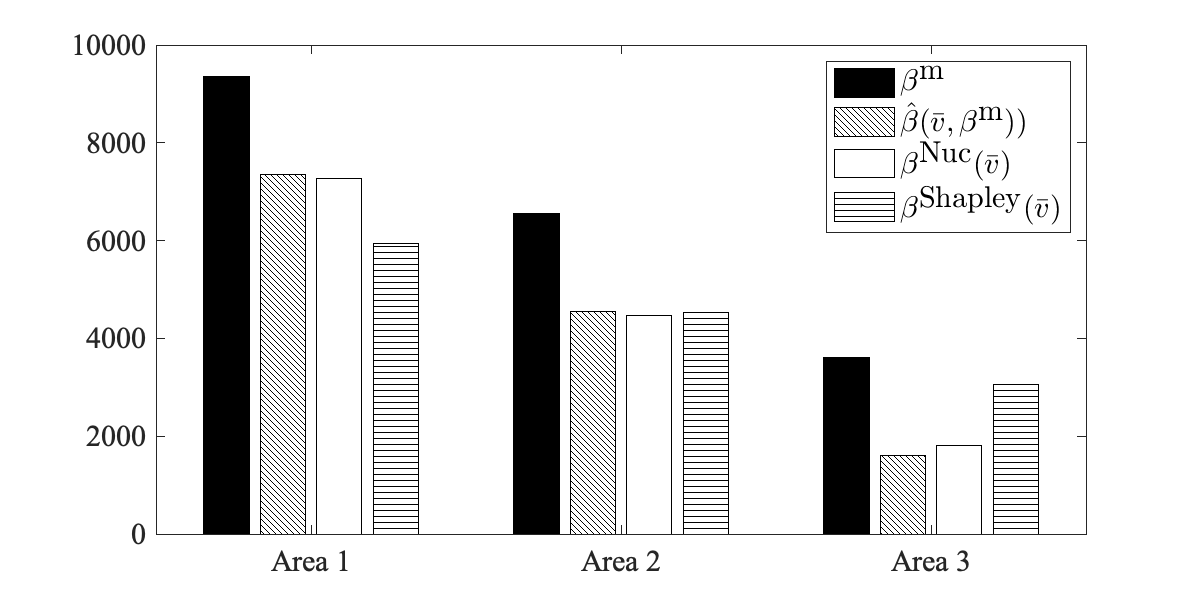}};
	\end{tikzpicture}
	\caption{Benefit allocations after connecting areas 1 and 3 (in \euro)}\label{fig:three_connected}
\end{figure}

\subsection{Benefit allocations in the case of an empty core}\label{sec:emptycore}
A natural question that arises in the context of this work is how the different benefit allocation mechanisms perform when we have an empty core, which can occur when the condition in~\cref{prop:corempt} is not satisfied.
To this end, we modify the example in~\cref{sec:threearea_windprof} by changing the wind scenarios. 

The stochastic wind power generation is modeled using two scenarios, $s_1$ and $s_2$ with probability of occurrence $0.8$ and $0.2$. We have $1$ and $0.8$ for $j_3$, $0.4$ and $1$ for $j_6$, $0.4$ and $1$ for $j_9$ as the percentages of the nominal values of the plants, respectively. Hence, the corresponding expected wind power productions for $j_3$ is equal to $48$ MW, for $j_6$ is equal to $41.6$ MW, and for $j_9$ is equal to $26$ MW. The reserve requirements are recomputed accordingly. Since the uncertainty is significantly increased, we allow the units $i_1$, $i_4$, and $i_7$ to be flexible in order to ensure feasibility.

The resulting benefit allocations for the expected cost reduction are provided in~\cref{fig:three_empty}.
We observe that the nucleolus and the least-core-selecting benefit allocation coincide. For both allocations, the maximum violation of a stability constraint is given by $\epsilon^*=924.9$\euro, where $\epsilon^*(\bar v)=\min\{\epsilon\,|\,K_\text{Core}(\bar v,\epsilon)\not=\emptyset\}$.
On the other hand, the maximum stability violation for the Shapley value is $2{,}752.0$\euro. In other words, if the Shapley value is utilized, there are $3$ times the profits to be made by not participating in the preemptive model compared to the case implementing a least-core allocation. We see that all benefit allocation mechanisms allocated the most benefits to area 1, since it has low cost generation and also its wind profile complements the wind profiles of areas 2 and 3.
\looseness=-1

\begin{figure}[h]
	\centering
	\begin{tikzpicture}[scale=1, every node/.style={scale=0.8}]
			\node at (0,0) {\includegraphics[width=.92\textwidth]{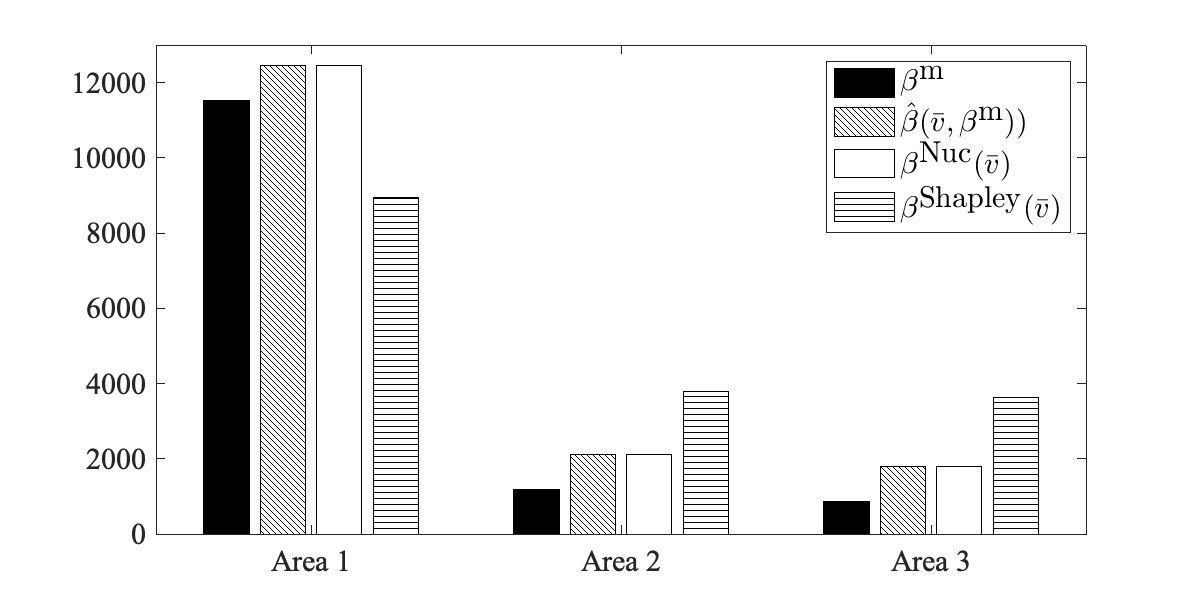}};
	\end{tikzpicture}
	\caption{Benefit allocations in case the core is empty (in \euro)}\label{fig:three_empty}
\end{figure}
\section{Case study based on the IEEE RTS}\label{sec:IEEE_RTS}
We now consider a six-area power system that is based on the modernized version of the IEEE Reliability Test System (RTS) presented in \cite{pandzicunit}. 
The definitions of the areas correspond to the ones proposed by~\cite{dvorkin2018setting} and \cite{jensen2017cost}, and they are provided in~the appendix in~\cref{app:IEEERTS96_area}.  The nodal positions, types, generation capacities and offers from conventional and wind power generators, and transmission line parameters are provided in the online repository~\cite{karacadata}. 

Due to their limited flexibility, nuclear, coal and integrated gasification combined cycle (IGCC) units do not provide any reserves. On the other hand, open and combined cycle gas turbines (OCGT and CCGT) offer $50\%$ of their capacity for upward and downward reserves at a cost equal to $20\%$ of their day-ahead energy offer. {Wind power production is modeled using a set of 10 equiprobable scenarios obtained from \cite{bukhsh2017data}. This scenario set is originally generated according to the methodology explained in \cite{papaefthymiou2008modeling}, and it captures the spatial correlation of forecast errors over the different wind farm locations, see also~\cref{fig:rts_corr} and system dataset in~\cite{karacadata}. In our case study, areas $2$, $4$, and $5$ are assumed to be close to each other, and hence the corresponding wind power production exhibits higher correlation. Evaluation of this aspect for different areas will be shown to be a useful predictor for the benefit allocation methods.} The demand is inelastic with the large cost of load shedding $1{,}000$\euro$/$MWh. In the existing sequential market, the percentage of transmission capacity allocated to reserves exchange is set to $\chi=0$, while the reserve requirements are calculated according to the methodology discussed in~\cref{sec:illu_ex}. 

\begin{figure}[h]
    \centering
	\begin{tikzpicture}[scale=1, every node/.style={scale=1}]
			\node at (0,0) {\includegraphics[width=.6\textwidth]{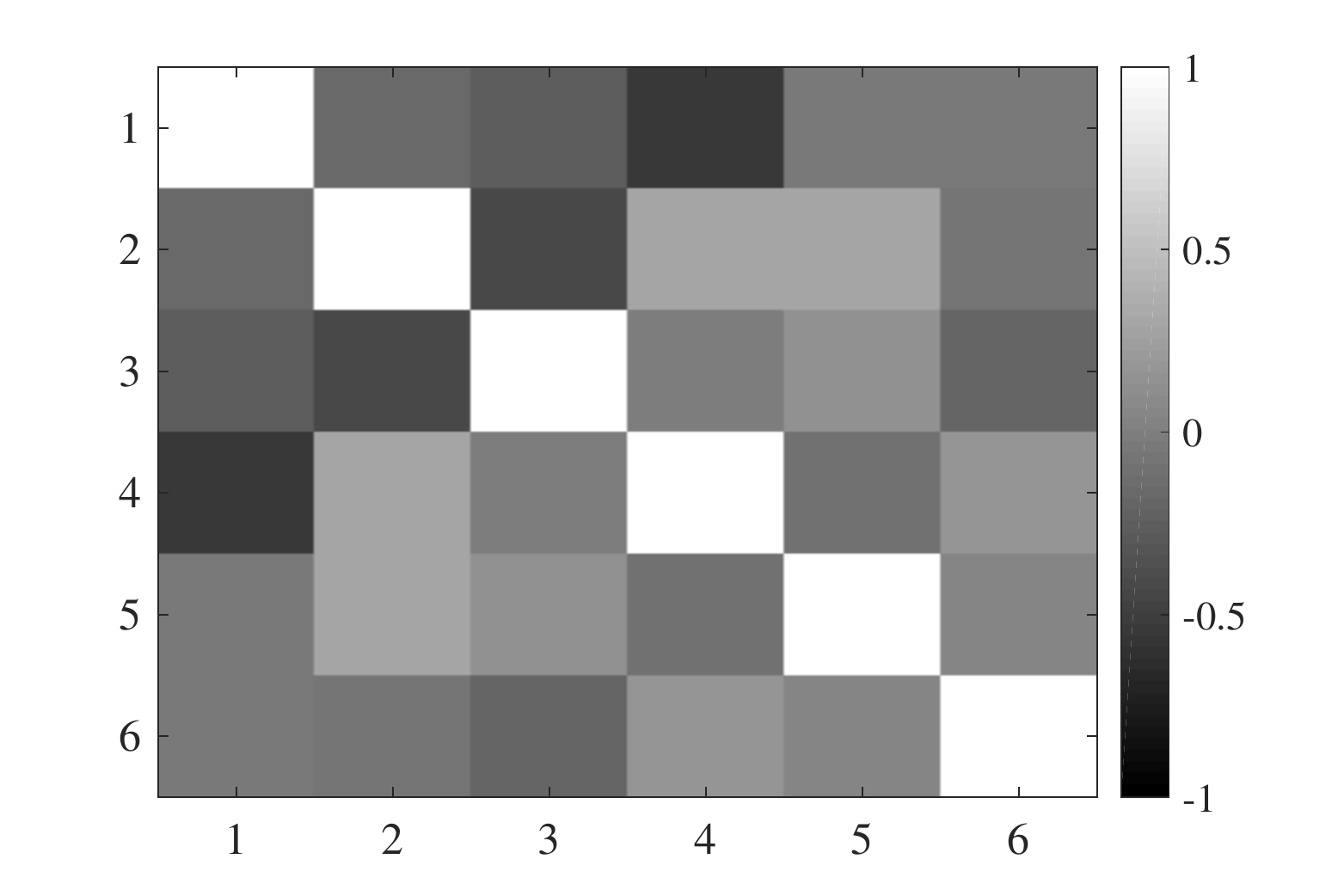}};
	\end{tikzpicture}
	\caption{The Pearson correlation coefficients for the wind profiles of all areas}\label{fig:rts_corr}
\end{figure}

\cref{tab:rts_costs} compares the costs and transmission allocations resulting from the existing market with $\chi=0$ and the preemptive model where $\chi$ is a decision variable. The preemptive model yields an expected cost of $87{,}594.8$\euro, which translates to $2.3\%$ reduction compared to $89{,}696.4$\euro\ from the existing market.
This can be explained by $2{,}097.6$\euro\ reduction in the expected balancing cost obtained by eliminating load shedding. Using the approach in~\cref{sec:illu_ex}, we provide the expected values for a budget-balanced cost allocation for the existing sequential market in~\cref{tab:rts_budget_out}.

\begin{table}[h]
\centering
\caption{Comparison of market costs (in \euro)}
{\begin{tabular}{|c||c|c|}
\hline
 Model & Existing Seq. Market & Preemptive Model \\ \hline
 $\chi=[\chi_{e_1},\ldots,\chi_{e_7}]$ & $[0,\,0,\,0,\,0,\,0,\,0,\,0]$ &$[0,\,0.359,\,0,\,0.038,\,0,\,0,\,0]$\\ \hline 
 Reserve capacity cost & $2{,}392.5$ & $2{,}389.4$\\ \hline 
 Day-ahead cost & $90{,}734.2$& $90{,}733.3$\\ \hline 
 {Expected} balancing cost  &$-3{,}430.3$& $-5{,}527.9$\\ \hline 
 {Expected} cost & $89{,}696.4$ & $87{,}594.8$\\ \hline 
\end{tabular}}\label{tab:rts_costs}
\end{table}

\begin{table}[h]
\centering
\caption{Expected cost allocation for each area in the existing sequential market (in \euro)}
{\begin{tabular}{|c||c|c|c|c|c|c|}
\hline
 Areas & Area $1$  & Area $2$ & Area $3$ &Area $4$  & Area $5$ & Area $6$\\ \hline
 $\mathbb{E}_s[J^{s}_a(\emptyset)]$ &$8{,}966.1$ & $25{,}252.6$    & $10{,}085.2$ & $10{,}208.2$ &  $25{,}151.1$ &  $10{,}033.2$\\ \hline
\end{tabular}}\label{tab:rts_budget_out}
\end{table}

The results of the different benefit allocation mechanisms for the expected cost reduction are provided in~\cref{fig:rts_benefit}. 
We verified that the core is nonempty even though the condition in~\cref{prop:corempt} is not satisfied. Marginal contribution benefit allocation is not in the core since it is not efficient, whereas the Shapley value is not in the core since areas 3, 5, and~6 receive more than their marginal contributions. The nucleolus and the least-core-selecting benefit allocation mechanisms result in core allocations. Notice that our approach provides a different benefit allocation depending on the criteria considered. The nucleolus allocation is not consistent with the marginal contribution allocation since it allocates more benefits to area~4 compared to area~1. 

All mechanisms allocated the most benefits to area 2, since it has a central role by being well-connected in the area graph. On the other hand, areas 1 and 4 are also allocated a significant amount, since they are the two largest areas with wind profiles complementing each other as it is shown in~\cref{fig:rts_corr}.

\begin{figure}[h]
	\centering
	\begin{tikzpicture}[scale=1, every node/.style={scale=0.4}]
			\node at (0,0) {\includegraphics[width=2\textwidth]{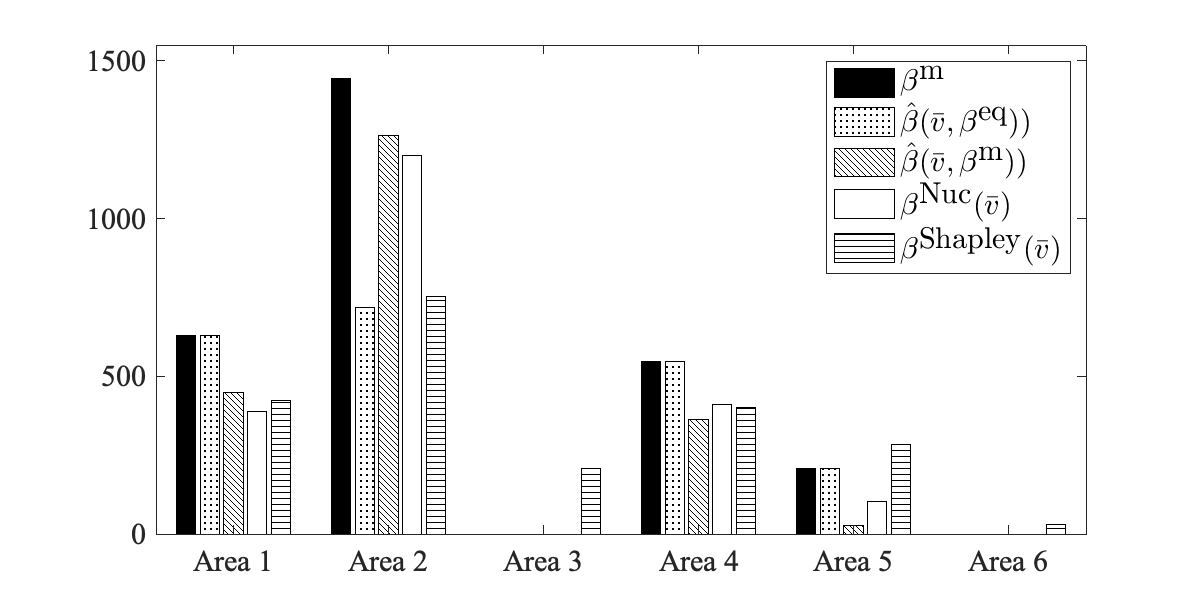}};
	\end{tikzpicture}
	\caption{Benefit allocations for the IEEE RTS case study (in \euro)}\label{fig:rts_benefit}
\end{figure}

We now provide a discussion on the computational comparison for the different benefit allocation mechanisms. The coalitions $J(\emptyset)$ and $J(\AC)$ are precomputed to obtain $v(\AC)$, in $19.4$ and $35.4$ seconds, respectively. 
The calculation of the marginal contribution allocation, which involves solving the preemptive model~\eqref{mod:B-Pre} for coalitions $\{\AC\setminus\{a\}\}_{a\in\AC}$, requires $119.7$ seconds. 
The Shapley value requires solving the preemptive model~\eqref{mod:B-Pre} for all coalitions except the singleton sets, the empty set, and the full set, that is, $2^6-6-1-1=58$ coalitions, and the resulting computational time is $1{,}264.6$ seconds. 
The least-core-selecting mechanism with the marginal contribution criteria requires only a single iteration from the constraint generation algorithm, which takes $84.8$ seconds. {This constraint generation algorithm converges fast, since in this case the algorithm starts with an initial family of coalitions ($\mathcal{F}^1$) given by the coalitions that were used to compute the marginal contribution allocation.} 
On the other hand, the least-core-selecting mechanism with the equal shares criteria requires four iterations from the constraint generation algorithm, which takes $150.4$ seconds. Notice that, in this case, the initial family of coalitions is an empty set. 
{Finally, using the method proposed in~\cite{kopelowitz1967computation,fromen1997reducing}, the nucleolus is computed by solving $15$ linear programs sequentially to find $21$ coalitional equality constraints that fully describe the nucleolus allocation. However, this method scales exponentially with the number of areas considered in the application, since it needs the complete list of coalition values. This computation takes $1{,}266.2$ seconds.}\footnote{As an alternative, the iterative method in~\cite{hallefjord1995computing} would require running $15$ separate constraint generation algorithms, {increasing significantly the computational time compared to} the least-core-selecting mechanism, since each algorithm run requires at least one iteration of constraint generation.} 

Next, we provide two modifications to the IEEE RTS case study.

\subsection{Impact of the wind power penetration levels on benefit allocations}

The wind power penetration level of an area is defined as the ratio between the expected wind power production and the total demand of that area. In this example, we change the level of wind power penetration for area 1. The default value is given by $30\%$ from the previous section. The resulting benefits allocations for area~1 are provided in~\cref{fig:rts_benefit_2}. For all efficient benefit allocation mechanisms, observe that the benefits initially increase and then decrease with the wind power penetration level. 

It is generally hard to anticipate such changes in the benefits, since the wind power generation has two impacts acting in opposite directions. {On the one hand}, it has null production cost bringing in low cost energy to the coalition. {On the other hand}, it increases the need for the reserve and balancing services. As a remark, in this study, the core is empty for the levels $15\%$ and $22.5\%$.

\begin{figure}[h]
	\centering
	\begin{tikzpicture}[scale=1, every node/.style={scale=0.35}]
			\node at (0,0) {\includegraphics[width=2\textwidth]{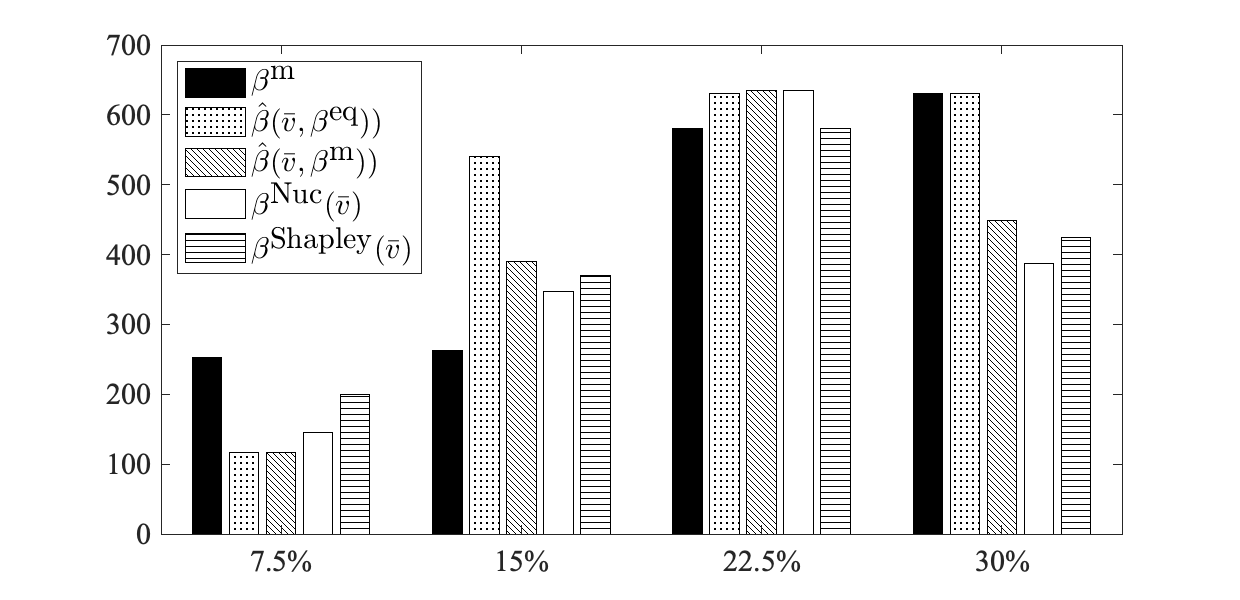}};
	\end{tikzpicture}
	\caption{Benefit allocations of area 1 for the different levels of wind power penetration in area 1 (in \euro)}\label{fig:rts_benefit_2}
\end{figure}

\subsection{Impact of the available flexibility on benefit allocations}
In order to assess the impact of {available flexibility}, we {double} the capacities of all flexible generators in area 4. The changes in the benefits allocations are provided in~\cref{fig:rts_benefit_3}, denoted by $\Delta \beta$. The figure shows that the benefits allocated to area~4 are increased under every allocation mechanism compared to~\cref{fig:rts_benefit}. All mechanisms account for the flexibility offered by the generators in area~4 both in the reserve capacity and the balancing markets. Note that an increase in the capacity of flexible generators reduces also the day-ahead system cost both with or without the implementation of the preemptive model.

Finally, predicting the changes in the benefits of the other areas is quite difficult. We observe that this additional flexibility replaced the flexibility in areas 1 and 2, while increasing the contributions of areas 3 and 5 in subcoalitions.

\begin{figure}[h]
	\centering
	\begin{tikzpicture}[scale=1, every node/.style={scale=0.35}]
			\node at (0,0) {\includegraphics[width=2.2\textwidth]{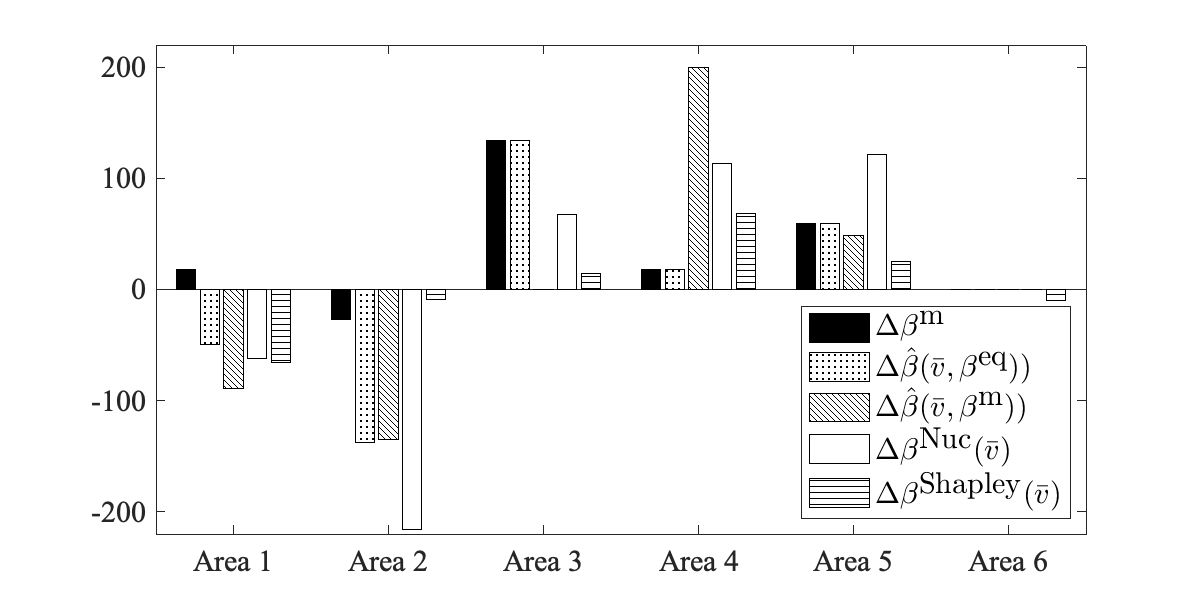}};
	\end{tikzpicture}
	\caption{The changes in benefit allocations after increasing the flexibility in area 4 (in \euro)}\label{fig:rts_benefit_3}
\end{figure}

\section{Appendix}

\subsection{Market outcomes for three-area example}\label{app:market_outcome}	

The market outcomes for the existing sequential market model and the preemptive model are provided in~\cref{tab:three_marketout}. For the preemptive model, the values of $r^-$ and $r^+$ shown in parenthesis indicate the amount of reserves destined to meet the requirements of the neighboring areas. Notice that inflexible generators and wind power generators are not capable of providing any reserves. In the day-ahead stage, the quantities correspond to $p_i$ and $w_j$ for conventional units and wind power generators, respectively. Because of the null production costs, wind power generators are always utilized at their full capacity given by the expected value of the stochastic process. In the balancing stage, the quantities correspond to the changes in the production levels with respect to the ones assigned at the day-ahead stage. Specifically, they are given by $({p}_{is}^{+}-{p}_{is}^{-})$ and $(W_{j s}-{w}_j-w^{\text{spill}}_{j s})$ for conventional units and wind power generators, respectively. It can be verified that there is $31$ MW and $15.6$ MW curtailment in the wind power production in the existing market model for scenarios $s_1$ and $s_2$. This is completely eliminated when the preemptive model is used. 

Observe that the implementation of the preemptive model results in a higher reserve allocation for the lower cost generator at node $8$. Consequently, the day-ahead market takes these new reserve quantities into account while deciding on the day-ahead energy quantities. Notice that there is $9$ MW load shedding in scenario $s_2$ in the existing market model. The preemptive model ensures that we do not resort to any costly load shedding in both scenarios.

\begin{table*}[h]\footnotesize
\centering
\caption{Market outcomes for the existing sequential market and the preemptive model (in MW)}
\begin{tabular}{|c||c|c|c|c|c||c|c|c|c|c|}
\hline
 Models &  \multicolumn{5}{c||}{Existing market model} & \multicolumn{5}{c|}{Preemptive model}\\ \hline
 $\substack{\text{Trad.}\\\text{floors}}$ &  \multicolumn{2}{c|}{Reserve} &  Day-ahead & \multicolumn{2}{c||}{Balancing}&  \multicolumn{2}{c|}{Reserve} &  Day-ahead & \multicolumn{2}{c|}{Balancing}\\ \hline
  &  $r^-$&  $r^+$ &   & $s_1$ & $s_2$ &$r^-$&  $r^+$ &   & $s_1$ & $s_2$   \\ \hline
   $i_1$ &$0$  &$0$ &$120$&$0$ &$0$ &$0$ &$0$&$120$ &$0$ &$0$ \\ \hline
   $i_2$ &$8$  &$12$ &$38$&$3$ &$3$ &$8$ &$12$&$38$ &$7.7$ &$7.2$ \\ \hline
   $i_3$ &$0$  &$0$ &$0$  &$0$ &$0$ &$0$ &$0$ &$0$ &$0$ &$0$ \\ \hline
   $i_4$ &$0$  &$0$ &$120$&$0$ &$0$ &$0$ &$0$&$120$ &$0$ &$0$ \\ \hline
   $i_5$ &$9.6$&$6.4$&$33$&$6$ &$6$ &$7.2$ &$4$ &$31.1$ &$-7.2$ &$-7.2$ \\ \hline
   $i_6$ &$0$  &$0$ &$0$  &$0$ &$0$ &$0$ &$0$&$0$ &$0$ &$0$ \\ \hline
   $i_7$ &$0$  &$0$ &$120$&$0$ &$0$ &$0$ &$0$&$120$ &$0$ &$0$ \\ \hline
   $i_8$ &$8$&$12$ &$38$&$12$&$12$&$10.4\,(2.4)$&$14.4\,(2.4)$&$35.6$ &$-10.1$ &$14.4$ \\ \hline
   $i_9$ &$0$  &$0$ &$6.6$&$0$ &$0$ &$0$   &$0$ &$10.9$ &$0$ &$0$ \\ \hline
    $j_3$ &$0$ &$0$ &$42$ & $-3$  &$-12$ &$0$ &$0$ &$42$ &$8$ &$-12$ \\ \hline
   $j_6$ &$0$  &$0$ &$70.4$&$-6.4$&$-6$&$0$ &$0$ &$70.4$ &$-6.4$ &$9.6$ \\ \hline
   $j_9$ &$0$  &$0$ &$42$ & $-12$ &$-12$ &$0$ &$0$ &$42$ &$8$ &$-12$ \\ \hline
\end{tabular}\label{tab:three_marketout}
\end{table*}

In~\cref{tab:three_jempty}, we provide the consumer and producer surpluses (CS and PS), and the congestion rents (CR) allocated to each area under both scenarios in the existing sequential market. Using these allocations for all trading floors, we defined a budget-balanced cost allocation in the numerical case studies.

\begin{table}[h]
\centering
\caption{Cost allocations in all trading floors for the existing sequential market (in \euro)}
\begin{tabular}{|c||c|c|c|}
\hline
 Areas & Area $1$  & Area $2$ & Area $3$\\ \hline
 CS for~(1) & $-60.0$  & $-64.0$ & $-70.0$\\ \hline
 PS for~(1)& $0$  & $0$ & $0$\\ \hline
 CR for~(1)& $0$  &$0$ & $0$\\ \hline
 CS for~(2) & $-8{,}448.0$  & $-7{,}239.0$ & $-9{,}900.0$\\ \hline
 PS for~(2)& $4{,}159.6$  & $3{,}750.2$ & $4{,}392.0$\\ \hline
 CR for~(2)& $0$  & $99.0$ & $99.0$\\ \hline
  CS for~(3) in~$s_1$ & $0$  & $0$ & $0$\\ \hline
 PS for~(3) in~$s_1$& $0$  & $-6{,}400.0$ & $5{,}250.0$\\ \hline
 CR for~(3) in~$s_1$& $0$  & $0$ & $0$\\ \hline
  CS for~(3) in~$s_2$ & $0$  & $0$ & $0$\\ \hline
 PS for~(3) in~$s_2$& $-12{,}000.0$  & $0$ & $2{,}250.0$\\ \hline
 CR for~(3) in~$s_2$ & $0$  & $0$ & $0$\\ \hline
 $J^{s_1}_a(\emptyset)$& $4{,}348.4$  & $9{,}853.8$ & $229.0$\\ \hline
 $J^{s_2}_a(\emptyset)$& $16{,}348.4$  & $3{,}453.8$ & $3{,}229.0$\\ \hline
\end{tabular}\label{tab:three_jempty}
\end{table}

\subsection{IEEE RTS layout}\label{app:IEEERTS96_area}

The layout can be found in~\cref{fig:IEEERTS96_area}. The area graph is illustrated in~\cref{fig:IEEERTS96_area_graph}.

\input{Fig-IEEE-24.tex}

\begin{figure}[h]
	\centering
	\begin{tikzpicture}[scale=1.4, every node/.style={scale=0.8}]
    \draw[-,black!80!blue,line width=.1mm] (0.35,0) -- (1.15,0) node[above=.2cm, left=.2cm] {\large $e_1$};
    \draw[-,black!80!blue,line width=.1mm] (0,0.35) -- (0,1.15) node[above=-.4cm, left=-1.8cm] {\large $e_3$};
    \draw[-,black!80!blue,line width=.1mm] (1.5,0.35) -- (1.5,1.15) node[above=-.6cm, left=-.6cm] {\large $e_4$};
    \draw[-,black!80!blue,line width=.1mm] (1.24,0.26) -- (0.26,1.24) node[above=-.7cm, left=.4cm] {\large $e_2$};
    \draw[-,black!80!blue,line width=.1mm] (1.85,0) -- (2.65,0) node[above=.2cm, left=.2cm] {\large $e_5$};
    \draw[-,black!80!blue,line width=.1mm] (1.85,1.5) -- (2.65,1.5) node[above=.2cm, left=.2cm] {\large $e_7$};
    \draw[-,black!80!blue,line width=.1mm] (0.35,1.5) -- (1.15,1.5) node[above=.2cm, left=.2cm] {\large ${e_6}$};
    \draw (0,0) circle (.35cm) node {\LARGE $a_1$};
    \draw (1.5,0) circle (.35cm) node {\LARGE $a_2$};
    \draw (3,0) circle (.35cm) node {\LARGE $a_3$};
    \draw (0,1.5) circle (.35cm) node {\LARGE $a_4$};
    \draw (1.5,1.5) circle (.35cm) node {\LARGE $a_5$};
    \draw (3,1.5) circle (.35cm) node {\LARGE $a_6$};
    \node at (5,.8) {%
  \begin{tabular}{|c||c|}
\hline
 Link & Capacity \\ \hline
 $T_{e_1}$& $1{,}900$MW \\ \hline
 $T_{e_2}$& $200$MW \\ \hline
 $T_{e_3}$& $500$MW \\ \hline
 $T_{e_4}$& $200$MW \\ \hline
 $T_{e_5}$& $700$MW \\ \hline
 $T_{e_6}$& $1{,}900$MW \\ \hline
 $T_{e_7}$& $700$MW \\ \hline
\end{tabular}};
	\end{tikzpicture}
	\caption{The area graph for the IEEE RTS}\label{fig:IEEERTS96_area_graph}
\end{figure}
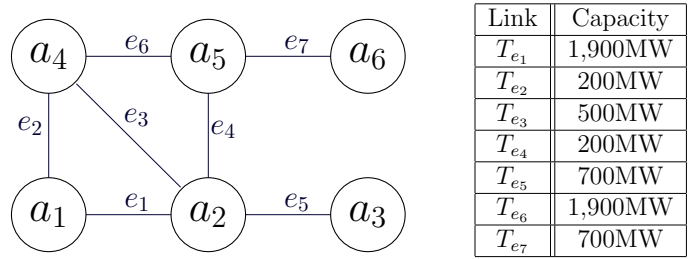


\part{Conclusion}
\chapter{Conclusions and outlook}
\section{\cref{part:1}}
In the first part of the thesis, our goal was to study mechanism design to achieve an efficient outcome in an electricity market setting, which involves continuous goods (e.g., electrical power), second stage costs, and general nonlinear constraints (instead of simple constraints, e.g., availability of a fixed number of items in multi-item auctions).

For the general class of electricity market problems, the dominant-strategy incentive-compatible VCG mechanism is susceptible to collusion and shill bidding. Motivated by this problem, we derived three different conditions under which collusion and shill bidding are not profitable, and hence the VCG mechanism is coalition-proof. Since these conditions are restrictive and they may not capture the constrained optimization problem under consideration, we investigated core-selecting mechanisms for their coalition-proofness.
We showed that the well-established LMP mechanism is core-selecting, and hence coalition-proof. This result was an implication of a stronger result we proved, core-selecting mechanisms are the exact mechanisms that ensure the existence of a competitive equilibrium in linear/nonlinear prices. In contrast to the LMP mechanism, we showed that core-selecting mechanisms are applicable to a broad class of markets with nonconvex bids and constraints. We then characterized core-selecting mechanisms that can approximate dominant-strategy incentive-compatibility without the price-taking assumption. In an exchange market setting, we proved that core-selecting mechanisms are also budget-balanced. Our results were verified in several case studies based on realistic electricity market~models. 

\subsection{Further research directions}

Some of the future research directions are removed, and will not be included until the end of 2021.

\subsubsection*{Relaxing the conditions of \cref{thm:conditions_on_bids} and \cref{thm:re2_proof}}
Invoking theoretical studies on the strong-substitute condition of~\cite{milgrom2007substitute}, we can potentially relax the separable convexity requirement of \cref{thm:re2_proof} on the bid functions to supermodularity and component-wise convexity, see the discussions in \cref{footnote:extension} of \cref{sec:p1_3}. However, this conclusion requires further investigation. 
\subsubsection*{Budget surplus reallocation}
When applied to exchanges and two-sided markets, one would need to reallocate budget surplus of core-selecting mechanisms to provide correct investment~signals for transmission capacity expansion. As it is pointed out at the end of~\cref{sec:444}, this topic remains to be investigated.

\subsubsection*{Extensions to single-settlement two-stage stochastic market models}

As several recent works show, for example, \cite{bouffard2005market,wong2007pricing,morales2009economic,pritchard2010single,zakeri2018pricing,exizidis2019incentive} to name a few, a scenario-based stochastic market can attain perfect temporal coordination using a probabilistic description of renewables' forecast errors. This approach enables co-optimizing the day-ahead and the real-time markets in a single stochastic optimization, as it is discussed in \cref{footnote:stoch} of \cref{sec:p1_2}. However, the stochastic dispatch model comes also with some practical drawbacks.
Unlike the existing market frameworks, it is unable to ensure simultaneously individual rationality and budget-balance (for every scenario), see~\cite{kazempour2018stochastic,pritchard2010single,zakeri2018pricing}. In addition, it requires ensuring the agreement of all the participants on the scenarios and the probabilities associated with them. Participants however might not share a unified view of the probability distribution governing the uncertainty. A deviation from this assumption may arise for example in case the participants are risk-averse~\cite{gerard2018risk,philpott2016equilibrium,kazempour2016effects,ralph2011pricing,ralph2015risk,martin2014stochastic}. To address the first problem, it is essential to study what core-selecting mechanisms can offer by extending on the works of~\cite{pritchard2010single,zakeri2018pricing}. For the second problem, it would be beneficial to incorporate risk measures and risk trading into the stochastic market design frameworks. One concrete idea is to exploit the works on risk-sharing in financial market literature and insurance mathematics~\cite{denault2001coherent,chen2017stable,philpott2016equilibrium,ralph2015risk,vespermann2020risk}.

\vspace{.75cm}

\section{\cref{part:2}}
In the second part of the thesis, our goal was to propose a coalitional game-theoretic approach to enable coordinated balancing and reserve exchange in a European level and to gain technical and economical insights about such a process.

To achieve this goal, we formulated a coalition-dependent preemptive transmission allocation model that defines the optimal inter-area transmission capacity allocation {between energy and reserves} for a given {set of participating areas}. 
We then accompanied this model with benefit allocation mechanisms such that all {coalition members} have sufficient benefits to accept the transmission allocation solution proposed by our method. We formulated the coalitional game both as an ex-ante and as an ex-post process with respect to the uncertainty realization. {We} showed that the former results in suitable mechanisms with budget-balance in expectation, whereas the latter results in suitable mechanisms with budget-balance in every {realization}. 
Applying the prevailing benefit allocations to a larger case study, we showed that these existing methods are unable to find a benefit allocation with minimal stability violation (that is, attaining the least-core property) within a reasonable computational time frame. 
To address this issue for both coalitional games, we proposed the least-core-selecting benefit allocation mechanism and we formulated an iterative constraint generation algorithm for its efficient computation. 
Considering that this work aimed to contribute to the ongoing discussion towards the design of the transmission allocation model, our benefit allocation mechanism can be adapted to different plausible fairness criteria that may be imposed by the regulatory authorities, moving towards the full integration of the balancing markets.
 The numerical results illustrated several crucial factors (e.g., flexibility, network structure, wind correlations) that are essential in driving the benefit allocation in different configurations.

\subsection{Further research directions}

\subsubsection*{Privacy concerns}
Even though fair cost/benefit allocations are proposed with our studies in~\cref{part:2}, all of these methods require private regional information (such as generator bids, demand profiles, forecast scenarios) and an agreement on the estimates of a considerable number of parameters. Currently, the level of transparency remains as a major concern in the transmission capacity calculations for day-ahead markets~\cite{eurel20}. Recent regulations and studies strongly advocate that the market should be developing towards more and more coordinated and transparent operation in terms of all parameters affecting the market outcome. 

On the other hand, the aforementioned alternative market-forming approach would establish a new trading floor in which each regional operator can submit its own valuation to a market organizer (we kindly refer to the discussions in~\cref{sec:marketbased}). This methodology can potentially preserve the privacy of each operator, but it remains to be investigated. Utilizing tools from graph theory and mechanism design, it would be interesting to analyze problems over networks of continuous public choices that would originate from shared transmission resources. 

Related to this direction, another interesting research direction is to explore the development of decentralized computational schemes for general stochastic bilevel programs (such as the one given by problem \eqref{mod:B-Pre} of the preemptive model) that can preserve the privacy of the area operators with minimal exchange of intra-area information.
For the current approach provided in~\cref{part:2}, it would also be beneficial to study the impact of strategic behaviour of the area operators and the generators on the transmission allocations and the resulting benefit allocation mechanisms. 

\subsubsection*{Modelling-related analyses}
In case there are scenarios that are not included in the forecast, our preemptive transmission allocation in~\cref{part:2} can also potentially be suboptimal, and it is hard to provide an out-of-sample analysis for the suboptimality of nonconvex stochastic optimization problems~\cite{birge2011introduction}.
However, it would be interesting to provide a numerical analysis for the impact of out-of-sample uncertainty realizations on the optimality of the preemptive model.

It is also expected that the initial/pilot coordinated balancing platforms would also be zonal as it is the case today in the European day-ahead energy markets~\cite{IGCC,dominguez2019reserve}. Hence, it would provide an additional guide for the future integrated European markets, if we can incorporate the potential congestion from zonal balancing platforms into the preemptive model.

\subsubsection*{Extensions to market participant-level benefits}

As we have discussed in~\cref{sec:marketbased}, our studies define the monetary quantities on an operator level, and do not specify the payment rules for the markets such that these costs (or discounts) are distributed on a market participant level. An open problem is extending \cref{part:2} by specifying payments for the new markets such that the benefits of implementing cross-border trading is distributed on a market participant level. Similar to what is suggested also in the studies of~\cite{kristiansen2018mechanism}, the goal here can potentially be achieved by defining the correct set of side payments complementing the locational (nodal or zonal) prices. However, special consideration is needed since such payments can also be gamed by the participants. This research direction warrants further exploration.


\addcontentsline{toc}{part}{Bibliography}
\printbibliography
\end{document}